\documentclass[english,runningheads,11pt]{llncs}
\usepackage{graphicx,amssymb,amsmath,amssymb}
\usepackage[linesnumbered, vlined, ruled]{algorithm2e}
\usepackage[absolute]{textpos}

\usepackage[in]{fullpage}

\def\calP{\mathcal{P}}

\def\calD{\mathcal{D}}
\def\calV{\mathcal{V}}
\def\calO{\mathcal{O}}
\def\calB{\mathcal{B}}
\def\bbB{\mathbb{B}}
\def\calR{\mathcal{R}}
\def\calM{\mathcal{M}}
\def\calS{\mathcal{S}}
\def\calL{\mathcal{L}}
\def\calC{\mathcal{C}}

\newcommand{\spm}{\mbox{$S\!P\!M$}}
\newcommand{\spt}{\mbox{$S\!P\!T$}}
\newcommand{\vis}{\mbox{$V\!i\!s$}}
\newcommand{\Tri}{\mbox{$T\!r\!i$}}

\newtheorem{observation}{Observation}

%
%
%

\begin{document}

\title{Quickest Visibility Queries in Polygonal Domains
}
\author{
Haitao Wang
}
\institute{
Department of Computer Science\\
Utah State University, Logan, UT 84322, USA\\
\email{haitao.wang@usu.edu}
}

\maketitle

\pagenumbering{arabic}
\setcounter{page}{1}

\begin{abstract}
Let $s$ be a point in a polygonal domain $\calP$ of $h-1$ holes and $n$
vertices. We consider a {\em quickest visibility
query} problem. Given a query point $q$ in $\calP$, the goal is to find a
shortest path in $\calP$ to move from $s$ to {\em see} $q$ as quickly
as possible. Previously, Arkin et al. (SoCG 2015) built a data
structure of size $O(n^22^{\alpha(n)}\log n)$ that can answer each
query in $O(K\log^2 n)$ time, where $\alpha(n)$ is the
inverse Ackermann function and $K$ is the size of the visibility polygon
of $q$ in $\calP$ (and $K$ can be $\Theta(n)$ in the worst case). In this paper, we present a new data structure of
size $O(n\log h + h^2)$ that can answer each query in $O(h\log h\log n)$ time.
Our result improves the previous work when $h$ is relatively small.
In particular, if $h$ is a constant, then our result even matches the best result for
the simple polygon case (i.e., $h=1$), which is optimal. As a by-product, we also have a
new algorithm for a {\em shortest-path-to-segment
query} problem. Given a query line segment $\tau$ in $\calP$, the query seeks a shortest path
from $s$ to all points of $\tau$. Previously,  Arkin et al. gave a data
structure of size $O(n^22^{\alpha(n)}\log n)$ that can answer each
query in $O(\log^2 n)$ time, and another data structure of size $O(n^3\log n)$ with $O(\log n)$ query time. We present a data structure of size
$O(n)$ with query time $O(h\log \frac{n}{h})$, which also favors
small values of $h$ and is optimal when $h=O(1)$.
\end{abstract}


\section{Introduction}
\label{sec:intro}

Let $\calP$ be a polygonal domain with $h-1$ holes and a total of $n$ vertices,
i.e., there is an outer simple polygon containing $h-1$ pairwise
disjoint holes and each hole itself is a simple polygon. If $h=1$, then $\calP$ becomes a simple polygon. For any two
points $s$ and $t$ in $\calP$, a {\em shortest path} from $s$ to $t$ is a path
in $\calP$ connecting $s$ and $t$ with the minimum Euclidean length.
Two points $p$ and $q$ are {\em visible} to each other if the line
segment $\overline{pq}$ is in $\calP$. For any point $q$ in $\calP$,
its {\em visibility polygon} consists of all points of $\calP$ visible
to $q$, denoted by $\vis(q)$.

We consider the following {\em quickest visibility
query} problem. Let $s$ be a source point in $\calP$. Given any point
$q$ in $\calP$, the query asks for a path to
move from $s$ to {\em see} $q$ as quickly as possible. Such a
``quickest path'' is actually a shortest path from $s$ to any point of
$\vis(q)$.  The problem has been recently studied by
Arkin et al.~\cite{ref:ArkinSh16}, who built a data structure of size
$O(n^22^{\alpha(n)}\log n)$
that can answer each query in $O(K\log^2 n)$ time,
where $K$ is the size of $\vis(q)$.
In this paper, we present a new data structure of
$O(n\log h + h^2)$ size with $O(h\log h\log n)$ query time.
Our result improves the previous work when $h$ is relatively small.
Interesting is that the query time is independent of $K$, which can be
$\Theta(n)$ in
the worst case. Our result is also interesting in that when
$h=O(1)$, the data structure has $O(n)$ size and $O(\log n)$
query time, which even matches the best result for the simple
polygon case~\cite{ref:ArkinSh16} and is optimal.



As in \cite{ref:ArkinSh16}, in order to solve the quickest visibility queries,
we also solve a {\em shortest-path-to-segment query} problem (or {\em segment
query} for short), which may have independent interest.
Given any line segment $\tau$ in $\calP$, the
segment query asks for a shortest path from $s$ to all points of
$\tau$. Arkin et al.~\cite{ref:ArkinSh16} gave a data structure of
size $O(n^22^{\alpha(n)}\log n)$ that can answer each query in
$O(\log^2 n)$ time,  and another data structure of size $O(n^3\log n)$ with $O(\log n)$ query time. We present a new data structure of $O(n)$ size with
$O(h\log\frac{n}{h})$ query time. Our result
again favors small values of $h$ and attains optimality when $h=O(1)$,
which also matches the best result for the simple polygon case~\cite{ref:ArkinSh16,ref:ChiangOp97}.

Given the shortest path map of $s$, our quickest visibility query data
structure can be built in $O(n\log h+h^2\log h)$ time and our segment
query data structure can be built in $O(n)$ time.
Arkin et al.'s quickest visibility query data structure and their first segment query data structure can both be built in $O(n^22^{\alpha(n)}\log n)$ time, and their second segment query data structure can be built in $O(n^3\log n)$ time~\cite{ref:ArkinSh16}.

Throughout the paper, 
whenever we talk about a query related to paths in $\calP$, the query time always refers to the time for computing the path length, and to output the actual path, it needs additional time linear in the number of edges of the path by standard techniques (we will omit the details about this).

\subsection{Related Work}

The traditional shortest path query problem has been studied
extensively, which is to compute a shortest path to move from $s$ to ``reach'' a query point. Each shortest path query can be answered in $O(\log n)$ time by using the {\em shortest path map} of $s$, denoted by $\spm(s)$, which is of $O(n)$ size. To build
$\spm(s)$, Mitchell~\cite{ref:MitchellSh96} gave an algorithm of
$O(n^{3/2+\epsilon})$ time for any $\epsilon>0$ and $O(n)$
space, and later Hershberger and Suri \cite{ref:HershbergerAn99} presented an
algorithm of $O(n\log n)$ time and space.
If $\calP$ is a simple polygon (i.e., $h=1$), $\spm(s)$ can be built in $O(n)$
time, e.g., see \cite{ref:GuibasLi87}.

For the quickest visibility queries, Arkin et al.~\cite{ref:ArkinSh16} also built a
``quickest visibility map'' of $O(n^7)$ size in $O(n^8\log n)$ time, which
can answer each query in $O(\log n)$ time.
In addition, Arkin et al.~\cite{ref:ArkinSh16} gave a conditional lower bound on the problem by showing that the 3SUM problem on $n$ numbers can be solved in $O(\tau_1+n\cdot \tau_2)$ time, where $\tau_1$ is the preprocessing time and $\tau_2$ is the query time. Therefore, a data structure of $o(n^2)$ preprocessing time and $o(n)$ query time would lead to an $o(n^2)$ time algorithm for 3SUM.


In the simple polygon case (i.e., $h=1$), better results are
possible for both the quickest visibility queries and the segment
queries. For the quickest visibility queries, Khosravi and
Ghodsi~\cite{ref:KhosraviTh05} first proposed a data structure of $O(n^2)$
size that can answer each query in $O(\log n)$ time. Arkin et
al.~\cite{ref:ArkinSh16} gave an improved result and they built a data
structure of $O(n)$ size in $O(n)$ time, with $O(\log n)$ query time.
For the segment queries, Arkin et al.~\cite{ref:ArkinSh16} built a data
structure of $O(n)$ size in $O(n)$ time, with $O(\log n)$ query time.
Chiang and Tamassia~\cite{ref:ChiangOp97} achieved the same
result for the segment queries and they also gave some more general
results (e.g., when the query is a convex polygon).

Similar in spirit to the ``point-to-segment'' shortest path problem, Cheung and Daescu~\cite{ref:CheungAp10} considered a ``point-to-face'' shortest path problem in 3D and approximation algorithms were given for the problem.

\subsection{Our Techniques}
\label{sec:tech}

We first propose a decomposition $\calD$
of $\calP$ by $O(h)$ shortest paths from $s$ to certain vertices of
$\spm(s)$. The decomposition $\calD$, whose size is $O(n)$,
has $O(n)$ cells with the following three key properties. First, any
segment $\tau$ in $\calP$ can intersect at most $O(h)$ cells of
$\calD$. Second, for each cell $\Delta$ of $\calD$,
$\tau\cap \Delta$ consists of at most
two sub-segments of $\tau$.  Third, after $O(n)$ time preprocessing,
for each sub-segment $\tau'$ of $\tau$ in any cell of
$\calD$, the shortest path from $s$ to $\tau'$ can be computed in
$O(\log n)$ time. With $\calD$, we can easily answer each segment query
in $O(h\log \frac{n}{h})$ time by a ``pedestrian'' algorithm.

To solve the quickest visibility queries, an observation is that the shortest path from $s$ to see $q$ is a shortest path from $s$ to a {\em window} of $\vis(q)$, i.e., an extension of the segment $\overline{qu}$ for some reflex vertex $u$ of $\calP$. Hence, the query can be answered by calling segment queries on all $O(K)$ windows of $\vis(s)$ and returning the shortest path. This leads to the $O(K\log^2 n)$ time query algorithm in \cite{ref:ArkinSh16}.

If we follow the same algorithmic scheme and using our new segment
query algorithm, then we would obtain an algorithm of $O(K\cdot h\cdot
\log \frac{n}{h})$ time for the quickest
visibility queries. We instead present a ``smarter'' algorithm.
We propose a ``pruning algorithm'' that prunes some ``unnecessary'' portions of the windows such that it suffices to consider the remaining parts of the windows. Further, with the help of
the decomposition $\calD$, we show that
a shortest path from $s$ to the remaining windows can be found in $O((K+h)\log h\log n)$ time.
We refer to it as {\em the preliminary result}.
To achieve this result, we solve many other
problems, which may be of independent interest. For example, we build a data
structure of $O(n\log h)$ size such that given any query point $t$ and
line segment $\tau$ in $\calP$, we can compute in $O(\log h\log n)$ time
the intersection between $\tau$ and the shortest path from $s$ to $t$ in $\calP$
(or report none if they do not intersect). Our above pruning algorithm is based on a new and interesting technique of using ``bundles''.


To further reduce the query time to $O(h\log h\log n)$,
the key idea is that by using the extended corridor structure of $\calP$~\cite{ref:ChenL113STACS,ref:ChenCo17}, we show that there exists a set $\calS(q)$ of
$O(h)$ {\em candidate windows} such that a shortest path from $s$ to
see the query point $q$ must be a shortest path from $s$ to a window in $\calS(q)$.
This is actually quite consistent with the result in the
simple polygon case, where only one window is needed
for answering each quickest visibility
query~\cite{ref:ArkinSh16}. Once the set $\calS(q)$ is computed, we
can apply our pruning algorithm discussed above on $\calS(q)$ to answer the quickest visibility
query in additional $O(h\log h\log n)$ time. To compute $\calS(q)$,
we give an algorithm of $O(h\log n)$ time, without having to
explicitly compute $\vis(s)$. The algorithm is based on a
modification of the algorithm given in~\cite{ref:ChenVi15} that
can compute $\vis(q)$ in $O(K\log n)$ time for any point $q$, after $O(n+h^2)$ space and
$O(n+h^2\log h)$ time preprocessing.



The rest of the paper is organized as follows. In
Section~\ref{sec:pre}, we introduce notation and review some
concepts. In Section~\ref{sec:segment}, we introduce the
decomposition $\calD$ of $\calP$, and present our algorithm for
the segment queries. We present our preliminary result for the quickest visibility
queries in Section~\ref{sec:first} and give the improved
result in Section~\ref{sec:second}.  Section~\ref{sec:con} concludes the paper.

\section{Preliminaries}
\label{sec:pre}

For any subset $A$ of $\calP$, we say that a point $p$ is {\em (weakly)
visible} to $A$ if $p$ is visible to at least one point of $A$. For any
point $t\in \calP$, we use $\pi(s,t)$ to denote a shortest path from $s$
to $t$ in $\calP$, and in the case where the shortest path is not
unique, $\pi(s,t)$ may refer to an arbitrary such path.
With a little abuse of notation, for any subset $A$ of $\calP$, we
use $\pi(s,A)$ to denote a shortest path from $s$ to all points of
$A$; we use $d(s,A)$ to denote the length of $\pi(s,A)$, i.e., $d(s,A)=\min_{t\in A}d(s,t)$.

Let $\calV$ denote the set of all vertices of $\calP$.

\paragraph{The shortest path map $\spm(s)$.}
$\spm(s)$ is a decomposition of $\calP$ into regions (or cells) such that in each
cell $\sigma$,
the sequence of obstacle vertices along $\pi(s,t)$ is fixed for all $t$ in
$\sigma$~\cite{ref:HershbergerAn99,ref:MitchellSh96}.
Further, the {\em root} of $\sigma$, denoted by $r(\sigma)$,
 is the last vertex of $\calV\cup \{s\}$ in $\pi(s,t)$  for any point $t\in \sigma$
(hence $\pi(s,t)=\pi(s,r(\sigma))\cup \overline{r(\sigma)t}$;
note that $r(\sigma)$ is $s$ if $s$ is visible to $t$).
We classify each edge of a cell $\sigma$ into three types: a portion of
an edge of $\calP$, {\em an extension segment}, which is a line segment extended
from $r(\sigma)$ along the opposite direction from  $r(\sigma)$ to the
vertex of $\pi(s,t)$ preceding $r(\sigma)$, and {\em a bisector
curve/edge} that is a hyperbolic arc. For each point $t$ on
a bisector edge of $\spm(s)$, $t$ is on the common boundary of two cells and
there are two different shortest paths from $s$ to $t$ through the roots of the
two cells, respectively. The {\em vertices} of $\spm(s)$
include $\calV\cup \{s\}$ and all intersections of edges of $\spm(s)$.
The intersection of two bisector edges is called a {\em triple
point}, which has more than two shortest paths from $s$. The map
$\spm(s)$ has $O(n)$ vertices, edges, and cells~\cite{ref:HershbergerAn99,ref:MitchellSh96}.

For differentiation, we call the vertices and edges of the polygonal
domain $\calP$ the {\em obstacle vertices} and the {\em obstacle
edges}, respectively. The holes and the outer polygon of $\calP$ are
also called {\em obstacles}.

The {\em shortest path tree} $\spt(s)$ is the union of shortest paths
from $s$ to all obstacle vertices of $\calP$. $\spt(s)$ has $O(n)$
edges~\cite{ref:HershbergerAn99,ref:MitchellSh96}. Given $\spm(s)$,
$\spt(s)$ can be obtained in linear time. We somethings consider
a further decomposition of $\spm(s)$ by having all edges of $\spt(s)$
in it.

For ease of exposition, we make a general position
assumption that no obstacle vertex has more than one shortest path from
$s$ and no point of $\calP$ has more than three shortest paths from $s$.
Hence, no bisector edge of $\spm(s)$
intersects an obstacle vertex and no three bisector edges intersect at
the same point.

For any polygon $P$, we use $|P|$ to denote the number of vertices of
$P$ and use $\partial P$ to denote the boundary of $P$.


\paragraph{Ray-shooting queries in simple polygons.} Let $P$ be a simple
polygon. With $O(|P|)$ time and space preprocessing, each
ray-shooting query in $P$ (i.e., given a ray in $P$, find the first
point on $\partial P$ hit by the ray) can be answered in $O(\log |P|)$
time~\cite{ref:ChazelleRa94,ref:HershbergerA95}. The result can be
extended to curved simple polygons or
splinegons~\cite{ref:MelissaratosSh92}.

\paragraph{The canonical lists and cycles of planar trees.}
We will often talk about certain planar trees in $\calP$ (e.g.,
$\spt(s)$). Consider a tree $T$ with root $r$. A leaf
$v$ is called a {\em base leaf} if it is the leftmost leaf of a subtree rooted at a child of $r$ (e.g., see Fig.~\ref{fig:cycletree}).
Denote by $\calL(T,v)$ the post-order traversal list
of $T$ starting from such a base leaf $v$,
and we call it a {\em canonical list} of $T$.
The root $r$ must be the last node in $\calL(T,v)$. We remove $r$
from $\calL(T,v)$ and make the remaining list a cycle by connecting
its rear to its front, and let $\calC(T)$ denote the circular list.
Although $T$ may have multiple base leaves,
$\calC(T)$ is unique and we call $\calC(T)$ the {\em canonical cycle} of $T$.
%
\begin{figure}[t]
\begin{minipage}[t]{\linewidth}
\begin{center}
\includegraphics[totalheight=1.6in]{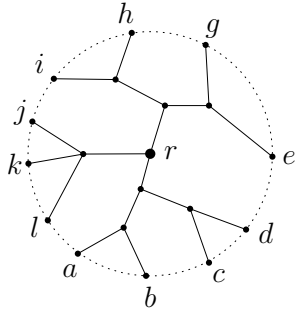}
\caption{\footnotesize
Illustrating a planar tree $T$ with root $r$: $a$ is a base leaf and the list $\calL_l(T,a)$ is $a,b,c,\ldots,l$. }
\label{fig:cycletree}
\end{center}
\end{minipage}
\vspace*{-0.15in}
\end{figure}
We further use $\calL_l(T,v)$ (e.g., see Fig.~\ref{fig:cycletree}) to denote the list of the leaves of $T$
following their relative order in $\calL(T,v)$ 
and use $\calC_l(T)$ to denote the circular list of $\calL_l(T,v)$.
One reason we introduce these notation is the following. Let $e$ be any edge of $T$. 
All nodes of $T$ whose paths to $r$ in $T$ contain $e$ must be consecutive in $\calL(T,v)$ and $\calC(T)$.
Similarly, all leaves of $T$ whose paths to $r$ in $T$ contain $e$ must be consecutive in $\calL_l(T,v)$ and $\calC_l(T)$.

The following observation on shortest paths will be frequently
referred to in the paper.

\begin{observation}\label{obser:10}
\begin{enumerate}
\item
Suppose $\pi_1$ and $\pi_2$ are two shortest paths from $s$ to two
points in $\calP$, respectively; then $\pi_1$ and $\pi_2$ do not
cross each other.
\item
Suppose $\pi_1$ is a shortest path from $s$ to a point in $\calP$ and
$\tau$ is a line segment in $\calP$; then the intersection of
$\pi_1$ and $\tau$ is a sub-segment of $\tau$ (which
may be a single point or empty).
\end{enumerate}
\end{observation}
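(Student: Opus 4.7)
The plan is to handle the two parts in sequence, both leaning on the standard fact that every subpath of a shortest path is itself a shortest path between its endpoints.

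For part 1, I would argue by contradiction. Suppose $\pi_1$ and $\pi_2$ cross transversally at a point $p$, which must lie in the interior of $\calP$ since the paths cannot pass through each other along an obstacle. Because subpaths of shortest paths are shortest, the pieces $\pi_1[s,p]$ and $\pi_2[s,p]$ are both shortest $s$-to-$p$ paths and therefore have the same length. Splice the paths at $p$: let $\pi_2' = \pi_1[s,p] \cup \pi_2[p,t_2]$. Then $|\pi_2'| = |\pi_2|$, so $\pi_2'$ is itself a shortest $s$-to-$t_2$ path. Since the crossing at $p$ is transversal, $\pi_2'$ enters and leaves $p$ along genuinely different directions and thus makes a proper corner there; as a small open disk around $p$ lies in $\calP$, we can cut that corner by a short chord to obtain a strictly shorter $s$-to-$t_2$ path, contradicting the optimality of $\pi_2$.

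For part 2, the plan is a direct triangle-inequality argument. Suppose $\pi_1 \cap \tau$ is not a sub-segment; then there exist $a,b \in \pi_1 \cap \tau$ and a point $c$ strictly between $a$ and $b$ on $\tau$ with $c \notin \pi_1$. By the shortest-subpath property, $|\pi_1[a,b]| = d(s,b)-d(s,a) \le |\overline{ab}|$ by the triangle inequality. On the other hand $\overline{ab} \subseteq \tau \subseteq \calP$, so any path in $\calP$ from $a$ to $b$ has Euclidean length at least $|\overline{ab}|$. Hence $|\pi_1[a,b]| = |\overline{ab}|$, and by strict convexity of the Euclidean norm the only such path is the straight segment $\overline{ab}$ itself. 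This forces $\pi_1[a,b] = \overline{ab} \subseteq \tau$, so $c \in \pi_1$, a contradiction.

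The main obstacle is making the notion of ``crossing'' precise in part 1: two shortest paths are certainly allowed to share common subpaths or merely touch tangentially, and in such situations no shortcut is available and no contradiction arises. The argument above implicitly adopts the usual transversal interpretation of crossing, which is exactly the case the statement rules out; once this is settled, part 2 follows essentially for free from the triangle inequality and requires no additional machinery.
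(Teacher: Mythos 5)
The paper states Observation~\ref{obser:10} without proof, treating both parts as standard facts about geodesics in polygonal domains; so there is no ``paper's own proof'' to compare against. Your argument is the classical one and is essentially correct.

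For part~2, the argument is complete and clean: the subpath $\pi_1[a,b]$ is itself a shortest $a$-to-$b$ path, the chord $\overline{ab}\subseteq\tau\subseteq\calP$ witnesses $d(a,b)\le|\overline{ab}|$, and the reverse inequality plus strict convexity of the Euclidean norm forces $\pi_1[a,b]=\overline{ab}$. (Minor nit: the length of $\pi_1[a,b]$ is $|d(s,b)-d(s,a)|$, not $d(s,b)-d(s,a)$, but this does not affect anything.)

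For part~1, one step is asserted too quickly. You write that a transversal crossing ``must lie in the interior of $\calP$ since the paths cannot pass through each other along an obstacle,'' but two shortest paths from $s$ can both pass through the same reflex obstacle vertex $v$, and dismissing a crossing there is exactly the case that needs an argument. The splice-and-shortcut idea still works, but at such a $v$ one must check that the spliced path $\pi_1[s,v]\cup\pi_2[v,t_2]$ makes a corner that is not ``taut'' around the obstacle at $v$ (i.e., the obstacle wedge at $v$ does not lie inside the corner), so that a local chord is available inside the free space. Equivalently, one can observe that two geodesics through $v$ that genuinely cross there would have interleaved in/out directions, and at least one of the two spliced paths has a shortcuttable corner. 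This is a short case analysis, not a serious gap, but as written the sentence begs the very point being proved and should be fleshed out.
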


\section{The Decomposition $\calD$ and the Segment Queries}
\label{sec:segment}


In this section, we introduce a decomposition $\calD$ of $\calP$ and use it to
solve the segment query problem. The decomposition $\calD$
will also be useful for solving the quickest visibility queries.

We first define a set $V$ of points. Let $p$ be an intersection
between a bisector edge of $\spm(s)$ and an obstacle edge.
Since $p$ is on a bisector edge, it is in two cells of $\spm(s)$ and
has two shortest paths from $s$. We make two copies of
$p$ in the way that each copy belongs to only one cell (and thus corresponds to
only one shortest path from $s$). We add the two copies of $p$ to $V$. We
do this for all intersections between bisector edges and obstacle edges.
Consider a triple point $p$, which is in three cells of $\spm(s)$ and
has three shortest paths from $s$. Similarly, we make three copies of
$p$ that belong to the three cells, respectively. We add the three copies
of $p$ to $V$. We do this for all triple points. This finishes the
definition of $V$.

By definition, each point of $V$ has exactly one shortest path from $s$.
Let $\Pi_{V}$ denote the set of shortest paths from $s$ to all points of
$V$. Let $T_{V}$ be the union of all shortest paths of
$\Pi_{V}$. We consider points of $V$ distinct although some of them
are copies of the same physical point. In this way, we can consider
$T_{V}$ as a ``physical'' tree rooted at $s$.

\begin{definition}
Define $\calD$ to be the decomposition of $\calP$ by the edges of $T_{V}$.
\end{definition}

In the following, we assume the shortest path map $\spm(s)$ has
already been computed. We have the following lemma about the
decomposition $\calD$.

\begin{lemma}\label{lem:10}
\begin{enumerate}
\item
The size of the set $V$ is $O(h)$.
\item
The combinatorial size of $\calD$ is $O(n)$.
\item
Each cell of $\calD$ is simply connected.
\item
For any segment $\tau$ in $\calP$, $\tau$ can intersect at most $O(h)$ cells
of $\calD$. Further, for each cell $\Delta$ of $\calD$, the intersection
$\tau$ and $\Delta$ consists of at most two (maximal) sub-segments of $\tau$.
\item
After $O(n)$ time preprocessing, for any segment $\tau'$ in a cell $\Delta$ of
$\calD$, the shortest path from $s$ to $\tau'$ can be computed in
$O(\log |\Delta|)$ time, where $|\Delta|$ is the combinatorial size of
$\Delta$.
\item
For each cell $\Delta$ of $\calD$, $\Delta$ has at most two vertices $r_1$ and $r_2$ (both in $\calV\cup \{s\}$), called ``super-roots'', such that for any point $t\in \Delta$, $\pi(s,t)$ is the concatenation of $\pi(s,r)$ and the shortest path from $r$ to $t$ in $\Delta$, for a super-root $r$ in $\{r_1,r_2\}$.
\item
Given the shortest path map $\spm(s)$, $\calD$ can be computed in $O(n)$ time.
\end{enumerate}
\end{lemma}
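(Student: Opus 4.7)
The plan is to prove items 1--7 in the order 1, 2, 7, 3, 6, 4, 5, so that each structural claim is supported by earlier parts together with Observation~\ref{obser:10}. For item 1, I would invoke the standard fact that the shortest path map $\spm(s)$ of a polygonal domain contains only $O(h)$ bisector edges, because bisector edges arise from wavefront merge events and each hole is responsible for $O(1)$ of them. Under the general-position assumption, each bisector edge has at most two endpoints on obstacle edges and $O(1)$ triple-point intersections with other bisector edges, so only $O(h)$ physical points of $\spm(s)$ generate copies in $V$, each producing at most three copies; thus $|V|=O(h)$. Items 2 and 7 then follow: every path $\pi(s,v)$ with $v\in V$ is a prefix along $\spt(s)$ plus one final edge to $v$, so $T_{V}$ is a subtree of $\spt(s)$ augmented by $O(h)$ tail edges, with total size $|T_{V}|=O(n)$; hence the arrangement of $T_{V}$ inside $\calP$ has $O(n)$ combinatorial complexity, and $\calD$ is constructed in $O(n)$ time by reading off $V$ from $\spm(s)$, extracting $T_{V}$ from $\spt(s)$, and overlaying $T_{V}$ with $\partial\calP$.

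For item 3, the key topological fact is that each hole $H$ is topologically separated from the outer boundary by $T_{V}\cup\partial\calP$. The wavefront from $s$ reaches $H$ on both sides, and the two branches must eventually meet at some triple point or bisector-obstacle intersection $p$; the copies of $p$ in $V$ receive shortest paths in $T_{V}$ that travel around $H$ on opposite sides, so their union together with a sub-arc of $\partial H$ forms a closed curve encircling $H$ inside $T_{V}\cup\partial\calP$. Consequently, no cell of $\calD$ contains a hole, and combined with the planarity of $T_{V}$ (guaranteed by Observation~\ref{obser:10}(1)) every cell of $\calD$ is simply connected. For item 6, the $T_{V}$-portion of $\partial\Delta$ consists of at most two sub-paths of $T_{V}$, and the roots (in $\calV\cup\{s\}$) of the $\spm(s)$-cells that survive inside $\Delta$ after the $T_{V}$-cuts serve as the super-roots $r_1,r_2$; Observation~\ref{obser:10}(1) then forces $\pi(s,t)$ for any $t\in\Delta$ to enter $\Delta$ through one of $r_1,r_2$, so $\pi(s,t)=\pi(s,r_i)$ concatenated with the shortest path from $r_i$ to $t$ inside $\Delta$.

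Items 4 and 5 are then fairly direct. For item 4, Observation~\ref{obser:10}(2) gives that $\tau\cap\pi$ is a sub-segment of $\tau$ for each of the $O(h)$ paths $\pi\in\Pi_{V}$, contributing at most two transverse crossings of $\tau$ with $T_{V}$ per path; summing yields $O(h)$ crossings and so $\tau$ visits at most $O(h)$ cells of $\calD$, and because $\partial\Delta$ meets at most two paths of $\Pi_{V}$ (item 6), $\tau\cap\Delta$ is the union of at most two sub-segments. For item 5, each simply connected cell $\Delta$ is a straight-edged simple polygon; I would preprocess it in $O(|\Delta|)$ time to support shortest-path-to-segment queries from each super-root in $O(\log|\Delta|)$ time using the simple-polygon segment query structure of~\cite{ref:ArkinSh16,ref:ChiangOp97}, and run the query from both $r_1$ and $r_2$, adding $d(s,r_i)$ to each, to obtain $\pi(s,\tau')$; since $\sum_{\Delta}|\Delta|=O(n)$, total preprocessing is $O(n)$. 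The hardest step is the topological argument for item 3: one must carefully verify that the paths to the different copies of a single meeting point jointly encircle the corresponding hole in the planar embedding, which is the fact that makes every cell of $\calD$ simply connected and thus amenable to simple-polygon machinery; once this is secured, the super-root analysis and the remaining bounds reduce to standard shortest-path-map bookkeeping.
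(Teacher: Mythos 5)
The proposal has several genuine gaps, the most concrete of which is in item~1. You invoke ``the standard fact that $\spm(s)\ldots$ contains only $O(h)$ bisector edges,'' but this is false: the total complexity of the bisectors in $\spm(s)$ can be $\Theta(n)$ (a single bisector can wind past many reflex vertices, getting refracted into many hyperbolic arcs). What is true, and what the paper carefully uses, is that there are $O(h)$ bisector \emph{super-curves} (maximal chains of bisector edges free of triple points and obstacle intersections). The paper proves this by building an auxiliary planar graph $G$ on $\calO\cup\calB$ whose vertices are obstacles and triple points and whose edges are super-curves, then invoking Euler's formula together with the known $O(h)$ bound on triple points and on faces of the $(\le 1)$-$\spm$. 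From there, each point of $V_2$ is an endpoint of a super-curve, so $|V_2|=O(h)$. Your argument ``each bisector edge has at most two endpoints on obstacle edges'' would only give $O(h)$ if your premise about $O(h)$ edges held; as it doesn't, the bound does not follow.

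A second gap shows up in items~4 and~6. You claim ``the $T_V$-portion of $\partial\Delta$ consists of at most two sub-paths of $T_V$'' and later cite ``item~6'' for the assertion that $\partial\Delta$ meets at most two paths of $\Pi_V$, but item~6 is about the existence and number of super-roots, not about which members of $\Pi_V$ appear on $\partial\Delta$. The paper establishes this missing fact via Lemma~\ref{lem:50}, which argues that each cell of $\calD'$ lies inside the sector-like region $R_i$ swept by shortest paths between consecutive $v_i,v_{i+1}$, so its shortest-path boundary edges belong only to $\pi(s,v_i)$ and $\pi(s,v_{i+1})$; together with Observation~\ref{obser:20} (each $\calD'$ cell touches at most one bisector super-curve), this gives the ``at most two'' super-roots and the ``at most four crossings, hence at most two sub-segments'' bound for merged cells. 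Your proposal assumes these conclusions rather than deriving them. On item~3, your hole-encircling route is a genuine alternative to the paper's approach (which starts from Mitchell's theorem that $\calP\setminus\calB$ is simply connected and inductively cuts it by the non-crossing paths $\pi(s,v_i)$, then merges pairs of $\calD'$ cells along super-curves); yours could be made to work but, as you acknowledge, the claim that the two branches of $T_V$ together with a boundary arc actually encircle each hole needs a careful planarity argument, precisely the work Mitchell's theorem shortcuts. The remaining parts (2, 5, 7) match the paper's plan in substance.
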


We will prove Lemma~\ref{lem:10} later in Section~\ref{sec:decom}.
Below we first give our data structure for answering segment queries
by using Lemma~\ref{lem:10}.

\subsection{The Segment Queries}

As preprocessing, we first compute the decomposition $\calD$. Then, we
build a point location data structure on
$\calD$~\cite{ref:EdelsbrunnerOp86,ref:KirkpatrickOp83}, which can be
done in $O(n)$ time and $O(n)$ space since the size of $\calD$ is $O(n)$ by
Lemma~\ref{lem:10}(2); the data structure can answer each point
location query in $O(\log n)$ time.

In addition, for each cell $\Delta$ of $\calD$, by Lemma~\ref{lem:10}(3),
$\Delta$ is a simple polygon; we build a ray-shooting data structure on
$\Delta$~\cite{ref:ChazelleRa94,ref:HershbergerA95}.
Since the total size of all cells of $\calD$ is
$O(n)$ by Lemma~\ref{lem:10}(2), the total preprocessing time and
space for the ray-shooting queries on all cells of $\calD$ is $O(n)$.

Finally, we do the preprocessing in Lemma~\ref{lem:10}(5). Hence,
given $\spm(s)$, the total preprocessing time and space is $O(n)$.
The following lemma gives our query algorithm.

\begin{lemma}
Given any segment $\tau$ in $\calP$, we can compute a shortest path from $s$
to $\tau$ in $O(h\log \frac{n}{h})$ time.
\end{lemma}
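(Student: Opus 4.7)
The plan is to walk along $\tau$ through the cells of $\calD$ that it crosses, using ray-shooting inside each cell to find the exit point, and inside each cell applying the data structure of Lemma~\ref{lem:10}(5) to obtain the shortest path from $s$ to that piece of $\tau$. Finally we return the globally shortest of these paths, which must be a shortest path from $s$ to $\tau$ since every point of $\tau$ lies in some cell.

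More concretely, first I would perform a point location query on $\calD$ with one endpoint of $\tau$ to find the cell $\Delta_1$ containing it; this costs $O(\log n)$. Then, starting from $\Delta_1$, I would use the ray-shooting structure on $\Delta_1$ to shoot a ray along $\tau$ and determine the exit point $p_1$ on $\partial \Delta_1$ in $O(\log |\Delta_1|)$ time; the sub-segment of $\tau$ inside $\Delta_1$ is determined by the entry and exit points. Using the adjacency information stored with $\calD$ (each edge of $\calD$ lies on the common boundary of at most two cells, which is recorded during preprocessing), I would identify the next cell $\Delta_2$ in $O(1)$ time from the exit edge, and iterate. Note that by Lemma~\ref{lem:10}(4), $\tau$ visits at most $O(h)$ cells of $\calD$, and the intersection of $\tau$ with any single cell consists of at most two sub-segments, so the walk (which may re-enter a previously visited cell) terminates after $O(h)$ steps. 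For each sub-segment $\tau'$ of $\tau$ lying in a cell $\Delta$, Lemma~\ref{lem:10}(5) provides the shortest path from $s$ to $\tau'$ in $O(\log |\Delta|)$ time after the $O(n)$ preprocessing already carried out.

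The total cost is $O(\log n)$ for the initial point location plus $O(\log |\Delta_i|)$ for each of the $O(h)$ visited cells $\Delta_1, \dots, \Delta_k$ ($k = O(h)$). Since the cells of $\calD$ are interior-disjoint and their total combinatorial size is $O(n)$ by Lemma~\ref{lem:10}(2), we have $\sum_i |\Delta_i| = O(n)$, so by concavity of the logarithm (Jensen's inequality),
\[
\sum_{i=1}^{k} \log |\Delta_i| \;\le\; k \cdot \log\!\left(\frac{\sum_i |\Delta_i|}{k}\right) \;=\; O\!\left(h \log \frac{n}{h}\right).
\]
Adding the initial $O(\log n)$ term gives overall query time $O(h \log \tfrac{n}{h})$, as claimed.

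The main obstacle is conceptually minor but worth being careful about: because Lemma~\ref{lem:10}(4) only guarantees that $\tau\cap\Delta$ consists of \emph{at most two} sub-segments (rather than one), the walk along $\tau$ may exit and later re-enter the same cell, so the routing between consecutive cells cannot be collapsed to a single point-location step. This is handled by literally tracing $\tau$ through $\calD$ with per-cell ray shooting, which still stays within the $O(h)$ bound on visited cells; and once the walk is in hand, the concavity argument above is what actually drives the $O(h\log\tfrac{n}{h})$ bound and is the only nontrivial step in the time analysis.
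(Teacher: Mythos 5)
Your proof is correct and matches the paper's own approach: locate the starting endpoint, walk along $\tau$ through the cells of $\calD$ via per-cell ray shooting, answer each maximal sub-segment with Lemma~\ref{lem:10}(5), and bound $\sum_i \log|\Delta_i|$ by concavity of $\log$ using $\sum_i |\Delta_i| = O(n)$ and $k = O(h)$. The only cosmetic difference is that the paper makes the factor-of-two arising from cells visited twice explicit by first passing to the set of distinct cells, whereas you absorb it directly into the $\sum_i |\Delta_i| = O(n)$ bound before applying Jensen.
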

\begin{proof}
Let $a$ and $b$ be the two endpoints of $\tau$, respectively.
Our algorithm works in a ``pedestrian'' way, as follows.

By using a point location query, we find the cell $\Delta_a$ of
$\calD$ that contains $a$. Then, we check whether $\tau$ is contained
in $\Delta_a$. This can be done by using a ray-shooting query as
follows. We
shoot a ray $\rho$ from $a$ towards $b$ and compute the first point $p$ of
$\partial \Delta_a$ hit by the ray. The segment $\tau$ is in $\Delta_a$ if and only
if $b$ is before $p$ on the ray.

If $\tau$ is in $\Delta_a$, then we can
immediately compute the shortest path $\pi(s,\tau)$ from $s$ to $\tau$
in $O(\log |\Delta_a|)$ time by  Lemma~\ref{lem:10}(5).

Otherwise, we compute the shortest path $\pi(s,\overline{ap})$ from
$s$ to the sub-segment $\overline{ap}$ of $\tau$ in $O(\log |\Delta_a|)$
time by
Lemma~\ref{lem:10}(5). Next, based on the edge of $\calD$ containing $p$, we can  determine in constant time the next cell $\Delta$ of $\calP$ that the ray $\rho$ enters. We process the cell $\Delta$ in the similar way as the above
for $\Delta_a$. The algorithm finishes once we process a cell that
contains $b$.

The above computes $\pi(s,\tau')$ for multiple sub-segments
$\tau'$ of $\tau$ such that these sub-segments constitute exactly
$\tau$ and each sub-segment is in a single cell of $\calD$. Clearly, among all shortest paths from $s$ to these sub-segments, the one with the minimum length is the shortest path from $s$ to $\tau$.

To analyze the running time of the above algorithm, let $k$ be
the number of the above sub-segments $\tau'$ of $\tau$. Suppose $\tau'_1,\tau'_2,\ldots,\tau'_k$ are
these sub-segments ordered from $a$ to $b$. For each $1\leq i\leq k$,
let $\Delta_i$ be the cell of $\calD$ that contains $\tau'_i$.
First of all, the point location query for $a$ takes $O(\log n)$ time.
For each $1\leq i\leq k$, determining each sub-segment $\tau'_i$ needs a ray-shooting query in $\Delta_i$, which takes $O(\log |\Delta_i|)$ time;  computing the length
of $\pi(s,\tau_i')$ also takes $O(\log |\Delta_i|)$ time by
Lemma~\ref{lem:10}(5).
Hence, the total time of the algorithm is $O(\log n + \sum_{i=1}^k \log|\Delta_i|)$.

By Lemma~\ref{lem:10}(4), $k=O(h)$. Also, by Lemma~\ref{lem:10}(4), each cell may contain two of the above $k$ sub-segments of $\tau$, and thus it is possible that $\Delta_i$ and $\Delta_j$ refer to the same cell for $i\neq j$.
Let $S$ be the set of the distinct cells of $\Delta_i$ for $i=1,2,\ldots, k$. Since each cell contains at most two of the above $k$ sub-segments of $\tau$, $\sum_{i=1}^k \log|\Delta_i|\leq 2\cdot \sum_{\Delta\in S}\log |\Delta|$. Further, since the cells of $S$ are
distinct, we have $\sum_{\Delta\in S}|\Delta|=O(n)$. 
Due to $|S|\leq k=O(h)$, we have $\sum_{\Delta\in S}\log |\Delta| = O(h\log\frac{n}{h})$.

Therefore, the total time of the algorithm is bounded by $O(h\log \frac{n}{h})$.
\qed
\end{proof}

We summarize our result for segment queries in the following theorem.

\begin{theorem}\label{theo:segment}
Given the shortest path map $\spm(s)$, we can build a data structure
of $O(n)$ size in $O(n)$ time, such that each
segment query can be answered in $O(h\log \frac{n}{h})$ time.
\end{theorem}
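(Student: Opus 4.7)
The plan is essentially to collect the preprocessing bounds that have already been laid out just before the preceding lemma and combine them with the $O(h\log\frac{n}{h})$ query bound proved there; the theorem is a summary statement rather than a result needing a fresh argument. The only thing I need to verify is that every ingredient fits inside the claimed $O(n)$ preprocessing and space budget.

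For the construction, I would proceed in four steps. First, invoke Lemma~\ref{lem:10}(7) to build the decomposition $\calD$ from $\spm(s)$ in $O(n)$ time and space. Second, since Lemma~\ref{lem:10}(2) gives $|\calD|=O(n)$, apply a standard planar point-location structure on $\calD$ (e.g., Kirkpatrick's), which takes $O(n)$ time and space and answers queries in $O(\log n)$. Third, for each cell $\Delta$, use Lemma~\ref{lem:10}(3) to conclude that $\Delta$ is simply connected, and preprocess $\Delta$ for $O(\log |\Delta|)$-time ray shooting in $O(|\Delta|)$ time and space; summing over all cells and invoking Lemma~\ref{lem:10}(2) once more shows that the ray-shooting preprocessing over all cells totals $O(n)$. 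Fourth, perform the preprocessing promised by Lemma~\ref{lem:10}(5), which is $O(n)$. The total preprocessing time and space is therefore $O(n)$.

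For the query bound, I would just cite the preceding lemma, whose ``pedestrian'' algorithm walks $\tau$ across at most $O(h)$ cells of $\calD$ via one initial point-location query, followed by successive ray-shooting queries and applications of Lemma~\ref{lem:10}(5); the lemma already carried out the log-sum estimate $\sum_{\Delta\in S}\log|\Delta|=O(h\log\frac{n}{h})$ using $|S|\le k=O(h)$ and $\sum_{\Delta\in S}|\Delta|=O(n)$.

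There is no real obstacle in this step, since all the substantive work—bounding the number of cells crossed, handling the up-to-two-sub-segments-per-cell case from Lemma~\ref{lem:10}(4), and the concavity argument bounding the sum of logarithms—has already been dispatched in the proof of the preceding lemma. The only sanity check is that the four preprocessing ingredients can coexist within the $O(n)$ space budget, which is immediate since each piece is individually $O(n)$ and none of them depends on data produced by the others beyond the decomposition $\calD$ itself.
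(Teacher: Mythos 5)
Your proposal matches the paper's approach: the theorem is indeed just a summary of the preprocessing steps (build $\calD$ via Lemma~\ref{lem:10}(7), point-location on $\calD$, per-cell ray-shooting structures, and the Lemma~\ref{lem:10}(5) preprocessing, each $O(n)$) combined with the $O(h\log\frac{n}{h})$ query bound from the preceding lemma, and the paper likewise presents the theorem without a separate proof. Your accounting of the preprocessing ingredients and your citation of the walk/log-sum argument from the preceding lemma are exactly right.
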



\subsection{The Decomposition $\calD$ and Proving
Lemma~\ref{lem:10}}
\label{sec:decom}

In this section we provide the details for $\calD$ and prove
Lemma~\ref{lem:10}.

\begin{figure}[t]
\begin{minipage}[t]{\linewidth}
\begin{center}
\includegraphics[totalheight=1.6in]{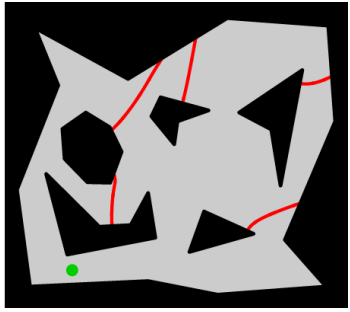}
\caption{\footnotesize
Illustrating the bisector edges of shortest path map (the back area is
the obstacle space): the green point
is the source $s$ and the red curves are the bisector edges. The
figure is generated by the applet in~\cite{ref:HershbergerGe14}} \label{fig:spm}
\end{center}
\end{minipage}
\vspace*{-0.15in}
\end{figure}

Let $\calO$ denote the obstacle space, which is the complement of the
free space of $\calP$. More specifically, $\calO$ consists of the
$h-1$ simple polygonal holes of $\calP$ and the (unbounded) region
outside the outer boundary of $\calP$.
Let $\calB$ denote the union of all bisector edges of $\spm(s)$.
Mitchell \cite{ref:MitchellA91} proved that $\calO\cup \calB$ is
simply connected and $\calP\setminus\calB$ is also
simply connected (e.g., see Fig.~\ref{fig:spm}).
We consider $\calO\cup \calB$ as a planar graph $G$, defined as follows.

The vertex set of $G$ consists of all obstacles of $\calO$ and all triple points
of $\spm(s)$. For any two vertices of $G$, if they are connected by a chain of
bisector edges in $\spm(s)$ such
that the chain does not contain any other vertex of
$G$, then $G$  has an edge connecting the two vertices, and further,
we call the above chain of bisector edges a {\em bisector
super-curve} (e.g., in Fig.~\ref{fig:spm}, each red curve is a
bisector super-curve). We have the following observation about $G$.

\begin{observation}
$G$ is a simple graph, i.e., $G$ does not have a self-loop and no two vertices have more
than one edge.
$G$ has $O(h)$ vertices, edges, and faces.
\end{observation}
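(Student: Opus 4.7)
The plan is to deduce both conclusions from Mitchell's topological statement that $\calO\cup\calB$ is simply connected, together with elementary tree theory and handshaking. First I would regard $G$ as the $1$-complex obtained from $\calO\cup\calB$ by collapsing each connected component of $\calO$ (each obstacle, viewed as a topological disk on the one-point compactification $S^2$) to a single vertex. Collapsing pairwise disjoint contractible subcomplexes is a homotopy equivalence, so the quotient inherits connectedness and simple connectedness from $\calO\cup\calB$. A connected, simply connected $1$-complex is a tree, so $G$ is a tree.

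The simplicity claim follows immediately: a self-loop or a pair of parallel edges would constitute a cycle, which a tree cannot contain. Being a tree also yields $E(G)=V(G)-1$ and, via Euler's formula, $F(G)=1$, so it suffices to bound $V(G)=h+t$, where $t$ denotes the number of triple points.

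To bound $t$ I would apply handshaking. By the general position assumption every triple point is the common endpoint of exactly three bisector super-curves and hence has degree $3$ in $G$, while (for $h\ge 2$) every obstacle vertex has degree at least $1$ since a connected graph on $\ge 2$ vertices has no isolated vertex. Therefore
\[
3t+h\;\le\;\sum_{v\in V(G)}\deg(v)\;=\;2E(G)\;=\;2(h+t-1),
\]
which gives $t\le h-2$ and so $V(G),E(G),F(G)=O(h)$. The edge case $h=1$ is trivial: $\calP$ is then a simple polygon, $\spm(s)$ has no triple points, and $G$ reduces to a single isolated vertex.

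The most delicate step is the topological identification in the first paragraph, i.e., verifying that collapsing each obstacle inside $\calO\cup\calB$ really produces the combinatorial graph $G$. This in turn reduces to showing that distinct bisector super-curves are internally disjoint, which follows from the general position assumption that no three bisector edges meet at a common point. Everything after that identification is routine.
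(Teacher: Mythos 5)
Your argument is correct and takes a genuinely different, more self-contained route than the paper's. The paper proves simplicity directly from Mitchell's fact that $\calP\setminus\calB$ is simply connected (a self-loop or a pair of parallel edges would bound a region disjoint from $\calP\setminus\calB$, contradicting simple connectedness), and then imports two independent $O(h)$ bounds from Eriksson-Bique et al.\ — on the number of triple points and on the number of faces of the $(\le 1)$-shortest-path map — finishing with Euler's formula for the edge count. You instead extract a stronger structural fact: collapsing the $h$ pairwise-disjoint obstacle disks inside the simply connected set $\calO\cup\calB$ is a homotopy equivalence, so $G$ is a simply connected $1$-complex, i.e.\ a tree. From there simplicity is immediate, there is exactly one face, and the triple-point count follows from elementary degree counting ($3t + h \le 2E = 2(h+t-1)$, so $t \le h-2$ for $h\ge 2$) rather than by citation. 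Your route buys a sharper conclusion ($G$ is acyclic, $F=1$, $t\le h-2$) and avoids the Eriksson-Bique machinery entirely, at the cost of leaning on the full strength of Mitchell's statement that $\calO\cup\calB$ is simply connected (the paper's simplicity argument uses only $\calP\setminus\calB$) and on the quotient identification you rightly flag as the delicate point. Both of those are sound: the obstacles are pairwise disjoint topological disks (the unbounded one becomes a disk after one-point compactification), the bisector super-curves are internally disjoint by the general-position assumption, and every triple point has degree exactly $3$ since it is the common endpoint of the three bisector edges separating its three incident cells.
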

\begin{proof}
The first part of the observation can be proved easily from
Mitchell's observation in~\cite{ref:MitchellA91}
that $\calP\setminus\calB$ is simply connected, as follows.

Indeed, assume to the contrary that $G$ has a self-loop at a vertex
$v$. According to our definition, the self-loop corresponds to a
bisector super-curve that connects the vertex $v$ (either a triple
point or an obstacle) to itself.
Let $R$ be region bounded by bisector-super curve and $v$. Hence, $R$
is closed, which contradicts with that $\calP\setminus\calB$ is simply connected.

Similarly, assume to the contrary that two vertices $u$ and $v$ have
two edges. Then, the
two edges correspond to two bisector super-curves. Thus, the region bounded by the two
bisector super-curves and the two vertices is closed, incurring
contradiction again.

To prove the second part of the observation, note that
$G$ is a planar graph.

First, it is known that the number of triple points is
$O(h)$~\cite{ref:Eriksson-BiqueGe15}. Since there are $h$
obstacles in $\calO$, the number of vertices of $G$ is $O(h)$.

Second, the faces of $G$ correspond exactly to the faces of the $(\leq
1)-\spm$ of $\calP$ defined in \cite{ref:Eriksson-BiqueGe15}, whose
total number is proved to be $O(h)$ \cite{ref:Eriksson-BiqueGe15}
(see Lemma 4.3 with $k=1$).
Therefore, the number of faces of $G$ is $O(h)$.

Finally, since both the number of vertices and the number of faces of $G$ are
$O(h)$, the number of edges of $G$ is also $O(h)$.
\qed
\end{proof}

Let $V_1$ be the set of all triple points. It is known that
$|V_1|=O(h)$~\cite{ref:Eriksson-BiqueGe15}.
Let $V_2$ be the set of intersections between obstacle edges
and bisector edges of $\spm(s)$. It is not difficult to see that
each point of $V_2$ corresponds to
an intersection between an obstacle and a bisector super-curve.
Since $G$ has $O(h)$ edges, there are $O(h)$ bisector super-curves.
Thus, $|V_2|=O(h)$.
Recall that $V$ consists of three copies of each point of $V_1$ and
two copies of each vertex of $V_2$.  Since both $|V_1|$ and
$|V_2|$ are $O(h)$, we have $|V|=O(h)$.
This proves Lemma~\ref{lem:10}(1).

Since $|V|=O(h)$, $\Pi_V$ is the set of $O(h)$ shortest paths. Note
that each edge of any path of $\Pi_V$ except the last edge (i.e., the
one connecting a point of $V$) is an edge of the shortest path
tree $\spt(s)$. Hence, the total number of edges of the tree $T_{V}$ is $O(n)$.
Since $\calD$ is the decomposition of $\calP$ by the edges of
$T_{V}$, the combinatorial
size of $\calD$ is $O(n)$. This proves Lemma~\ref{lem:10}(2).

Throughout the paper, let $h^*=|V|$. Hence, $h^*=O(h)$.

To prove the rest of Lemma~\ref{lem:10}, we introduce
another decomposition $\calD'$ as follows.

\begin{definition}
Define $\calD'$ to be the decomposition of $\calP$ by the edges of
$T_{V}\cup \calB$.
\end{definition}

By definition, $\calD$ can be obtained from $\calD'$ by removing all bisector edges of $\calB$.


\begin{lemma}\label{lem:30}
Each cell of $\calD'$ is simply connected.
\end{lemma}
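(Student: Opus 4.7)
The plan is to apply Mitchell's theorem~\cite{ref:MitchellA91} that $\calP \setminus \calB$ is simply connected, and then view $T_{V}$ as a system of cuts inside this simply connected region, so that the resulting pieces---which are exactly the cells of $\calD'$---are all simply connected.

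I would first check that $T_{V}$ is embedded in $\calP$ in a favourable way: (i) its root $s$ lies in the interior of $\calP\setminus\calB$ (by the general position assumption $s$ is off every bisector); (ii) every other vertex of $T_{V}$ lies on the boundary $\partial\calP\cup\calB$, since non-root vertices are either obstacle vertices (on $\partial\calP$) or copies of points of $V$ (on $\calB$); and (iii) every edge of $T_{V}$ is a sub-segment of a shortest path from $s$, so its relative interior lies in a single open cell of $\spm(s)$ and hence inside $\calP\setminus\calB$, disjoint from $\calB$.

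I would then establish the following topological fact by induction on $|E(T)|$: if $D$ is a simply connected open planar region and $T$ is a finite tree embedded in $\bar D$ whose root lies in $D$, whose other vertices lie on $\partial D$, and whose edge interiors lie in $D$, then every connected component of $D\setminus T$ is simply connected. For the inductive step, pick a leaf edge $e=\overline{uv}$ with $v$ the leaf on $\partial D$, and consider the simply connected piece $C$ (guaranteed by the inductive hypothesis for $T\setminus\{e\}$) that contains the interior of $e$: if $u\in\partial D$, then $e$ is a chord of $C$ and cutting along it splits $C$ into two simply connected pieces; if $u=s$, then $e$ is a slit from interior to boundary and cutting along it keeps $C$ simply connected. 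Applied to $D=\calP\setminus\calB$ and $T=T_{V}$, this gives the lemma, since the cells of $\calD'$ are exactly the components of $D\setminus T_{V}=\calP\setminus(T_{V}\cup\calB)$.

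The main obstacle will be handling junctions where several edges of $T_{V}$ meet at a single boundary point---for example, at an obstacle vertex that is the parent of many children in $\spt(s)$. At such a ``fan'' the chord-or-slit dichotomy needs to be checked carefully for each edge as it is removed; alternatively, one can perturb the embedding infinitesimally so that each point of $\partial D$ is incident to at most one edge of $T_{V}$ and argue that the resulting combinatorial decomposition agrees with $\calD'$.
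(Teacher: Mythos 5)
Your approach mirrors the paper's: start from Mitchell's result that $\calP\setminus\calB$ is simply connected, then show inductively that cutting along arcs of $T_V$ preserves simple connectivity of the pieces, the only real difference being granularity (the paper adds one whole shortest path $\pi(s,v_i)$ at a time, whereas you remove one leaf edge of $T_V$ at a time). Your chord-vs-slit dichotomy is a hair imprecise---whether a leaf edge $e=\overline{uv}$ behaves as a chord or a slit of the piece $C$ depends on whether $u$ still has incident edges of $T_V\setminus\{e\}$ or lies on $\partial D$ (in particular $u=s$ still gives a chord whenever $s$ is a branch point of $T_V$)---but both cases preserve simple connectivity, so the induction and the conclusion are unaffected.
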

\begin{proof}
Let $Q_0$ be the decomposition of $\calP$ by the edges of $\calB$. Note that $Q_0$
is exactly $\calP\setminus \calB$, which is simply connected
\cite{ref:MitchellA91}.

Let the points of $V$ be $v_1,v_2,\ldots,v_{h^*}$, ordered
arbitrarily. Consider the decomposition $Q_1$ of $Q_0$ by the shortest path
$\pi(s,v_1)$.  Note that $Q_1$ may have more than one connected cell.
Recall that $v_1$ is on a bisector edge
of $\calB$. Since $Q_0$ is simply connected,
$\pi(s,v_1)$ does not cross any bisector edges of
$\spm(s)$, and $\pi(s,v_1)$ itself does not form any cycle,
each cell of  $Q_1$ is simply connected.

Similarly, consider the decomposition $Q_2$ of
$Q_1$ by the shortest path $\pi(s,v_2)$. Again, $\pi(s,v_2)$
does not cross any bisector edge of $\calB$. Further, by
Observation~\ref{obser:10}(1), $\pi(s,v_2)$
and $\pi(s,v_1)$ do not cross each other. Hence, $\pi(s,v_2)$ does not
cross any edge of $Q_1$. Since each cell of $Q_1$ is simply connected,
each cell of $Q_2$ is also simply connected.

We keep considering the rest of the paths $\pi(s,v_i)$ for $i=3,4, \ldots,
h^*$ one by one in the same way as above.
By the similar argument we can obtain that each cell of
$\calD_{h^*}$, which is $\calD'$, is simply connected.
\qed
\end{proof}



It is known that $\calP\setminus \calB$ is simply connected and $\pi(s,t)$ is in
$\calP\setminus \calB$ for any point $t\in \calP$~\cite{ref:MitchellA91}.
To simplify the discussion, together with the copies of the points of
$V$, we consider $\calP'=\calP\setminus \calB$ as a
simple polygon (with some curved edges) by making two copies for each interior point
of every bisector super-curve such that they respectively belong
to the two sides of the curve. In this way, for any point $t\in
\calP'$, it has a unique shortest path $\pi(s,t)$ from $s$ in $\calP'$, which is also a
shortest path in $\calP$. In this way, $\calD'$ becomes a decomposition of $\calP'$ by
the tree $T_V$.

Consider any cell $\Delta'$ of $\calD'$. Recall that $\calV$ is the set of all vertices of $\calP$. We consider the points of
$\calV\cup V\cup \{s\}$ on the boundary $\partial \Delta'$ of $\Delta'$
as vertices of $\Delta'$. Then, the boundary portion between any two adjacent
vertices of $\Delta'$ is an obstacle edge, an edge of $T_V$,
or a bisector super-curve.
Let $p$ be any point of $\Delta'$.  Let $r_{\Delta'}$
be the point of $\Delta'\cap \pi(s,p)$ closest to $s$.
We call $r_{\Delta'}$ the {\em super-root} of $\Delta'$, which is
unique (i.e., independent of $p$) due to the following lemma.

\begin{lemma}\label{lem:40}
\begin{enumerate}
\item
The point $r_{\Delta'}$ is in $\calV\cup \{s\}$, i.e.,
it is either $s$ or an obstacle vertex.
\item
$\pi(s,r_{\Delta'})$ is a sub-path of a shortest path in $\Pi_V$.
\item
For any point $t\in \Delta'$, the concatenation of
$\pi(s,r_{\Delta'})$ and the shortest path from $r_{\Delta'}$ to $t$
in $\Delta'$ is the shortest path $\pi(s,t)$ from $s$ to $t$ in
$\calP'$.
\end{enumerate}
\end{lemma}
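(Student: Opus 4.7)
The plan is to establish the three parts together by analyzing how $\pi(s,p)$ enters the cell $\Delta'$. Fix $p\in\Delta'$ and consider $\pi(s,p)$ in $\calP'$. I first verify that $\pi(s,p)$ cannot cross $\partial\Delta'$: the boundary is composed of obstacle-edge portions, bisector super-curves of $\calB$, and edges of $T_V$; $\pi(s,p)\subset\calP$ cannot cross any obstacle edge, $\pi(s,p)\subset\calP'=\calP\setminus\calB$ cannot cross any bisector super-curve, and by Observation~\ref{obser:10}(1), $\pi(s,p)$ does not cross any shortest path in $\Pi_V$, hence no edge of $T_V$. Therefore $\pi(s,p)$ meets $\partial\Delta'$ only tangentially, at a vertex of $\Delta'$ or along a $T_V$-edge that it shares.

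If $s\in\Delta'$ then $r_{\Delta'}=s$ handles (1)--(3) immediately, so assume $s\notin\Delta'$, and let $r$ be the first point of $\pi(s,p)$ in $\overline{\Delta'}$, necessarily on $\partial\Delta'$ by the non-crossing property. Since $\pi(s,p)$ is a polygonal chain bending only at obstacle vertices, the first touch occurs either at an obstacle vertex on $\partial\Delta'$ where $\pi(s,p)$ bends, or at an interior point of an edge of $\pi(s,p)$ that coincides with a $T_V$-edge on $\partial\Delta'$. In the second case, following $\pi(s,p)$ backward along the shared $T_V$-edge to its $s$-closer obstacle-vertex endpoint gives an earlier point of $\pi(s,p)\cap\partial\Delta'$, so $r$ is actually that endpoint. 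Either way, $r\in\calV$, proving~(1). At $r$ the boundary $\partial\Delta'$ must have a non-obstacle adjacent edge (otherwise the obstacle would block any approach from outside $\Delta'$), and under the general-position assumption that no bisector edge of $\spm(s)$ meets an obstacle vertex, this adjacent edge must be a $T_V$-edge. That $T_V$-edge is a sub-segment of some $\pi(s,v^*)\in\Pi_V$ and $r$ lies on $\pi(s,v^*)$, hence $\pi(s,r)$ is a prefix of $\pi(s,v^*)\in\Pi_V$, proving~(2).

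For~(3) I first establish that $r$ is independent of $p$. The $T_V$-edges on $\partial\Delta'$ form a connected sub-tree of $T_V$: if they split into two components, the $T_V$-path joining them would lie in the interior of $\Delta'$, contradicting that $\Delta'$ is a single cell of the decomposition by $T_V\cup\calB$. Let $r^*$ be the unique vertex of this sub-tree closest to $s$ in $T_V$. Applying the previous paragraph to any $t\in\Delta'$, the first touch of $\pi(s,t)$ with $\partial\Delta'$ is a vertex $r'$ of this sub-tree; if $r'\neq r^*$ then $r'$ is a proper descendant of $r^*$ in $T_V$, so $\pi(s,r')$ — which coincides with the $T_V$-path from $s$ to $r'$ — passes through $r^*\in\partial\Delta'$ earlier, and since $\pi(s,t)$ shares the prefix $\pi(s,r')$, it too passes through $r^*$ before $r'$, contradicting the choice of $r'$. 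Hence $r_{\Delta'}=r^*$ for every $t\in\Delta'$. Finally, the sub-path of $\pi(s,t)$ from $r_{\Delta'}$ to $t$ lies in $\overline{\Delta'}$ (it never re-crosses $\partial\Delta'$) and is the shortest path from $r_{\Delta'}$ to $t$ in $\Delta'$: a strictly shorter $\Delta'$-path prefixed by $\pi(s,r_{\Delta'})$ would give a strictly shorter $s$-to-$t$ path in $\calP'$, contradicting optimality of $\pi(s,t)$. This completes~(3).

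The main obstacle is the uniqueness argument in~(3), which I plan to handle through the connected sub-tree structure of $T_V$-edges on $\partial\Delta'$ (using simple-connectedness of $\Delta'$ from Lemma~\ref{lem:30}) together with the uniqueness of shortest-path prefixes to $T_V$-vertices, forcing every $\pi(s,t)$ to first touch $\partial\Delta'$ at the unique $s$-closest vertex $r^*$ of that sub-tree.
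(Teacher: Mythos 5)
Your proposal takes a genuinely different route from the paper. The paper proves Lemma~\ref{lem:40} by the same induction as Lemma~\ref{lem:30}: it traces the sequence of decompositions $Q_0,Q_1,\ldots,Q_{h^*}=\calD'$, and at each step analyzes how cutting a cell of $Q_{i-1}$ along $\pi(s,v_i)$ splits it into sub-cells, showing the super-root is preserved or created at a well-understood vertex $a\in\pi(r_{\Delta'},v_i)$. Your approach is a single direct argument: show $\pi(s,p)$ cannot cross $\partial\Delta'$, identify the first touch point, classify it, and then pin down its uniqueness via the $T_V$-structure on $\partial\Delta'$. This is an appealing alternative, and parts (1), (2), and the final minimality step of (3) are handled correctly.

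However, there is a genuine gap in the uniqueness argument of part~(3), which you yourself flag as the ``main obstacle.'' Your one-line justification that ``the $T_V$-edges on $\partial\Delta'$ form a connected sub-tree of $T_V$: if they split into two components, the $T_V$-path joining them would lie in the interior of $\Delta'$'' does not hold as stated. Edges of $T_V$ are \emph{never} in the interior of any cell of $\calD'$ — that is precisely what being a cell of the decomposition by $T_V\cup\calB$ means — so the hypothesized $T_V$-path cannot lie in $\mathrm{int}(\Delta')$ no matter what. The path could, a priori, run through other cells, around the outside of $\Delta'$, and touch $\partial\Delta'$ only at its two endpoints; your argument does not rule this out. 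The connectedness claim is in fact correct, but it needs real support: either a topological count (the leaves of $T_V$ all lie on $\partial\calP'$ and $\calP'$ is a disk, so Euler's formula forces each face to have exactly one maximal $T_V$-arc and one $\partial\calP'$-arc), or the route the paper itself uses later in Lemma~\ref{lem:50}, namely that $\Delta'$ lies in a single region $R_i$ and its $T_V$-edges all belong to $\pi(s,v_i)\cup\pi(s,v_{i+1})$. Without one of these, the existence of a single $s$-closest vertex $r^*$ of a connected subtree — and hence the well-definedness of $r_{\Delta'}$, which the whole of part~(3) rests on — is not established. The remainder of your argument (first touch is forced to $r^*$, the tail of $\pi(s,t)$ stays in $\overline{\Delta'}$, and the minimality exchange) does go through once connectedness is granted.
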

\begin{proof}
We prove the lemma by induction in a similar way as in Lemma~\ref{lem:30}.
We use the same terminology as in the proof of Lemma~\ref{lem:30}.
Let the points of $V$ be $v_1,v_2,\ldots,v_{h^*}$, ordered arbitrarily.
Let $Q_0=\calP\setminus\calB$. For each $1\leq i\leq
h^*$, let $Q_i$ denote the decomposition of $Q_{i-1}$ by
$\pi(s,v_i)$. We let $\Pi_0=\emptyset$. For each $1\leq i\leq
h^*$, let $\Pi_i=\Pi_{i-1}\cup \{\pi(s,v_i)\}$. Hence, $\Pi_V=\Pi_{h^*}$.

Initially, consider the decomposition $Q_0$. Note
that there is only one cell $\Delta'$ in $Q_0$.  Clearly,
$r_{\Delta'}=s$ and all three statements hold for $Q_0$ and $\Pi_0$.
We assume the lemma statements hold for $Q_{i-1}$ and $\Pi_{i-1}$.
Our goal is to prove that the lemma statements hold for $Q_i$ and $\Pi_i$.

Let $\Delta'$ be the cell of $Q_{i-1}$ containing $v_i$. By induction,
$\pi(s,v_i)$ is the concatenation of $\pi(s,r_{\Delta'})$ and the
shortest path $\pi(r_{\Delta'},v_i)$ from $r_{\Delta'}$ to $v_i$ in
$\Delta'$. Also by induction, $\pi(s,r_{\Delta'})$ is a sub-path of
$\Pi_{i-1}$. Hence, $\pi(s,v_i)$ does not partition any cell of
$Q_{i-1}$ other than $\Delta'$. In other words, for any cell
$\Delta''$ of $Q_{i-1}$, if $\Delta''\neq \Delta'$, then $\Delta''$ is
still in $Q_i$, and thus the lemma statements still hold on $\Delta''$
and $\Pi_i$.

For the cell $\Delta'$, $\pi(r_{\Delta'},v_i)$
partitions $\Delta'$ into multiple {\em sub-cells}.
Consider any sub-cell $\delta$ of $\Delta'$. Our goal is to show that
the lemma statements hold on $\delta$ and $\Pi_i$. Depending on
whether $\delta$ contains $r_{\Delta'}$, there are two cases.

\paragraph{The case $r_{\Delta'}\in \delta$.}
We first consider the case where $\delta$ contains $r_{\Delta'}$.
Consider any point $p$ in $\delta$. Since $\delta\subseteq \Delta'$, $r_{\Delta'}\in \delta$, and the point of $\Delta'\cap \pi(s,p)$ closest
to $s$ is $r_{\Delta'}$, the point of $\delta\cap \pi(s,p)$ closest
to $s$ is also $r_{\Delta'}$. Hence, $r_{\delta}=r_{\Delta'}$. By
induction, the first and second statements of the lemma hold for
$\delta$ and $\Pi_i$.

For the third statement, consider any point $t\in \delta$. Since $t\in \Delta'$,
$\pi(s,t)$ is a concatenation of $\pi(s,r_{\Delta'})$ and
$\pi(r_{\Delta'},t)$, and the latter path is in $\Delta'$.
To prove the third statement, it sufficient to show that $\pi(r_{\Delta'},t)$ is in $\delta$. Indeed, assume to the contrary that $\pi(r_{\Delta'},t)$ is not in
$\delta$. Then, since $\delta$ is a cell of the decomposition of
$\Delta'$ by $\pi(r_{\Delta'},v_i)$, $\pi(r_{\Delta'},t)$ must cross
$\pi(r_{\Delta'},v_i)$. However, this is not possible due to
Observation~\ref{obser:10}(1).
Hence, $\pi(r_{\Delta'},t)$ must be in $\delta$.

\paragraph{The case $r_{\Delta'}\not\in \delta$.}
Suppose $\delta$ does not contain $r_{\Delta'}$. Let $a$ be the point of
$\pi(r_{\Delta'},v_i)\cap \delta$ closest to $r_{\Delta'}$. We first show that for any
point $p\in \delta$, $a$ is the point of $\pi(s,p)\cap \delta$ closest to $s$.

Indeed, since $p\in \Delta'$, $\pi(s,p)$ contains $r_{\Delta'}$ and $\pi(r_{\Delta'},p)$ is
in $\Delta'$. Since $r_{\Delta'}$ is not in $\delta$, let $b$ be the first point in
$\delta$ we encounter if we traverse on $\pi(r_{\Delta'},p)$ from $r_{\Delta'}$ to $p$.
Clearly, $b$ is not $r_{\Delta'}$ since otherwise $r_{\Delta'}$ would
be in $\delta$. Since $\delta$ is a cell of the decomposition of
$\Delta'$ by $\pi(r_{\Delta'},v_i)$, $b$ must be on $\pi(r_{\Delta'},v_i)$.
In other words, $b\in \delta\cap \pi(r_{\Delta'},v_i)$.

Since $b$ is on both $\pi(r_{\Delta'},v_i)$ and
$\pi(r_{\Delta'},p)$, $b$ is also the first point in $\delta$ we
encounter if we
traverse on $\pi(r_{\Delta'},v_i)$ from $r_{\Delta'}$ to $v_i$. Thus, $b$
is the point of $\pi(r_{\Delta'},v_i)$ closest to $r_{\Delta'}$.
Hence, we obtain $b=a$.

On the other hand, the definition of $b$ implies that $b$ is
the point of $\pi(s,p)\cap \delta$ closest to $s$.

Therefore, $a$ is the point of $\pi(s,p)\cap \delta$ closest to $s$.
This implies that $r_{\delta}=a$.

Note that $a$ is a vertex of $\pi(r_{\Delta'},v_i)$ and $a$ cannot be
$v_i$. Thus, $a$ must be either $s$
or an obstacle vertex (in fact, $a$ cannot be $s$ either due to $a\neq
r_{\Delta'}$), which
proves the first statement of the lemma.

Since $a$ is on $\pi(r_{\Delta'},v_i)$ and thus is on $\pi(s,v_i)$,
$\pi(s,a)$ is a sub-path of $\pi(s,v_i)\in \Pi_i$. This proves the
second statement of the lemma.

For the third statement, consider any point $t\in \delta$. Since $t\in
\Delta'$, by induction,
$\pi(s,t)$ is the concatenation of $\pi(s,r_{\Delta'})$ and
$\pi(r_{\Delta'},t)$, and $\pi(r_{\Delta'},t)$ is in $\Delta'$.
Using the same analysis as above, we can show that $\pi(r_{\Delta'},t)$ must
contain $a$. Further, the portion of $\pi(r_{\Delta'},t)$ between $a$
and $t$ must be in $\delta$, since otherwise $\pi(r_{\Delta'},t)$ would cross
$\pi(r_{\Delta'},v_i)$, incurring contradiction.
Hence, the portion of $\pi(r_{\Delta'},t)$ between $a$ and $t$ is the
shortest path from $a$ to $t$ in $\delta$.
Thus, $\pi(s,t)$ is the concatenation of $\pi(s,a)$ and the shortest
path from $a$ to $t$ in $\delta$. This proves the third statement.

This proves that all lemma statements hold for $\delta$ and $\Pi_i$,
and thus hold for $Q_i$ and $\Pi_i$.

The lemma thus follows.
\qed
\end{proof}


\begin{observation}\label{obser:20}
Each cell $\Delta'$ of $\calD'$ has at most one bisector super-curve on its
boundary.
\end{observation}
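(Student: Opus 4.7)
The plan is to argue by contradiction. Suppose some cell $\Delta'$ of $\calD'$ has two bisector super-curves $C_1,C_2$ on $\partial\Delta'$, with endpoint pairs $\{u_1,w_1\}$ and $\{u_2,w_2\}$; let $r=r_{\Delta'}$ be the super-root from Lemma~\ref{lem:40}. By the construction of $V$, the four $\Delta'$-side copies $u_1^+,w_1^+,u_2^+,w_2^+$ all lie in $V$, so the shortest paths $\pi(s,u_1^+)$, $\pi(s,w_1^+)$, $\pi(s,u_2^+)$, $\pi(s,w_2^+)$ all belong to $T_V$ and, by Lemma~\ref{lem:40}, each factors as $\pi(s,r)$ followed by a tail $\pi(r,\cdot)\subseteq\overline{\Delta'}$. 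The observation immediately preceding the statement rules out multi-edges in $G$, so $\{u_1,w_1\}\neq\{u_2,w_2\}$ and at least three of the four tails are distinct paths.

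The first substantive step I would carry out is to prove that every tail $\pi(r,\cdot)$ lies entirely on $\partial\Delta'$, not merely in $\overline{\Delta'}$. Each edge of $T_V$ is an edge of the decomposition $\calD'$ and hence separates two cells of $\calD'$; if the relative interior of such an edge sat in the open interior of $\Delta'$, that edge would split $\Delta'$ into more than one cell, contradicting that $\Delta'$ is a single cell of $\calD'$. So each tail, being a sequence of $T_V$-edges contained in $\overline{\Delta'}$, must actually run along $\partial\Delta'$.

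The next step is to show that the distinct tails use pairwise distinct first edges out of $r$. Suppose two of them shared a nontrivial initial $T_V$-segment and diverged at some vertex $y$; then at $y$ three $T_V$-edges meet (the incoming shared edge and the two diverging edges), and by the previous step all three of them lie on $\partial\Delta'$. But $\partial\Delta'$ is a simple closed curve by Lemma~\ref{lem:30}, so only two of its edges can be incident to $y$---contradiction. Hence at least three pairwise distinct first edges incident to $r$ all sit on $\partial\Delta'$, which once again violates simple-connectedness of $\partial\Delta'$ at $r$; this is the final contradiction. The degenerate case $r=s$ interior to $\Delta'$ is handled by the same subdivision argument: any first edge from $r$ towards a $u_i^+$ on $\partial\Delta'$ would necessarily pass through the interior of $\Delta'$ and split it, so no bisector super-curve can be on $\partial\Delta'$ to begin with, and the observation holds vacuously.

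The main obstacle I expect is the ``tail-on-the-boundary'' step, which is the linchpin of the proof: one has to combine the fact that $T_V$-edges are edges of $\calD'$ with Lemma~\ref{lem:40} to conclude that no portion of a tail can enter the interior of $\Delta'$. Once this is in place, the counting argument at $r$ via Lemma~\ref{lem:30} is essentially immediate.
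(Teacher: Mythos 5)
Your proof is correct, and it is a genuinely different route from the paper's. The paper argues very tersely as follows: if $C_1,C_2$ were both on $\partial\Delta'$, then for \emph{some} endpoint $p$ of one of them the path $\pi(s,p)$ would have to be a chord that partitions $\Delta'$, separating $C_1$ from $C_2$; but since $p\in V$ this path lies in $\Pi_V$, and no path of $\Pi_V$ can split a cell of $\calD'$ --- contradiction. That ``there must exist such a chord'' step is asserted rather than argued in the paper. You instead take the dual view: you use Lemma~\ref{lem:40} to factor each $\pi(s,p)$ through the super-root $r$, observe that the tails, being $T_V$-edges and hence edges of $\calD'$, cannot enter the open interior of $\Delta'$ and so must run along $\partial\Delta'$, and then count the first edges emanating from $r$ along $\partial\Delta'$ to exceed the two that a boundary vertex of a cell of a planar subdivision can admit. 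This is a cleaner contradiction and actually fills in the gap in the paper's sketch: it explains why no such chord can appear and where the real obstruction sits. The one place where your write-up is a little loose is the appeal to Lemma~\ref{lem:30} as giving that ``$\partial\Delta'$ is a simple closed curve'' with ``only two edges incident'' to any vertex; Lemma~\ref{lem:30} asserts simple-connectedness, not simplicity of the boundary, and a simply connected cell of a planar arrangement could \emph{a priori} visit a vertex more than once. What really saves the day is that $r$ and any divergence point $y$ are on $\partial\calP$ (they are obstacle vertices, or $s$), so the cell is a single angular wedge there and thus has exactly two bounding edges at that vertex; spelling this out would make the counting step airtight. Your degenerate-case analysis (root interior to the cell forces $V=\emptyset$ so the statement is vacuous) is also correct.
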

\begin{proof}
Assume to the contrary there are two bisector super-curves on the
boundary of $\Delta'$. Then, there must exist an endpoint $p$ of one of
these two bisector super-curves such that the shortest path $\pi(s,p)$
partitions $\Delta'$ into two cells that contain the two bisector
super-curves, respectively. This implies that $\pi(s,p)$ is not in
$\Pi_V$. Since the two endpoints of every bisector super-curve are in
$V$, we obtain $p\in V$ and $\pi(s,p)$ is not in $\Pi_V$, a contradiction.
\qed
\end{proof}

Since $T_V$ is a planar tree, we can define its canonical lists as
discussed in Section~\ref{sec:pre}.
Let $v_1$ be an arbitrary base leaf of $T_V$, which can be found in
$O(n)$ time. Let the leaf list $\calL_l(T_V,v_1)$ be
$v_1,v_2,\ldots, v_{h^*}$, which follow the
counterclockwise order  along $\partial \calP'$.



For each $1\leq i\leq h^*$, let $\alpha_i$ denote the portion of
$\partial\calP'$ counterclockwise from $v_i$ to $v_{i+1}$ (let $v_{h^*+1}$ refer
to $v_1$). Note that $\alpha_i$
is either a bisector super-curve or a chain of obstacle edges.
Suppose we move a point $t$ on $\alpha_i$ from
$v_i$ to $v_{i+1}$. The shortest path $\pi(s,t)$ will continuously change with
the same topology since $\pi(s,t)$ is always in $\calP'$ (which is
simply connected). Let $R_i$ be
the region of $\calP'$ that is ``swept'' by $\pi(s,t)$ during the
above movement of $t$. More specifically, let $p_i$ be the common point on
$\pi(s,v_i)\cap \pi(s,v_{i+1})$ that is farthest to $s$. Then, $R_i$
is bounded by $\pi(p_i,v_i)$, $\pi(p_i,v_{i+1})$, and $\alpha_i$.
For convenience of discussion, we let $R_i$ also contain the common
sub-path $\pi(s,p_i)=\pi(s,v_i)\cap\pi(s,v_{i+1})$ and we call
$\pi(s,p_i)$ the {\em tail} of $R_i$.  We call the region bounded by
$\pi(p_i,v_i)$, $\pi(p_i,v_{i+1})$, and $\alpha_i$ the {\em cell} of
$R_i$.  We consider
$\pi(s,v_i)$, $\pi(s,v_{i+1})$, and $\alpha_i$ as the three portions
of the boundary $\partial R_i$ of $R_i$. The definition implies that
for any point $t$ in $R_i$,
$\pi(s,t)$ is in $R_i$. In fact, if $t$ is in the cell of $R_i$, then
$\pi(s,t)$ is the concatenation of $\pi(s,p_i)$ and the shortest path
from $p_i$ to $t$ in the cell.  Clearly, $\calP'$ is the union of
$R_1,R_2,\ldots, R_{h^*}$. Let $\calR=\{R_1,R_2,\ldots, R_{h^*}\}$.
The next lemma is proved with the help of the regions of $\calR$.
The set $\calR$ will also be quite useful in Section~\ref{sec:first}.
Recall that each edge of $\partial\Delta'$ is either an obstacle edge,
a bisector super-curve, or an edge of $T_V$ (also called a {\em shortest
path edge}).

\begin{lemma}\label{lem:50}
For each cell $\Delta'$ of $\calD'$, there are two shortest paths of $\Pi_V$ that contain all
shortest path edges of $\partial\Delta'$.
\end{lemma}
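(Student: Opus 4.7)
The plan is to exploit the regions $R_1, R_2, \ldots, R_{h^*}$ introduced just before the lemma. I will argue that each cell $\Delta'$ of $\calD'$ coincides with the ``cell of $R_i$'' (the region bounded by $\pi(p_i, v_i)$, $\pi(p_i, v_{i+1})$, and $\alpha_i$) for exactly one index $i$. Once this is established, the two shortest paths $\pi(s, v_i)$ and $\pi(s, v_{i+1})$, both of which lie in $\Pi_V$, together contain every shortest-path edge on $\partial\Delta'$.

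First I would observe that the cells of $\calD'$ are exactly the connected components of $\calP' \setminus T_V$. By construction, $\calP' = \calP \setminus \calB$ is a simple polygon in which every bisector super-curve of $\calB$ already appears (doubled) as part of $\partial\calP'$. Hence the cuts by $\calB$ implicit in the definition of $\calD'$ are already realized as boundary of $\calP'$, and only the edges of $T_V$ produce new interior cuts.

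The main step is to prove that for each $i$, the interior of the cell of $R_i$ forms one such connected component of $\calP' \setminus T_V$. The two bounding paths $\pi(p_i, v_i)$ and $\pi(p_i, v_{i+1})$ lie in $T_V$ and $\alpha_i$ lies on $\partial\calP'$, so the open interior is contained in $\calP' \setminus T_V$. Conversely, no edge of $T_V$ enters this interior: the only candidates are edges of $\pi(s, v_j)$ for some $j \neq i, i+1$, and such a path cannot cross $\pi(s, v_i)$ or $\pi(s, v_{i+1})$ by the non-crossing property in Observation~\ref{obser:10}(1), nor can it cross $\alpha_i$, which is either an obstacle chain or a bisector super-curve and hence avoided by every shortest path (since shortest paths stay in $\calP \setminus \calB$).

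Finally, since $\calP' = \bigcup_i R_i$ and each tail $\pi(s, p_i)$ is contained in $T_V$, the interiors of the cells of the $R_i$'s cover $\calP' \setminus T_V$. Thus every cell $\Delta'$ of $\calD'$ equals the cell of some $R_i$, and its shortest-path boundary is $\pi(p_i, v_i) \cup \pi(p_i, v_{i+1}) \subseteq \pi(s, v_i) \cup \pi(s, v_{i+1})$, with both paths belonging to $\Pi_V$. I expect the main obstacle to be the non-crossing argument in the third paragraph: one must handle carefully the possibility that some $\pi(s, v_j)$ shares a prefix with $\pi(s, v_i)$ or $\pi(s, v_{i+1})$ and branches off at an internal vertex, and use Observation~\ref{obser:10}(1) to rule out that the branch strays into the interior of the cell of $R_i$.
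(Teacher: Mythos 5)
Your proof is correct and follows essentially the same route as the paper's, which also rests on the regions $R_i$ of $\calR$; the paper states tersely that each $\Delta'$ is contained in the cell of some $R_i$ and immediately concludes, whereas you make explicit the missing link -- that no edge of $T_V$ enters the interior of the cell of $R_i$ (via Observation~\ref{obser:10}(1) and the fact that no shortest path crosses $\alpha_i$), so $\Delta'$ actually \emph{equals} that cell. This strengthening is exactly what is needed for the ``therefore'' in the paper's two-line proof to be valid, so your version is the same argument, carried out more carefully.
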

\begin{proof}
By the definitions of the regions of $\calR$, $\Delta'$ is
contained in the cell of a region $R_i$ of $\calR$. Therefore, each shortest
path edge of $\partial\calD'$ belongs to either $\pi(s,v_i)$ or
$\pi(s,v_{i+1})$. 
\qed
\end{proof}

Observe that the decomposition $\calD$ can be obtain  from $\calD'$ by
removing all bisector super-curves. For any bisector
super-curve $\alpha$, the two cells of $\calD'$ incident to $\alpha$
are merged into one cell of
$\calD$. Due to Observation~\ref{obser:20}, a cell of $\calD'$ can be
merged into at most one cell of $\calD$. Therefore, for each cell $\Delta$
of $\calD$, either $\Delta$ is also in $\calD'$ or $\Delta$ is a {\em
merged cell} merged by exactly two cells of $\calD'$. Since every cell
of $\calD'$ is simply
connected, each cell of $\calD$ is also simply connected. This proves
Lemma~\ref{lem:10}(3).

Consider any line segment $\tau\in \calP$. By
Observation~\ref{obser:10}(2), $\tau$ can cross any shortest
path of $\Pi_V$ at most once. Hence, $\tau$ can cross the shortest paths
of $\Pi_V$ at most $O(h)$ times in total. Whenever $\tau$ crosses the boundary
of a cell of $\calD$, it must cross a shortest path of $\Pi_V$.
Thus, $\tau$ can intersect  $O(h)$ cells of $\calD$. This proves
the first part of Lemma~\ref{lem:10}(4). For the second part, consider any cell
$\Delta$. By Lemma~\ref{lem:50}, if $\Delta$ is not a merged cell, then
$\tau$ can cross the boundary of $\Delta$ at most twice; otherwise,
$\tau$ can cross the boundary of $\Delta$ at
most four times. Therefore, the intersection $\tau\cap \Delta$ consists of
at most two (maximal) sub-segments of $\tau$. This proves the second part of
Lemma~\ref{lem:10}(4).

In the sequel, we prove Lemma~\ref{lem:10}(5). Consider any cell $\Delta$ of $\calD$.
According to our discussion above, $\Delta$ is either in $\calD'$ or a merged cell of two
cells $\Delta_1$ and $\Delta_2$ of $\calD'$. If it is the former case, then we also
call $r_{\Delta}$ the {\em super-root} of $\Delta$; otherwise, we call
$r_{\Delta_1}$ and $r_{\Delta_2}$ the two {\em super-roots} of $\Delta$.
Lemma~\ref{lem:40} leads to the following lemma, which proves Lemma~\ref{lem:10}(6).

\begin{lemma}\label{lem:60}
For any cell $\Delta$ of $\calD$, the following hold.
\begin{enumerate}
\item
Its two super-roots are in $\calV\cup \{s\}$.
\item
For each super-root $r$ of $\Delta$, $\pi(s,r)$ is a sub-path of a
shortest path in $\Pi_V$.
\item
For any point $t\in \Delta$, $\pi(s,t)$ is the concatenation of
$\pi(s,r)$ and the shortest path from $r$ to $t$ in $\Delta$, for a
super-root $r$ of $\Delta$.
\end{enumerate}
\end{lemma}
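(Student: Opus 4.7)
The plan is to reduce Lemma~\ref{lem:60} to Lemma~\ref{lem:40} through the already-established structural relation between $\calD$ and $\calD'$. Recall that by Observation~\ref{obser:20} each cell of $\calD'$ carries at most one bisector super-curve on its boundary, so $\calD$ is obtained from $\calD'$ by merging, for every bisector super-curve, the two incident cells of $\calD'$, and each cell of $\calD'$ is merged at most once. Consequently, every cell $\Delta$ of $\calD$ is either (a) a single cell $\Delta'\in\calD'$ with its unique super-root $r_{\Delta'}$, or (b) the union $\Delta_1\cup\alpha\cup\Delta_2$ of two cells $\Delta_1,\Delta_2\in\calD'$ glued along a bisector super-curve $\alpha$, with two super-roots $r_{\Delta_1}$ and $r_{\Delta_2}$. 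The naming convention in the paragraph preceding the lemma fixes the super-roots of $\Delta$ to be exactly these.

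In case (a), statements~(1)--(3) of the lemma are immediate from the three corresponding statements of Lemma~\ref{lem:40}, because any point $t\in\Delta$ is a point of $\Delta'$ and $\calP'$ and $\calP$ yield the same shortest path. In case (b), statement~(1) follows from applying Lemma~\ref{lem:40}(1) separately in $\Delta_1$ and $\Delta_2$, and statement~(2) follows from applying Lemma~\ref{lem:40}(2) to each of $r_{\Delta_1}$ and $r_{\Delta_2}$.

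For statement~(3) in case (b), take any $t\in\Delta$. If $t$ lies on the shared bisector super-curve $\alpha$ we may assign it to either side; otherwise $t$ belongs to exactly one $\Delta_i$. By Lemma~\ref{lem:40}(3) applied inside $\Delta_i\in\calD'$, the path $\pi(s,t)$ is the concatenation of $\pi(s,r_{\Delta_i})$ with the shortest path from $r_{\Delta_i}$ to $t$ taken inside $\Delta_i$; in particular this tail $\pi(r_{\Delta_i},t)$ lies entirely in $\Delta_i\subseteq\Delta$. The remaining point is to upgrade ``shortest inside $\Delta_i$'' to ``shortest inside $\Delta$'', and I expect this to be the only non-routine step. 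I would argue it as follows: any curve from $r_{\Delta_i}$ to $t$ in $\Delta$ is also a curve in $\calP$, hence has length at least $d(r_{\Delta_i},t)$ in $\calP$; but $\pi(r_{\Delta_i},t)$ already attains this length and lies inside $\Delta_i\subseteq\Delta$, so it is also a shortest path from $r_{\Delta_i}$ to $t$ inside $\Delta$. This yields statement~(3) and completes the proof.

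The main obstacle is precisely this last comparison: one might worry that a shorter route from $r_{\Delta_i}$ to $t$ could sneak across the glued bisector super-curve $\alpha$ into $\Delta_{3-i}$. The argument above dispels this concern by noting that $\pi(r_{\Delta_i},t)$ taken in the global polygonal domain $\calP$ is already a lower bound for any path in $\Delta\subseteq\calP$, and Lemma~\ref{lem:40}(3) guarantees that this global shortest path happens to be confined to $\Delta_i$. No new geometry beyond Lemma~\ref{lem:40} and Observation~\ref{obser:20} is needed.
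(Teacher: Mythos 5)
Your proof is correct and follows the same route as the paper's: reduce to Lemma~\ref{lem:40} via the dichotomy that each cell of $\calD$ is either a single cell of $\calD'$ or the merge of two cells of $\calD'$ along a bisector super-curve. The paper's own proof merely asserts this is ``straightforward by Lemma~\ref{lem:40}''; your argument supplies the one detail the paper elides, namely why the shortest path from $r_{\Delta_i}$ to $t$ inside $\Delta_i$ remains shortest inside the larger merged cell $\Delta$, which you rightly settle by observing it already attains the global lower bound $d(r_{\Delta_i},t)$ in $\calP$.
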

\begin{proof}
By Lemma~\ref{lem:40}, the proof is straightforward because either
$\Delta$ is a cell of $\calD$ or a merge of two cells of $\calD$.
\qed
\end{proof}

Recall that for any simple polygon $P$ and a fixed source point, each
segment query can be answered in $O(\log
|P|)$ time after $O(|P|)$ time preprocessing~\cite{ref:ArkinSh16}.
As preprocessing, for each cell $\Delta$ of $\calD$, since it is a simple
polygon, we compute the above segment query data structure
with respect to each super-root of $\Delta$. This takes $O(n)$ time and
space in total by Lemma~\ref{lem:10}(2).

Consider any segment $\tau'$ in a cell $\Delta$ of $\calD$. By
Lemma~\ref{lem:60}, $\pi(s,\tau')$ is the
concatenation of $\pi(s,r)$ from $s$ to a super-root
$r$ of $\Delta$ and the
shortest path $\pi(r,\tau')$ from $r$ to $\tau'$ in $\Delta$. As $r$
is in $\calV\cup\{s\}$ by Lemma~\ref{lem:60}(1), $\pi(s,r)$ is
available from $\spm(s)$,
and $\pi(r,\tau')$ can be found in $O(\log |\Delta|)$ time.
Hence, our query algorithm works as follows. For each super-root
$r$ of $\Delta$, we compute $\pi(s,r)$ and $\pi(r,\tau')$ to obtain a
``candidate'' shortest path from $s$ to $\tau'$. Then, we return the
shorter one of the at most two candidates paths as the solution. The
total time is $O(\log |\Delta|)$.  This proves Lemma~\ref{lem:10}(5).

\paragraph{Remark.} One may wonder why we do not use $\calD'$ instead
of $\calD$ to answer the segment queries. The reason is that the
boundaries of cells of $\calD'$ contain bisector super-curves and
the query segment $\tau$ may intersect a bisector super-curve multiple
times, and thus a similar observation as
Lemma~\ref{lem:10}(4) cannot be guaranteed on $\calD'$.
\paragraph{}

Finally, we prove Lemma~\ref{lem:10}(7) in the following lemma.

\begin{lemma}
Given $\spm(s)$, the decomposition $\calD$ can
be computed in $O(n)$ time.
\end{lemma}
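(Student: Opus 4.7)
My plan is to compute $\calD$ in four stages, each running in $O(n)$ time.

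Stage~1 extracts $\spt(s)$ from $\spm(s)$ by the standard parent-pointer scan over the $O(n)$ cells of $\spm(s)$. Stage~2 enumerates $V$: I walk along the $O(n)$ bisector edges of $\spm(s)$, detecting the triple points (interior vertices where three bisector edges meet) and the intersections of bisector edges with obstacle edges, and I insert the prescribed two or three copies of each. Since $|V|=O(h)$ by Lemma~\ref{lem:10}(1), this costs $O(n)$ time.

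Stage~3 builds $T_V$. Each copy $v\in V$ is associated with a unique cell $\sigma_v$ of $\spm(s)$ (the one whose root lies last on the shortest path selected by the copy), and the unique path $\pi(s,v)$ is the $\spt(s)$-path from $s$ to $r(\sigma_v)$ followed by the final segment $\overline{r(\sigma_v)\,v}$. I mark every such root $r(\sigma_v)$ and propagate the marks toward $s$ by one post-order traversal of $\spt(s)$; an edge of $\spt(s)$ belongs to $T_V$ if and only if its child endpoint has a marked descendant. Appending the $O(h)$ final segments then yields $T_V$ with $O(n)$ edges in $O(n)$ total time, matching Lemma~\ref{lem:10}(2).

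Stage~4 assembles the planar subdivision of $\calP$ induced by $T_V$. The crucial observation is that the cyclic order of edges incident to every vertex of $T_V$ is already encoded in $\spm(s)$: each $\spt(s)$-edge in $T_V$ meets $s$ or an obstacle vertex at a position that can be read directly from the local structure of $\spm(s)$, and each final segment to a $V$-copy enters a known cell of $\spm(s)$ at a prescribed angular position between two known boundary features. Using this embedding data I build a doubly connected edge list for the plane graph $\partial\calP\cup T_V$ and then trace its faces, each step running in time linear in the number of edges, i.e., $O(n)$. Lemma~\ref{lem:10}(3) ensures that every traced face is simply connected and hence is a cell of $\calD$.

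The main obstacle is Stage~4: one must verify carefully that every piece of angular and incidence data needed to embed $\partial\calP\cup T_V$ as a plane graph can be harvested directly from $\spm(s)$ and $\spt(s)$ in $O(n)$ time total, so that neither the DCEL construction nor the face tracing invokes any ray-shooting or point-location query that would introduce logarithmic overhead.
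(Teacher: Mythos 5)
Your proposal is essentially correct, and Stages~1--3 follow the same overall plan as the paper (extract $\spt(s)$, enumerate $V$, and mark the edges of $T_V$ within $\spt(s)$); the bottom-up marking via a single post-order traversal is a valid alternative to the paper's ``walk from each $v$ toward $s$ until hitting a marked edge'' rule, and both are $O(n)$.  Where you diverge is Stage~4, and this is precisely the step you flag as delicate.  The paper sidesteps the angular-data harvesting entirely: it forms the intermediate decomposition $\calD_2$ (the overlay of $\spm(s)$ with all of $\spt(s)$ plus the $O(h)$ final segments to the $V$-copies), whose planar embedding is inherited directly from $\spm(s)$ since each added edge lies inside a known cell of $\spm(s)$.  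Starting from $\calD_2$, obtaining $\calD$ needs only \emph{deletions}---remove the bisector edges and the unmarked $\spt(s)$-edges---and deleting an edge from a planar subdivision is an $O(1)$ face-merge, so no cyclic-order information need ever be reconstructed.  Your from-scratch DCEL construction should also work, because the cyclic order around each vertex of $T_V$ can indeed be read off from the embedding of $\spt(s)$ inside $\spm(s)$, but spelling this out rigorously (in particular at high-degree obstacle vertices and at $s$) takes more care than the deletion route; the paper's trick of refining first and pruning second buys you exactly the certificate of planarity you would otherwise have to re-derive.
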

\begin{proof}
Let $\calD_1$ be the decomposition of $\spm(s)$ by the edges of
$\spt(s)$. As discussed before, we can easily obtain $\spt(s)$ from $\spm(s)$
and thus obtain $\calD_1$ in $O(n)$ time. Further, for each point
$v\in V$, we add to $\calD_1$ the last edge of the shortest path $\pi(s,v)$, which
is also the edge connecting $v$ to the root of the cell of $\spm(s)$
containing $v$. Let $\calD_2$ be the resulting
decomposition, which can be obtained in $O(n)$ time.
Note that each edge of $T_V$ is also an edge of $\calD_2$.

Since $\calD$ is a decomposition of $\calP$ by the
edges of $T_V$, $\calD$ can be obtained from $\calD_2$ by removing those edges
that are not in $\calD$. To this end, we first remove all bisector edges
from $\calD_2$. Then, we remove the edges of $\spt(s)$ that are not in
$T_V$. This can be done by first
marking all edges of $T_V$ in $\calD_2$ and then removing all unmarked
edges of $\spt(s)$ from $\calD_2$. Below we only discuss how to mark all
edges of $T_V$ in $O(n)$ time since the latter step is trivial.

For each vertex $v$ of $V$, we mark the edges of $\pi(s,v)$
in $\calD_2$ as follows. We start from $v$ and traverse along
$\pi(s,v)$ from $v$ to $s$, marking every edge that has not been
marked yet; we stop the traversal either when we encounter $s$ or we
encounter an edge that has been marked. In this way, every edge of
$T_V$ is marked exactly once. Since $T_V$ has $O(n)$ edges,
the above marking algorithm runs in $O(n)$ time.

Thus, the decomposition $\calD$ can be computed in $O(n)$ time.
\qed
\end{proof}

\section{The Quickest Visibility Queries: The Preliminary Result}
\label{sec:first}

In this section, we give our preliminary result on quickest visibility
queries, which sets the stage for our improved result in
Section~\ref{sec:second}.

For any subset $A$ of $\calP$, a point $p\in A$ is called a {\em closest point} of $A$ (with respect to $s$) if $d(s,A)=d(s,p)$.

Given any query point $q$ in $\calP$, our goal is to find a shortest
path from $s$ to $\vis(q)$. Let $q^*$ be a closest point of $\vis(q)$.
To answer the query, it is sufficient to determine $q^*$. Thus we will focus on
finding $q^*$. Note that if $q$ is visible to $s$, then $q^*=s$. We can determine whether $s$ is visible to $q$ in $O(\log n)$ time by checking whether $q$ is in the cell of $\spm(s)$ whose root is $s$. In the following, we assume $s$ is not visible to $q$.


\begin{figure}[t]
\begin{minipage}[t]{\linewidth}
\begin{center}
\includegraphics[totalheight=1.0in]{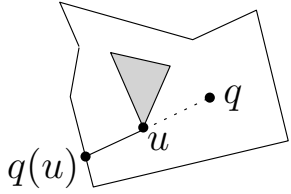}
\caption{\footnotesize
Illustrating a window $\overline{uq(u)}$ of $q$.}
\label{fig:window}
\end{center}
\end{minipage}
\vspace*{-0.15in}
\end{figure}

We define the {\em windows} of $q$ and $\vis(q)$, which were
used for studying the visibility polygons, e.g.,
\cite{ref:BoseEf02,ref:ChenWe15}. Consider an obstacle vertex $u$
that is visible to $q$ such that the two incident obstacle edges of $u$ are on
the same side of the line through $q$ and $u$ (e.g., see Fig.~\ref{fig:window}).
Let $q(u)$ denote the first point on $\partial\calP$ hit by the ray from $u$ along the direction from $q$ to $u$. Then $\overline{uq(u)}$ is called a
{\em window} of $q$; we say that the window is {\em defined by} $u$. Further, we call
$\overline{qq(u)}$ the {\em extended window} of $\overline{uq(u)}$.

Each window of $q$ is an edge of $\vis(q)$, and thus the number of
windows of $q$ is $O(K)$, where $K=|\vis(q)|$.
Further, there must be a closest point $q^*$ that is
on a window of $q$~\cite{ref:ArkinSh16}. Hence, as in~\cite{ref:ArkinSh16},
a straightforward algorithm to compute $q^*$ is to compute
shortest paths from $s$ to all windows of $s$ and the path of
minimum length determines $q^*$.
To compute shortest paths from $s$ to all windows, if we apply our
segment queries on all windows using
Theorem~\ref{theo:segment}, then the total time would be
$O(K\cdot h\cdot \log\frac{n}{h})$.
In the rest of this section, we present an algorithm that can compute
$q^*$ in $O((K+h)\log h\log n)$ time, without having to compute
shortest paths to all windows. The key idea is to prune some (portions of) windows
such that $q^*$ is still in the remaining windows and the shortest paths from $s$ to all
remaining windows can be computed efficiently.

\subsection{The Algorithm Overview}

As the first step, we compute $\vis(q)$, which can be done in $O(K\log
n)$ time after $O(n+h^2\log h)$ time and $O(n+h^2)$ space
preprocessing~\cite{ref:ChenVi15}.
Then, we can find all windows and extended-windows in $O(K)$ time.
For ease of exposition, we make a general position assumption for $q$
that $q$ is not collinear with any two obstacle vertices. The
assumption implies that $q$ is in the interior of $\calP$ and no two windows are collinear.

Let $u_0$ be the root of the cell of $\spm(s)$ containing $q$ (if $q$
is on the boundary of multiple cells, then we take an arbitrary such
cell). Hence, $\pi(s,u_0)\cup\overline{u_0q}$ is a shortest path $\pi(s,q)$ from
$s$ to $q$. Note that $u_0$ must define a window
$\overline{u_0q(u_0)}$ of $q$~\cite{ref:MitchellA91}. Let
$\overline{u_0q({u_0})}, \overline{u_1q({u_1})}, \ldots,
\overline{u_{k}q(u_{k})}$ be all windows of $q$ ordered {\em clockwise}
around $q$. Clearly, $k=O(K)$.
For each $0\leq i\leq k$, let $q_i=q(u_i)$.

Note that the window $\overline{u_0q_0}$ is special in the sense that
$u_0$ is in $\pi(s,q)$. So we first apply our algorithm in
Theorem~\ref{theo:segment} on $\overline{u_0q_0}$ to compute
the closest point $q^*_0$ of $\overline{u_0q_0}$. Clearly, if
$q^*\in \overline{u_0q_0}$, then $q^*=q^*_0$.  In the
following, we assume  $q^*\not\in \overline{u_0q_0}$.
Let $Q=\{q,q_1,q_2,\ldots,q_k\}$. Note that $Q$ does not contain $q_0$ but
$q$.
If $q^*\in Q$, then we can find $q^*$ by computing $d(s,p)$ for
all $p\in Q$, which can be done in $O(k\log n)$ time using $\spm(s)$.
In the following, we assume $q^*\not\in Q$.
Note that the above assumption that $q^*\not\in \overline{u_0q_0}\cup
Q$ is only for arguing the correctness of our following
algorithm, which actually proceeds without knowing whether the
assumption is true or not.

For each $0\leq i\leq k$, let $w_i=\overline{qq_i}$,
i.e., the extended window of $\overline{u_iq_i}$.
Let $W=\{w_i\ |\ 1\leq i\leq k\}$. 
For convenience of discussion, we assume that each $w_i$ of $W$ does not
contain its two endpoints $q$ and $q_i$ (but the endpoints of $w_i$ still refer to $q$ and $q_i$).
Since $q^*\not\in \overline{u_0q_0}\cup Q$,
$q^*$ must be on an extended window of $W$.
Clearly, $q^*$ is also a closest point of $W$.
Since no two windows of $q$ are collinear, no extended-window of $W$
contains another.
We assign each window $w_i\in W$ a direction
from $q$ to $q_i$, so that we can talk about its left or right side.

Suppose $q^*$ is on $w_i\in W$. Since $w_i$ is an open segment, by the
definition of $q^*$, the shortest path
$\pi(s,q^*)$ must reach $q^*$ from either the left side or the right
side of $w_i$. Formally, we say that $\pi(s,q^*)$ reaches $q^*$ from
the left side (resp., right side) of $w_i$ if there is a small
neighborhood of $q^*$ such that all points of $\pi(s,q^*)$ in the
neighborhood are on the left side (resp., right side) of $w_i$.
Let $w_i^l$ (resp., $w_i^r$) denote the set of points $p$ on $w_i$
whose shortest path from $s$ to $p$ is from the left (resp., right)
side of $w_i$. Hence, $q^*$ is either on $w_i^l$ or on $w_i^r$.

Our algorithm will find two points $q^*_l$ and $q^*_r$ such that if
$q^*$ is on $w_i^l$ for some $i\in [1,k]$, then $q^*=q^*_l$, and
otherwise (i.e.,$q^*$ is in $w_i^r$ for some $i\in [1,k]$),
$q^*=q^*_r$.

In the following, we will only present our algorithm for finding
$q^*_l$ since the case for $q^*_r$ is symmetric. In the following
discussion, we assume $q^*$ is on $w_i^l$ for some $i\in [1,k]$.
Note that this assumption is only for arguing the
correctness of our algorithm, which actually proceeds
without knowing whether the assumption is true.

The rest of this section is organized as follows. In Section~\ref{sec:obser}, we discuss some observations, based on which we describe our pruning algorithm in Section~\ref{sec:prune} to prune some
(portions of) segments of $W$ such that $q^*$ ($=q_l^*$) is still in the
remaining segments of $W$. In Section~\ref{sec:computeq}, we will
finally compute $q^*_l$ (which will be $q^*$) on the remaining segments of $W$.
Some implementation details of the algorithm are given in Sections~\ref{sec:bundle} and \ref{sec:imple}. Section~\ref{sec:wrapup} summarizes the overall algorithm.

As will be clear later, our algorithm uses extended windows instead of windows because extended windows can help us with the pruning.

\subsection{Observations}
\label{sec:obser}

%
%



For any point $t\in \calP$ with $s\neq t$, and its shortest path $\pi(s,t)$, we use
$t^+$ to denote a point on $\pi(s,t)$ infinitely close to $t$ (but
$t^+\neq t$).
If $t$ is on $w_i^l$ for some $i\in [1,k]$, then $t^+$ must be on the left side of
$w_i$.

For any segment $w$ of $W$, we say that $w$ or a sub-segment of $w$
can be {\em pruned} if it does not contain $q^*$.
Our pruning algorithm, albeit somewhat involved, is based on the following simple observation.

\begin{observation}\label{obser:basic}
For any point $t\in w_i^l$ for some $i\in [1,k]$, if $\pi(s,t^+)$
intersects any segment
$w\in W$ or an endpoint of it, then $t$ can be pruned (i.e., $t$ cannot be $q^*$).
\end{observation}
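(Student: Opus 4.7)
The plan is to prove the contrapositive: assuming $t = q^*$, I derive a contradiction from the intersection hypothesis. Let $p$ be a point of $\pi(s,t^+)$ that lies either on some extended window $w = \overline{qq_j}\in W$ or at one of its endpoints $q$, $q_j$.

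The first step is to observe that $p\in \vis(q)$. Every point of $\overline{qq_j}$ sees $q$ along the straight segment through the defining vertex $u_j$: the sub-segment $\overline{qu_j}$ lies in $\calP$ because $u_j$ is visible to $q$ by the definition of a window, and the open segment $\overline{u_j q_j}$ lies in the interior of $\calP$ because $q_j = q(u_j)$ is defined as the first point of $\partial\calP$ hit by the ray from $u_j$ in the direction from $q$. The two endpoints $q$ and $q_j$ trivially belong to $\vis(q)$ as well.

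The second step is a distance comparison. Since $\pi(s,t^+)$ is a prefix of $\pi(s,t)$ ending strictly before $t$, $p$ lies on this prefix, so the sub-path of $\pi(s,t^+)$ from $s$ to $p$ witnesses $d(s,p) \le d(s,t^+) < d(s,t)$. Under the assumption $t = q^*$, the right side equals $d(s,q^*) = d(s,\vis(q))$, and combining we obtain $d(s,p) < d(s,\vis(q))$ together with $p \in \vis(q)$, a contradiction. Therefore $t \ne q^*$, i.e., $t$ can be pruned.

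I expect the only mildly delicate point to be the first step, namely unpacking the definitions of window and extended window to confirm that every point on them (including endpoints) belongs to $\vis(q)$. Beyond that, the argument is essentially the triangle-inequality remark that a strict prefix of a shortest path is strictly shorter, combined with the defining minimality of $q^*$ in $\vis(q)$; the role of the side condition $t \in w_i^l$ is only to locate $t^+$ (on the left of $w_i$), not to drive the inequality.
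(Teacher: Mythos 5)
Your argument is correct and is essentially the paper's own proof (the paper compresses it to a single line, writing $t'\in \vis(q)$ and $d(s,t')<d(s,t)$; note its text has an apparent typo $\vis(s)$ for $\vis(q)$). You merely spell out the ``clearly'' step that every point of an extended window and its endpoints is visible to $q$, which is a correct and worthwhile unpacking of the same idea.
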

\begin{proof}
Let $t'$ be a point on $\pi(s,t^+)$ that is a point on
any segment $w\in W$ or an endpoint of it. Clearly, $t'\in \vis(s)$ and $d(s,t')<d(s,t)$.
Thus, $t$ cannot be $q^*$.
\qed
\end{proof}

\begin{figure}[t]
\begin{minipage}[t]{\linewidth}
\begin{center}
\includegraphics[totalheight=1.8in]{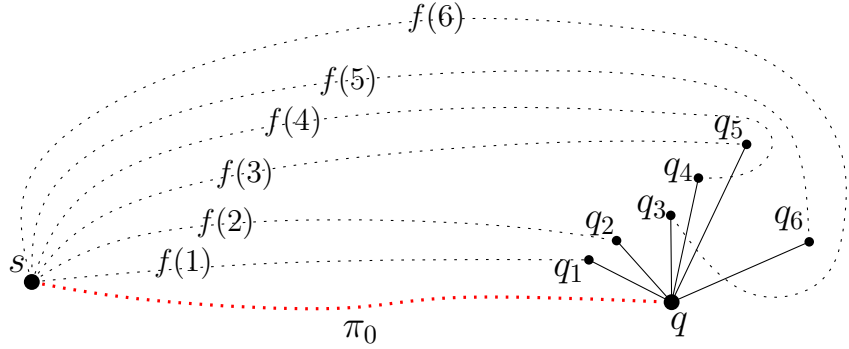}
\caption{\footnotesize
Illustrating the map $f(\cdot)$: $f(1)=1$, $f(2)=2$, $f(3)=5$, $f(4)=4$, $f(5)=6$, and $f(6)=3$. Note that the paths could be ``below'' $\pi_0$, but for ease of exposition, we ``flip'' them above $\pi_0$, and this flip operation does not change the topology of these paths.}
\label{fig:map}
\end{center}
\end{minipage}
\vspace*{-0.15in}
\end{figure}

Consider the shortest paths $\pi(s,q_i)$ for $i=1,2,\ldots,k$.
To simplify the notation, let $\pi_i=\pi(s,q_i)$ for each $i\in
[1,k]$. In particular, let $\pi_0=\pi(s,q)$ (not $\pi(s,q_0)$). Recall
that $Q=\{q,q_1,\ldots,q_k\}$. The
union of all paths $\pi_i$ for $0\leq i\leq k$ forms a planar tree, denoted
by $T_Q$, with root at $s$. Consider the canonical cycle $\calC(T_Q)$
as defined in Section~\ref{sec:pre}.
Let $\calC_Q$ be the circular list of the points of $Q$
following their relative order in $\calC(T_Q)$. We further break
$\calC_Q$ into a list $\calL_Q$ at $q$, such that $\calL_Q$ starts
from $q$ and all other points of $\calL_Q$ follow the counterclockwise
order in $\calC_Q$. Assume $\calL_Q$ is
$\{q,q_{f(1)},q_{f(2)},\ldots,q_{f(k)}\}$, i.e., the $(i+1)$-th point
of the list is $q_{f(i)}$; e.g., see Fig.~\ref{fig:map}. So $f(\cdot)$ essentially maps each point
of $Q\setminus\{q\}$
from its position in $\calL_Q$ to its position in the list
$\{q_{1},q_2,\ldots,q_{k}\}$. Hence, $f(1)\ldots,f(k)$ is a
permutation of $1,\ldots,k$, and $f(i)\neq f(j)$ if $i\neq
j$.  The reason we introduce the list
$\calL_Q$ is that intuitively, for any $1\leq i<j\leq k$,
the path $\pi_{f(j)}$ is {\em counterclockwise} from $\pi_{f(i)}$ with
respect to $\pi_0$ around $s$. For convenience, we let $f(0)=0$.


Later in Section~\ref{sec:imple} we will give the implementation
details for the following lemma.
\begin{lemma}\label{lem:map}
Given $\spm(s)$, after $O(n)$ time preprocessing, we can compute the
list $\calL_Q$ and thus determine the map $f(\cdot)$ in $O(k\log n)$
time.
\end{lemma}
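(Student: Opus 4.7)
My plan is to reduce the computation to point location in $\spm(s)$, followed by angular-position queries around vertices of $\spt(s)$, followed by a single global sort. The key observation is that $T_Q$ shares its internal edges with $\spt(s)$: each path $\pi_i$ follows $\spt(s)$-edges until its last edge, which connects the last obstacle vertex $r_i$ (the root of the cell of $\spm(s)$ containing $q_i$) to the leaf $q_i\in Q$. Hence $\calC(T_Q)$, restricted to $Q$, can be obtained by traversing the canonical cycle of $\spt(s)$ and ``splicing in'' each $q_i$ (and the point $q$) at its correct angular slot around its parent $r_i$.

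For preprocessing in $O(n)$ time I would: (i) build a standard point-location structure on $\spm(s)$; (ii) extract $\spt(s)$ from $\spm(s)$ in linear time and compute its canonical cycle by a planar DFS that visits the children of each vertex in counterclockwise angular order, recording each vertex $v$'s rank $\mathrm{ord}(v)$ in the cycle; and (iii) for each vertex $v$ of $\spt(s)$, store its children as an array in this same angular order. For step (iii), the children of $v$ in $\spt(s)$ correspond to the extension segments incident to $v$ in $\spm(s)$, so walking once around $v$ in the planar structure of $\spm(s)$ yields them in angular order at cost $O(\deg_{\spm(s)}(v))$; summed over all vertices, this is $O(n)$.

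During the query, for each of the $k+1$ points $p\in Q$, I first locate in $O(\log n)$ time the cell of $\spm(s)$ containing $p$, obtaining $p$'s parent $r_p$ in $T_Q$. I then binary-search the sorted children array of $r_p$ in $O(\log n)$ time to identify the angular slot into which the ray from $r_p$ through $p$ falls, and I assign $p$ the composite sort key $(\mathrm{ord}(c_j),\theta_p)$, where $c_j$ is the $\spt(s)$-child bounding the slot on its clockwise side (or a sentinel if none exists) and $\theta_p$ is the angular position of $p$ within the slot. A single lexicographic sort of the $k+1$ keys costs $O(k\log k)=O(k\log n)$; I then cut the sorted circular list at $q$ (and fix orientation if needed) to obtain $\calL_Q$, from which the permutation $f(\cdot)$ is read off in $O(k)$ time.

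The main conceptual obstacle is arguing that this two-level key actually recovers the restriction of $\calC(T_Q)$ to $Q$. The justification rests on three facts: (a) by Observation~\ref{obser:10}(1) the paths $\pi_0,\pi_1,\ldots,\pi_k$ are pairwise non-crossing, so the cyclic order in which they leave any shared vertex of $T_Q$ matches the planar embedding; (b) the internal skeleton of $T_Q$ is a subtree of $\spt(s)$ (with each $q_i$ appended as a new leaf), so subtrees of $\spt(s)$ that contain no points of $Q$ may be collapsed without changing the restriction of the canonical cycle to $Q$, which justifies using $\mathrm{ord}(c_j)$ as the primary key; and (c) within a single angular slot around $r_p$, the points of $Q$ sharing that slot appear in $\calC(T_Q)$ exactly in the angular order around $r_p$, which is the order induced by $\theta_p$. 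The only remaining care is the consistent handling of the wrap-around slot at each vertex and the fixing of the DFS orientation so the cycle comes out counterclockwise, both routine once the conventions are set.
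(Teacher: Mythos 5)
Your strategy is the same as the paper's at the top level: point-locate each $q_i$ in $\spm(s)$ to get its parent $r_i$ in $T_Q$, identify the angular slot at $r_i$ into which the edge $\overline{r_iq_i}$ falls, assign $q_i$ a key that orders slots globally and orders within a slot by angle, and sort. The difference is how the slots are named. In its preprocessing the paper builds $\Sigma(s)$, the decomposition of $\spm(s)$ by the edges of $\spt(s)$, puts one representative point in the interior of every cell of $\Sigma(s)$, lets $T_X$ be the union of the shortest paths from $s$ to these representatives, and precomputes $\calC_l(T_X)$. Every nonempty slot at every vertex --- including the wrap-around slot immediately counterclockwise of the parent edge --- contains at least one $\Sigma(s)$-cell, so each slot gets its own position in $\calC_l(T_X)$, and ties arise only when several $q_i$ share a $\Sigma(s)$-cell (hence a root), where the angular tie-break is meaningful.

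You instead name a slot by $\mathrm{ord}(c_j)$ for its clockwise-bounding $\spt(s)$-child $c_j$, with a sentinel for the wrap-around case, and dismiss the sentinel as routine. It is not. Let $r_2$ be the only $\spt(s)$-child of $r_1$, and let $q_1,q_2\in Q$ have roots $r_1,r_2$, each falling into its root's wrap-around slot. In $\calC(T_Q)$ the point $q_1$ strictly precedes $q_2$ (post-order enters $r_1$'s wrap-around slot before descending into $r_2$'s subtree), yet both get the sentinel primary key, and any natural sentinel value --- the minimum post-order rank in the subtree of $r_i$, or its cyclic predecessor --- coincides for $r_1$ and $r_2$, because $r_2$'s subtree is a post-order prefix of $r_1$'s. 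Your secondary key $\theta_p$ cannot break the tie, since it is an angle measured around a different vertex for each point; more generally, all wrap-around-slot points of $Q$ whose parents lie along a common leftmost chain of $\spt(s)$ collide under your scheme. Repairing this needs a strictly finer anchor than $\spt(s)$-children --- for instance corners of the Euler tour of $\spt(s)$, which do distinguish nested wrap-around slots, or an extra depth component in the key together with a consistency argument for the resulting three-level comparison --- and that finer anchor is exactly what the paper's per-cell representatives of $\Sigma(s)$ provide. As written, the sentinel step is a genuine gap.
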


\begin{observation}\label{obser:50}
For any $i\in [1,k]$, $\pi_0$ does not contain $q_i$ and $\pi_i$ does not contain $q$.
\end{observation}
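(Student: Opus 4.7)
The plan is to prove both parts by contradiction, leveraging two ingredients: the defining collinearity of $q$, $u_i$, and $q_i$ on any window, and the general position assumption that $q$ is not collinear with any two obstacle vertices. The assumption in particular forces $q_i$ not to be an obstacle vertex (otherwise $u_i$ and $q_i$ would be two obstacle vertices collinear with $q$), and it follows from the definition of a window that distinct windows have distinct defining vertices, so $u_0\ne u_i$ for every $i\in [1,k]$.

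For part 1, suppose for contradiction that $q_i\in \pi_0$. Since $q_i$ is visible from $q$, we have $d(s,q)\le d(s,q_i)+|\overline{q_iq}|$, while the sub-path of $\pi_0$ from $q_i$ to $q$ has length at least $|\overline{q_iq}|$. The two estimates force equality, so this sub-path is exactly the straight segment $\overline{q_iq}$. Since $\pi_0$ ends with the single edge $\overline{u_0q}$, the segment $\overline{q_iq}$ lies inside $\overline{u_0q}$; because $q_i$ is neither $u_0$ (not an obstacle vertex) nor $q$ (lies on $\partial\calP$), it sits strictly between $u_0$ and $q$. Then $u_i$, which by the window definition lies strictly between $q$ and $q_i$ on their common line, must also lie on $\overline{u_0q}$. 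Hence $u_0$ and $u_i$ are two distinct obstacle vertices collinear with $q$, contradicting general position.

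For part 2, suppose $q\in \pi_i$. Because $q$ is neither $s$ nor $q_i$ and is not an obstacle vertex, $q$ lies in the relative interior of some segment of $\pi_i$. By sub-path optimality (together with uniqueness under general position), $\pi_0$ is a prefix of $\pi_i$, so $\pi_i$ enters $q$ along the last edge $\overline{u_0q}$ of $\pi_0$; and because $\pi_i$ cannot bend at $q$ (a shortest path only bends at reflex obstacle vertices), it must continue past $q$ along the same supporting line. Re-running the length argument of part 1 identifies the sub-path from $q$ to $q_i$ as the straight segment $\overline{qq_i}$, so $u_0$, $q$, $q_i$ are collinear. Combining with $u_i\in \overline{qq_i}$ yields that the distinct obstacle vertices $u_0$ and $u_i$ are collinear with $q$, again contradicting general position. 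The only mild subtlety to pin down is that in part 1 the point $q_i$ really does sit strictly inside $\overline{u_0q}$, which the opening-paragraph remarks handle by ruling out the possible degeneracies; once that is clear, everything else is bookkeeping with length equalities and Observation~\ref{obser:10}(1).
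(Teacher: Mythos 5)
Your proof is correct and follows essentially the same route as the paper's: in both parts you reduce the claim to showing that the two obstacle vertices $u_0$ and $u_i$ are collinear with $q$, contradicting the general-position assumption, and your triangle-inequality length argument plays exactly the role that the paper's invocation of Observation~\ref{obser:10}(2) does in pinning the relevant sub-path down to a straight segment. The only small imprecision is the unsupported assertion in part~1 that $q_i$ must sit strictly between $u_0$ and $q$ on $\pi_0$ (a priori the other order, with $u_0$ between $q_i$ and $q$, is not yet excluded at that point), but either configuration puts $q_i$, $u_0$, $q$ on a common line, so the collinearity of $u_0$, $u_i$, $q$ and the resulting contradiction go through regardless.
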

\begin{proof}
Assume to the contrary that $\pi_0$ contains $q_i$ for some $i\in [1,k]$.
Since $q$ is in $\pi_0$, by Observation~\ref{obser:10}(2),
$\pi_0=\pi_i\cup \overline{q_iq}$. Recall that $\overline{qq_0}\in \pi_0$.
This implies that either $\overline{qq_0}$ contains $q_i$ or
$\overline{qq_i}$ contains $q_0$, which further implies the two
windows $\overline{u_0q_0}$ and $\overline{u_iq_i}$ are collinear.
This incurs contradiction since no two windows are collinear.
Hence, $\pi_0$ does not contain $q_i$.

Assume to the contrary that $\pi_i$ contains $q$. Then, since both $q$
and $q_i$ are in $\pi_i$, by Observation~\ref{obser:10}(2),
$\overline{qq_i}$ is in $\pi_i$. Hence,
$\pi_i=\pi_0\cup\overline{qq_i}$.
Recall that $u_0$ is the root of the cell of
$\spm(s)$ containing $q$, and $\pi_0=\pi(s,u_0)\cup \overline{u_0q}$.
Since $q$ is in the interior of $\calP$,
$\overline{u_0q}$ and $\overline{qq_i}$ must be collinear
since otherwise there would be a shorter path from $u_0$ to $q_i$
without containing $\overline{qq_i}$.
Recall that $u_i\in \overline{qq_i}$. Since $\overline{u_0q}$ and $\overline{qq_i}$ are
collinear, the three points $q$, $u_0$, and $u_i$ are collinear. But this contradicts
with our general position assumption that $q$ is not collinear with any two obstacle
vertices. 
\qed
\end{proof}


\begin{lemma}\label{lem:90}
Suppose $\pi_j$ contains $q_i$ with $i\neq j$ and $i,j\in
[1,k]$. If $i<j$, then $w_j$ can be pruned; otherwise, $w_i$ can
be pruned.
\end{lemma}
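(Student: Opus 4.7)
The plan is to apply Observation~\ref{obser:basic}: for every $t\in w_j^l$ (in the case $i<j$) or every $t\in w_i^l$ (in the case $i>j$), I will exhibit an intersection of $\pi(s,t^+)$ with a segment of $W$. Combined with the symmetric argument for the right side, this suffices to prune the claimed window (under the running assumption of the left case, $q^*_l\in w_m^l$ for some $m$, so pruning every point of $w_j^l$ forces $q^*_l\notin w_j$). I will treat the main case $i<j$ (pruning $w_j$) in detail and sketch the analogous construction for $i>j$ at the end.

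For $i<j$: since $\pi_j$ contains $q_i$, write $\pi_j=\pi_i\cup\gamma_{ij}$, where $\gamma_{ij}$ is the subpath of $\pi_j$ from $q_i$ to $q_j$. Let $p$ be the last common vertex of $\pi_0$ and $\pi_j$; by Observation~\ref{obser:50}, $q_i\neq p$ and $q_i$ lies strictly after $p$ on $\pi_j$. Define $R\subseteq\calP$ to be the simply-connected region enclosed by the subpath of $\pi_0$ from $p$ to $q$, the extended window $w_j=\overline{qq_j}$, and the subpath of $\pi_j$ from $q_j$ back to $p$. The main geometric step is to show that $w_i$ is a chord of $R$: the endpoint $q$ lies on the $\pi_0$-arc of $\partial R$ and the endpoint $q_i$ lies on the $\pi_j$-arc of $\partial R$ (since $q_i$ is after $p$ on $\pi_j$); angularly at $q$, the ray $w_i$ falls between $w_0$ (the direction from which $\pi_0$ reaches $q$) and $w_j$ in the clockwise order because $0<i<j$, so $w_i$ enters the interior of $R$ near $q$; and by Observation~\ref{obser:10}(2) together with the general-position hypothesis, the only points of $w_i\cap\partial R$ are $q$ and $q_i$. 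Thus $w_i$ splits $R$ into two sub-regions $R_1$ (with $p$ on its boundary) and $R_2$ (with $w_j$ on its boundary).

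With the chord in hand, for any $t\in w_j^l$ the path $\pi(s,t)$ cannot cross $\pi_0$ or $\pi_j$ (Observation~\ref{obser:10}(1)) and, since $t\in w_j^l$, stays on the left side of $w_j$; hence $\pi(s,t)$ lies inside $R$, enters $R$ at $p\in\partial R_1\setminus\partial R_2$ via the common prefix $\pi(s,p)$, and terminates at $t\in w_j\subset\partial R_2$. Continuity then forces $\pi(s,t)$ to cross the separator $w_i$, so $\pi(s,t^+)$ intersects $w_i\in W$, and Observation~\ref{obser:basic} prunes $t$, confirming that $w_j$ contains no left-sided candidate for $q^*$. For the case $i>j$, I will instead take $R$ bounded by $\pi_0$, $w_i$, and $\pi_i$; the segment $w_j$ enters $R$ at $q$ (since $j<i$ places $w_j$ in the angular wedge of $R$ near $q$), but because $q_j\notin\partial R$, $w_j$ exits through $\pi_i$ at a unique point $p''$, so $\overline{qp''}\subset w_j$ serves as the separating chord, and the same crossing argument prunes every $t\in w_i^l$. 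The main difficulty, I anticipate, will be verifying that the chord really lies inside $R$ (i.e., does not leave $\partial R$ prematurely), which reduces to ruling out spurious intersections of $w_i$ (or of the sub-segment of $w_j$) with the three arcs of $\partial R$ via Observation~\ref{obser:10}(2) and the general-position assumption that $q$ is not collinear with any two obstacle vertices.
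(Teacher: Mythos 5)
Your proof and the paper's proof are at heart the same argument: both reduce to Observation~\ref{obser:basic} via a Jordan--curve/region-crossing step. The difference is which region you cross. The paper takes the tight region $D$ bounded by the closed curve $w_i\cup w_j\cup\gamma_{ij}$ (with $\gamma_{ij}$ the subpath of $\pi_j$ from $q_i$ to $q_j$), notes $t^+\in D$ and $s\notin D$, and observes that $\pi(s,t^+)$ must exit through $\partial D$ but cannot cross $\gamma_{ij}$, so it must meet $w_i$, $w_j$, or one of $q,q_i,q_j$. You instead use the larger region $R$ (essentially $D_j$ for $i<j$, $D_i$ for $i>j$), show that $\pi(s,t)$ is confined to $R$ and enters through $p$, and exhibit a chord separating $p$ from $t$. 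Your larger region is exactly $D$ together with the portion of $D_i$ (resp.\ $D_j$) on the other side of the chord, so the two proofs are two dressings of the same underlying topological fact; the paper's region makes both cases symmetric and avoids having to re-argue anything about the boundary behaviour of $\pi(s,t)$ at the branch point.

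Your $i<j$ case is sound. However, the $i>j$ case has a genuine gap: you conclude ``$w_j$ exits through $\pi_i$ at a unique $p''$'' from ``$q_j\notin\partial R$,'' but $q_j\notin\partial R$ only rules out $q_j$ being on $\pi_0\cup w_i\cup\pi_i$. It does not rule out $q_j$ lying in the \emph{interior} of the planar region bounded by $\pi_0\cup w_i\cup\pi_i$ (for instance, $q_j$ can sit on the boundary of a hole of $\calP$ that is enclosed by $D_i$), in which case $w_j$ never leaves $R$ and $\overline{qp''}$ is undefined. The correct chord in that subcase is $w_j\cup\gamma_{ij}$, which does have both endpoints $q$ and $q_i$ on $\partial R$; once you notice this, your argument is repaired and essentially collapses into the paper's: the region $D$ cut off by the chord $w_j\cup\gamma_{ij}$ (resp.\ $w_i$ when $i<j$) inside $D_i$ (resp.\ $D_j$) is exactly the paper's region bounded by $w_i\cup w_j\cup\gamma_{ij}$. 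You were right to flag ``verifying that the chord really lies inside $R$'' as the delicate step; the resolution is as above, and the paper's choice of $D$ is precisely engineered to make that verification unnecessary.
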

\begin{proof}
We first discuss the case $i<j$. Consider the region $D$ bounded by the
closed curve that is the union of $w_i$, $w_j$, and the subpath of $\pi_j$ between $q_i$
and $q_j$ (e.g., see Fig.~\ref{fig:regionD}(a)).
By Observation~\ref{obser:10}(1), $\pi_j$ does not cross
$\pi_0$. Since $i<j$, $w_j$ is clockwise from $w_i$ with respect to
$w_0$ (which is the last edge of $\pi_0$). Hence, $D$ must be locally on the left side of $w_j$.

Consider any point $t\in w^l_j$. We show that $t$ cannot be
$q^*$. Recall that $w_j$ is an
open segment, so $t$ is not $q$ or $q_j$.
Since $t\in w^l_i$, the point $t^+$ must be in $D$. By the definition of
$D$, $s$ is not in the interior of $D$. Hence, $\pi(s,t^+)$ must
intersect the boundary of $D$. Since $\pi(s,t^+)$ cannot cross the
subpath of $\pi_j$ between $q_i$ and $q_j$, $\pi(s,t^+)$ must intersect
$w_i$, $w_j$, or a point of $\{q,q_i,q_j\}$. By Observation~\ref{obser:basic}, $t$ cannot be
$q^*$.

\begin{figure}[t]
\begin{minipage}[t]{\linewidth}
\begin{center}
\includegraphics[totalheight=1.2in]{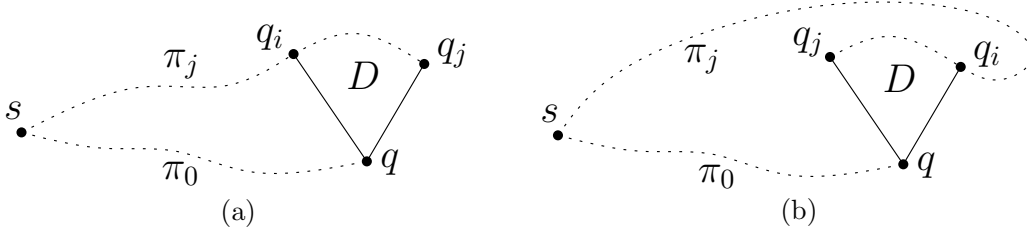}
\caption{\footnotesize
Illustrating the proof of Lemma~\ref{lem:90}: (a) the case $i<j$; (b) the case $i>j$.}
\label{fig:regionD}
\end{center}
\end{minipage}
\vspace*{-0.15in}
\end{figure}

The above shows that $t$ cannot be $q^*$.  Thus, $w_j$ can be
pruned.

For the case $i>j$, the argument is similar  (e.g., see Fig.~\ref{fig:regionD}(b)).
Since $i>j$, $D$ must be locally on the left side of $w_i$. For
any point $t\in w_i^l$, using the similar argument as above, we can
show that $t$ cannot be $q^*$. Thus, $w_i$ can be pruned.
\qed
\end{proof}

Lemma~\ref{lem:100} provides an algorithm to remove all extended-windows of $W$
that can be pruned by Lemma~\ref{lem:90}.

\begin{lemma}\label{lem:100}
Given $\spm(s)$ and with $O(n)$ time preprocessing, we can find in
$O(k\log n)$ time all segments of $W$ that can be pruned by
Lemma~\ref{lem:90}.
\end{lemma}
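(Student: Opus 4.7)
The plan is to reduce the pruning condition of Lemma~\ref{lem:90} to an ancestor/descendant test in the tree $T_Q$. The predicate ``$\pi_j$ contains $q_i$'' for $i\neq j$ means exactly that $q_j$ is a descendant of $q_i$ in $T_Q$ rooted at $s$, and in both sub-cases of the lemma the pruned extended window is $w_{\max(i,j)}$. Hence $w_m$ should be pruned if and only if some index $l\in[1,k]$ with $l<m$ satisfies that $q_l$ and $q_m$ are in an ancestor--descendant relation in $T_Q$, equivalently, if and only if the minimum index over the $q_l$'s lying on $\pi_m$ or having $q_m$ lying on their own $\pi_l$ is strictly less than $m$.

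As preprocessing, given $\spm(s)$, I would build a planar point location data structure on $\spm(s)$~\cite{ref:EdelsbrunnerOp86,ref:KirkpatrickOp83}, compute $\spt(s)$, and equip it with a constant-time LCA oracle, all in $O(n)$ time. At query time I would proceed in three steps. First, locate each $q_i$ in $\spm(s)$ to recover the root $r_i$ of its containing cell in $O(\log n)$ time each, which makes $\pi_i=\pi(s,r_i)\cup \overline{r_iq_i}$ available implicitly. Second, build a compressed tree $\hat T_Q$ on the node set $\{s,q_1,\ldots,q_k\}$ whose parent--child relation encodes the ancestry in $T_Q$; using the $r_i$'s, the LCA oracle, and a sort of the $r_i$'s in the DFS order of $\spt(s)$, this is the standard auxiliary-tree (Steiner-minimal-tree of marked nodes) construction and runs in $O(k\log k)$ time. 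Third, run a single DFS of $\hat T_Q$ that carries a running minimum index along the root-to-current path and, on the way back up, a minimum index over each subtree; whenever at $q_m$ either of these two minima is strictly less than $m$, mark $w_m$ pruned. The DFS costs $O(k)$, so the total query time is dominated by $O(k\log n)$.

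The main obstacle will be the faithful construction of $\hat T_Q$ in step two. Because $q_i$ is generically an interior point of an obstacle edge rather than an obstacle vertex, the condition ``$q_l$ lies on $\pi_m$'' is not simply ``$r_l$ is an ancestor of $r_m$ in $\spt(s)$'': it additionally requires $q_l$ to be collinear with and between $r_l$ and the next vertex of $\pi_m$ past $r_l$. I would handle this by attaching each $q_i$ to the auxiliary tree as a virtual leaf hanging off $r_i$, and by verifying each candidate $\hat T_Q$-edge via a constant-time collinearity test between $\overline{r_lq_l}$ and the outgoing edge of $r_l$ in $\spt(s)$ that lies on $\pi_m$; the subcase of several $q_l$'s sharing the same $r_l$ is dispatched by a local pass over those siblings, ranking them by the directions of their $\overline{r_lq_l}$ segments. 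Under the general position assumption already in force, all of these checks together take $O(k\log k)$ time, so the preprocessing remains $O(n)$ and the query remains $O(k\log n)$ as claimed.
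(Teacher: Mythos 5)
Your reformulation of Lemma~\ref{lem:90}'s pruning condition is correct: $w_m$ is prunable precisely when there is an index $l<m$ such that $q_l$ and $q_m$ are in an ancestor--descendant relation in $T_Q$; and building a compressed (auxiliary) tree on $\{s,q_1,\ldots,q_k\}$ and then sweeping it once with a DFS that carries root-to-node and subtree index minima is a genuinely different route from the paper's. The paper instead processes $q_{f(1)},\ldots,q_{f(k)}$ in the $\calL_Q$ order, exploits the post-order fact that $\{j:q_{f(i)}\in\pi_{f(j)}\}$ is a contiguous interval ending at $i$, and performs the pruning with a stack using $O(k)$ pairwise containment tests of $O(\log n)$ time each, never building any explicit tree on the $q_i$'s.

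The gap is in your construction of $\hat T_Q$. Attaching each $q_i$ as a \emph{virtual leaf} hanging off $r_i$ makes $q_i$ childless in $\hat T_Q$, but when $q_i$ lies in the interior of an $\spt(s)$-edge $\overline{r_iv}$ (with $v$ a child of $r_i$), then $q_i$ \emph{is} an ancestor in $T_Q$ of every $q_j$ whose root $r_j$ lies in the $\spt(s)$-subtree under $v$, and also of any same-cell $q_j$ farther out along the ray from $r_i$. In the virtual-leaf construction such $q_j$ sit in a sibling subtree of the leaf $q_i$, the DFS reports no ancestry, and the prunable $w_{\max(i,j)}$ is missed. Your ``collinearity verification of candidate $\hat T_Q$-edges'' does not repair this: verifying or deleting an edge cannot \emph{promote} $q_i$ into an internal position on the relevant super-edge, which is what would be required. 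A correct version would need to (a) locate each $q_i$ in the refinement of $\spm(s)$ by the $\spt(s)$-edges (not merely in $\spm(s)$) so as to know which tree edge, if any, $q_i$ lies on; (b) interleave, on every auxiliary super-edge, the on-edge $q_i$'s (ordered by distance from the root end) with the Steiner/LCA nodes; and (c) in the same-cell case, order the $q_i$'s along a common ray by distance, not only by direction. All of this can presumably be made to run in $O(k\log n)$, but it is substantially more delicate than the two-sentence patch you give, and as written the algorithm would overlook a class of prunable windows. (The paper's per-pair test---same-cell collinearity, else a point location in the $\spt(s)$-refinement plus an LCA query---performs exactly this analysis, but only on the $O(k)$ pairs dictated by its stack, which is how it avoids the tree-construction difficulty altogether.)
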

\begin{proof}
The task is to determine those indices $i$ and $j$ such that $q_i$ is contained
in $\pi_j$ for $i\neq j$ in $[1,k]$, after which we
can determine whether $w_i$ or $w_j$ should be pruned by Lemma~\ref{lem:90}.
Recall that $f(1),f(2),\ldots,f(k)$ is a permutation of the indices of
$\{1,2,\ldots,k\}$. Therefore, equivalently we can determine those indices $f(i)$ and $f(j)$ such that $q_{f(i)}$ is contained in $\pi_{f(j)}$ for $f(i)\neq f(j)$ in $[1,k]$.
We actually do not need to explicitly find all such pairs, as shown
below.

A key observation is that if $q_{f(i)}$ is contained in a path
$\pi_{f(j)}$ with $f(j)\neq f(i)$, then it must be that $j<i$ and
$q_{f(i)}$ is contained in $\pi_{f(m)}$ for any $m\in [j, i]$. Indeed,
if $q_{f(i)}$ is contained in a path $\pi_{f(j)}$ with $f(j)\neq f(i)$,
then the subpath of $\pi_{f(j)}$ from $s$ to $q_{f(i)}$ is
$\pi_{f(i)}$. According to the definition of the map $f(\cdot)$, i.e.,
the list $\{q_{f(1)},q_{f(2)},\ldots,q_{f(k)}\}$, $q_{f(i)}$
must be after $q_{f(j)}$ in the list, i.e., $j<i$. Further, for any
$m\in [j, i]$, $q_{f(i)}$ is in the path of the tree $T_Q$ from
$q_{f(m)}$ to the root $s$, which is the shortest path $\pi_{f(m)}$.

Based on the above observation, our algorithm works as follows. We
consider the points $q_{f(i)}$ in the order of $i=1,2,\ldots, k$.
Suppose we are about to process $q_{f(i)}$. The algorithm maintains a
stack $S$ of indices in $[1,i-1]$ in increasing order (from bottom to
top of $S$) such that for each $j\in [1,i-1]$, if $j\not\in S$, then
$w_{f(j)}$ has been pruned. Initially we set
$S=\emptyset$ before we process $q_{f(1)}$. In general, our algorithm
processes $q_{f(i)}$ for any $i\geq 1$ as follows.

If $S=\emptyset$, then we push $i$ on top of $S$ and proceed to
process $q_{f(i+1)}$. Otherwise, we first check whether $q_{f(i)}$ is
contained in $\pi_{f(m)}$, where $m$ is the top index on $S$.

\begin{enumerate}
\item
If $q_{f(i)}\not\in \pi_{f(m)}$, then $q_{f(i)}$ is not in any path
$\pi_{f(j)}$ with $j<m$ by the above observation. We push $i$ on top of $S$ and then proceed on processing $q_{f(i+1)}$.

\item
If $q_{f(i)}\in \pi_{f(m)}$, then depending on whether $f(i)<f(m)$, there are two cases.

\begin{enumerate}
\item
If $f(i)<f(m)$, then by Lemma~\ref{lem:90}, we prune $w_{f(m)}$ and
pop $m$ from $S$. Then, we repeat the same algorithm as above (i.e.,
first check whether $S=\emptyset$, and if not,  check whether
$q_{f(i)}$ is contained in $\pi_{f(m)}$, where $m$ is the new top
index of $S$).

\item
If $f(i)>f(m)$, then by Lemma~\ref{lem:90}, we prune $w_{f(i)}$ and
proceed on processing $q_{f(i+1)}$.
\end{enumerate}
\end{enumerate}

The algorithm finishes once $q_{f(k)}$ has been processed. It is not
difficult to see that if we can check
whether $q_{f(i)}$ is in $\pi_{f(m)}$ in $O(c)$ time, then the
algorithm runs in $O(k\cdot c)$ time since each index of $[1,k]$ can
be pushed or popped from $S$ at most once. In the following, we show
that $c=O(\log n)$ after $O(n)$ time preprocessing, and this will prove
the lemma.


First of all, if both $q_{f(i)}$ and $q_{f(m)}$ are in the
same cell $\sigma$ of $\spm(s)$, then $q_{f(i)}\in \pi_{f(m)}$ if and only if
$q_{f(i)}\in \overline{r(\sigma)q_{f(m)}}$, where $r(\sigma)$ is the root of $\sigma$.
Otherwise, if $q_{f(i)}$ is not in any edge of the shortest path tree
$\spt(s)$, then $q_{f(i)}$ cannot be in $\pi_{f(m)}$. Otherwise,
suppose $q_{f(i)}$ is on an edge $e$
of $\spt(s)$. We can find the edge $e$ in $O(\log n)$ time by a point location
query on the decomposition of $\spm(s)$ by the edges of $\spt(s)$.
Let $v$ be an endpoint of $e$, and thus $v$ is a node of $\spt(s)$. Let
$r$ be the root of the cell of $\spm(s)$ containing $q_{f(m)}$. Then,
$q_{f(i)}$ is in $\pi_{f(m)}$ if and only if $v$ is an
ancestor of $r$ in $\spt(s)$. Note that $v$ is an
ancestor of $r$ if and only if the lowest common ancestor of $v$ and
$r$ is $v$. We can build a data structure on $\spt(s)$ in $O(n)$ time such that given
any two nodes of the tree, the lowest common ancestor can be found in constant
time~\cite{ref:BenderTh00,ref:HarelFa84}.

Hence, we can determine whether $q_{f(i)}\in \pi_{f(m)}$ in $O(\log n)$ time
after $O(n)$ time preprocessing.

The lemma thus follows.
\qed
\end{proof}

We apply the algorithm in Lemma~\ref{lem:100} to prune the segments of
$W$. But to simplify the notation, we assume that none of the
segments of $W$ is pruned since otherwise we could re-index all
segments of $W$. So now $W$ has the following property.

\begin{observation}\label{obser:60}
For any $i\in [1,k]$, $q_i$ is not contained in any $\pi_j$ with
$j\in [0,k]$ and $j\neq i$.
\end{observation}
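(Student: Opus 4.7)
The plan is to handle the conclusion in two pieces. The case $j=0$ is already covered by Observation~\ref{obser:50}, which certifies that $\pi_0$ never contains any $q_i$ with $i\in[1,k]$, so no additional work is needed there. The real content is the case $j\in[1,k]$ with $j\neq i$, which is equivalent to saying that the pruning loop of Lemma~\ref{lem:100} eliminates every ordered pair $(i,j)$ with $q_i\in\pi_j$.

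I would argue by contradiction. Suppose, after the algorithm of Lemma~\ref{lem:100} has run, there still exist distinct $i,j\in[1,k]$ with $q_i\in\pi_j$ and both $w_i$ and $w_j$ surviving. Let $a=f^{-1}(i)$ and $b=f^{-1}(j)$, so that $q_{f(a)}\in\pi_{f(b)}$. The key observation proved inside Lemma~\ref{lem:100} yields two facts I would exploit: first, $b<a$; and second, $q_{f(a)}\in\pi_{f(m)}$ for every $m\in[b,a]$. A second structural fact, read directly off the pseudocode, is that an index $m$ can be popped from the stack \emph{only} by branch~(2a) of the loop, and that branch always prunes $w_{f(m)}$. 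Consequently, the survival of $w_j=w_{f(b)}$ forces $b$ to be pushed onto the stack (otherwise branch~(2b) would already have pruned $w_{f(b)}$ at iteration $b$) and then to remain on the stack uninterrupted through the start of iteration $a$.

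I would then trace iteration $a$ itself. While the current top $m$ of the stack satisfies $m\geq b$, the second fact above guarantees $q_{f(a)}\in\pi_{f(m)}$, so the algorithm enters branch~(2). If branch~(2b) ever fires it prunes $w_{f(a)}=w_i$, contradicting the survival of $w_i$. Hence only branch~(2a) fires; each firing pops one index and prunes the corresponding $w_{f(m)}$. Since $b$ is pinned on the stack and the top strictly decreases, eventually $b$ itself becomes the top. At that point branch~(2a) would prune $w_{f(b)}=w_j$ and branch~(2b) would prune $w_{f(a)}=w_i$, and either outcome contradicts the surviving assumption.

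The main obstacle is the stack-presence invariant for $b$, namely verifying that $b$ is indeed on the stack at the start of iteration $a$. Once this is secured from the ``only branch~(2a) pops'' remark, the rest of the proof is a short case analysis on the two branches of the algorithm and requires no new geometry beyond what was already used in Lemmas~\ref{lem:90} and~\ref{lem:100}.
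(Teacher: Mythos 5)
Correct, and essentially the paper's argument: the $j=0$ case follows from Observation~\ref{obser:50}, and for $j\in[1,k]$ the paper simply invokes Lemma~\ref{lem:90} together with the guarantee of Lemma~\ref{lem:100} that all segments prunable by Lemma~\ref{lem:90} were already removed. You re-derive that guarantee by tracing the stack algorithm of Lemma~\ref{lem:100} step by step, which is sound but does more work than necessary once Lemma~\ref{lem:100} is available as a black box.
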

\begin{proof}
Suppose to the contrary that $q_i$ is contained in $\pi_j$ for some
$j\in [0,k]$ and $i\neq j$. On the one hand, due to Observation~\ref{obser:50},
$j\neq 0$. On the other hand, if $j\in [1,k]$, then by
Lemma~\ref{lem:90} either $w_i$ or $w_j$ would have already been removed from $W$. \qed
\end{proof}

For each $i\in [1,k]$, 
since $\pi_0$ does not cross $\pi_i$, $\pi_0\cup \pi_i\cup w_i$ forms
a closed curve that separates the plane into two regions, one locally on the left of $w_i$
and the other locally on the right $w_i$. We let $D_i$
denote the region locally on the left side of $w_i$ including
$\pi_0\cup \pi_i\cup w_i$ as its boundary (it is possible that $D_i$
is unbounded). If $\pi_0\cap \pi_i$ is
a sub-path including at least one edge, then it is also considered to be in $D_i$.
We have the following observation for $D_i$.

\begin{observation}
If $q^*\in w_i^l$, then $\pi(s,q^*)$ must be in $D_i$.
\end{observation}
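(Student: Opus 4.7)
The plan is to argue by contradiction based on boundary crossings of $D_i$. Since $q^*\in w_i^l$, there is a small open neighborhood $N$ of $q^*$ in which $\pi(s,q^*)\setminus\{q^*\}$ lies strictly to the left of $w_i$, hence in the interior of $D_i$. I would then assume that some point of $\pi(s,q^*)$ lies outside $D_i$; by continuity the path must cross $\partial D_i=\pi_0\cup\pi_i\cup w_i$ at some point $p'$, and the plan is to rule out every possible location of $p'$.

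If $p'$ lies in the relative interior of $\pi_0$, then $\pi(s,q^*)$ and the shortest path $\pi_0$, both rooted at $s$, would cross at $p'$, contradicting Observation~\ref{obser:10}(1); the case where $p'$ lies in the relative interior of $\pi_i$ is symmetric. So $p'$ must lie on $w_i$, either at an interior point distinct from $q^*$, or at one of the endpoints $q,q_i$. If $p'$ is an interior point of $w_i$ different from $q^*$, then by Observation~\ref{obser:10}(2) the intersection $\pi(s,q^*)\cap w_i$ is a single sub-segment of $w_i$ containing both $p'$ and $q^*$, so the whole sub-segment along $w_i$ from $p'$ to $q^*$ lies on $\pi(s,q^*)$; hence the path reaches $q^*$ along $w_i$ rather than from strictly inside the left side, contradicting $q^*\in w_i^l$. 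If $p'$ coincides with $q$ (the case $p'=q_i$ being analogous), then the suffix of $\pi(s,q^*)$ from $p'$ to $q^*$ is itself a shortest path between two points of the segment $w_i\subseteq\calP$, so by Observation~\ref{obser:10}(2) this suffix collapses to the Euclidean sub-segment $\overline{qq^*}\subseteq w_i$, and again $\pi(s,q^*)$ reaches $q^*$ along $w_i$, contradicting $q^*\in w_i^l$.

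The step I expect to be the main obstacle is the vertex case $p'\in\{q,q_i\}$, since there one must rule out the possibility that $\pi(s,q^*)$ detours through the right side of $w_i$ after touching the corner and only later returns to $q^*$; the resolution is to use that $w_i$ is a line segment fully contained in $\calP$, so by the intersection clause of Observation~\ref{obser:10}(2) the tail of $\pi(s,q^*)$ is forced to run along $w_i$, violating the strict ``left approach'' required by $w_i^l$. With all possible positions of $p'$ excluded, no such crossing exists, and therefore $\pi(s,q^*)\subseteq D_i$.
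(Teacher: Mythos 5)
Your proof is correct, but it takes a genuinely different route from the paper. The paper's proof is a one-liner that leans on Observation~\ref{obser:basic} (the pruning observation): since $q^*$ is, by definition, not prunable, $\pi(s,(q^*)^+)$ cannot meet any segment of $W$, and in particular cannot meet $w_i$; combining this with the fact that shortest paths do not cross $\pi_0$ or $\pi_i$, the path has nowhere to exit $D_i$. You instead argue from scratch: you posit a first boundary point $p'$ where $\pi(s,q^*)$ would cross $\partial D_i$ and then eliminate each location for $p'$. The cases $p'$ in the interior of $\pi_0$ or $\pi_i$ are dispatched by Observation~\ref{obser:10}(1) exactly as the paper does, and the $w_i$-side cases (interior point of $w_i$, or the endpoints $q$, $q_i$) are handled via Observation~\ref{obser:10}(2): both $p'$ and $q^*$ lie in $\pi(s,q^*)\cap\overline{qq_i}$, so that intersection is a sub-segment containing $\overline{p'q^*}$, forcing the tail of the path to run along $w_i$, which violates the strict left-approach condition defining $w_i^l$. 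In effect, you re-derive locally the content of obser:basic that is relevant to $w_i$, rather than invoking it. The trade-off: the paper's argument is shorter because obser:basic was already proved and is reusable across several lemmas; yours is self-contained and makes explicit why approaching $q^*$ along $w_i$ is incompatible with membership in $w_i^l$, which is arguably clearer. Both rely on the same two structural facts (shortest paths to $s$ don't cross, and a shortest path meets a segment of $\calP$ in a connected sub-segment), so the difference is one of proof architecture rather than of substance.
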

\begin{proof}
Let $t=q^*$ that is on $w_i^l$. Then, 
$t^+$ is in the interior of $D_i$. By Observation~\ref{obser:basic},
$\pi(s,t^+)$ cannot intersect $w_i$. Also,
$\pi(s,t^+)$ cannot cross either $\pi_0$ or $\pi_i$, and $s$ is on the
boundary of $D_i$. Hence, $\pi(s,t^+)$ must be inside $D_i$. Thus,
$\pi(s,q^*)$ is in $D_i$.
\qed
\end{proof}



\begin{figure}[t]
\begin{minipage}[t]{\linewidth}
\begin{center}
\includegraphics[totalheight=1.2in]{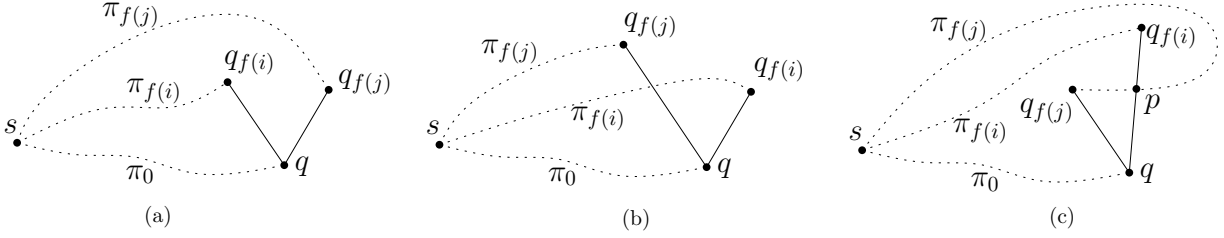}
\caption{\footnotesize
Illustrating Lemma~\ref{lem:key}.}
\label{fig:twopaths}
\end{center}
\end{minipage}
\vspace*{-0.15in}
\end{figure}

Our pruning algorithm mainly relies on the following lemma, whose proof
in turn boils down to Observation~\ref{obser:basic}.

\begin{lemma}\label{lem:key}
Suppose $i$ and $j$ are two indices with $1\leq i<j\leq k$.
\begin{enumerate}
\item
If $f(i)<f(j)$, then $\pi_{f(i)}$ does not
cross ${w_{f(j)}}$ and  $\pi_{f(j)}$ does not cross
${w_{f(i)}}$, and further, $D_{f(i)}$ is contained in $D_{f(j)}$ (e.g., see Fig.~\ref{fig:twopaths}(a)).
\item
If $f(i)>f(j)$, then either $\pi_{f(i)}$ crosses ${w_{f(j)}}$ or  $\pi_{f(j)}$ crosses $w_{f(i)}$.  Further, in the former case (e.g., see Fig.~\ref{fig:twopaths}(b)), $w_{f(i)}$ can be
pruned, and in the latter case (e.g., see Fig.~\ref{fig:twopaths}(c)), the sub-segment $\overline{qp}$ of $w_{f(i)}$ can be pruned, where $p$ is the point at which $\pi_{f(j)}$ crosses $w_{f(i)}$.
\end{enumerate}
\end{lemma}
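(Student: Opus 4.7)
Let $a=f(i)$ and $b=f(j)$. The hypothesis $i<j$ places $q_a$ before $q_b$ in the counterclockwise canonical leaf cycle $\calL_Q$, while the relative order of $a,b$ dictates the clockwise angular order of $w_a,w_b$ around $q$ (since $w_0,w_1,\ldots,w_k$ are indexed clockwise). The plan is to combine this angular-versus-tree ordering with three standing facts: shortest paths from $s$ do not cross (Observation~\ref{obser:10}(1)); for $c\neq c'$ the endpoint $q_c$ is neither on $\pi_0$ nor on $\pi_{c'}$ (Observations~\ref{obser:50}, \ref{obser:60}); and the windows $w_a,w_b$ share only their common endpoint $q$. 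The essential local picture at $q$ is that the interior of $D_c$ near $q$ is exactly the angular sector between the direction of $w_0$ (which coincides with the tangent of $\pi_0$ at $q$, since $\overline{u_0 q}\subset w_0$) and the direction of $w_c$, sitting on the left of $w_c$; this sector contains precisely the windows $w_1,\ldots,w_{c-1}$.

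For Part 1 ($a<b$, orderings agree), the local sector defining $D_a$ at $q$ is contained in the sector defining $D_b$. I would extend this to $D_a\subseteq D_b$ globally: each of the three pieces of $\partial D_a=\pi_0\cup w_a\cup\pi_a$ is prevented from escaping $D_b$ because $\pi_a$ cannot cross $\pi_0$ or $\pi_b$ (shortest-path non-crossing) and $w_a$ cannot cross $w_b$. The non-crossing conclusions then follow immediately: $\pi_a\subseteq\overline{D_b}$ meets $\partial D_b$ only at $s$ (using Observations~\ref{obser:50}, \ref{obser:60}), so $\pi_a$ does not cross $w_b$; symmetrically $\pi_b$ stays outside $D_a$ and does not cross $w_a$.

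For Part 2 ($a>b$, orderings disagree), near $q$ the segment $w_b$ lies in the left sector of $w_a$ and so enters the interior of $D_a$; yet the tree cyclic order $q,q_a,q_b$ forces $\pi_b$, and hence $q_b$, to lie strictly outside $D_a$. Thus $w_b$ must exit $\partial D_a=\pi_0\cup w_a\cup\pi_a$. It cannot cross $w_a$ (shared endpoint only), so it must cross $\pi_a$---an apparent exit only through $\pi_0$ is ruled out because $q_b$ lies in the connected component of $\calP\setminus(\pi_0\cup\pi_a)$ separated from $D_a$ by $\pi_a$, and $w_b$ is a single straight segment terminating at $q_b$. This yields the first alternative; a symmetric analysis examining $w_a$ relative to $D_b$ (which $w_a$ must re-enter because $q_a\in\overline{D_b}$) yields the second. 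For the pruning, any $t\in w_{f(i)}^l=w_a^l$ has $\pi(s,t^+)\subseteq D_a$: in the first case $w_b\in W$ separates every such $t$ from $s$, so Observation~\ref{obser:basic} prunes all of $w_a$; in the second case $\pi_b$ cuts $D_a$ into two parts, and only the part bordered by $\overline{qp}$ is cut off from $s$, whence $\pi(s,t^+)$ for $t$ there must pass through the endpoint $q_b$ of $w_b\in W$, pruning exactly $\overline{qp}$.

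The main obstacle I foresee is the planar-topology step in Part 2: cleanly ruling out that the straight segment $w_b$ exits $D_a$ only through $\pi_0$. I expect this reduces to identifying, via the tree cyclic order, which connected component of $\calP\setminus\partial D_a$ contains $q_b$, and observing that a single chord from $q$ to $q_b$ cannot end in a component separated from $D_a$ by $\pi_a$ without actually crossing $\pi_a$.
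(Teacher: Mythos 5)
Your high-level picture (angular sectors at $q$, the tree/ccw order around $s$, reducing to Observation~\ref{obser:basic}) is the right frame, and it matches the paper's intuition. But as written, both parts have real gaps, and the Part~1 gap is structural rather than cosmetic.

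\textbf{Part 1 is circular as stated.} You propose to establish $D_a\subseteq D_b$ by checking that each piece of $\partial D_a=\pi_0\cup w_a\cup\pi_a$ cannot escape $D_b$, listing that $\pi_a$ cannot cross $\pi_0$ or $\pi_b$ and that $w_a$ cannot cross $w_b$. But those are not the only escape routes: $\pi_a$ could a priori cross $w_b$, and $w_a$ could a priori cross $\pi_b$ --- and these two possibilities are \emph{exactly} the two non-crossing assertions you want to conclude. You cannot derive them from $D_a\subseteq D_b$ because you need them to establish $D_a\subseteq D_b$. The paper breaks the loop by first proving the separate fact that $q_{f(j)}\notin\mathrm{int}\,D_{f(i)}$: if $w_{f(j)}$ entered $D_{f(i)}$, it would have to cross $\pi_{f(i)}$ at some point $p$, and then a two-case analysis (is $p\in\pi_{f(j)}$ or not?) turns the crossing into a violation of $i<j$ in the canonical list. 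Only after $q_{f(j)}$ is placed strictly outside $D_{f(i)}$ do both non-crossings and the containment follow easily. That intermediate fact --- derived from the tree order, which is where $i<j$ actually does its work --- is the missing ingredient in your Part~1.

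\textbf{Part 2 collapses a genuine dichotomy and leans on an unestablished claim.} You assert that the tree order forces $q_b$ strictly outside $D_a$, and hence that $w_b$ always exits $D_a$ across $\pi_a$, i.e., that $\pi_a$ always crosses $w_b$. But the lemma really is an either/or: the paper's proof case-splits on whether $q_a=q_{f(i)}$ is outside or inside $D_b=D_{f(j)}$, and the two cases land in the two branches of the disjunction respectively. Nothing in the hypotheses rules out $q_b\in D_a$, and your appeal to the cyclic order $q,q_a,q_b$ would need the same delicate case analysis you flagged as the obstacle; it is not a one-liner. The symmetric ``$w_a$ must re-enter $D_b$'' remark quietly presupposes the second case ($q_a\in\overline{D_b}$), so the dichotomy is being smuggled in rather than derived. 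Finally, your pruning justification in the second alternative --- that $\pi(s,t^+)$ ``must pass through the endpoint $q_b$'' --- is too narrow and not correct as stated; the paper's region $D$ for that case is bounded by $\overline{qp}$, $w_{f(j)}$, and the subpath of $\pi_{f(j)}$ from $p$ to $q_{f(j)}$, and the exit can occur anywhere on $\overline{qp}\cup w_{f(j)}$ or at an endpoint, which is all Observation~\ref{obser:basic} requires.
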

\begin{proof}
Suppose $f(i)<f(j)$.
We first show that $q_{f(j)}$ cannot be in the interior of the region $D_{f(i)}$.

Assume to the contrary that $q_{f(j)}$ is in the interior of $D_{f(i)}$.
Let $p_{f(j)}$ be a point on ${w_{f(j)}}$ arbitrarily
close to $q$ (but $p_{f(j)}\neq q$). Since $f(i)<f(j)$, $w_{f(j)}$ is clockwise
from $w_{f(i)}$ with respect to $w_0$. Since $q$ is not in $\pi_{f(i)}$ by Observation~\ref{obser:50}, $p_{f(j)}$ is not in
$D_{f(i)}$. Since $q_{f(j)}$ is in the interior of $D_{f(i)}$, $\pi_{f(i)}$ must
cross $w_{f(j)}$ at a point $p$ with $p\neq q_{f(j)}$ (e.g., see Fig.~\ref{fig:pathcross}).

\begin{figure}[h]
\begin{minipage}[t]{\linewidth}
\begin{center}
\includegraphics[totalheight=1.0in]{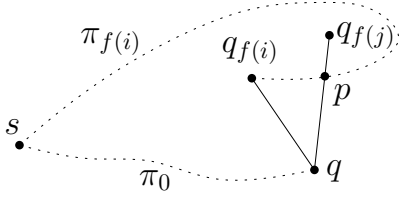}
\caption{\footnotesize
Illustrating the scenario where $q_{f(j)}$ is in the interior of $D_{f(i)}$.}
\label{fig:pathcross}
\end{center}
\end{minipage}
\vspace*{-0.15in}
\end{figure}

Depending on whether $p\in \pi_{f(j)}$, there are two cases.

\begin{enumerate}
\item
If $p\in \pi_{f(j)}$, then since $p\in w_{f(j)}$, we obtain
$\pi_{f(j)}=\pi(s,p)\cup \overline{pq_{f(i)}}$ by Observation~\ref{obser:10}(2). Since
$q_{f(j)}$ is in the interior of $D_{f(i)}$, we further obtain
that $\pi_{f(i)}$ is counterclockwise from $\pi_{f(j)}$  with respect
to $\pi_0$. Thus, we have $i>j$, a contradiction.

\item
If $p\not\in \pi_{f(j)}$, then since $i<j$ and
$\pi_{f(j)}$ is counterclockwise from $\pi_{f(i)}$ with respect to
$\pi_0$, $\pi_{f(j)}$ must cross an interior point $p'$ of $\overline{qp}$
before reaching $q_{f(j)}$. This implies that
$\pi_{f(j)}=\pi(s,p')\cup \overline{p'q_{f(j)}}$ by Observation~\ref{obser:10}(2), and thus, $\pi_{f(j)}$
contains $p$ since $p\in \overline{p'q_{f(i)}}$. Hence, we again
obtain contradiction.
\end{enumerate}

This proves that $q_{f(j)}$ cannot be in the interior of the
region $D_{f(i)}$.

By Observations~\ref{obser:50} and \ref{obser:60}, $q_{f(j)}$ cannot be in $\pi_0$ or $\pi_{f(i)}$. Since no segment of $W$ contains another, $q_{f(j)}$ cannot be in $w_{f(i)}$. Hence, $q_{f(j)}$ cannot be on the
boundary of $D_{f(i)}$. Therefore, $q_{f(j)}$ is outside $D_{f(i)}$.
Next we show that $\pi_{f(i)}$ does not cross ${w_{f(j)}}$.

Indeed,
since both  $q_{f(j)}$ and $p_{f(j)}$ are outside $D_{f(i)}$,
in order for $\pi_{f(i)}$ to cross ${w_{f(j)}}$,
$\pi_{f(i)}$ must cross ${w_{f(j)}}$ at least twice, which is not possible by Observation~\ref{obser:10}(2).
Similarly, in order for $\pi_{f(j)}$ to cross ${w_{f(i)}}$, it would
have to cross ${w_{f(i)}}$ at least twice, which is not possible.

This proves that $\pi_{f(i)}$ does not cross ${w_{f(j)}}$ and
$\pi_{f(j)}$ does not cross ${w_{f(i)}}$.
Since $w_{f(j)}$ is clockwise from $w_{f(i)}$ and $\pi_{f(j)}$ does
not cross $w_{f(i)}$, $w_{f(i)}$ is contained in $D_{f(j)}$.
Further, since $\pi_{f(j)}$ is counterclockwise from $\pi_{f(i)}$ and $\pi_{f(i)}$ does not cross $w_{f(j)}$, $D_{f(i)}$ must be contained in $D_{f(j)}$.

This proves the first part of the lemma.

For the second part of the lemma, we assume  $f(i)>f(j)$. By the same analysis as above, $q_{f(i)}$ cannot be on the boundary of $D_{f(j)}$.
Depending on whether $q_{f(i)}$ is in the interior of $D_{f(j)}$
or outside it, there are two cases.

\begin{enumerate}
\item
If $q_{f(i)}$ is outside $D_{f(j)}$, then since $i<j$,
$\pi_{f(j)}$ is counterclockwise from $\pi_{f(i)}$ with respect to
$\pi_0$. Further, since $\pi_{f(i)}$ and $\pi_{f(j)}$ do not
cross each other and $\pi_{f(i)}$ does not contain $q$ (by Observation~\ref{obser:50}), $\pi_{f(i)}$ must cross $w_{f(j)}$. Let $p$ be the point of  $w_{f(j)}$ where $\pi_{f(i)}$ crosses.
Let $D$ be the open region bounded by
$w_{f(i)}$, $\overline{qp}$, and the subpath $\pi'$ of
$\pi_{f(i)}$ between $p$ and $q_{f(i)}$.

Consider any point $t$ on $w^l_{f(i)}$ (if any).
The point $t^+$ must be in the interior of $D$.
Clearly, $s$ is not in $D$. Hence, $\pi(s,t^+)$ must cross the boundary of $D$. Since $\pi(s,t^+)$ cannot cross $\pi'$, it must cross either $\overline{pq}$ or $w_{f(i)}$. By Observation~\ref{obser:basic}, $t$ can be pruned.
Thus, $w_{f(i)}$ can be pruned.

\item
If $q_{f(i)}$ is in the interior of $D_{f(j)}$, let $p_{f(i)}$ be a point on $w_{f(i)}$ infinitely close to $q$. Since $f(i)>f(j)$, by the same analysis as before, $p_{f(i)}$ is not in $D_{f(j)}$.
Since $q_{f(i)}$ is in the interior of $D_{f(j)}$, $\overline{q_{f(i)}p_{f(i)}}$
must intersect the boundary of $D_{f(j)}$ at a point $p$.
Since $\overline{q_{f(i)}p_{f(i)}}$ does not intersect $\pi_0$ or $w_{f(j)}$,
$p$ is on $\pi_{f(j)}$. This proves that $\pi_{f(j)}$ crosses $w_{f(i)}$.

Consider the region $D$ bounded by $\overline{qp}$, $w_{f(j)}$, and the
subpath of $\pi_{f(j)}$ between $p$ and $q_{f(j)}$.

Consider any point $t$ on $\overline{qp}\cap w^l_{f(i)}$.
By the similar argument as above, we can show that $t$ can be pruned.
Thus, $\overline{qp}$ can be pruned.
\end{enumerate}
The lemma thus follows.
\qed
\end{proof}

For any $1\leq i<j\leq k$,  we say $\pi_i$ and $\pi_j$ are {\em
consistent} if $f(i)<f(j)$. By
Lemma~\ref{lem:key}, if $\pi_i$ and $\pi_j$ are not consistent, then we can do
some pruning, based on which we present our pruning algorithm in Section~\ref{sec:prune}. Figure~\ref{fig:afterprune} gives an example showing the remaining parts of the segments of $W$ after the pruning algorithm.

\begin{figure}[t]
\begin{minipage}[t]{\linewidth}
\begin{center}
\includegraphics[totalheight=2.3in]{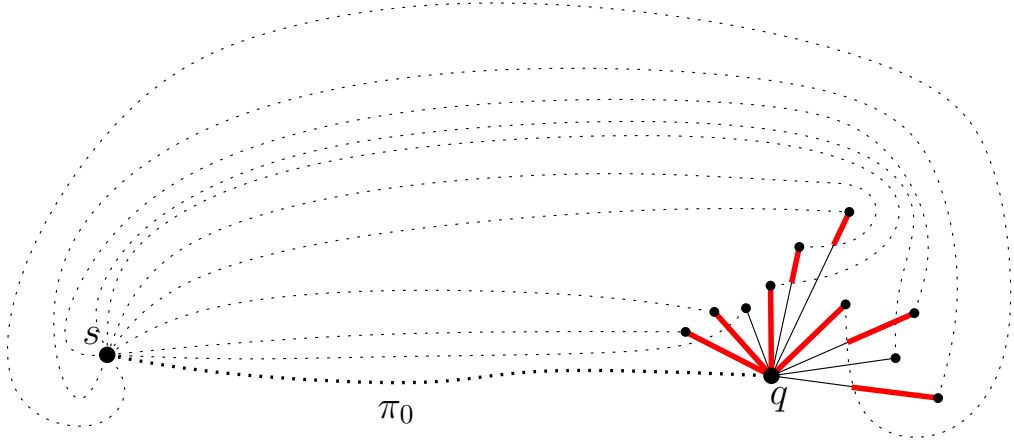}
\caption{\footnotesize
The thick (red) segments are the remaining parts of the segments of $W$ after the pruning algorithms (so that $q_l^*$ must be on the left side of a red segment).
Note that the paths could be ``below'' $\pi_0$, but for ease of exposition, we ``flip'' them above $\pi_0$, and this flip operation actually does not change the topology of these paths.}
\label{fig:afterprune}
\end{center}
\end{minipage}
\vspace*{-0.15in}
\end{figure}

\subsection{A Pruning Algorithm for Pruning the Segments of $W$}
\label{sec:prune}

We process the paths $\pi_{f(1)},\pi_{f(2)},\ldots,\pi_{f(k)}$ in this order.
Assume that $\pi_{f(i-1)}$ has just been processed and we are about to
process $\pi_{f(i)}$. Our algorithm maintains a sequence of {\em
bundles}, denoted by $\bbB=\{B_1,B_2,\ldots B_{g}\}$. Each {\em
bundle} $B\in \bbB$ is defined recursively as follows. Essentially $B$
is a list of sorted indices of a subset of $\{1,2,\ldots,i-1\}$, but the
indices are grouped in a special and systematic way.

There are two types of bundles: {\em atomic} and {\em composite}. If
$B$ has only one index, then it is an atomic bundle. Otherwise, $B$ is
a composite bundle consisting of a sequence of at least
two bundles $B'_1,\ldots,B'_{g'}$ (with $g'\geq 2$) such that the last
bundle $B'_{g'}$ must be atomic (others can be either atomic or
composite), and we call the index contained in $B'_{g'}$ the {\em wrap
index} of $B$. We consider the bundles $B'_1,\ldots,B'_{g'}$ as the
{\em children bundles} of $B$.

Let $f_{\min}(B)$ and $f_{\max}(B)$ denote the smallest and largest $f(j)$ of
all indices $j$ of $B$, respectively. If $B$ is composite,
then $B$ further has the following three {\em bundle-properties}. (1) The indices of $B$
are distinct and sorted increasingly by their order in $B$. (2) For
any $1\leq b<g'-1$, $f_{\max}(B'_b)<f_{\min}(B'_{b+1})$. (3) If $j$ is the wrap
index of $B$, then $f_{\min}(B)=f(j)$ and $\pi_{f(j)}$ crosses $w_{f(j')}$
for every $j'\in B\setminus\{j\}$ (intuitively, $\pi_{f(j)}$ ``wraps'' the
point $q_{f(j')}$, and this is why we call $j$ a ``wrap''
index). Refer to Fig.~\ref{fig:bundle} for an example.

\begin{figure}[t]
\begin{minipage}[t]{\linewidth}
\begin{center}
\includegraphics[totalheight=2.3in]{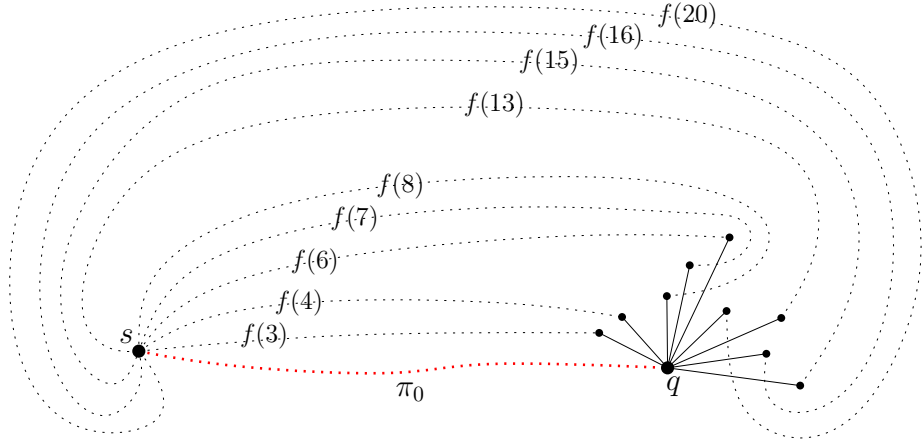}
\caption{\footnotesize
Illustrating the shortest paths corresponding to the indices in the current bundle sequence $\bbB=\{\underline{\{3\}}, \underline{\{4\}}, \underline{\{\{\{6\},\{7\}\},\{8\}\}},\underline{\{\{13\},\{\{15\},\{16\}\},\{20\}\}}         \}$, where each underline indicates a bundle of $\bbB$. For example, the last bundle is a composite bundle consisting of three children bundles with $20$ as its wrap index. In the figure, the indices of the paths are labeled. Note that the paths could be ``below'' $\pi_0$, but for ease of exposition, we ``flip'' them above $\pi_0$, and this flip operation actually does not change the topology of these paths.}
\label{fig:bundle}
\end{center}
\end{minipage}
\vspace*{-0.15in}
\end{figure}

For convenience, if the context is clear, we also consider a bundle
$B$ as a set of sorted indices. So if an index $j$ is in $B$, we can write ``$j\in B$''.

\paragraph{Remark.} We use the word ``bundle'' because each index $j$ of
$B$ refers to the shortest path $\pi_{f(j)}$. Therefore, $B$ is a ``bundle'' of shortest paths.
\paragraph{}

In addition, the bundle sequence $\bbB=\{B_1,B_2,\ldots,B_g\}$ maintained by our
algorithm has the following two {\em $\bbB$-properties}. (1) The indices in all
bundles are distinct in $[1,i-1]$ and are sorted increasingly by their order in the
sequence. (2) For any $1\leq b<g$, $f_{\max}(B_b)<f_{\min}(B_{b+1})$.

\begin{observation}
\begin{enumerate}
\item
For any $1\leq b<b'\leq g$ and any indices $j\in B_b$ and $j'\in B_{b'}$ (both $B_b$ and $B_{b'}$ are from $\bbB$), the two shortest paths
$\pi_{f(j)}$ and $\pi_{f(j')}$ are consistent  (e.g., see Fig.~\ref{fig:bundle}).

\item
For any composite bundle $B=\{B'_1,\ldots,B'_{g'}\}$,
for any $1\leq b<b'\leq g'-1$ and any indices $j\in B'_b$ and $j'\in
B'_{b'}$, the two shortest paths $\pi_{f(j)}$ and $\pi_{f(j')}$ are consistent (e.g., see Fig.~\ref{fig:bundle}).
\end{enumerate}
\end{observation}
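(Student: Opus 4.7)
The observation is essentially a bookkeeping consequence of the bundle-properties and the $\bbB$-properties; nothing geometric is needed here beyond what has been built into these structural invariants. So the plan is simply to unwind the definitions and chain the relevant inequalities.

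For part (1), let $B_b, B_{b'} \in \bbB$ with $b < b'$, and pick $j \in B_b$ and $j' \in B_{b'}$. Consistency between $\pi_{f(j)}$ and $\pi_{f(j')}$ requires two things: that $j < j'$, and that $f(j) < f(j')$. The first is immediate from the first $\bbB$-property, which states that the indices across all bundles of $\bbB$ are distinct and appear in increasing order by the order of the bundles in the sequence; since $b < b'$, every index of $B_b$ precedes every index of $B_{b'}$, so $j < j'$. For the second, the second $\bbB$-property gives $f_{\max}(B_c) < f_{\min}(B_{c+1})$ for every consecutive pair, and chaining these inequalities from $c = b$ up to $c = b'-1$ yields
\[
f(j) \leq f_{\max}(B_b) < f_{\min}(B_{b+1}) \leq f_{\max}(B_{b+1}) < \cdots < f_{\min}(B_{b'}) \leq f(j').
\]
Hence $f(j) < f(j')$, which together with $j < j'$ is exactly what it means for $\pi_{f(j)}$ and $\pi_{f(j')}$ to be consistent.

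For part (2), the argument is identical but applied inside a composite bundle $B = \{B'_1, \ldots, B'_{g'}\}$. The first bundle-property says that indices of $B$ are sorted increasingly in the order of its children, so $b < b'$ forces $j < j'$. The second bundle-property gives $f_{\max}(B'_c) < f_{\min}(B'_{c+1})$ for all $1 \leq c \leq g' - 2$. Because the statement restricts $b' \leq g' - 1$, the only consecutive pairs we need are among the first $g' - 1$ children, which is precisely the range where the hypothesis applies; chaining as before yields $f(j) < f(j')$, and the consistency conclusion follows.

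The hardest part, such as it is, is just recognising that the side-condition $b' \leq g' - 1$ in part (2) is exactly what is needed to stay inside the range where bundle-property (2) has been postulated; the wrap index (the sole element of $B'_{g'}$) is deliberately excluded precisely because its path may cross windows $w_{f(j')}$ for $j' \in B'_b$ with $b < g'$, so the $f$-ordering guarantee would fail if it were included. Once this is noted, both parts reduce to a one-line chain of inequalities combined with the ordering convention for indices within a bundle sequence.
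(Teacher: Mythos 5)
Your proof is correct and matches the paper's own argument: both parts reduce to the remark that $j<j'$ follows from the ordering of indices across/within bundles, and $f(j)<f(j')$ follows by chaining $f_{\max}(B_c)<f_{\min}(B_{c+1})$ through the intermediate bundles. Your extra remark explaining why the cap $b'\leq g'-1$ in part (2) is exactly the range where bundle-property (2) applies (because the wrap index is the largest index in $B$ but has the smallest $f$-value) is accurate and a helpful gloss not present in the paper's terser proof.
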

\begin{proof}
We only prove the first part since the second part is similar.

Since $b<b'$, it holds that $j<j'$.
Clearly, $f(j)\leq f_{\max}(B_b)$ and $f_{\min}(B_{b'})\leq f(j')$. Since $b<b'$,
we have $f_{\max}(B_b)<f_{\min}(B_{b'})$. Therefore, we obtain
$f(j)<f(j')$. Thus, $\pi_{f(j)}$ and $\pi_{f(j')}$ are consistent.
\qed
\end{proof}

In the following, we discuss our algorithm for processing the shortest
path $\pi_{f(i)}$, during which $\bbB$ will be updated. Initially when $i=1$, we
simply set $\bbB$ to contain the only atomic bundle $B=\{1\}$ and this finishes our
processing for $\pi_{f(1)}$. In general when $i>1$, we do the following.

We first find the index $\beta$ such that $f_{\max}(B_{\beta})<f(i)<f_{\max}(B_{\beta+1})$.
Later in Section~\ref{sec:bundle}
we will give a data structure to maintain the bundle sequence $\bbB$
such that $\beta$ can be found in $O(\log n)$ time.

If $\beta=g$ (so $B_{\beta+1}$ does not exist in this case), then we add a new atomic bundle $B_{g+1}=\{i\}$ to the rear of
$\bbB$ and we are done with processing $\pi_{f(i)}$. Note that the two
$\bbB$-properties are maintained.

Otherwise, we check whether $f_{\min}(B_{\beta+1})<f(i)$. We have the following lemma.

\begin{lemma}\label{lem:120}
If $f_{\min}(B_{\beta+1})<f(i)$, then the extended-window $w_{f(i)}$ can be
pruned.
\end{lemma}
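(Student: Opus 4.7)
My plan is to show that every $t\in w_{f(i)}^l$ can be pruned, which is exactly what our left-side pruning algorithm needs in order to discard $w_{f(i)}$ (the right-side case being symmetric). I would first observe that the hypothesis $f_{\min}(B_{\beta+1})<f(i)<f_{\max}(B_{\beta+1})$ forces $B_{\beta+1}$ to be composite, since an atomic bundle $\{j\}$ would satisfy $f_{\min}(B_{\beta+1})=f(j)=f_{\max}(B_{\beta+1})$, incompatible with strict inequality. So $B_{\beta+1}$ has a wrap index $j^*$, and by the third bundle-property, $f(j^*)=f_{\min}(B_{\beta+1})<f(i)$ and $\pi_{f(j^*)}$ crosses $w_{f(j')}$ for every other $j'\in B_{\beta+1}$. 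Since $f_{\max}(B_{\beta+1})>f(i)$, I would fix such a $j'$ with $f(j')>f(i)$ and let $p^*$ denote the crossing of $\pi_{f(j^*)}$ with $w_{f(j')}$.

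Next, I would form the closed Jordan curve $C$ by concatenating the sub-path of $\pi_{f(j^*)}$ from $s$ to $p^*$, the sub-segment $\overline{p^*q}$ of $w_{f(j')}$, and the path $\pi_0$ from $q$ back to $s$; let $R$ denote the bounded open region enclosed by $C$. Combining the clockwise ordering $f(j^*)<f(i)<f(j')$ of windows around $q$ with the counterclockwise ordering $j'<j^*<i$ of the corresponding paths around $s$ (inherited from $\calL_Q$), together with Observation \ref{obser:10}(1), I would verify that $q_{f(i)}$ lies in $R$, that the extended window $w_{f(i)}$ emerges from $q$ into $R$, and that a sufficiently small left-side neighborhood of every $t\in w_{f(i)}^l$ lies in the interior of $R$.

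Once this enclosure is established, the pruning is immediate. For any $t\in w_{f(i)}^l$ the auxiliary point $t^+$ is in $R$ while $s\in\partial R$, so $\pi(s,t^+)$ must meet $\partial R\setminus\{s\}$. It cannot cross the sub-path of $\pi_{f(j^*)}$ (Observation \ref{obser:10}(1)) except possibly at the endpoint $q_{f(j^*)}$, which is an endpoint of the window $w_{f(j^*)}\in W$; it cannot cross $\pi_0$ except at $q$, the common endpoint of all windows in $W$; so $\pi(s,t^+)$ must cross the open sub-segment $\overline{p^*q}$ of $w_{f(j')}\in W$. In every case $\pi(s,t^+)$ meets a window or one of its endpoints, and Observation \ref{obser:basic} prunes $t$. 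Thus $w_{f(i)}$ is pruned.

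The main obstacle is the topological verification in the middle step: because $\pi_{f(j^*)}$ may wind around holes of $\calP$ in nontrivial ways, the containment $q_{f(i)}\in R$ does not follow from angular considerations at $q$ alone. I expect to combine Lemma \ref{lem:key}(1) applied to the consistent pair $(j^*,i)$ (which gives $D_{f(j^*)}\subseteq D_{f(i)}$ and constrains the placement of $\pi_{f(i)}$ relative to $\pi_{f(j^*)}$) with the wrap property at $j^*$ (which constrains the placement of $w_{f(j')}$), and to rule out each alternative position of $q_{f(i)}$ by invoking the non-crossing of shortest paths together with Observation \ref{obser:60}.
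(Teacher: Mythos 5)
Your overall scaffolding is right: the hypothesis forces $B_{\beta+1}$ to be composite, the wrap index $j^*$ has $f(j^*)=f_{\min}(B_{\beta+1})<f(i)$, picking a $j'\in B_{\beta+1}$ with $f(j')>f(i)$ and setting $p^*=\pi_{f(j^*)}\cap w_{f(j')}$ are all exactly what the paper does (with $r=j^*$, $j=j'$, $p=p^*$). The place where the argument goes wrong is the choice of region and the consequence you draw from it.

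Your region $R$ is bounded by the \emph{initial} piece of $\pi_{f(j^*)}$ (from $s$ to $p^*$), the chord $\overline{p^*q}$, and $\pi_0$. This curve passes through $s$, so $s\in\partial R$. From that you infer that for $t\in w_{f(i)}^l$, ``$\pi(s,t^+)$ must meet $\partial R\setminus\{s\}$.'' That inference is false: a path from a boundary point $s$ to an interior point $t^+$ can enter the interior at $s$ and never touch $\partial R$ again. Nothing prevents $\pi(s,t^+)$ from branching off $\pi_0$ or $\pi_{f(j^*)}$ immediately after $s$ and going straight to $t^+$ entirely inside $R$. So the conclusion ``$\pi(s,t^+)$ must cross $\overline{p^*q}$'' does not follow, and your pruning step collapses. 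The paper avoids this trap by using a different region $D$, bounded by the \emph{window} $w_{f(r)}$, the chord $\overline{pq}$, and the \emph{far} portion of $\pi_{f(r)}$ between $p$ and $q_{f(r)}$; crucially $s\notin D$, so for any interior point the shortest path from $s$ genuinely must cross $\partial D$. (One can check that $R=D_{f(j^*)}\cup D$, so your region is not ``wrong'' about where $q_{f(i)}$ is — it is just too big to make the pruning step work.) There is also a secondary slip: the piece of $\pi_{f(j^*)}$ on $\partial R$ has endpoints $s$ and $p^*$, not $q_{f(j^*)}$, so the exception you carve out ``at the endpoint $q_{f(j^*)}$'' is irrelevant to $\partial R$.

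Finally, the key containment that you acknowledge as ``the main obstacle'' is left unproved, and the hint you give (to use $D_{f(j^*)}\subseteq D_{f(i)}$ from Lemma~\ref{lem:key}(1)) points at the wrong consequence of that lemma. The mechanism the paper actually uses is the \emph{non-crossing} conclusion of Lemma~\ref{lem:key}(1): $\pi_{f(r)}$ does not cross $w_{f(i)}$. Since $w_{f(i)}$ emerges from $q$ into $D$ (its direction lies between $w_{f(r)}$ and $w_{f(j)}$), and it cannot cross $w_{f(r)}$ or $\overline{pq}$ (two windows share only the endpoint $q$ and are not collinear), the only way $w_{f(i)}$ could exit $D$ is by crossing the $\pi_{f(r)}$-portion of $\partial D$ — which Lemma~\ref{lem:key}(1) forbids. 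Hence $q_{f(i)}\in D$, $w_{f(i)}\subset\mathrm{int}\,D$, and for $t\in w_{f(i)}$ the path $\pi(s,t)$ must cross $w_{f(r)}$ or $\overline{pq}$, so $t\neq q^*$. Replacing your $R$ with this $D$ and arguing as above is the fix.
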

\begin{proof}
Assume that $f_{\min}(B_{\beta+1})<f(i)$. Since $f(i)<f_{\max}(B_{\beta+1})$, we have $f_{\min}(B_{\beta+1})<f(i)<f_{\max}(B_{\beta+1})$, which also implies that
$B_{\beta+1}$ is a composite bundle. Let $r$ be the wrap index of
$B_{\beta+1}$. Due to $f(r)=f_{\min}(B)$, it follows that
$f(r)<f(i)$. Since every index of $\bbB$ is smaller than $i$, $r<i$. By Lemma~\ref{lem:key}, $\pi_{f(r)}$ does not cross $w_{f(i)}$.

\begin{figure}[t]
\begin{minipage}[t]{\linewidth}
\begin{center}
\includegraphics[totalheight=1.2in]{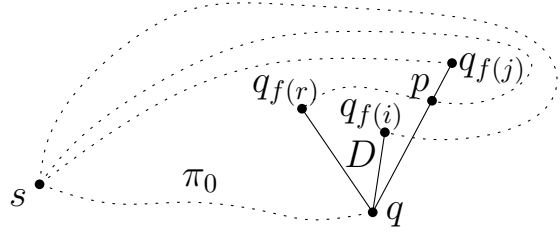}
\caption{\footnotesize
Illustrating the proof of Lemma~\ref{lem:120}.}
\label{fig:bundleprune}
\end{center}
\end{minipage}
\vspace*{-0.15in}
\end{figure}

Consider the index $j\in B$ with $f(j)=f_{\max}(B)$. Hence, $f(j)>f(i)$. By
the third bundle-property, $\pi_{f(r)}$ crosses $w_{f(j)}$, say, at a point $p$ (e.g., see Fig.~\ref{fig:bundleprune}). Consider the region $D$ bounded by $w_{f(r)}$, $\overline{pq}$, and the subpath of $\pi_{f(r)}$ between $p$ and $q_{f(r)}$. Since $r<i$ and
$f(r)<f(i)<f(j)$, $q_{f(i)}$ must be in $D$ since otherwise
$\pi_{f(r)}$ would cross $w_{f(i)}$, contradicting with Lemma~\ref{lem:key}(1). Also, by
Observation~\ref{obser:60}, $q_{f(i)}$ is not on $\pi_{f(r)}$.
Therefore, $q_{f(i)}$ is in the interior of $D$. This implies
that the shortest path from $s$ to any point $t$ of $w_{f(i)}$ must
intersect $w_{f(r)}$, $w_{f(j)}$, or their endpoints. Therefore, no point of
$w_{f(i)}$ can be $q^*$. Thus, $w_{f(i)}$ can be pruned.
\qed
\end{proof}

By Lemma~\ref{lem:120}, if $f_{\min}(B_{\beta+1})<f(i)$, we simply ignore
$\pi_{f(i)}$ and finish the processing of $\pi_{f(i)}$.

In the following, we assume $f(i)<f_{\min}(B_{\beta+1})$ (note that
$f(i)=f_{\min}(B_{\beta+1})$ is not possible since $i\not\in \bbB$).
Next, we are going to find all such indices $j$ of
$\bbB$ that $\pi_{f(j)}$ crosses $w_{f(i)}$. To this end,
the following two lemmas are crucial.

\begin{lemma}\label{lem:130}
\begin{enumerate}
\item
For any index $j$ in $B_b$ for any $b\in [1, \beta]$,
$\pi_{f(j)}$ does not cross $w_{f(i)}$.

\item

For any index $j$ in $B_b$ for any $b\in [\beta+1,g]$, if $\pi_{f(j)}$
crosses $w_{f(i)}$, then $w_{f(j)}$ can be pruned; otherwise,
$\pi_{f(i)}$ must cross $w_{f(j)}$.


\item
If $j$ is in $B_{b}$ for some $b\in [\beta+2,g]$ and $\pi_{f(j)}$ crosses $w_{f(i)}$, then $\pi_{f(j')}$ crosses $w_{f(i)}$ for any $j'\in B_{b'}$ and any $b'\in [\beta+1,b-1]$.

\item
If $j$ is in $B_{b}$ for some $b\in [\beta+1,g-1]$ and $\pi_{f(j)}$ does not cross $w_{f(i)}$, then $\pi_{f(j')}$ does not cross $w_{f(i)}$ for any $j'\in B_{b'}$ and any $b'\in [b+1,g]$.
\end{enumerate}
\end{lemma}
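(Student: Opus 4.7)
The plan is to handle the four parts in order. Parts 1 and 2 follow almost directly from Lemma~\ref{lem:key} via the $\bbB$-properties, while Parts 3 and 4 together express a single monotonicity statement; in fact Part 4 is the contrapositive of Part 3 after renaming the roles of $(j,b)$ and $(j',b')$, so only Part 3 requires independent geometric work.

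For Part 1, any $j \in B_b$ with $b \le \beta$ satisfies $j < i$ (all bundle indices lie in $[1,i-1]$) and $f(j) \le f_{\max}(B_\beta) < f(i)$ by the second $\bbB$-property, so Lemma~\ref{lem:key}(1) applied to the pair $(j,i)$ gives the claim. For Part 2, any $j \in B_b$ with $b \ge \beta+1$ satisfies $j < i$ and $f(j) \ge f_{\min}(B_{\beta+1}) > f(i)$, using the current-case assumption $f(i) < f_{\min}(B_{\beta+1})$ together with $\bbB$-property~2; Lemma~\ref{lem:key}(2) applied to $(j,i)$ then delivers exactly the stated dichotomy, where the first alternative prunes $w_{f(j)}$ as claimed.

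For Part 3, I would first reformulate the crossing condition: for any index $\ell$ with $\ell < i$ and $f(\ell) > f(i)$, the path $\pi_{f(\ell)}$ crosses $w_{f(i)}$ if and only if $q_{f(\ell)} \in D_{f(i)}$, because $\pi_{f(\ell)}$ can meet $\partial D_{f(i)}$ only through $w_{f(i)}$ (it cannot cross $\pi_0$ or $\pi_{f(i)}$ by Observation~\ref{obser:10}(1)) and such a meeting is a single sub-segment by Observation~\ref{obser:10}(2). The hypothesis then reads $q_{f(j)} \in D_{f(i)}$, and the goal becomes to deduce $q_{f(j')} \in D_{f(i)}$ for any $j' \in B_{b'}$ with $\beta+1 \le b' < b$. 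Such a $j'$ satisfies $j' < j$ and $f(j') < f(j)$, so Lemma~\ref{lem:key}(1) yields the nesting $D_{f(j')} \subseteq D_{f(j)}$ together with the mutual non-crossing of $\pi_{f(j')}$ and $\pi_{f(j)}$ with each other's extended windows. Assuming toward contradiction that $\pi_{f(j')}$ does not cross $w_{f(i)}$, Part 2 forces $\pi_{f(i)}$ to cross $w_{f(j')}$; tracing $\pi_{f(i)}$ from $s$ while using that it cannot cross $\pi_0$, $\pi_{f(j')}$, or $\pi_{f(j)}$, combined with $D_{f(j')} \subseteq D_{f(j)}$ and the fact that $q_{f(j)}$ lies in $D_{f(i)}$ but outside $D_{f(j')}$, should produce an impossible planar configuration analogous to those manipulated in the proofs of Lemmas~\ref{lem:key} and~\ref{lem:120}.

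The main obstacle I anticipate is precisely this last topological book-keeping in Part 3: one must verify that the forced chain of crossings really does lead to a shortest-path violation of Observation~\ref{obser:10}, rather than being eliminated by some unexpected configuration along $\partial \calP$. I expect the detailed argument to parallel the region-based contradictions already used in Lemmas~\ref{lem:key} and~\ref{lem:120}, where a closed region formed by two shortest paths together with one or two (extended) windows is shown via Observation~\ref{obser:basic} to force a crossing that contradicts Observation~\ref{obser:10}(1).
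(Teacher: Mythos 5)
Parts~1, 2 and 4 of your proposal are fine and match the paper: Part~1 is an immediate application of Lemma~\ref{lem:key}(1) using $f(j)\le f_{\max}(B_\beta)<f(i)$ and $j<i$, Part~2 is an immediate application of Lemma~\ref{lem:key}(2), and Part~4 is indeed equivalent to Part~3.

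Part~3 has two problems. The smaller one is that your reformulation appears to be backwards. From the proof of Lemma~\ref{lem:key}(2) (with the paper's $i$ playing the role of your $\ell$ and the paper's $j$ playing the role of your $i$): if $q_{f(\ell)}$ is \emph{outside} $D_{f(i)}$ then $\pi_{f(\ell)}$ crosses $w_{f(i)}$, and if $q_{f(\ell)}$ is \emph{inside} $D_{f(i)}$ then $\pi_{f(i)}$ crosses $w_{f(\ell)}$. Since $\pi_{f(\ell)}$ leaves $s$ into the angular sector between $\pi_0$ and $\pi_{f(i)}$ (because $0<\ell<i$) and can exit $D_{f(i)}$ only through $w_{f(i)}$ and at most once by Observation~\ref{obser:10}(2), the correct equivalence is ``$\pi_{f(\ell)}$ crosses $w_{f(i)}$ iff $q_{f(\ell)}\notin D_{f(i)}$,'' not ``$\in$.'' If you carry your sign error forward, your hypothesis and goal both flip and the contradiction you are trying to set up changes shape.

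The larger problem is that the decisive step is left unexecuted: you say the forced chain of crossings ``should produce an impossible planar configuration analogous to'' other lemmas, and you explicitly flag this book-keeping as the obstacle, but you never carry it out. That is exactly where the content of Part~3 lies, so as written the proof is incomplete. For contrast, the paper's argument at this point is direct rather than by contradiction, and much shorter than the route you envision: starting at $q$ and moving along $w_{f(i)}$ one immediately enters the interiors of both $D_{f(j)}$ and $D_{f(j')}$ (since $f(i)<f(j')<f(j)$); the hypothesis that $\pi_{f(j)}$ crosses $w_{f(i)}$ means this walk exits $D_{f(j)}$ through $\pi_{f(j)}$; by the nesting $D_{f(j')}\subseteq D_{f(j)}$ the walk must exit $D_{f(j')}$ no later, and since that exit point cannot lie on $w_{f(j')}$ (no two extended windows meet off $q$), cannot be $q_{f(i)}$ (Observation~\ref{obser:60}), and cannot lie on $\pi_0$, it must lie on $\pi_{f(j')}$, which is precisely the desired crossing.
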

\begin{proof}
We prove the four parts of the lemma separately.
\begin{enumerate}
\item
If $j$ is in a bundle $B$ of
$\{B_1,B_2,\ldots,B_{\beta}\}$. Note that $j<i$. Since $f(j)\leq f_{\max}(B)$
and $f_{\max}(B)\leq f_{\max}(B_{\beta})<f(i)$, we obtain $f(j)<f(i)$. Consequently, by
Lemma~\ref{lem:key}(1), $\pi_{f(j)}$ does not cross ${w_{f(i)}}$.

\item
If $j$ is in a bundle $B$ of $\{B_{\beta+1},B_{\beta+2},\ldots,B_{g}\}$, then
$f(j)>f(i)$. Since $j<i$, according to Lemma~\ref{lem:key}(2), either $\pi_{f(j)}$ crosses
${w_{f(i)}}$ or $\pi_{f(i)}$ crosses ${w_{f(j)}}$.
If $\pi_{f(j)}$ crosses ${w_{f(i)}}$, by Lemma~\ref{lem:key}(2),
$w_{f(j)}$ can be pruned. Otherwise, $\pi_{f(i)}$ must cross ${w_{f(j)}}$.

\item
Let $j$ and $j'$ be the indices as in the lemma statement.
Our goal is to show that $\pi_{f(j')}$  crosses $w_{f(i)}$.

\begin{figure}[t]
\begin{minipage}[t]{\linewidth}
\begin{center}
\includegraphics[totalheight=1.0in]{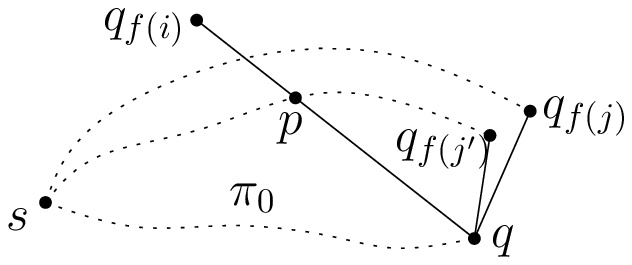}
\caption{\footnotesize
Illustrating the proof of Lemma~\ref{lem:130}.}
\label{fig:contained}
\end{center}
\end{minipage}
\vspace*{-0.15in}
\end{figure}

Clearly, $j'<j$ and $f(j')<f(j)$. By Lemma~\ref{lem:key}(1),
$D_{f(j')}$ is contained in $D_{f(j)}$ (e.g., see Fig.~\ref{fig:contained}).
Since $f(i)<f(j')$ and
$f(i)<f(j)$, if we move from $q$ to $q_{f(i)}$ along $w_{f(i)}$, we will enter
the interior of both $D_{f(j)}$ and $D_{f(j')}$. If we keep moving,
note that we cannot encounter any point in either $w_{f(j')}$ or
$w_{f(j)}$. Since $\pi_{f(j)}$ crosses $w_{f(i)}$, if we move as above
on $w_{f(i)}$, we will encounter a point on  $\pi_{f(j)}$, which is
part of the boundary of $D_{f(j)}$. Since $D_{f(j')}$ is contained in
$D_{f(j)}$, the above moving will also encounter a point $p$ on
$D_{f(j')}$  (e.g., see Fig.~\ref{fig:contained}). Due to Observation~\ref{obser:60}, $p$ cannot be
$q_{f(i)}$. Hence, $\pi_{f(j')}$ must cross $w_{f(i)}$ at $p$.

\item
This part is equivalent to the above third part.
\end{enumerate}
\qed
\end{proof}

For any bundle $B$ in $\{B_{\beta+1},B_{\beta+2},\ldots,B_{g}\}$, if
$B$ has two indices $j$ and $j'$ such that $w_{f(i)}$ crosses
$\pi_{f(j)}$ but does not cross $\pi_{f(j')}$, then we say that $B$ is
a {\em mixed} bundle, which is necessarily a
composite bundle.

\begin{lemma}\label{lem:140}
For any mixed bundle $B=\{B'_1,B'_2,\ldots, B'_{g'}\}$, the following holds.
\begin{enumerate}
\item
The path $\pi_{f(r)}$ must cross $w_{f(i)}$, where $r$ is the wrap index of $B$, i.e., $B'_{g'}=\{r\}$.

\item
If an index $j$ is in $B'_{b}$ for some $b\in [2,g'-1]$ and $\pi_{f(j)}$ crosses $w_{f(i)}$, then $\pi_{f(j')}$ crosses $w_{f(i)}$ for any $j'\in B'_{b'}$ and any $b'\in [1,b-1]$.

\item
If an index $j$ is in $B'_{b}$ for some $b\in [1,g'-2]$ and
$\pi_{f(j)}$ does not cross $w_{f(i)}$, then $\pi_{f(j')}$ does not
cross $w_{f(i)}$ for any $j'\in B'_{b'}$ and any $b'\in [b+1,g'-1]$.

\item
If a bundle $B'$ of $B$ has two indices $j$ and $j'$ such that $w_{f(i)}$ crosses $\pi_{f(j)}$ but does not cross $\pi_{f(j')}$, then we also say that $B'$ is a {\em mixed} bundle. This lemma applies to $B'$ recursively.
\end{enumerate}
\end{lemma}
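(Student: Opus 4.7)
The plan is to handle the four parts in order, reducing parts 2--4 to straightforward adaptations of earlier results and concentrating the real work on part 1. Throughout I rely on the fact that, since we are in the non-pruning case of processing $\pi_{f(i)}$, the bundle $B$ satisfies $B = B_b$ for some $b \in [\beta+1,g]$, and all indices in $B$ are smaller than $i$ while their $f$-values exceed $f(i)$.

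For part 1, I would argue by contradiction. Assume $\pi_{f(r)}$ does not cross $w_{f(i)}$. Since $r \in B_b$ with $b \in [\beta+1,g]$, Lemma~\ref{lem:130}(2) forces $\pi_{f(i)}$ to cross $w_{f(r)}$, say at a point $p_i$. Since $B$ is mixed and $B'_{g'} = \{r\}$, there exists some $j \in B \setminus \{r\}$ with $\pi_{f(j)}$ crossing $w_{f(i)}$ at a point $p_j$; simultaneously, the third bundle-property guarantees that $\pi_{f(r)}$ crosses $w_{f(j)}$ at some point $p_r$. I would then set up the closed region $D$ bounded by the sub-segment $\overline{qp_i}$ of $w_{f(r)}$, the sub-path of $\pi_{f(i)}$ from $p_i$ to $q_{f(i)}$, and $w_{f(i)}$ traversed from $q_{f(i)}$ back to $q$ --- this is the same style of region construction used in the proofs of Lemma~\ref{lem:90} and Lemma~\ref{lem:key}(2). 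The crossing of $w_{f(j)}$ by $\pi_{f(r)}$ places $q_{f(j)}$ on the side of $\pi_{f(r)}$ opposite from $q$, while $\pi_{f(j)}$ starts at $s$ (which lies on the boundary of $D_{f(r)}$) and must pass through $p_j \in w_{f(i)}$ before reaching $q_{f(j)}$. Tracing the forced topology, $\pi_{f(j)}$ is compelled either to cross $\pi_{f(r)}$ on its way to $q_{f(j)}$ or to cross $\pi_{f(i)}$ inside $D$, each of which contradicts Observation~\ref{obser:10}(1). This contradiction proves $\pi_{f(r)}$ crosses $w_{f(i)}$.

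For part 2, I would mirror the proof of Lemma~\ref{lem:130}(3) at the children level. Given $j \in B'_b$ with $b \in [2,g'-1]$ and $j' \in B'_{b'}$ with $b' \in [1,b-1]$, both $B'_{b'}$ and $B'_b$ are non-wrap children, so the second bundle-property of $B$ yields $f(j') \leq f_{\max}(B'_{b'}) < f_{\min}(B'_b) \leq f(j)$. Since indices in $B$ are sorted, $j' < j$. Lemma~\ref{lem:key}(1) then gives $D_{f(j')} \subseteq D_{f(j)}$, and the same "traverse $w_{f(i)}$ from $q$ to $q_{f(i)}$ and enter $D_{f(j')}$" argument used inside Lemma~\ref{lem:130}(3) shows the traversal must cross $\partial D_{f(j')}$ along $\pi_{f(j')}$ (it cannot exit through $w_{f(j')}$ or through $\pi_0$, and by Observation~\ref{obser:60} the exit point is not $q_{f(i)}$). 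Part 3 is the logical contrapositive of part 2: if some $\pi_{f(j')}$ with $j' \in B'_{b'}$, $b' > b$, crossed $w_{f(i)}$, then applying part 2 with the roles of $b$ and $b'$ swapped would force $\pi_{f(j)}$ to cross $w_{f(i)}$, contradicting the hypothesis. Part 4 is immediate from unwinding the definitions: any child $B'$ of $B$ that has two indices with different crossing statuses is necessarily a composite bundle (an atomic bundle has only one index), and it satisfies exactly the three bundle-properties used above, so the entire statement of Lemma~\ref{lem:140} applies to $B'$ in place of $B$.

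The main obstacle is the topological bookkeeping in part 1: extracting a crossing of two shortest paths from the simultaneous constraints that $\pi_{f(r)}$ crosses $w_{f(j)}$ (wrapping $q_{f(j)}$), that $\pi_{f(j)}$ crosses $w_{f(i)}$, and that $\pi_{f(i)}$ crosses $w_{f(r)}$. Parts 2 and 3 are essentially re-runs of the containment argument already developed for Lemma~\ref{lem:130}(3), and part 4 is purely definitional.
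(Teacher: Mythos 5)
Your treatment of parts 2, 3, and 4 matches the paper's (the paper also reduces part 2 to the containment argument used for Lemma~\ref{lem:130}(3), notes part 3 is the contrapositive, and invokes recursion for part 4), so no issue there. The problem is part 1, where your argument diverges from the paper's and leaves a genuine gap.

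You define $D$ as the region bounded by $\overline{qp_i}\subset w_{f(r)}$, the subpath of $\pi_{f(i)}$ from $p_i$ to $q_{f(i)}$, and $w_{f(i)}$, and you then assert ``tracing the forced topology, $\pi_{f(j)}$ is compelled either to cross $\pi_{f(r)}$ on its way to $q_{f(j)}$ or to cross $\pi_{f(i)}$ inside $D$.'' That disjunction is not justified, and it misses a third possibility: after entering $D$ at $p_j$, the path $\pi_{f(j)}$ could exit through the $\overline{qp_i}$ portion of the boundary, i.e.\ cross $w_{f(r)}$, which is not itself a contradiction (Lemma~\ref{lem:key}(2) guarantees at least one of the pair ``$\pi_{f(r)}$ crosses $w_{f(j)}$'' / ``$\pi_{f(j)}$ crosses $w_{f(r)}$'' holds, not that they are mutually exclusive). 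You also never establish why $\pi_{f(j)}$ has to enter $D$ at $p_j$ rather than exit there, nor where $q_{f(j)}$ sits relative to $D$. The paper avoids all of this by constructing a different region --- bounded by $\pi_{f(j)}$, the subsegment $\overline{pq_{f(j)}}$ of $w_{f(j)}$ (with $p=\pi_{f(r)}\cap w_{f(j)}$), and the subpath of $\pi_{f(r)}$ from $s$ to $p$ --- showing $q_{f(i)}$ lies inside it, and then tracing $\pi_{f(i)}$ backward from $q_{f(i)}$ to $s$: it must exit through $\overline{pq_{f(j)}}$ (it cannot cross the two shortest paths on the boundary), at which point it is inside $D_{f(j)}$ and would have to cross $w_{f(j)}$ a second time to reach $s$, contradicting the single-subsegment property of Observation~\ref{obser:10}(2). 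Your proposal neither reproduces that double-crossing argument nor supplies a complete alternative, so part 1 is not proved as written.
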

\begin{proof}
\begin{enumerate}
\item
Suppose $j$ is an index of $B$ such that $\pi_{f(j)}$ crosses ${w_{f(i)}}$.
If $j=r$, then we are done with the proof. In the
following, we assume $j\neq r$. Hence, $f(j)>f(r)$.

\begin{figure}[t]
\begin{minipage}[t]{\linewidth}
\begin{center}
\includegraphics[totalheight=1.3in]{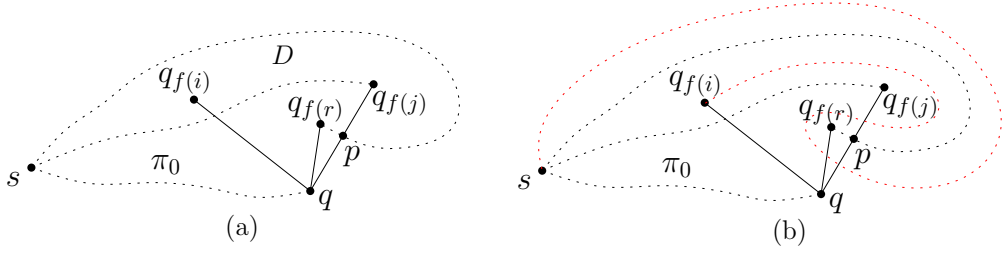}
\caption{\footnotesize
Illustrating the proof of Lemma~\ref{lem:140}(1): the path $\pi_{f(i)}$ is marked with red color in (b).}
\label{fig:wrapcross}
\end{center}
\end{minipage}
\vspace*{-0.15in}
\end{figure}

Assume to the contrary that $\pi_{f(r)}$ does not cross $w_{f(i)}$.
Since $r$ is the wrap index,
$\pi_{f(r)}$ crosses $w_{f(j)}$, say, at a point $p$ (e.g., see Fig.~\ref{fig:wrapcross}(a)). Consider the region $D$
bounded by $\pi_{f(j)}$, $\overline{pq_{f(j)}}$, and the subpath of $\pi_{f(r)}$
between $s$ and $p$, such that $D$ is on the right side of the
directed segment $\overline{pq_{f(j)}}$ from $p$ to
$q_{f(j)}$. Since $f(i)<f(r)<f(j)$ and $w_{f(i)}$ crosses
$\pi_{f(j)}$ but does not cross $\pi_{f(r)}$, $q_{f(i)}$ must be in the
region $D$. Since $i>j$ and $i>r$, if we go from $q_{f(i)}$ to $s$
along $\pi_{f(i)}$, we will get out of $D$ by crossing
$\overline{pq_{f(j)}}$, after which we get into the interior of the region $D_{f(j)}$ since $\pi_{f(i)}$ cannot cross $\pi_{f(r)}$  (e.g., see Fig.~\ref{fig:wrapcross}(b)). If we keep moving
towards $s$ along $\pi_{f(i)}$, before reaching $s$ we will need to
get out of the interior of $D_{f(j)}$ through $w_{f(j)}$ again.
However, due to Observation~\ref{obser:10}(2), since $\pi_{f(i)}$ already crosses $w_{f(j)}$ somewhere on $\overline{pq_{f(j)}}$,
it cannot intersect $w_{f(j)}$ again. Thus, we obtain contradiction.

\item
This part follows the similar proof as the third part of
Lemma~\ref{lem:100} and we omit the details.

\item
This part is equivalent to the second part of the lemma.

\item
Using the same analysis, we can prove that the same lemma
applies to $B'$ recursively.
\end{enumerate}
\qed
\end{proof}

In light of the preceding two lemmas, in the following we will find the indices $j$ of
$\bbB$ such that $\pi_{f(j)}$ crosses $w_{f(i)}$ and then prune
$w_{f(j)}$ by Lemma~\ref{lem:130}(2) (i.e., remove $j$
from $\bbB$); we say that such an index $j$ is {\em prunable}.

Before describing our algorithm, we first discuss an operation that
will be used in the algorithm. Consider a composite bundle
$B=\{B_1',B_2',\ldots,B'_{g'}\}$ of $\bbB$.
Let $r$ be a wrap index of $B$, i.e., $B'_{g'}=\{r\}$. Suppose $w_{f(i)}$ crosses
$\pi_{f(r)}$. Our algorithm will remove $r$ from $B$ and thus from
$\bbB$. This is done by a {\em wrap-index-removal} operation.
Further, suppose $B$ is the $j$-th bundle of $\bbB$, i.e., $B=B_j$.
After $r$ is removed, the operation will implicitly insert the bundles
$B_1',B_2',\ldots,B'_{g'-1}$ into the position of $B$ in the bundle
list $\bbB$, i.e., after the operation, $\bbB$ becomes
$B_1,\ldots,B_{j-1},B'_1,\ldots,B'_{g'-1},B_{j+1},\ldots,B_g$.
Note that this new bundle list still has the two $\bbB$-properties. Indeed,
$f_{\max}(B_{j-1})<f_{\min}(B)=f(r)<f_{\min}(B_1')$ and $f_{\max}(B'_{g'-1})\leq
f_{\max}(B)<f_{\min}(B_{j+1})$. 
Later in Section~\ref{sec:bundle}
we will give a data structure to maintain the bundles of $\bbB$ so that
each wrap-index-removal operation can be implemented in $O(\log n)$
time.

Another operation that is often used in the algorithm is the
following. Given any $i,j\in [1,k]$, we want to determine
whether $w_{f(i)}$ crosses $\pi_{f(j)}$. We call it the {\em shortest path
segment intersection} (or {\em SP-segment-intersection}) query.  Later
in Section~\ref{sec:imple} we will present an algorithm that can
answer
each such query in $O(\log h\log  n)$ time, after $O(n\log h)$ time
and space preprocessing.

We are ready to describe our algorithm for removing all prunable indices
from $\bbB$.  By Lemma~\ref{lem:130}(1), each bundle $B_b$ of $\bbB$ for $1\leq
b\leq \beta$ does not contain any prunable index. For each bundle $B$
of $B_{\beta+1},B_{\beta+2},\ldots, B_g$ in order, we call a procedure
{\em prune($B$)} until the procedure returns ``false''.

If all indices of $B$ are prunable, then {\em prune($B$)}
will return ``true'' and the entire bundle $B$ will be removed from $\bbB$.
Otherwise, the procedure will return false.
Further, if $B$ is a mixed bundle, then all prunable indices of $B$ will be removed (and the procedure returns false).

The procedure $prune(B)$ works as follows (see
Algorithm~\ref{algo:10} for the pseudocode). It is a recursive
procedure, which is not surprising since the bundles are defined
recursively. As a base case, if $B$ is an atomic bundle $\{j\}$, then
we call an SP-segment-intersection query to check whether
$\pi_{f(j)}$ crosses $w_{f(i)}$. If yes, we remove the bundle $B$ and
return true; otherwise, we return false. If $B$ is a composite bundle $\{B'_1,B'_2,\ldots,B'_{g'}\}$ with $r$
as the wrap index (i.e., $B'_{g'}=\{r\}$), then we first call an SP-segment-intersection to
check whether $\pi_{f(r)}$ crosses $w_{f(i)}$. If not, by
Lemma~\ref{lem:140}(1), $B$ does not have any prunable index and thus we
simply return false. If yes, then we call a wrap-index-removal operation to
remove $B'_{g'}$. Afterwards, for each $b'=1,2,\ldots,g'-1$ in order,
we call $prune(B'_{b'})$ recursively. If $prune(B'_{b'})$ returns false, then we
return false (without calling $prune(B'_{b'+1})$). If it returns true, we
remove $B'_{b'}$ (in fact all children bundles of $B'_{b'}$ have been
removed by $prune(B'_{b'})$). If $b'=g'-1$, then we return true
(since all children bundles of $B$ have been removed); otherwise,
we proceed on calling $prune(B'_{b'+1})$.

\begin{algorithm}[h]
\caption{The procedure $prune(B)$}
\label{algo:10}
\KwIn{A bundle $B$}
\KwOut{remove all prunable indices of $B$} \BlankLine
\eIf{$B$ is an atomic bundle $\{j\}$} 
{
  \eIf(\tcc*[f]{call an SP-segment-intersection query}){$\pi_{f(j)}$ crosses $w_{f(i)}$}
  {
     remove B\;
     return true\;
  }
  {
     return false\;
  }
}
{
 Let $B=\{B'_1,B'_2,\ldots,B'_{g'}\}$ and $B'_{g'}=\{r\}$\;
 \eIf(\tcc*[f]{call an SP-segment-intersection query}){$\pi_{f(r)}$ does not cross $w_{f(i)}$}
 {
   return false\;
 }
 {
   remove $B'_{g'}$; \tcc*[f]{perform a wrap-index-removal
   operation}

   \For{$b'\leftarrow 1$ \KwTo $g'-1$}
   {
      \eIf{$prune(B'_{b'}) = $ false}
	  {
	      return false\;
	  }
	  {
	      remove $B_{b'}'$;
	   }
   }
   return true\;
 }
}
\end{algorithm}


%

If {\em prune($B_{b}$)} returns true for every $b$ with $\beta+1\leq b\leq g$, then we add a new atomic bundle $\{i\}$ at the end of $\bbB$, which now becomes
$\{B_1,B_2,\ldots,B_{\beta},\{i\}\}$. This also finishes our
preprocessing for $\pi_{f(i)}$. Otherwise, {\em prune($B_{b}$)} returns false for some $b$ with $\beta+1\leq b\leq g$. In this case, as a final step, we create a new
composite bundle $B$, consisting of all bundles of $\bbB$
after $B_{\beta}$ (not including $B_{\beta}$)
and the atomic bundle $\{i\}$ as the last child bundle of $B$. This is done
by a {\em bundle-creation} operation. We will show in
Section~\ref{sec:bundle} that this operation can be implemented in $O(\log n)$ time. Afterwards, the new bundle sequence $\bbB$ becomes
$\{B_1,B_2,\ldots,B_{\beta},B\}$.
The following lemma shows that the new bundle $B$ is a ``valid''
composite bundle and the updated $\bbB$ maintains the two $\bbB$-properties.

\begin{lemma}
The new bundle $B$ has the three bundle properties and the updated $\bbB$ has the two
$\bbB$-properties.
\end{lemma}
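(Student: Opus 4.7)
The plan is to verify the three bundle-properties of $B$ and then the two $\bbB$-properties of the updated sequence $\{B_1,\ldots,B_\beta,B\}$. A key preliminary observation is that every operation performed during the processing of $\pi_{f(i)}$ before the bundle-creation --- each wrap-index-removal (already shown to preserve the $\bbB$-properties) and each deletion of an atomic bundle whose sole index was found prunable (a trivial shrinkage of the sequence) --- preserves both $\bbB$-properties of the evolving $\bbB$. So immediately before bundle-creation, the current $\bbB$ still satisfies both properties; let $X_1,\ldots,X_m$ denote its bundles strictly after $B_\beta$, so that $B=\{X_1,\ldots,X_m,\{i\}\}$.

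For bundle-property (1), the indices within each $X_c$ increase by $X_c$'s own bundle-property (1), the indices across consecutive $X_c,X_{c+1}$ increase by the preserved $\bbB$-property (1), and every such index lies in $[1,i-1]$, so $i$ caps the list. Bundle-property (2) requires $f_{\max}(X_b)<f_{\min}(X_{b+1})$ for $b\in[1,m-1]$, which is exactly the preserved $\bbB$-property (2) restricted to the contiguous sub-sequence $X_1,\ldots,X_m$. For property (3), the last child $\{i\}$ makes $i$ the wrap index of $B$. To see $f_{\min}(B)=f(i)$: every $j'\in B\setminus\{i\}$ lies in some original $B_b$ with $b\geq\beta+1$, so by the old $\bbB$-property (2) we have $f(j')\geq f_{\min}(B_{\beta+1})$, and combined with the case assumption $f(i)<f_{\min}(B_{\beta+1})$ this forces $f(i)$ to be the minimum. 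The remaining clause of (3) --- that $\pi_{f(i)}$ crosses $w_{f(j')}$ for every $j'\in B\setminus\{i\}$ --- reduces, via the dichotomy of Lemma~\ref{lem:130}(2), to showing that each such surviving $j'$ has $\pi_{f(j')}$ not crossing $w_{f(i)}$. This splits by cases on where $j'$ sits relative to the bundle $B_b$ on which prune first returned false: indices inside bundles fully processed before $B_b$ were explicitly removed whenever their paths crossed $w_{f(i)}$; indices surviving inside $B_b$ itself are handled by the recursive invariant of prune together with Lemma~\ref{lem:140}(1) (the contrapositive of which justifies not descending once a wrap index is certified non-crossing); and indices inside $B_{b+1},\ldots,B_g$ (untouched by prune) are non-crossing by Lemma~\ref{lem:130}(4), applied to a ``no-crossing'' witness inside $B_b$ certified by the recursive argument.

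Having all three bundle-properties of $B$, the two $\bbB$-properties of $\{B_1,\ldots,B_\beta,B\}$ follow at once: property (1) combines the preserved property (1) on $B_1,\ldots,B_\beta$ with the just-verified property (1) of $B$, noting that every index of $B$ strictly exceeds every index of $B_1,\ldots,B_\beta$; and property (2) needs only be checked for the new pair $(B_\beta,B)$, for which $f_{\max}(B_\beta)<f(i)=f_{\min}(B)$ holds by the defining choice of $\beta$ together with the computation of $f_{\min}(B)$ above. The main obstacle is the ``no-crossing'' case analysis for the crossing clause of property (3): one must carefully align prune's possible early termination with Lemma~\ref{lem:130}(4) and descend through the recursive structure of Lemma~\ref{lem:140} to ensure that no prunable index is accidentally retained inside a bundle whose outer prune returned false.
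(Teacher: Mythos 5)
Your proposal is correct and follows essentially the same route as the paper: verify the three bundle-properties of $B$ directly and deduce the $\bbB$-properties of the updated sequence. The difference is degree of explicitness at the one step that actually carries weight. Where the paper disposes of the crossing clause of bundle-property~(3) in a single sentence --- ``since $j$ is not prunable (otherwise $j$ would have already been pruned), $\pi_{f(j)}$ does not cross $w_{f(i)}$'' --- you unpack what ``would have already been pruned'' is silently relying on: that \emph{prune} may legitimately return \texttt{false} and leave the rest of $\bbB$ untouched only because Lemma~\ref{lem:140}(1) (non-crossing wrap index $\Rightarrow$ not mixed $\Rightarrow$ nothing in the subtree is prunable), Lemma~\ref{lem:140}(3), and Lemma~\ref{lem:130}(4) together certify that once a single non-crossing witness appears, every subsequent index is also non-crossing. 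This is the correct justification of the paper's terse claim, and it is the genuine content of the step; you were right to flag it as the main obstacle. The other parts (property~(1) from preservation of relative order and the upper cap $i$; property~(2) inherited from $\bbB$-property~(2) since the wrap-index pair is exempt; $f_{\min}(B)=f(i)$ from $f(i)<f_{\min}(B_{\beta+1})\le f_{\min}(B_b)$ for surviving $b\ge\beta+1$; the new pair $(B_\beta,B)$ from the defining choice of $\beta$) mirror the paper's argument and are sound.
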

\begin{proof}
Let $B=\{B_1',B_2',\ldots,B_{g'}'\}$, where $B'_{g'}=\{i\}$.
We show that $B$ has the three properties of composite bundles as follows.

\begin{enumerate}
\item
Indeed, recall that every index of the original $\bbB$ is smaller than
$i$. Note that
although some indices have been removed from $\bbB$, we never change
any relative order of two indices of $\bbB$. Further, $i$ is the last index of $B$. Therefore, the indices of
$B$ are sorted increasingly by their order in $B$. Hence, $B$ has the first property.

\item
To show the second property, again the bundles
$B_1',B_2',\ldots,B_{g'-1}'$, which are from the original $\bbB$,
never change their relative orders. By the recursive definition of
bundles, it holds that $f_{\max}(B'_{b'})<f_{\min}(B'_{b'+1})$ for any $1\leq b'<g'-1$.
Thus, the second property also holds on $B$.

\item
For the third property, recall that $f(i)<f_{\min}(B_{\beta+1})$. Since
each $B'_{b'}$ with $1\leq b'\leq g'-1$ is a ``descendent'' bundle of
$B_{b}\in \bbB$ (we consider $B_{b}$ a descendent bundle of itself)
for some $b\in [\beta+1,g]$, it holds that $f_{\min}(B_{\beta+1})\leq
f_{\min}(B_b)$. Since $f(i)<f_{\min}(B_{\beta+1})$, $f(i)< f_{\min}(B_b)$.
Therefore, $f_{\min}(B)=f(i)$. Further, for each $j\in B\setminus \{i\}$,
since $j$ is not prunable (otherwise $j$ would have already been pruned), $\pi_{f(j)}$ does not cross $w_{f(i)}$ (by
Lemma~\ref{lem:130}(2)). By
Lemma~\ref{lem:130}(2), $\pi_{f(i)}$ must cross $w_{f(j)}$. Hence, the
third property holds on $B$.
\end{enumerate}

To see that the updated bundle sequence $\bbB$ maintains the two
$\bbB$-properties, by using the similar analysis as above, the first property
holds. For the second property, we have proved above that
$f_{\min}(B)=f(i)$. Further, recall that $f_{\max}(B_{\beta})<f(i)$. Therefore,
we obtain $f_{\max}(B_{\beta})<f_{\min}(B)$. Consequently, the second property
also holds on $\bbB$.
\qed
\end{proof}

To analyze the running time of the above algorithm,
let $m$ be the number of indices that have been removed from $\bbB$. Then, the
algorithm makes at most $m+1$ SP-segment-intersection queries. To see
this, once the query discovers an index $j$ that is not prunable, the
algorithm will stop without making any more such queries. On the other
hand, each wrap-index-removal operation removes an index, and thus the
number of such operations is at most $m$. Further, observe that for each bundle
$B$, whenever we make a recursive call on a child bundle of $B$,
the wrap index of $B$ is guaranteed to be removed. Therefore, the
number of total recursive calls is at most $m$ as well. Hence, the running time
of the algorithm is $O((m+1)\log h\log n)$.

This finishes our algorithm for processing the path $\pi_{f(i)}$.
The total time for processing $\pi_{f(i)}$ is  $O((m+1)\log h\log n)$.
Since once an index is removed from $\bbB$, it will never be inserted into $\bbB$ again, the sum of all such $m$ in the entire algorithm for processing all paths $\pi_{f(i)}$ for $i=1,2,\ldots, k$ is at most $k$. Hence, the total time of the entire algorithm is $O(k\log h\log n)$.

Again, Fig.~\ref{fig:afterprune} gives an example showing the remaining parts of the segments of $W$ after the pruning algorithm.

\subsection{The Data Structure for Maintaining the Bundles}
\label{sec:bundle}

In this section, we give a data structure for maintaining the bundle
sequence $\bbB$ such that our algorithm runs in the time as claimed above.
In particular, we show that during our algorithm for processing $\pi_{f(i)}$
each of the following operations can be performed in $O(\log k)$ ($=O(\log n)$) time: inserting a new bundle $\{i\}$ at the end of $\bbB$, the bundle-creation operation,
the wrap-index-removal operation, 
and finding the index $\beta$.

We first present our data structure and then discuss the operations.

\subsubsection{The Data Structure}

Let $\bbB=\{B_1,B_2,\ldots,B_g\}$. It is not difficult to see that the bundles of $\bbB$ naturally form a tree structure. So we use a {\em bundle tree} $T_{\bbB}$ to represent it, as follows. The tree $T_{\bbB}$ has a root $\gamma$, whose children from left to right are exactly the bundles $B_1,B_2,\ldots,B_g$ in this order.
For each such bundle $B$, if $B$ is atomic, then $B$ is a leaf of $T_{\bbB}$ and the index of $B$ is stored at the leaf. Otherwise, suppose $B=\{B_1',B_2',\ldots,B_{g'}'\}$. Then, we store the wrap index of $B$ at the node $B$ and $B$ has $g'-1$ children from left to right corresponding to $B_1',B_2',\ldots,B_{g'-1}'$ in this order. If one of these bundles is composite, then its subtree is defined recursively. Refer to Fig.~\ref{fig:bundletree} for an example.

For each node $\mu$ of $T_{\bbB}$, let $T_{\bbB}(\mu)$ denote the subtree rooted at $\mu$.
It is easy to see that if $\mu$ is a leaf, then $T_{\bbB}(\mu)$ represents an atomic bundle; otherwise, $T_{\bbB}(\mu)$ represents a composite bundle.
Each node of the tree except the root stores an index. Further, the post-order traversal of each subtree $T_{\bbB}(\mu)$ gives exactly the sequence of indices in the bundle represented by $T_{\bbB}(\mu)$.

\begin{figure}[t]
\begin{minipage}[t]{\linewidth}
\begin{center}
\includegraphics[totalheight=1.3in]{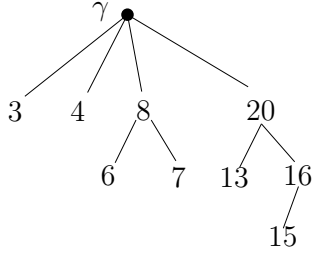}
\caption{\footnotesize
Illustrating the bundle tree $T_{\bbB}$ for the bundle sequence in Fig.~\ref{fig:bundle}.}
\label{fig:bundletree}
\end{center}
\end{minipage}
\vspace*{-0.15in}
\end{figure}

We implement the bundle tree $T_{\bbB}$ as follows. In general,
consider any internal node $\mu$. We let $\mu$ have two pointers $front$ and
$rear$ pointing to the leftmost and rightmost children of $\mu$,
respectively.
In this way, from $\mu$, we can access its leftmost and
rightmost children in $O(1)$ time.
All children of $\mu$ are organized by a doubly
linked list: Each child of $\mu$ maintains a {\em left} (resp., {\em
right}) pointer pointing to its left (resp. right) sibling, so that we can remove a node in constant time; the left
(resp., right) pointer of the leftmost (resp., rightmost) child is
empty. In this way, from the leftmost child of $\mu$, we can visit
all children of $\mu$ in order from left to right in linear time.

In order to compute the index $\beta$ in $O(\log k)$ time,
we use another balanced binary search tree $T_f$ to maintain the ranges
$[f_{\min}(B),f_{\max}(B)]$ of the bundles $B$ of $\bbB$.
The tree $T_{f}$ has $g$ leaves corresponding to the bundles of $\bbB$
from left to right. For each leave $v\in T_{f}$, let $B_v$ denote the bundle of $\bbB$
corresponding to $v$; we associate with $v$ the range $[f_{\min}(B_v),f_{\max}(B_v)]$.
By the second property of $\bbB$, the ranges of the leaves from left to right are
sorted by either the minimum values or the maximum values of the
ranges. Clearly, the height of $T_f$ is $O(\log k)$. In addition,
each leave $v$ is associated with a {\em cross pointer}
pointing to the node of $T_{\bbB}$ corresponding to the bundle $B_v$,
so that once we
have the access to $v$ in $T_f$ we can locate $B_v$ in $T_{\bbB}$ in
constant time. Finally, each internal node $v$ of $T_f$ maintains the
minimum range value of the leftmost leave in the subtree of $T_f$ rooted at
$v$, which is used for searching.


This completes our data structure for maintaining the bundles of
$\bbB$, which consists of two trees $T_{\bbB}$ and $T_f$.
In the following, we show how to use our data structure to implement
the operations on $\bbB$ needed in our algorithm for processing $\pi_{f(i)}$.

\subsubsection{Performing Operations}

First of all, finding the index $\beta$ can be easily done in $O(\log
k)$ time by searching the tree $T_f$. Further, by using the cross pointer, we can
immediately access the node $\mu$ of $T_{\bbB}$ whose subtree
$T_{\bbB}(\mu)$ represents $B_{\beta}$.

If $\beta=g$, then our algorithm adds $B=\{i\}$ at the end of $\bbB$. To
implement it,  we first
insert $B$ to $T_f$ as the rightmost leaf with the range $[f(i),f(i)]$,
which can be done in $O(\log k)$ time.
Then, we add the atomic bundle $B$ to the rear of $\bbB$ by adding a
leaf to $T_{\bbB}$ as the rightmost child of the root $\gamma$. The
tree $T_{\bbB}$ can be updated
in constant time with the help of the rear pointer of $\gamma$.

If $\beta\neq g$, then we check whether $f_{\min}(B_{\beta+1})<f(i)$ (note
that we can find the leaf for $B_{\beta+1}$ in $T_f$ in
$O(\log k)$ time). If $f_{\min}(B_{\beta+1})<f(i)$, then we are done for processing $\pi_{f(i)}$.
In the following, we assume $f_{\min}(B_{\beta+1})>f(i)$.

Our algorithm first calls the procedure $prune(B_{\beta+1})$. To
implement it, note that $B_{\beta+1}$ is represented by the subtree
$T_{\bbB}(\mu')$, where $\mu'$ is the right sibling of $\mu$.
Since we already have the access to $\mu$,
by using the right pointer of $\mu$, we can access $\mu'$ in
constant time.  The procedure
$prune(B_{\beta+1})$ begins with checking whether $B_{\beta+1}$ is
atomic, which can be done in constant time by checking whether $\mu'$
is a leaf.

If yes, then the procedure stops after an SP-segment-intersection query.
Further, if $B_{\beta+1}$ needs to be removed, then we simply remove
the leaf $\mu'$, which can be done in constant time
(recall that the children of any node of $T_{\bbB}$
are organized by a doubly linked list).
Further, we also remove the corresponding leaf from $T_f$ in $O(\log k)$ time.

If $B_{\beta+1}$ is not atomic, let $B_{\beta+1}=\{B_1',B_2',\ldots,B'_{g'}\}$.
We can obtain the wrap index of $B_{\beta+1}$ in constant time since
it stored at the node $\mu'$.
To implement wrap-index-removal operation, essentially, we need to replace the
node $\mu'$ by its children. This can be done in constant time by
using the left, right, front, and rear pointers of $\mu'$. Depending
on whether $\mu'$ is the leftmost or rightmost child of $\gamma$, we
may also need to update the front or rear pointer of $\gamma$, which
can also be easily done in constant time. We omit these details.

Next, our algorithm calls the procedure $prune(B_1')$. We can access
the node of $T_{\bbB}$ whose subtree represents $B_1'$ in constant time
after the above wrap-index-removal operation (i.e., by following the
front pointer of $\mu'$).
The algorithm then works recursively. Note that
$B'_1$ now becomes a bundle of $\bbB$. Hence, the above algorithm
description on $B_{\beta+1}$ applies to $B_1'$ recursively.


The algorithm stops when either we are at the end of $\bbB$
or the procedure $prune(B')$ returns false for a bundle $B'$ in the current $\bbB$.
In the former case, we add $\{i\}$ to the rear of the current list
$\bbB$ in the same way as before.  In the latter case, we preform a
bundle creation operation by creating a composite bundle $B$ including
all bundles of the current $\bbB$ after $B_{\beta}$ as well as $\{i\}$ in the
rear of $B$. We implement this bundle creation operation as follows.

Note that we have the access of the node $\mu_1$ whose subtree
represents $B'$ after $prune(B')$ returns false. Let $\mu_2$ be the
rightmost child of $\gamma$, which can be accessed in constant time from the root $\gamma$.
Next, in constant time, we construct a subtree $T$ representing the
bundle $B$ and use $T$ to replace the subtrees of $\gamma$ from
$\mu_1$ to $\mu_2$ (e.g., see Fig.~\ref{fig:buncreate}), as follows.
First, we create a new node $\mu_3$ storing the single index $i$.
Second, we set the front pointer of
$\mu_3$ to $\mu_1$ and set the rear pointer of $\mu_3$ to $\mu_2$.
Third, if $\mu_1$ has a left sibling, denoted by $\mu_4$, then
we set the left pointer of $\mu_3$ to $\mu_4$ and set the right
pointer of $\mu_4$ to $\mu_3$; otherwise, we set the front pointer of
$\gamma$ to $\mu_3$.
Fourth, we set the rear pointer of $\gamma$ to $\mu_3$.
Fifth, we set the left pointer of $\mu_1$ to empty.

\begin{figure}[t]
\begin{minipage}[t]{\linewidth}
\begin{center}
\includegraphics[totalheight=1.3in]{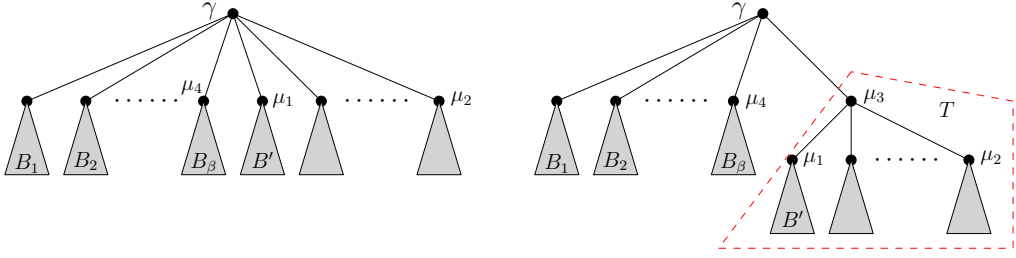}
\caption{\footnotesize
Illustrating the bundle creation operation. Left: the bundle tree before the operation. Right: the bundle tree after the operation (the subtree $T$ represents the bundle $B$).}
\label{fig:buncreate}
\end{center}
\end{minipage}
\vspace*{-0.15in}
\end{figure}

Finally we update the tree $T_f$ as follows.
Recall that the algorithm stops when either we are at the end of $\bbB$
or $prune(B')$ returns false for a bundle $B'$ in the current $\bbB$.
In the former case, we let $B=\{i\}$, and in the latter case, we let
$B$ denote the new bundle created by the bundle creation procedure.
In either case, we update as  $T_f$ as follows. Note that the
original $\bbB$ is $\{B_1,B_2,\ldots,B_g\}$ and the updated $\bbB$ is
$\{B_1,B_2,\ldots,B_{\beta},B\}$.
Essentially, the bundles
$\{B_{\beta+1},B_{\beta+2},\ldots,B_{g}\}$ have been replaced by $B$.
So we first remove the leaves corresponding to the  bundles
$\{B_{\beta+1},B_{\beta+2},\ldots,B_{g}\}$ from $T_f$.
Since they are consecutive in $T_f$, the
remove can be done in $O(\log k)$ time. Next, we insert $B$ into $T_f$
as the rightmost leave. In the former case (i.e., $B=\{i\}$),
$f_{\min}(B)=f_{\max}(B)=f(i)$. In the latter case, $f_{\min}(B)=f(i)$ and $f_{\max}(B)=f_{\max}(B_g)$,
which can be obtained in $O(\log k)$ time from the original
$T_f$. Hence, in either case the total time for updating $T_f$ is $O(\log
k)$. In addition, we set the cross pointer of the new leaf to the node
$\mu''$ of $T_{\bbB}$ whose subtree represents $B$, which is done in constant
time since we have the access of $\mu''$ after $T_{\bbB}$ is updated
(e.g., $\mu''$ is $\mu_3$ in the case of Fig.~\ref{fig:buncreate}).




\subsection{Computing the Closest Point $q^*$}
\label{sec:computeq}

Recall that we have assumed that $q^*$ is on $w_i^l$ for some $i\in [1,k]$, i.e., $q^*=q^*_l$.
According to our pruning algorithm for computing the bundle sequence
$\bbB$, the point $q^*$ 
must be on $w^l_{f(j)}$ for some index $j\in \bbB$. In this section,
we will compute $q^*$ by using the bundle sequence $\bbB$.
For example, in Fig~\ref{fig:afterprune}, our goal is to compute $q^*$ on the left sides of those (red) thick segments.

Recall that we have defined in Section~\ref{sec:decom} that
$R_i$ is the region of $\calP$ bounded by $\pi(s,v_i)$, $\pi(s,v_{i+1})$, and
$\alpha_i$, where $\alpha_i$ is either a bisector super-curve whose
endpoints are $v_i$ and $v_{i+1}$ or a chain of obstacle edges. Also
recall that $R_i$ consists of a tail and a cell.



Let $\tau$ be any segment in $\calP$ such that $R_i$ contains
$\pi(s,\tau)$.
With the help
of the decomposition $\calD$ proposed in Section~\ref{sec:segment},
we propose a {\em region-processing} algorithm to compute
$\pi(s,\tau)$ in the following lemma.

\begin{lemma}\label{lem:regionpro}
Suppose $\tau$ is a segment of $\calP$ such that $R_i$ contains
$\pi(s,\tau)$ and $R_i$ is known. Then $\pi(s,\tau)$ can be computed in
$O(\log h\log n)$ time, after $O(n\log h)$ time and space preprocessing.
\end{lemma}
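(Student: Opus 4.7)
The plan is to exploit the very special structure of $R_i$: its super-root $p_i$ absorbs the entire ``global'' portion of every shortest path ending in $R_i$, so that $\pi(s,\tau) = \pi(s,p_i) \cup \pi(p_i,\tau')$, where $\tau'$ is the piece of $\tau$ sitting inside the simple-polygon cell of $R_i$ (this is exactly Lemma~\ref{lem:60}(3) under the hypothesis $\pi(s,\tau)\subseteq R_i$). The task then cleanly splits into (i) extracting $\tau'$ and (ii) running a single segment query in a simple polygon rooted at $p_i$, each of which I would handle with a dedicated preprocessed data structure.

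For preprocessing I would build two pieces. First, for each cell of $\calD$ (a simple polygon by Lemma~\ref{lem:10}(3)), I would install the linear-size, $O(\log n)$-query simple-polygon segment-query structure of Arkin et al.~\cite{ref:ArkinSh16} (essentially Chiang--Tamassia~\cite{ref:ChiangOp97}) rooted at each of the cell's (at most two) super-roots supplied by Lemma~\ref{lem:10}(6), together with a ray-shooting structure inside the cell; by Lemma~\ref{lem:10}(2) the total cost is $O(n)$, and much of this is already done for Theorem~\ref{theo:segment}. Second, I would build the shortest-path-segment intersection data structure advertised in Section~\ref{sec:tech} (to be realized in Section~\ref{sec:imple}): size $O(n\log h)$ and query time $O(\log h\log n)$ for returning $\tau\cap\pi(s,t)$ given any point $t$ and any segment $\tau$. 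The grand total is $O(n\log h)$.

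For the query, given $\tau$ and the known $R_i$, I would first invoke the SP-segment-intersection data structure on the pairs $(v_i,\tau)$ and $(v_{i+1},\tau)$ to locate the crossings (each a single point by Observation~\ref{obser:10}(2), or empty) of $\tau$ with the two shortest-path sides $\pi(s,v_i)$ and $\pi(s,v_{i+1})$ of $R_i$; the remaining side $\alpha_i$ is either an obstacle chain, which $\tau\subseteq\calP$ cannot cross at all, or a bisector super-curve, for which a constant number of ray shootings inside the cell suffice. Together these crossings delimit the sub-segment $\tau'$ of $\tau$ that lies in the cell of $R_i$ and must contain the closest point of $\tau$ to $s$ (forced by $\pi(s,\tau)\subseteq R_i$). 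This step costs $O(\log h\log n)$. Finally, I would feed $\tau'$ to the preprocessed simple-polygon segment-query structure on the cell of $R_i$ rooted at $p_i$, obtaining $\pi(p_i,\tau')$ in $O(\log n)$ time, and report $\pi(s,p_i)\cup\pi(p_i,\tau')$, which equals $\pi(s,\tau)$ by Lemma~\ref{lem:60}(3).

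The principal obstacle is the first step: intersecting $\tau$ with two shortest paths $\pi(s,v_i)$ and $\pi(s,v_{i+1})$ that may each carry $\Theta(n)$ edges, within the claimed $O(\log h\log n)$ bound. This is precisely what the SP-segment-intersection data structure from Section~\ref{sec:imple} is designed to deliver, and it is the sole reason the preprocessing incurs a $\log h$ overhead and the query incurs a $\log h$ factor; every other ingredient (cell-wise simple-polygon segment queries, ray-shooting inside cells, and an $\spm(s)$ lookup for $\pi(s,p_i)$) is $O(n)$ in total preprocessing and $O(\log n)$ per operation. Modulo that data structure, the argument reduces to a clean two-step pipeline: clip $\tau$ to the cell of $R_i$, then do one simple-polygon segment query from $p_i$.
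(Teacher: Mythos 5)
Your overall strategy—compute the at most one crossing of $\tau$ with each of $\pi(s,v_i)$ and $\pi(s,v_{i+1})$ via SP-segment-intersection queries, handle $\alpha_i$ by ray shooting, then reduce to a simple-polygon segment query on a cell rooted at a super-root—is essentially the paper's, but your reduction $\pi(s,\tau)=\pi(s,p_i)\cup\pi(p_i,\tau')$ has a genuine hole: you never handle the case where the closest point of $\tau$ lies on the \emph{tail} $\pi(s,p_i)$ of $R_i$. In that situation the two crossings returned by your SP-segment-intersection queries coincide at a single point $a=b$ on $\pi(s,v_i)\cap\pi(s,v_{i+1})$, and the answer is simply $\pi(s,a)$; but $\tau$ then cannot cross $\pi(p_i,v_i)$ or $\pi(p_i,v_{i+1})$ again by Observation~\ref{obser:10}(2), so $\tau$ never enters the cell of $R_i$, your $\tau'$ is empty, and the formula $\pi(s,p_i)\cup\pi(p_i,\tau')$ returns nothing. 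The claim that the clipped $\tau'$ ``must contain the closest point of $\tau$'' is therefore false. The paper's algorithm handles this case explicitly by testing $a=b$ and outputting $\pi(s,a)$; you need that branch.

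A second, weaker issue: when $\alpha_i$ is a bisector super-curve, $\tau$ may cross it several times, so $\tau\cap(\text{cell of }R_i)$ is in general a union of several sub-segments rather than the single $\tau'$ your proof posits, and ``a constant number of ray shootings'' does not retrieve them all. The paper sidesteps this by finding just one arbitrary crossing $c$ with $\alpha_i$ (one ray-shoot on the splinegon) and then passing to the $\calD$-cell $\Delta_c$ containing $c$: because a $\calD$-cell is the merge of two $\calD'$-cells across a bisector super-curve, $\Delta_c$ contains \emph{all} of $\alpha_i$, so every crossing of $\tau$ with $\alpha_i$ already lies in $\Delta_c$, and $\tau\cap\Delta_c$ has at most two components by Lemma~\ref{lem:10}(4), each answered by Lemma~\ref{lem:10}(5). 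Relatedly, your preprocessed segment-query structures live on $\calD$-cells, yet you query the cell of $R_i$ (a $\calD'$-cell) as if it were one of them; the clean fix is to route every candidate crossing through the $\calD$-cell containing it, exactly as the paper does.
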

\begin{proof}
We first present our region-processing algorithm for computing
$\pi(s,\tau)$, and then argue its correctness. Finally, we will analyze the
running time of the algorithm.

\paragraph{The algorithm.}
For each of $\pi(s,v_i)$, $\pi(s,v_{i+1})$, and $\alpha_i$,
we check whether it crosses $\tau$. Note that this step is not
necessary for $\alpha_i$ if $\alpha_i$ is a chain of obstacle
edges since $\tau$ cannot cross any obstacle edge.
By Observation~\ref{obser:10}(2), $\tau$ intersect  $\pi(s,v_i)$
(resp., $\pi(s,v_{i+1})$) at most once.

To avoid the tedious case analysis, by Observation~\ref{lem:10}(2), we
assume that if $\tau$ intersects $\pi(s,v_i)$ or $\pi(s,v_{i+1})$,
then the intersection is a single point (i.e., not a general sub-segment of
$\tau$).
Let $a$ (resp., $b$) be the intersection between $\tau$ and
$\pi(s,v_i)$ (resp., $\pi(s,v_{i+1})$); if there is no intersection,
we simply let $a$  (resp., $b$) refer to $\emptyset$.
In general, if
$\alpha_i$ is a bisector super-curve, $\tau$ may intersect $\alpha_i$
multiple times, and we let $c$ be an arbitrary such intersection;
similarly, if there is no intersection let $c$ refer to $\emptyset$.

If $a=b$ and $a\neq\emptyset$, then $a$ is a point on the tail of $R_i$. By
Observation~\ref{lem:10}(2), $\tau$ can only intersect the tail once.
By the definition of $R_i$, for any point $t$ in the cell of $R_i$,
$d(s,a)\leq d(s,t)$. This implies that $\pi(s,a)$ is $\pi(s,\tau)$. So we can finish the
algorithm in this case.

Otherwise (i.e., $a\neq b$ or $a=b=\emptyset$),
if at least one element of $\{a,b,c\}$ is not $\emptyset$,
then for each point $p$ of $\{a,b,c\}$ and $p\neq\emptyset$, we do the following.
Observe that $p$ is not on the tail of $R_i$.
By the definition of the decomposition $\calD$, regardless of whether $p$ is
on $\pi(s,v_i)$, $\pi(s,v_{i+1})$, or $\alpha_i$,
there is a cell $\Delta_p$ of $\calD$ such that $\Delta_p$
contains $p$ and $\Delta_p$ is in $R_i$.
By Lemma~\ref{lem:10}(4), $\Delta_p\cap \tau$ consists of at most two
maximal sub-segments $\tau_1$ and $\tau_2$. Since $\Delta_p$ is a simple polygon, we can
build a ray-shooting data structure on each of the inside and the
outside of
$\Delta_p$. Then, we can compute $\tau_1$ and $\tau_2$ in $O(\log n)$
time by using ray-shooting queries.
Next, we compute  $\pi(s,\tau_1)$ and $\pi(s,\tau_2)$ in $O(\log n)$
time by Lemma~\ref{lem:10}(5).
In this way, we obtain at most six candidate paths (for the at most three non-empty points of $\{a,b,c\}$) and return
the shortest one as $\pi(s,\tau)$.

The remaining case is when every element of $\{a,b,c\}$ is
$\emptyset$, i.e., $\tau$ does not cross any of
the three parts of $\partial R_i$. In this case,
$\tau$ is contained in a single cell $\Delta$ of $\calD$.
We can determine $\Delta$ by locating the cell of $\calD$ that contains an arbitrary endpoint of
$\tau$. Then, we compute $\pi(s,\tau)$ by Lemma~\ref{lem:10}(5).

\paragraph{The correctness.}
Recall that $R_i$ contains $\pi(s,\tau)$. Let $t$
a closest point of $\tau$ (i.e., $\pi(s,\tau)=\pi(s,t)$). Thus, $R_i$ contains $t$.
If $t$ is on the tail of $R_i$, then our algorithm correctly computes
$\pi(s,\tau)$ as discussed above.
Otherwise, if $\tau$ is in $R_i$, then $\tau$ must be in a single cell of $\calD$.
Clearly, our algorithm correctly computes $\pi(s,\tau)$ in this case.
If $\tau$ is not in $R_i$, then since $R_i$ contains $t$,
$\tau$ must cross the boundary of $R_i$.
Suppose we move from $t$ along $\tau$ until we cross the boundary of
$R_i$ at a point $p$. Let $\Delta_p$ be the cell of $\calD$ that is in
$R_i$ and contains $p$. Be definition, $\Delta_p$ also contains $t$. If $p$
is on $\pi(s,v_i)$ (resp., $\pi(s,v_{i+1})$), then since $\tau$ intersects
$\pi(s,v_i)$ (resp., $\pi(s,v_{i+1})$) at a single point, our
algorithm correctly computes $\pi(s,\tau)$.
If $p$ is on $\alpha_i$, then all intersections between $\tau$
and $\alpha_i$ are in $\Delta_p$ since $\alpha_i$ is contained in
$\Delta_p$.
Hence, our algorithm also correctly computes $\pi(s,\tau)$.

\paragraph{The time analysis.}
The algorithm needs at most six calls of
Lemma~\ref{lem:10}(5), which take $O(\log n)$ time.
It also has at most two SP-segment-intersection queries for computing
the intersections of $\tau$ with $\pi(s,v_i)$ and $\pi(s,v_{i+1})$.
Again, we will show that each such query can be answered in
$O(\log h\log n)$ time with $O(n\log h)$ time and space preprocessing.

In addition, if $\alpha_i$ is a bisector super-curve,
our algorithm also needs to compute an intersection between $\tau$ and $\alpha_i$.
This can be done in $O(\log n)$ time after linear time preprocessing
on $\alpha_i$ using the
ray-shooting data structure on curved simple
polygons or splinegons~\cite{ref:MelissaratosSh92} (indeed, each
bisector edge of $\alpha_i$ is convex, and thus it is straightforward
to make $\alpha_i$ a splinegon~\cite{ref:MelissaratosSh92}, e.g., by the standard technique as detailed in the proof of Lemma \ref{lem:200}).
Thus, the total preprocessing time on all such curves
$\alpha_i$ for $i=1,2,\ldots,h^*$ is $O(n)$.

Also, we have mentioned before that we need a constant number of
ray-shooting queries on the cells $\Delta_p$ to determine the at most
two sub-segments of $\Delta_p\cap \tau$. The query time is $O(\log n)$
and the total preprocessing time on all cells of $\calD$ is $O(n)$.

Hence, our region-processing algorithm runs in $O(\log h\log n)$ time,
and the total preprocessing time and space is $O(n\log h)$.
\qed
\end{proof}


Recall that $\calR=\{R_1,R_2,\ldots,R_{h^*}\}$.
Due to our general position assumption that $q$ is not collinear with
any two obstacle vertices, none of $\{q,q_1,\ldots,q_{k}\}$ is an obstacle vertex. Then,
for each $k'\in [0,k]$, there is a unique region $R_i$ of $\calR$ whose cell
contains $q_{f(k')}$, such that the shortest path $\pi_{f(k')}$ is
contained in $R_{i}$, and we let $z(k')$ refer to the index $i$ of $R_i$.
Computing $z(0),z(1),\ldots,z(k)$ can be done in $O(k\log n)$ time by point
location queries on the cells of the regions of $\calR$.

For any two indices $k_1$ and $k_2$ in $[1,h^*]$, if $k_1\leq k_2$,
then let $[k_1,k_2]_R$ denote the set of all integers $k'\in
[k_1,k_2]$; otherwise, let $[k_1,k_2]_R$ denote the set of all
integers $k'\in [k_1,h^*]\cup [1,k_2]$. Recall that the regions
$R_1,R_2,\ldots,R_{h^*}$ are counterclockwise around $s$.
We actually use $[k_1,k_2]_R$ to refer to the set of indices of the
regions of $\calR$ from $R_{k_1}$ to $R_{k_2}$ counterclockwise around
$s$.

Next we compute $q^*$ on $w^l_{f(j)}$ for $j\in
\bbB$, by using our region-processing algorithm in
Lemma~\ref{lem:regionpro}.
Consider the bundles of $\bbB=\{B_1,B_2,\ldots, B_g\}$. For each
$b$ with $1\leq b\leq g$, we call a procedure $path(B_b,z(i))$, where
$i$ is the last index of $B_{b-1}$ if $b\geq 2$ and $i=0$ otherwise.
Note that given the access of $B_b$, we can obtain $i$ in constant
time by using our data structure in Section~\ref{sec:bundle}. Also
note that $i<j$ for any index $j\in B_b$.
The procedure $path(B_b,z(i))$ works as follows.

Depending on whether $B_b$ is atomic or composite, there are two cases.

\paragraph{The atomic case.}
If $B_b$ is atomic, let $j$ be the only index of $B_b$. According to the
bundle-properties, $i<j$ and $f(i)<f(j)$. So $\pi_{f(j)}$ and
$\pi_{f(i)}$ are consistent. By Lemma~\ref{lem:key}(1),
$D_i$ is contained in $D_j$. Let $D$ be $D_j$ minus the interior of $D_i$.
We have the following observation.

\begin{figure}[t]
\begin{minipage}[t]{\linewidth}
\begin{center}
\includegraphics[totalheight=0.9in]{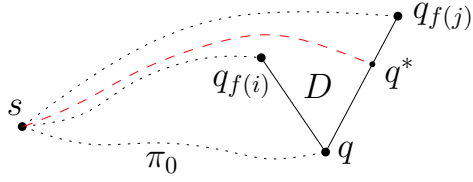}
\caption{\footnotesize
Illustrating Observation~\ref{obser:100}.}
\label{fig:atomicpath}
\end{center}
\end{minipage}
\vspace*{-0.15in}
\end{figure}

\begin{observation}\label{obser:100}
If $q^*$ is on $w^l_{f(j)}$, then $\pi(s,q^*)$ must be in
$D$ (e.g., see Fig.~\ref{fig:atomicpath}).
\end{observation}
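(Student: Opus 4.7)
The plan is to verify the two containments defining $D$: namely that $\pi(s,q^*)\subseteq D_j$ and that $\pi(s,q^*)$ does not enter $\operatorname{int}(D_i)$. The first containment is immediate from the observation preceding Lemma~\ref{lem:key}: whenever $q^*\in w^l_{f(j)}$ the path $\pi(s,q^*)$ lies in $D_{f(j)}$, which is what the local notation $D_j$ abbreviates. So all the work is in the second containment, and the edge case $b=1$ (i.e., $i=0$) is already covered since $D_{f(i)}$ is then vacuous.

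For the second containment I would first show the stronger geometric fact that the entire open window $w_{f(j)}$ is disjoint from $\operatorname{int}(D_i)$, which in particular puts $q^*$ outside $\operatorname{int}(D_i)$. Since $\partial D_i=\pi_0\cup \pi_{f(i)}\cup w_{f(i)}$, I would argue separately that $w_{f(j)}$ meets none of these three pieces in an interior point: the two windows $w_{f(i)}$ and $w_{f(j)}$ share only the excluded endpoint $q$, by the general position assumption that no two windows of $q$ are collinear; $\pi_{f(i)}$ does not cross $w_{f(j)}$ by Lemma~\ref{lem:key}(1) applied with $f(i)<f(j)$; and $\pi_0$ meets $w_{f(j)}$ only at $q$, by Observation~\ref{obser:10}(2) together with general position (otherwise $u_0$ would be collinear with $q$ and $u_{f(j)}$). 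Combined with the fact, established inside the proof of Lemma~\ref{lem:key}, that $q_{f(j)}$ lies \emph{outside} $D_i$, this forces the whole of $w_{f(j)}$ to lie in the complement of $D_i$.

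Having placed $q^*$ outside $\operatorname{int}(D_i)$, I would finish by contradiction. If $\pi(s,q^*)$ visited $\operatorname{int}(D_i)$ at any point, then since $s\in\partial D_i$ and $q^*\notin\operatorname{int}(D_i)$, the path would have to cross $\partial D_i=\pi_0\cup \pi_{f(i)}\cup w_{f(i)}$. Observation~\ref{obser:10}(1) rules out crossings with the shortest paths $\pi_0$ and $\pi_{f(i)}$, so the crossing must happen on $w_{f(i)}$. But then, taking $q^{*+}$ close enough to $q^*$ on $\pi(s,q^*)$, the path $\pi(s,q^{*+})$ also meets $w_{f(i)}$, and Observation~\ref{obser:basic} lets us prune $q^*$, contradicting that $q^*$ is the closest point of $w^l_{f(j)}$. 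This gives $\pi(s,q^*)\subseteq D_{f(j)}\setminus \operatorname{int}(D_{f(i)})=D$.

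The main obstacle is the first half of the second step: one has to piece together three separate non-intersection arguments to put $w_{f(j)}$ on the correct side of $D_i$, each drawing on a different fact (Lemma~\ref{lem:key}(1), general position, and the ``$q_{f(j)}$ outside $D_{f(i)}$'' byproduct of the proof of Lemma~\ref{lem:key}). Once that is in place, the exit/Observation~\ref{obser:basic} contradiction is essentially the same idea already used several times in Section~\ref{sec:obser} and is routine.
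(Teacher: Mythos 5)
Your proof is correct and takes essentially the same approach as the paper's: the paper's proof simply asserts that $t^+$ lies in the interior of $D$ and that $\pi(s,t^+)$ cannot cross $\pi_0$, $\pi_{f(i)}$, $\pi_{f(j)}$ (Observation~\ref{obser:10}(1)) nor meet $w_{f(i)}$, $w_{f(j)}$, or $q$ (by the closest-point property), and concludes. You reach the same conclusion by splitting the inclusion into $\pi(s,q^*)\subseteq D_{f(j)}$ plus avoidance of $\operatorname{int}(D_{f(i)})$, and you spell out the intermediate step (that $w_{f(j)}$ lies entirely outside $D_{f(i)}$) that the paper leaves implicit in its ``by definition, $t^+$ is in the interior of $D$'' — the underlying facts invoked are identical.
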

\begin{proof}
Suppose $q^*$ is on $w^l_{f(j)}$. Let $t=q^*$. By definition, the point $t^+$ is in the
interior of $D$. Since $t=q^*$, $\pi(s,t^+)$ does not intersect any point of $w_{f(i)}$
or $w_{f(j)}$ and it does not contain $q$ either.
Also, $\pi(s,t^+)$ cannot cross either $\pi_{f(i)}$ or $\pi_{f(j)}$.
Hence, $\pi(s,t)$ must be in $D$.
\qed
\end{proof}


Observation~\ref{obser:100} leads to the following lemma.

\begin{lemma}\label{lem:170}
If $q^*$ is on $w^l_{f(j)}$, then $\pi(s,q^*)$ is in
$R_{k'}$ for some index $k'\in [z(i),z(j)]_R$, and further,
any shortest path $\pi(s,w_{f(j)})$ from $s$ to $w_{f(j)}$ is $\pi(s,q^*)$.
\end{lemma}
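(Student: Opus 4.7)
The proof will split into the two assertions of the lemma. For the first, I will use Observation~\ref{obser:100} to trap $\pi(s,q^*)$ inside $D := D_{f(j)}\setminus\operatorname{int}(D_{f(i)})$, and then argue that $D$ is covered by exactly those regions $R_{k'}$ with $k'\in [z(i),z(j)]_R$. For the second, I will use the global-closest-point property of $q^*$.

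My plan for part one is as follows. First, since the bundle is atomic with $i<j$ and $f(i)<f(j)$, the two paths $\pi_{f(i)}$ and $\pi_{f(j)}$ are consistent, so by Lemma~\ref{lem:key}(1) we get $D_{f(i)}\subseteq D_{f(j)}$; thus the set $D$ is well-defined and has boundary built from sub-arcs of $\pi_0$, $\pi_{f(i)}$, $\pi_{f(j)}$, $w_{f(i)}$, and $w_{f(j)}$. By Observation~\ref{obser:100}, $\pi(s,q^*)\subseteq D$. Next, I will invoke Observation~\ref{obser:10}(1): since every boundary arc between two consecutive regions $R_k$ and $R_{k+1}$ is a sub-path of $\pi(s,v_k)$ or $\pi(s,v_{k+1})$, and no two shortest paths from $s$ cross, the entire path $\pi(s,q^*)$ must lie inside the closure of a single region $R_{k'}$ of $\calR$. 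It remains to pin down which $k'$. The path $\pi_{f(i)}$ is entirely contained in $R_{z(i)}$ (since its endpoint $q_{f(i)}$ lies in the cell of $R_{z(i)}$, and the tail of $R_{z(i)}$ absorbs the common prefix); similarly for $\pi_{f(j)}\subseteq R_{z(j)}$. Because the regions $R_1,\ldots,R_{h^*}$ appear counterclockwise around $s$ and the extended-windows $w_{f(1)},\ldots,w_{f(k)}$ are indexed by the map $f(\cdot)$ in the same counterclockwise cyclic order around $s$, the ``angular slab'' swept between $\pi_{f(i)}$ and $\pi_{f(j)}$ that contains $D$ is precisely the union $\bigcup_{k'\in[z(i),z(j)]_R} R_{k'}$. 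Hence the unique $R_{k'}$ containing $\pi(s,q^*)$ has $k'\in [z(i),z(j)]_R$.

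For part two, I will argue directly from the definition of $q^*$ as a closest point of $\vis(q)$. Every point $p\in w_{f(j)}$ lies on the window (or extended window) of $q$, so $p$ is visible to $q$ and $p\in\vis(q)$. Since $q^*$ minimizes $d(s,\cdot)$ over $\vis(q)$ by definition, we have $d(s,q^*)\leq d(s,p)$ for all $p\in w_{f(j)}$; combining with $q^*\in w_{f(j)}$ yields $d(s,q^*)=d(s,w_{f(j)})$, so the path $\pi(s,q^*)$ is itself a shortest path from $s$ to $w_{f(j)}$, proving the second claim.

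The main obstacle is the geometric bookkeeping in part one: justifying that the cyclic interval $[z(i),z(j)]_R$ of region indices matches exactly the set of regions meeting $D$, and that the shortest-path-cannot-cross argument genuinely confines $\pi(s,q^*)$ to a single $R_{k'}$ (as opposed to merely lying in the closure of a union of adjacent regions). This boils down to carefully coupling the counterclockwise order of $w_{f(1)},\ldots,w_{f(k)}$ around $q$ (inherited from the canonical cycle $\calC(T_Q)$ via the definition of $f(\cdot)$) with the counterclockwise order of $R_1,\ldots,R_{h^*}$ around $s$ (inherited from $\calC_l(T_V)$), and then using Lemma~\ref{lem:key}(1) to guarantee nesting of the $D_{f(\cdot)}$'s in the same cyclic order.
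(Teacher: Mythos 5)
Your proposal is correct and mirrors the paper's own proof: both trap $\pi(s,q^*)$ in $D=D_{f(j)}\setminus\mathrm{int}(D_{f(i)})$ via Observation~\ref{obser:100}, transfer the counterclockwise ordering of $\pi_{f(i)},\pi(s,q^*),\pi_{f(j)}$ around $s$ to the cyclic ordering of the regions of $\calR$ to pin down $k'\in[z(i),z(j)]_R$, and derive the second claim from $q^*$ being a global closest point of $\vis(q)\supseteq w_{f(j)}$. The only cosmetic slip is referring to the ``counterclockwise order of $w_{f(1)},\ldots,w_{f(k)}$ around $q$'' — the windows are sorted clockwise around $q$, and it is the shortest paths $\pi_{f(\cdot)}$ (equivalently the $q_{f(\cdot)}$ in $\calC(T_Q)$) that are in counterclockwise order around $s$ — but this does not affect the argument.
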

\begin{proof}
Suppose $q^*$ is on $w^l_{f(j)}$.
Since $q^*$ is also a closest point of $w_{f(j)}$, $\pi(s,w_{f(j)})$ must be $\pi(s,q^*)$.

Note that $\pi(s,q^*)$ must be contained in a region of $\calR$.
By Observation~\ref{obser:100}, $\pi(s,q^*)$ is in $D$. Hence,
$\pi_{f(j)}$ is counterclockwise from $\pi(s,q^*)$ with respect to
$\pi_{f(i)}$ around $s$. Since $\pi_{f(j)}$ is in $R_{z(j)}$,
and $\pi_{f(i)}$ is in $R_{z(i)}$, there is a region $R_{k'}\in \calR$
that contains $\pi(s,q^*)$ such that $R_{z(j)}$ is counterclockwise from $R_{k'}$ with respect
to $R_{z(i)}$ around $s$, which implies that $k'\in[z(i),z(j)]_{R}$.
\qed
\end{proof}

For each $k'\in [z(i),z(j)]_R$, we apply our region-processing
algorithm on $R_{k'}$ and $w_{f(j)}$ to obtain a path, and we keep the shortest
path $\pi$ among all such paths;
let $q_{f(j)}^l$ be the endpoint of $\pi$ on $w_{f(j)}$.
According to Lemma~\ref{lem:170}, if $q^*$ is on $w^l_{f(j)}$, then
$q^*$ must be $q_{f(j)}^l$.

For the purpose of analyzing the total running time of our algorithm, as will be seen later, for each $k'\in [z(i),z(j)]_R$ with $k'\neq z(i)$ and $k'\neq z(j)$,
the region-processing algorithm will not be called on $R_{k'}$
again in the entire algorithm for computing $q^*_l$.
On the other hand, we charge the two algorithm calls
on $R_{k'}$ for $k'=z(i)$ and $k'= z(j)$ to the index $j$ of $\bbB$. In this way, the total number of calls to the region-processing
procedure in the entire algorithm is $O(h^*+k)$ since the total number of indices of
$\bbB$ is at most $k$ and the total number of regions $R_{k'}$ is $h^*$.

\paragraph{The composite case.}
If $B_b$ is composite, the algorithm is more complicated. Let $j$ be
the wrap index of $B_b$. Observation~\ref{obser:100} and
Lemma~\ref{lem:170} still hold on $j$. However, since now the region $D$ also
contains a
portion of $w_{f(j')}$ for each $j'\in B_b$ and $j'\neq j$ (e.g., see
Fig.~\ref{fig:compath}), $D$ may also
contain the shortest path from $s$ to $w_{f(j')}$. In order to
avoid calling the region-processing procedure on the same region of
$\calR$ too many times, we use the following approach to process
$w_{f(j)}$.

\begin{figure}[t]
\begin{minipage}[t]{\linewidth}
\begin{center}
\includegraphics[totalheight=1.0in]{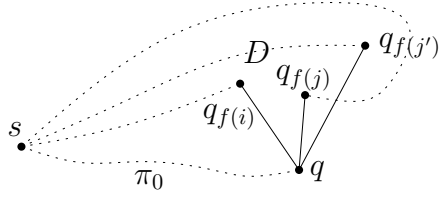}
\caption{\footnotesize
$j$ is the wrap index of $B_b$ and $j'$ is another index of $B_b$ with $j'\neq j$; $\pi_{f(j')}$ is in the region $D$.}
\label{fig:compath}
\end{center}
\end{minipage}
\vspace*{-0.15in}
\end{figure}

For any two different
indices of $k'$ and $k''$ in a range $[k_1,k_2]_R$ of indices of the regions of
$\calR$, we say that $k''$ is {\em ccw-larger} than $k'$ if $[k',k'']_R$ is a
subset of $[k_1,k_2]_R$ (e.g., if $k_1<k_2$, then $k'<k''$).

Define $z_{ij}$ to be the ccw-largest index in $[z(i),z(j)]$
such that $w_{f(j)}$ crosses $\partial R_{z_{ij}}$ (if no such index
exists, then let $z_{ij}=z(i)$). We first compute $z_{ij}$ (to be discussed later). Then, we call
the region-processing procedure on $R_{k'}$ for all $k'\in
[z(i),z_{ij}]$ and return the shortest path $\pi$ that is found; let
$q_{f(j)}^l$ be
the endpoint of $\pi$ on $w_{f(j)}$. By the following lemma, if
$q^*$ is on $w_{f(j)}^l$, then $q_{f(j)}^l$ is $q^*$.

\begin{lemma}
If $q^*$ is on $w^l_{f(j)}$, then $\pi(s,q^*)$ is in
$R_{k'}$ for some index $k'\in [z(i),z_{ij}]_R$, and further, any
shortest path $\pi(s,w_{f(j)})$ from $s$ to $w_{f(j)}$ is
$\pi(s,q^*)$.
\end{lemma}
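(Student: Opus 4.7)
The plan is to mirror the proof of Lemma~\ref{lem:170} from the atomic case and then add one refinement using the definition of $z_{ij}$.

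For the second claim, if $q^*\in w^l_{f(j)}$ then $q^*$ is a closest point of $w_{f(j)}$ with respect to $s$: any nearer point $p\in w_{f(j)}\subseteq \vis(q)$ would witness a shorter path into $\vis(q)$, contradicting $q^*$ being a closest point of $\vis(q)$. Hence every $\pi(s,w_{f(j)})$ reaches $w_{f(j)}$ at $q^*$ and therefore coincides with $\pi(s,q^*)$.

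For the first claim I would first derive the weaker statement $k'\in[z(i),z(j)]_R$. By the $\bbB$- and bundle-properties, $f(i)=f_{\max}(B_{b-1})<f_{\min}(B_b)=f(j)$ and $i<j$, so $\pi_{f(i)},\pi_{f(j)}$ are consistent; Lemma~\ref{lem:key}(1) gives $D_{f(i)}\subseteq D_{f(j)}$. The extended form of Observation~\ref{obser:100} (valid for the wrap index of a composite bundle, as the text asserts) places $\pi(s,q^*)$ inside $D := D_{f(j)}\setminus \mathrm{int}(D_{f(i)})$. Since $\pi(s,q^*)$ lies in a single region of $\calR$, and $D$ is sandwiched between $\pi_{f(i)}\subseteq R_{z(i)}$ and $\pi_{f(j)}\subseteq R_{z(j)}$ with $w_{f(i)}$ and $w_{f(j)}$ as its remaining boundary, the containing region $R_{k'}$ is forced to be ccw-between $R_{z(i)}$ and $R_{z(j)}$, yielding $k'\in[z(i),z(j)]_R$.

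The refinement $k'\in[z(i),z_{ij}]_R$ is the new ingredient. Suppose for contradiction that $k'$ is strictly ccw-later than $z_{ij}$ in $[z(i),z(j)]_R$. By maximality of $z_{ij}$, the segment $w_{f(j)}$ does not cross $\partial R_{k'}$. But $q^*\in w_{f(j)}\cap R_{k'}$, so $w_{f(j)}$ has interior points in $R_{k'}$. In the generic configuration $k'\notin\{z(0),z(j)\}$, or at least one of these equalities fails, so at least one endpoint of $w_{f(j)}$ (namely $q\in R_{z(0)}$ or $q_{f(j)}\in R_{z(j)}$) lies outside $R_{k'}$; traversing $w_{f(j)}$ from that endpoint to $q^*$ would then cross $\partial R_{k'}$, contradicting the choice of $z_{ij}$.

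The main obstacle will be the degenerate sub-case $z(0)=z(j)=k'$, in which $w_{f(j)}$ could lie entirely inside $R_{z(j)}$ without crossing its boundary. I expect to handle this through the general-position assumption on $q$ combined with the composite-bundle structure — in particular, the third bundle-property that $\pi_{f(j)}$ crosses $w_{f(j')}$ for every other $j'\in B_b$ — which forces additional interactions between $w_{f(j)}$ and $\partial R_{z(j)}$ inside $R_{z(j)}$ and should yield either $z_{ij}=z(j)$ or rule out $q^*\in w^l_{f(j)}$ in this case.
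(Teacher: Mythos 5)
Your proposal matches the paper's proof up to the point where you have assumed $k'$ is ccw-larger than $z_{ij}$ and deduced that $w_{f(j)}$ does not cross $\partial R_{k'}$, hence that $q^*\in w_{f(j)}\cap R_{k'}$ forces the whole segment $w_{f(j)}$, and in particular both of its endpoints $q$ and $q_{f(j)}$, into $R_{k'}$, i.e., $z(0)=z(j)=k'$. The second claim (any $\pi(s,w_{f(j)})$ equals $\pi(s,q^*)$) is also fine. But you then flag the sub-case $z(0)=z(j)=k'$ as an open obstacle and propose to attack it via the general-position assumption and the third bundle-property ($\pi_{f(j)}$ crossing $w_{f(j')}$ for other $j'\in B_b$). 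That is a genuine gap: you have not closed the argument, and the direction you sketch is not the one that works and is considerably harder than necessary.

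The missing step is purely index bookkeeping and involves none of that machinery. Since $i<j$, $\pi_{f(j)}$ is counterclockwise from $\pi_{f(i)}$ with respect to $\pi_0$, so $z(i)\in[z(0),z(j)]_R$. Once $z(0)=z(j)=k'$, that interval collapses to the singleton $\{k'\}$, forcing $z(i)=k'$. Then $z_{ij}\in[z(i),z(j)]_R=\{k'\}$ by definition of $z_{ij}$, so $z_{ij}=k'$ — directly contradicting your hypothesis that $k'$ is ccw-larger than $z_{ij}$. In other words, the ``degenerate'' case is not an extra sub-case needing additional geometric input; it is impossible under the contradiction hypothesis, and the paper dispatches it in two lines by collapsing the index ranges. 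Your guess that one should end up showing $z_{ij}=z(j)$ is in fact what falls out, but you should reach it this way rather than through the wrap-index crossings, which are irrelevant here.
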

\begin{proof}
By Lemma~\ref{lem:170}, the lemma statement holds for some $k'\in
[z(i),z(j)]_R$. In the following we show that $k'$ is in
$[z(i),z_{ij}]_R$.

Assume to the contrary that $k'$ is not in $[z(i),z_{ij}]_R$.
Then, $k'$ is ccw-larger than $z_{ij}$ and $w_{f(j)}$ does not
cross $\partial R_{k'}$. This implies that $w_{f(j)}$ and $q$ are in $R_{k'}$.
Since $i<j$, $\pi_{f(j)}$ is counterclockwise from $\pi(f(i))$ with
respect to $\pi_0=\pi(s,q)$. This implies that $z(i)\in [z(0),z(j)]_R$.
But $w_{f(j)}\in R_{k'}$ implies that $z(0)=z(j)=k'$. Thus, we have
$z(i)=z(j)=k'$.
Since $z_{ij}\in [z(i),z(j)]_R$, we obtain $z_{ij}=k'$. But this
contradicts with that $k'$ is not in $[z(i),z_{ij}]_R$.

The lemma thus follows. \qed
\end{proof}

The following lemma makes sure that when we process $w_{f(j')}$ for
any other index $j'$ of $B_b$ with $j'\neq j$, we do not need to
consider the regions $R_{k'}$ for $k'\in [z(i),z_{ij}-1]$ if $z_{ij}\neq z(i)$.

\begin{lemma}\label{lem:190}
Suppose $z_{ij}\neq z(i)$. If $q^*$ is on $w^l_{f(j')}$ for some $j'\in
B_b$ and $j'\neq j$, then $\pi(s,q^*)$ is in $R_{k'}$ for some
$k'\in [z_{ij},z(j')]_R$.
\end{lemma}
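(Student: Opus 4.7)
The plan is to combine the argument of Lemma~\ref{lem:170} applied to the index $j'$ with the wrap-index property of $j$: first I would localize $q^*$ on the far portion of $w_{f(j')}$ (past where $\pi_{f(j)}$ crosses it), and then use the definition of $z_{ij}$ to rule out the ccw-smaller regions $R_{k'}$ with $k'\in [z(i),z_{ij}-1]_R$.

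First I would establish the weaker bound $k'\in [z(i),z(j')]_R$. Since $i<j'$ and $f(i)<f(j)=f_{\min}(B_b)<f(j')$, Lemma~\ref{lem:key}(1) gives $D_{f(i)}\subseteq D_{f(j')}$. Repeating the proofs of Observation~\ref{obser:100} and Lemma~\ref{lem:170} verbatim with $j'$ in place of $j$ shows that if $q^*\in w^l_{f(j')}$, then $\pi(s,q^*)$ lies inside $D_{f(j')}\setminus\mathrm{int}(D_{f(i)})$ and hence is contained in a single region $R_{k'}$ with $k'\in [z(i),z(j')]_R$.

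Next I would localize $q^*$ on the ``far'' half of $w_{f(j')}$. Because $j$ is the wrap index of $B_b$, the third bundle-property gives that $\pi_{f(j)}$ crosses $w_{f(j')}$ at some point $p$. Since $j'<j$ and $f(j')>f(j)$, Lemma~\ref{lem:key}(2) applied to the pair $(j',j)$ shows the sub-segment $\overline{qp}$ of $w_{f(j')}$ can be pruned, so $q^*$ lies on $\overline{p\,q_{f(j')}}$. A local angular analysis at $q$ (using that $w_{f(j')}$ is clockwise from $w_{f(j)}$ around $q$, since $f(j')>f(j)$) shows that $\overline{qp}$ lies outside $D_{f(j)}$ while $\overline{p\,q_{f(j')}}$ lies inside $D_{f(j)}$, on the side of $\pi_{f(j)}$ opposite from $\pi_0$.

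The heart of the proof is excluding $k'\in [z(i),z_{ij}-1]_R$, a range which is nonempty precisely because $z_{ij}\neq z(i)$. Suppose for contradiction that $\pi(s,q^*)\subseteq R_{k'}$ for some such ccw-smaller $k'$. By the ccw-maximality of $z_{ij}$, the segment $w_{f(j)}$ does not cross $\partial R_{k''}$ for any $k''\in[z(i),z_{ij}-1]_R$, so the cell of each such $R_{k''}$ lies entirely on one side of $\pi_0\cup w_{f(j)}\cup \pi_{f(j)}$; together with $\pi_{f(i)}\subseteq R_{z(i)}$ sitting on the $\pi_0$-side of $w_{f(j)}$ within $D_{f(j)}$, this forces every such cell onto the same $\pi_0$-side of $w_{f(j)}$. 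But $\overline{p\,q_{f(j')}}$ sits on the opposite side of $w_{f(j)}$ within $D_{f(j)}$, so the path $\pi(s,q^*)\subseteq R_{k'}$, ending at $q^*\in\overline{p\,q_{f(j')}}$, would have to cross $w_{f(j)}$ or meet one of its endpoints, contradicting Observation~\ref{obser:basic}.

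The main obstacle will be the topological bookkeeping in the last paragraph, specifically justifying that the cells of $R_{k''}$ for $k''\in[z(i),z_{ij}-1]_R$ all sit on the same ($\pi_0$-) side of $w_{f(j)}$ within $D_{f(j)}$. This requires relating the ccw-maximality of $z_{ij}$ to the planar separation induced by the chord $w_{f(j)}$ inside $D_{f(j)}$ and carefully handling boundary coincidences; the hypothesis $z_{ij}\neq z(i)$ is exactly what guarantees this separation is nondegenerate, since otherwise $w_{f(j)}$ stays within $R_{z(i)}$ and no meaningful cut is produced.
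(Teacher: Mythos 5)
Your first two steps are on track and essentially match the paper: you correctly reduce to the weaker containment $k'\in[z(i),z(j')]_R$ via the argument of Observation~\ref{obser:100}/Lemma~\ref{lem:170}, and you correctly use the wrap-index property plus Lemma~\ref{lem:key}(2) to locate a crossing point $p=\pi_{f(j)}\cap w_{f(j')}$ and prune $\overline{qp}$, so that $q^*\in\overline{p\,q_{f(j')}}$.

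The ``heart of the proof'' paragraph, however, has a fatal defect. You claim that, by the ccw-maximality of $z_{ij}$, ``the segment $w_{f(j)}$ does not cross $\partial R_{k''}$ for any $k''\in[z(i),z_{ij}-1]_R$.'' This is backwards. By definition, $z_{ij}$ is the ccw-\emph{largest} index in $[z(i),z(j)]_R$ at which $w_{f(j)}$ crosses $\partial R$; that only forbids crossings for indices ccw-\emph{larger} than $z_{ij}$, and says nothing about the indices in $[z(i),z_{ij}-1]_R$. Indeed, $w_{f(j)}$ typically does cross many of those boundaries as it travels outward from $q$. Once this claim falls, the subsequent conclusion (that those cells ``lie entirely on one side of $\pi_0\cup w_{f(j)}\cup\pi_{f(j)}$'') has no support, and the attempted contradiction collapses. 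There is a secondary issue with the geometric bookkeeping: $w_{f(j)}$ lies on the \emph{boundary} of $D_{f(j)}$, so it does not split $D_{f(j)}$ into two sides, and $\overline{p\,q_{f(j')}}$ actually lies in $D_{f(j)}$ on the same ($\pi_0$-) side of $\pi_{f(j)}$, not the ``opposite'' one; the sentence describing the two sides ``within $D_{f(j)}$'' does not parse geometrically.

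The paper's proof sidesteps all of this. Having established that $\pi(s,q^*)$ lies in $D^2=D_{f(j')}\setminus D^1$ (where $D^1$ is bounded by $\overline{qp}$, $w_{f(j)}$, and $\pi(p,q_{f(j)})$), it uses the definition of $z_{ij}$ to pick a crossing point $t\in w_{f(j)}\cap\partial R_{z_{ij}}$; the path $\pi(s,t)$ lies in $R_{z_{ij}}$ and in $D^2$, and since both $s$ and $t$ are on $\partial D^2$, $\pi(s,t)$ cuts $D^2$ into two pieces. The piece $D^3$ containing $q^*$ is on the ccw side of $\pi(s,t)$, and since $\pi(s,q^*)$ cannot cross $\pi(s,t)$ it is confined to $D^3$, hence to regions $R_{k'}$ with $k'\in[z_{ij},z(j')]_R$. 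The key modeling choice is to cut with a \emph{shortest path from $s$} (which the other shortest paths cannot cross and which is aligned with the $\calR$-region structure), not with the segment $w_{f(j)}$; cutting with $w_{f(j)}$, as you attempt, does not separate the $R_{k'}$'s in any controlled way.
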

\begin{proof}
Consider any such $j'$ as in the lemma statement.
Since $j$ is the wrap index of $B_b$, $\pi_{f(j)}$ crosses
$w_{f(j')}$ at a point $p$ (e.g., see Fig.~\ref{fig:pruneregion}).
By Lemma~\ref{lem:key}(2), the
portion $\overline{qp}$ of $w_{f(j')}$ can be pruned, i.e., $q^*$
cannot be on $\overline{qp}$. Let $D^1$ be the region bounded by
$\overline{qp}$, $w_{f(j)}$, and the subpath $\pi(p,q_{f(j)})$ of $\pi_{f(j)}$
between $p$ and $q_{f(j)}$. Note that $D^1\subseteq D_{f(j')}$.

Since $q^*\in w^l_{f(j')}$, $\pi(s,q^*)$ must be in $D_{f(j')}$. We claim
that $\pi(s,q^*)$ is in $D^2=D_{f(j')}\setminus D^1$ (e.g., see
Fig.~\ref{fig:pruneregion}). To see this, $D^2$ is one
of the two sub-regions of $D_{f(j')}$ partitioned by $w_{f(j)}\cup
\pi(p,q_{f(j)})$.
Since $q^*$ is not on $\overline{qp}$, $q^*$ must be in the interior
of $\overline{pq_{f(j')}}$, which is in $D^2$. Hence, to prove that
$\pi(s,q^*)$  is in $D^2$, it is sufficient to show that $\pi(s,q^*)$
does not cross either $w_{f(j)}$ or $\pi(p,q_{f(j)})$. Indeed, $\pi(s,q^*)$ does
not cross $\pi(p,q_{f(j)})$. On the other hand, $\pi(s,q^*)$ does not intersect
$w_{f(j)}$ since otherwise $q^*$ would not be a closest point of
$\vis(q)$.  This shows that $\pi(s,q^*)$ is in $D^2$.

\begin{figure}[t]
\begin{minipage}[t]{\linewidth}
\begin{center}
\includegraphics[totalheight=1.3in]{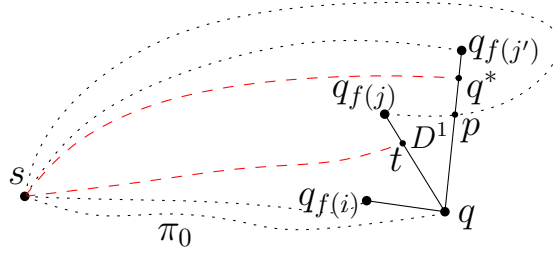}
\caption{\footnotesize
$j$ is the wrap index of $B_b$ and $j'$ is another index of $B_b$ with $j'\neq j$; $\pi_{f(j')}$ is in the region $D$.}
\label{fig:pruneregion}
\end{center}
\end{minipage}
\vspace*{-0.15in}
\end{figure}

Since $z(i)\neq z_{ij}$, $z_{ij}$ is ccw-larger than $z(i)$. By the
definition of $z_{ij}$, $w_{f(j)}$ crosses $\partial R_{z_{ij}}$, say,
at a point $t$ (e.g., see Fig.~\ref{fig:pruneregion}). Hence, the region $R_{z_{ij}}$ contains a shortest path $\pi(s,t)$ from $s$ to $t$.
Further, since $z_{ij}\in [z(i),z(j')]_R$, $\pi(s,t)$ is also in $D^2$.
Since both $s$ and $t$ are on the boundary of $D^2$, $\pi(s,t)$
partitions $D^2$ into two sub-regions and one of them, denoted by
$D^3$, contains $q^*$. Since $\pi(s,q^*)$ does not cross $\pi(s,t)$,
$\pi(s,q^*)$ is in $D^3$, which implies that $\pi(s,q^*)$ must be in
some region $R_{k'}$ with $k'\in [z_{ij},z(j')]_R$.

This proves the lemma.
\qed
\end{proof}

In order to compute the index $z_{ij}$, we will use a {\em
$\calR$-region range} query. Namely, given the index range
$[z(i),z(j)]_R$ as well as $w_{f(j)}$, the query can be used to compute
$z_{ij}$.
In Section~\ref{sec:imple} we will give a data structure that can answer each such
query in $O(\log h\log n)$ time (after $O(n\log h)$ time and space preprocessing).

After $w_{f(j)}$ is processed as above, $q_{f(j)}^l$ is computed.
By Lemma~\ref{lem:190}, to
process $w_{f(j')}$ for other indices $j'$ of $B_b\setminus\{j\}$, we only need to
consider the indices of the regions of $\calR$
after $z_{ij}$. Let $B'_1,B_2',\ldots,B'_{g'-1}$ be
the bundles in $B_b$ other than the last one. For each $1\leq b'\leq
g'-1$, if $b'=1$, we call $path(B'_{b'},z_{ij})$ recursively;
otherwise, we call $path(B'_{b'},z(i'))$ recursively, where $i'$ is the last
index of $B'_{b'-1}$.

\paragraph{Remark.} For the procedure $path(B'_{1},z_{ij})$, the above algorithm still works by replacing $z(i)$ by $z_{ij}$. To argue the correctness, the region $D$ in Observation~\ref{obser:100} and Lemma~\ref{lem:170} should be defined
to be the region $D^3$ in the proof of Lemma~\ref{lem:190} (with respect to $j'$); then all observations above (after replacing $z(i)$ by $z_{ij}$) still hold for
$path(B'_{1},z_{ij})$.
\paragraph{}

After $w_{f(j)}$ is processed for each $j\in \bbB$, $q_{f(j)}^l$ is
computed for every $j\in \bbB$; among these at most $k$ points,
we return the point $q'$ whose value $d(s,q')$ is the
smallest as $q_l^*$, which is $q^*$ based on our above analysis (and
also due to our assumption that $q^*$ is on $w_i^l$ for some $i\in [1,k]$).
The total number of calls on the
region-processing procedures is $O(k+h^*)$. The total number of
$\calR$-region range queries is $O(k)$ since each
such query is for a composite bundle and there
are at most $k$ bundles in total. Hence, the total time of the algorithm is
$O((h+k)\log h\log n)$. Recall that $k\leq K$.

\subsection{The Algorithm Implementation}
\label{sec:imple}

In this section, we discuss some implementation details left out
above. Specifically, we will give our algorithm for computing the map $f(\cdot)$,
and give our data structures for answering the
SP-segment-intersections queries and the $\calR$-region range
queries.

\subsubsection{Computing the Map $f(\cdot)$}
Recall the definitions of $Q$, $\calC_Q$, and
$\calL_Q$ in Section~\ref{sec:obser}. Computing the map $f(\cdot)$ is to compute the list
$\calL_Q=\{q,q_{f(1)},\ldots,q_{f(k)}\}$. Intuitively, we want to
order the paths $\pi_1,\ldots,\pi_k$ counterclockwise around $s$ with
respect to $\pi_0$. Our goal is to prove Lemma~\ref{lem:map}.

We begin with our preprocessing algorithm.
Let $\Sigma(s)$ denote the decomposition of $\spm(s)$ by the edges of
$\spt(s)$, which can be constructed in $O(n)$ time after $\spm(s)$ is
given. For each cell $\sigma$ of $\Sigma(s)$, we
pick an arbitrary point in the interior of $\sigma$ as the {\em
representative point} of $\sigma$.  Let $X$ denote the set of all such
representative points. Let $T_X$ be the tree that is the union of the shortest paths
from $s$ to all points of $X$, and let $s$ be the root of $T_X$.
Clearly, $T_X$ has $O(n)$ nodes and can be computed in $O(n)$ time once we have
$\Sigma(s)$. The points of $X$ are exactly the leaves of $T_X$.
We find a base leave $p^*$ of $T_X$ in $O(n)$ time.
Then, we compute in $O(n)$ time the list $\calL_l(T_X,p^*)$ of all
leaves and the cycle $\calL_l(T_X)$.
To simplify the notation, let $\calL_X=\calL_l(T_X,p^*)$ and let
$\calC_X=\calL_l(T_X)$.
This finishes our preprocessing, which takes $O(n)$ time.

In the sequel, we discuss our algorithm for computing the list
$\calL_Q$ in $O(k\log n)$ time. It is sufficient to compute the
circular list $\calC_Q$ since  we can obtain $\calL_Q$ from $\calC_Q$
in $O(k)$ time by breaking the cycle at $q$.

Let $q_0=q$ (temporarily only for the discussion in this subsection).
Recall that for each point $q_i\in Q$ with $0\leq i\leq k$,
$u_i$ is the root of the cell of $\spm(s)$ that contains $q_i$ and
determines the shortest path $\pi_i$, and note that $\overline{q_iu_i}$
is in a cell of $\Sigma(s)$, denoted by $\sigma_i$ (which
can be determined in $O(\log n)$ time by a point location in $\Sigma(s)$).
If all cells
$\sigma_0,\sigma_1,\ldots,\sigma_k$ are distinct, then the order of the
points of $Q$ following the relative order of the representative points of the cells
$\sigma_0,\sigma_1,\ldots,\sigma_k$ in $\calC_X$ is exactly $\calC_Q$,
which can be computed in $O(k\log n)$ time with help of the circular list
$\calC_X$.

If $\sigma_0,\sigma_1,\ldots,\sigma_k$ are not distinct, then we
first compute the circular list of the cells by the above
algorithm. To simplify the notation, let
$\sigma_0,\sigma_1,\ldots,\sigma_k$ be the circular list.
Then, two cells are the same only if they are adjacent in the
list. Hence, we can determine in $O(k)$ time the cycle of unique
cells $\sigma_0',\sigma_1',\ldots,\sigma_{k'}'$ for $k'<k$, and
further, for each cell $\sigma_i'$, the set $Q(\sigma_i')$ of points
of $Q$ in $\sigma_i'$ can also be determined. Consider a cell $\sigma_i'$
and let $u_i'$ be the root.
Let $T(\sigma_i')$ be the union of the segments $\overline{u_i'q'}$
for all $q'\in Q(\sigma_i')$, and we consider $T(\sigma_i')$ as a tree
rooted at $u_i'$. Since $u_i'$ is an obstacle vertex, $u_i'$ is a node
in $T_X$. If $u_i'$ is not $s$, then let $p$ be the parent of
$u_i'$ in $T_X$; otherwise let $p$ be the child of $s$ in
$T_X$ that is an ancestor of the base leave $p^*$ (we
compute that particular child of $s$ in the
preprocessing). 
Starting from the counterclockwise first child of $u_i'$ in
$T(\sigma_i')$ with respect
to $\overline{u_i'p}$, and let $\calL(\sigma_i')$ be the
list of the children of $u_i'$ in $T(\sigma_i')$ ordered
counterclockwise. It can be verified that the concatenation of
$\calL(\sigma_0'),\calL(\sigma_1'),\ldots,\calL(\sigma'_{k'})$ is
exactly the circular list $\calC_Q$. Following the above description, the circular list  $\calC_Q$ can be
computed in $O(k\log n)$ time.

This proves Lemma~\ref{lem:map}.

\subsubsection{The SP-segment-intersection Queries}
\label{sec:SPsegment}

In this section, we present our data structure for answering the SP-segment-intersection queries. Specifically, given any $i,j\in [1,k]$, we
want to determine whether $w_{f(i)}$ crosses $\pi_{f(j)}$, and if yes,
compute an intersection.  Here we consider a more general problem.
Given a point $t$ and a segment $\tau$ in $\calP$, we want to
compute an intersection between $\tau$ and the shortest path
$\pi(s,t)$ (or report none if they do not intersect). In the case
where $t$ has multiple shortest paths (and thus $\pi(s,t)$ is not unique),
the root $r$ of a cell of $\spm(s)$ should also be provided so that
$\pi(s,t)$ refers to the one that contains $\overline{rt}$. But to
simplify the discussion, we assume $t$ always has a unique shortest
path (the other case can be solved by our algorithm too). 

We will show that with $O(n\log h)$ time and space
preprocessing (with a given $\spm(s)$), each such query can be
answered in $O(\log h\log n)$ time. When $h=O(1)$, the
result is optimal.



Recall the definitions of $V$, $\Pi$, $T_{V}$, and the list
$\calL_l(T_V,v_1)=\{v_1,v_2,\cdots,v_{h^*}\}$ in
Section~\ref{sec:segment}.
In the following, we build up our data structure incrementally: We will first show how to answer queries when $t$ is in
$V$, then show how to answer queries when $t$ a vertex of $T_{V}$,
and finally discuss the general case where $t$ can be any point in
$\calP$.

We build a complete binary search tree $T_1$ as follows.
The leaves of $T_1$ from left to right correspond to the points
$v_1,v_2,\ldots,v_{h^*}$ of $V$ in this order. In the following we will consider the points of $V$ and the leaves of $T_1$ interchangeably. Note that each point of $V$ is also a leaf in the tree $T_V$.
Consider any node $u$ of $T_1$. We maintain a path
$P(u)$ of edges of $T_V$, defined as follows.
Let $T_1(u)$ be the subtree of $T_1$ rooted at $u$ and let $S(u)$ be
the set of the leaves of $T_1(u)$.  If $u$ is the root,
then $P(u)$ is the common sub-path (i.e., the intersection) of the
shortest paths $\pi(s,p)$ for all $p\in S(u)$ (note that $\pi(s,p)$ is also the path of $T_V$ from $p$ to the root $s$). Otherwise, $P(u)$ is the
portion of the common sub-path of $\pi(s,p)$ for all $p\in S(u)$
that is not stored in $P(u')$ for any ancestor $u'$ of $u$.
In this way, for each leave $v_i$, the edges of $P(u)$ of all nodes $u$ in the path
of $T_1$ from $v_i$ to the root are pairwise disjoint and
comprise exactly $\pi(s,v_i)$.
Further, for each node $u$ of $T_1$, since $P(u)$ is a path of edges,
we build a ray-shooting data structure on $P(u)$ by standard
techniques as detailed in the following lemma.

\begin{lemma}\label{lem:200}
For the path $P(u)$ of each node $u$ of $T_1$ with $m=|P(u)|$,
we can build a data structure of $O(m)$
size in $O(m)$ time such that given any ray $\rho$ in the plane, we
can compute in $O(\log m)$ time the first intersection (if any) between $\rho$ and $P(u)$.
\end{lemma}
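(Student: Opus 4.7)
The plan is to reduce ray shooting on the polyline $P(u)$ to ray shooting in a simple polygon, for which the Chazelle and Hershberger-Suri structures~\cite{ref:ChazelleRa94,ref:HershbergerA95} already referenced in the paper give $O(m)$-time and $O(m)$-space preprocessing with $O(\log m)$ query time. The crucial input property is that $P(u)$, being a contiguous sub-path of the shortest-path tree $T_V$ rooted at $s$, is a \emph{simple} polyline: any two edges of $P(u)$ are edges of shortest paths, and by Observation~\ref{obser:10}(1) shortest paths from $s$ cannot cross themselves, so consecutive edges of $P(u)$ cannot self-intersect either.

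First I would convert $P(u)$ into a (topological) simple polygon $\widetilde{P}(u)$ by traversing $P(u)$ from one endpoint to the other and then back along the reverse of $P(u)$, obtaining a closed curve of $2m$ edges. This curve is degenerate (zero enclosed area), but after an arbitrarily small outward perturbation of the second traversal by a symbolic infinitesimal $\varepsilon$, $\widetilde{P}(u)$ is a genuine simple polygon whose boundary, in the limit $\varepsilon\to 0$, coincides with $P(u)$ as a point set. Preprocess $\widetilde{P}(u)$ with the Hershberger-Suri ray-shooting data structure; since $|\widetilde{P}(u)|=2m$, this takes $O(m)$ time and space.

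Given a query ray $\rho$, invoke the ray-shooting structure on $\widetilde{P}(u)$ to compute in $O(\log m)$ time the first point of $\partial\widetilde{P}(u)$ hit by $\rho$ (or report none if $\rho$ misses $\widetilde{P}(u)$). By construction, this first-hit point lies on $P(u)$, which gives the first intersection between $\rho$ and $P(u)$. Correctness is immediate from the coincidence of $\partial\widetilde{P}(u)$ and $P(u)$ as point sets in the limit.

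The main obstacle is handling the zero-area degeneracy of $\widetilde{P}(u)$ cleanly, so that the preprocessing of~\cite{ref:ChazelleRa94,ref:HershbergerA95} applies without modification. A symbolic perturbation scheme takes care of this, but an alternative I would consider, which sidesteps degeneracy entirely, is to note that the Hershberger-Suri algorithm is really a ``pedestrian'' walk on a hourglass/funnel decomposition of the polygonal chain, so one can apply the same decomposition directly to the simple polyline $P(u)$ without first closing it. Either route yields the claimed $O(m)$ time and space preprocessing and $O(\log m)$ query time.
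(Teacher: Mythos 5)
Your overall plan---reduce ray shooting on the polyline $P(u)$ to ray shooting in a simple polygon and invoke \cite{ref:ChazelleRa94,ref:HershbergerA95}---is the right instinct, but the specific reduction you propose has a gap. The simple-polygon ray-shooting structures you cite answer queries in which the ray originates \emph{inside} the polygon and one seeks the first boundary point hit. In your thin perturbed polygon $\widetilde{P}(u)$, the interior has infinitesimal area, so essentially every query ray of interest originates \emph{outside} $\widetilde{P}(u)$. Ray shooting from the exterior of a simple polygon is a different query that those structures do not directly support, and the phrase ``report none if $\rho$ misses $\widetilde{P}(u)$'' implicitly concedes that you are treating them as if they did. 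Your fallback---applying the Hershberger--Suri funnel walk directly to an open polyline---is a hope rather than an argument: that decomposition is fundamentally interior-based and has no evident analogue for a non-closed chain.

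The paper performs the dual move: instead of a thin polygon \emph{around} $P(u)$, it takes a large bounding rectangle $R\supseteq\calP$, connects the topmost point $p$ of $P(u)$ to $\partial R$ by a vertical segment $\overline{pp'}$, and forms the simple polygon $P$ bounded by $\partial R$, both sides of the slit $\overline{pp'}$, and both sides of the slit $P(u)$. This $P$ has $O(m)$ edges and positive area, and every query ray in $\calP$ originates in its interior, so interior ray shooting applies directly. The one wrinkle is that the first hit might land on the auxiliary slit $\overline{pp'}$ rather than on $P(u)$; since a ray meets the straight segment $\overline{pp'}$ at most once, the algorithm simply re-shoots from that hit point, and the second hit must lie on $P(u)$ (an answer) or on $\partial R$ (report none). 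So at most two $O(\log m)$ queries suffice. Replacing your $\widetilde{P}(u)$ by this ``exterior-with-a-cut'' polygon would close the gap; as written, the claim that existing simple-polygon ray shooting handles your outside-origin queries does not follow.
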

\begin{proof}
This can be easily done by using the ray-shooting data structure for
simple polygons \cite{ref:ChazelleRa94,ref:HershbergerA95}. We provide
the details below.

Let $R$ be a big rectangle in the plane that contains all edges of
$P(u)$. Let $p$ be the topmost point of $P(u)$. We shoot a ray from
$p$ upwards until it hits $\partial R$ at a point $p'$. Then, we can
consider $P(u)$, $\overline{pp'}$, and $R$ bounds a simple polygon
$P$. We build a ray-shooting data structure in $P$ in $O(m)$ size and
space~\cite{ref:ChazelleRa94,ref:HershbergerA95}.

Consider any ray-shooting query for $P(u)$. Given a ray $\rho$, we
compute the first point $a$ of $\partial P$ hit by $\rho$ in $O(\log m)$
time by using the ray-shooting data structure on $P$. If $a$ is on
$P(u)$, then we are done and return $a$ as the answer. If $a$ is on
$\partial R$, then we are also done and report that there is no intersection
between $\rho$ and $P(u)$. If $a$ is on $\overline{pp'}$, then we keep
shooting the ray after $a$ and using the ray-shooting data structure
again to compute the next point $a'\in \partial P$ hit by the ray. Similarly as
above, if $a'$ is on $P(u)$, then we are done and return $a'$. If $a'$
is on $\partial R$, then we report that there is no intersection. Note that
$a'$ cannot be on $\overline{pp'}$. Hence, we can answer the
ray-shooting query on $P(u)$ in $O(\log m)$ time by making at most two
ray-shooting queries on $P$.
\qed
\end{proof}

We call the information associated with each node $u$ of $T_1$ the
{\em auxiliary data structure} at $u$.

\begin{lemma}\label{lem:T1}
The size of $T_1$ is $O(n\log h)$ and $T_1$ can be built in $O(n\log h)$
time.
\end{lemma}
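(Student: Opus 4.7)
The plan is to reduce both claims to a single combinatorial observation: the leaves of $T_1$ inherit the order $\calL_l(T_V,v_1)$, so for any edge $e$ of $T_V$ the leaves whose shortest path from $s$ uses $e$ form a contiguous range in $T_1$.

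For the size bound, I would fix an edge $e$ of $T_V$ and let $L(e)$ be the set of $v\in V$ whose $\pi(s,v)$ passes through $e$. By the canonical-list property recalled in Section~\ref{sec:pre}, $L(e)$ is consecutive in $\calL_l(T_V,v_1)=(v_1,\ldots,v_{h^*})$ and hence a contiguous range of leaves of $T_1$. Unfolding the definition of $P(u)$, edge $e$ lies in $P(u)$ exactly when $S(u)\subseteq L(e)$ and $S(u')\not\subseteq L(e)$ for the parent $u'$ of $u$; equivalently, when $u$ belongs to the canonical covering of the range $L(e)$ in the binary tree $T_1$. Since $T_1$ is a complete binary tree on $h^*=O(h)$ leaves, every contiguous leaf range has a canonical covering of size $O(\log h)$. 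Therefore each edge of $T_V$ appears in $O(\log h)$ of the paths $P(u)$, and summing over the $O(n)$ edges of $T_V$ gives $\sum_u|P(u)|=O(n\log h)$. By Lemma~\ref{lem:200} the ray-shooting structure on $P(u)$ adds only $O(|P(u)|)$ extra storage, so the total size of $T_1$ is $O(n\log h)$.

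For the construction time I would first preprocess $T_V$ in $O(n)$ time for constant-time LCA queries, using the standard data structure already invoked in the proof of Lemma~\ref{lem:100}. For every node $u$ of $T_1$, let $\ell(u)$ denote the LCA in $T_V$ of the leaves in $S(u)$; because $S(u)$ is a contiguous block of $\calL_l(T_V,v_1)$, $\ell(u)$ equals the LCA of the leftmost and rightmost leaves of $S(u)$ and is obtainable in $O(1)$ time. The common sub-path of $\{\pi(s,v):v\in S(u)\}$ is precisely $\pi(s,\ell(u))$, so $P(u)$ consists of the edges of $T_V$ on the path from $\ell(u')$ down to $\ell(u)$, where $u'$ is the parent of $u$ (with $\ell(u')$ replaced by $s$ at the root). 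Processing $T_1$ top-down, I would enumerate these edges by walking from $\ell(u)$ upward in $T_V$ until hitting $\ell(u')$, using parent pointers in $T_V$; the cost is $O(|P(u)|)$ per node, for a total of $O(n\log h)$. Finally, Lemma~\ref{lem:200} lets me build the ray-shooting structure on each $P(u)$ in $O(|P(u)|)$ time, matching the same bound.

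The main obstacle is the size argument, whose correctness rests entirely on the contiguity of $L(e)$ in the leaf order of $T_1$; without this property an edge could be duplicated into many more than $O(\log h)$ paths $P(u)$ and the bound would fail. Contiguity is exactly what the canonical list $\calL_l(T_V,v_1)$ was introduced for in Section~\ref{sec:pre}, so aligning the leaf order of $T_1$ with that list is the key structural choice, after which the analysis becomes a standard segment-tree-style accounting and the construction is a straightforward top-down traversal.
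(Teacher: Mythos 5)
Your proof is correct, and the size argument is essentially the one in the paper phrased in standard segment-tree language: the paper proves the same $O(\log h)$-per-edge bound by showing that at most two nodes in each level of $T_1$ can store a given edge $e$ (via a contradiction argument on three such nodes in a level), which is exactly the statement that the canonical covering of the contiguous leaf range $L(e)$ contains at most two nodes per level. Your phrasing, \emph{$e\in P(u)$ iff $u$ is in the canonical covering of the range $L(e)$}, is the same combinatorial fact, and both hinge on the contiguity of $L(e)$ in $\calL_l(T_V,v_1)$. For the construction, however, you take a genuinely different route: the paper is edge-driven, computing the index range $[l_e,r_e]$ for each edge $e$ (by a bottom-up pass over $T_V$) and then performing a standard $O(\log h)$-time segment-tree insertion per edge, taking care to insert edges in post-order so that each $P(u)$ is built as a contiguous path; you are node-driven, computing $\ell(u)$ (the LCA in $T_V$ of the extreme leaves of $S(u)$, which by contiguity is the LCA of all of $S(u)$) with $O(1)$-time LCA queries, and materializing $P(u)=\pi(\ell(u'),\ell(u))$ by walking parent pointers in $T_V$ from $\ell(u)$ up to $\ell(u')$. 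Your variant avoids the paper's need to carefully control the insertion order, since $P(u)$ is produced directly as a contiguous subpath; both achieve $O(n\log h)$ total time (your version also spends $O(h)$ on LCA queries and empty $P(u)$'s, which is dominated).
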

\begin{proof}
Recall that the number of edges of $T_V$ is $O(n)$. In the following,
we first show that each edge $e$ of $T_V$ is stored in $P(u)$ of at most two nodes $u$ in each level of $T_1$.

Assume to the contrary that there are three such nodes $u$ in
the same level of $T_1$ that all store the same edge $e$ of $T_V$ in $P(u)$.
Let the three nodes be $u_1,u_2,u_3$ from left to right.
If $u_1,u_2,u_3$ are consecutive, then two of them, say,
$u_1$ and $u_2$, must share the same parent $u$.
Since $e$ is in both $P(u_1)$ and $P(u_2)$, by definition, $e$ should
be in $P(u')$ for an ancestor $u'$ of $u$ (including $u$ itself).
Thus, $e$ should not be in either $P(u_1)$ or $P(u_2)$,
incurring contradiction.

In the following we assume $u_1,u_2,u_3$ are not consecutive. If two
of them share the same parent, then we can apply the same argument as
above. 
Otherwise, we show below that the
sibling $u'$ of $u_2$ (i.e., $u$ and $u'$ share
the same parent) has $P(u')$ including $e$. Consequently, the above proof applies.


Let $V_e$ be the set of points of $V$ whose paths from $s$ in $T_V$
contain the edge $e$. Note that $V_e$ consists of exactly the leaves in the subtree of $T_V$ separated by $e$. By the definition of $\calL_l(T_v,v_1)$,
the points of $V_e$ are consecutive in $\calL_l(T_v,v_1)=\{v_1,v_2,\ldots,v_{h^*}\}$. According to the definition of
$T_1$, the leaves of $T_1$ corresponding to the points of $V_e$ are
consecutive in $T_1$. Since $e$ is in both $P(u_1)$ and $P(u_3)$, all
leaves of the subtrees of $T_1(u_1)$ and $T_2(u_3)$ are in $V_e$.
Since $u_2$ is between $u_1$ and $u_3$, $u'$ is also between $u_1$ and
$u_2$. Thus, all leaves of
$T_1(u')$ must also be in $V_e$, implying that $e$ is in the common sub-path
of $\pi(s,p)$ for all $p\in S(u')$.  Since $e$ is in $P(u_2)$, $e$ is not in
$P(u'')$ for any proper ancestor $u''$ of $u_2$. Because $u'$ and $u_2$
share the same parent, we obtain that $e$ is also in $P(u')$.



This proves that each edge $e$ of $T_V$ is stored in at most two nodes
in each level of $T_1$. Since $T_1$ has $O(\log h^*)$ levels and $h^*=O(\log h)$, each edge
$e$ is stored in $O(\log h)$ nodes. Hence, the size of $T_1$
is $O(n\log h)$.

In the following, we construct the tree $T_1$ in $O(n\log h)$ time.
The key is to compute $P(u)$ for each node $u$ of $T_1$, after
which constructing the ray-shooting data structure on $P(u)$ can be
done in linear time by Lemma~\ref{lem:200}.

For each edge $e$ of $T_V$, we compute the range
$[l_e,r_e]\subseteq [1,h^*]$ that consists of all indices $i$ such that
$e$ is contained in the path from $v_i$ to $s$ in $T_V$. This can be done in $O(n)$ time as follows.
For each vertex $v$ of $T_V$, we define the range $[l_v,r_v]$ as the
set of all indices $i$ such that $v$ is contained in the path from $v_i$ to $s$ in $T_V$.
We first compute the ranges for all vertices of $T_V$. This can be
easily done a post-order traversal of $T_V$ starting from the leaf $v_1$.
Specifically, during the traversal for each vertex $v$, if $v$ is a leaf
containing $v_i\in V$, we set $l_{v}=i$ and
$r_{v}=i$; otherwise, all children of $v$ have been visited and we
set $l_v$ (resp., $r_v$) to be the smallest (resp., largest) $l_{v'}$
of all children $v'$ of $v$.
After the traversal, the ranges for all vertices of $T_V$ are computed.
Then, for each edge $e$ of $\Pi$, it is not difficult to
see that the range of $e$ is the same as that of $v$, where $v$ is the endpoint of $e$ such that the path from $s$ to $v$ in $T_V$ contains $e$.

Next we compute $P(u)$ for all nodes $u$ of $T_1$ as follows. We
consider the edges of $T_V$ following the post-order traversal from
$v_1$. For each edge $e$, by using the range $[l_e,r_e]$, we find
those nodes $u$ of $T_1$ whose $P(u)$ contains $e$. This can be done
in the similar way as the standard insertion operation in segment
trees~\cite{ref:deBergCo08}. Specifically, for each node $u$ of $T_1$, let
$[l_u,r_u]$ be the range consists of all indices $i$ such that $v_i$
is $S(u)$. Starting from the root of $T_1$, for each
node $u$, if $[l_u,r_u]\subseteq [l_e,r_e]$, then we insert $e$ to
$P(u)$; otherwise, for each child $u'$ of $u$, if $[l_e,r_e]\cap
[l_{u'},r_{u'}]\neq \emptyset$, then we proceed on $u'$ recursively.
As the standard insertion operations on segment trees, each edge $e$
is processed in $O(\log h)$ time since the height of
$T_1$ is $O(\log h)$. Hence, the total time of the algorithm is
$O(n\log h)$.
Note that since we consider the edges of $T_V$ by following the
post-order traversal from $v_1$, whenever we insert an edge $e$ to
$P(u)$, $e$ is always the edge adjacent to the first edge of the
current $P(u)$
and $e$ is then appended to $P(u)$ as the new first edge. After the
algorithm finishes, the sub-path $P(u)$ is readily available by following the
edges in the order they have been inserted and the first edge is the
one closest to $s$.

This proves the lemma.
\qed
\end{proof}

We show how to answer SP-segment-intersection queries by using
the tree $T_1$. We begin with a special case where the query point
$t$ is in $V$, say $t=v_i$ for some $i\in [1,h^*]$.
Our goal is to compute an intersection between $\tau$ and $\pi(s,v_i)$.
To answer the query, we follow the path of $T_1$ from the root to
the leaf $v_i$. For each node $u$ in the path, we use a ray-shooting
query to compute an intersection between $P(u)$ and $\tau$. If we find
an intersection, then we report
the intersection and stop the algorithm; otherwise, we proceed on the
next node. The correctness of the
algorithm is based on the fact that the union of $P(u)$ of all nodes
$u$ in the above path is exactly $\pi(s,v_i)$. The query time is $O(\log
h\log n)$ since each ray-shooting query takes $O(\log n)$ time and the height of $T_1$ is $O(\log h)$.

We then consider a more general case where the query point $t$ is a vertex $v$ of
$T_V$ ($v$ is not necessarily in $V$). To answer the query, we first pick an arbitrary
leave $v_i$ in the subtree of $T_V$ rooted at $v$ (for this, in the
preprocessing step we need to associate with
$v'$ an arbitrary leaf in its subtree for each node $v'$ of $T_V$).
Clearly, $v$ must be in the path $\pi(s,v_i)$. We follow
the path of $T_1$ from the root to the leaf $v_i$. For each node $u$ in the
path, we compute an intersection between $P(u)$ and $\tau$ by using a ray-shooting query.
If there is an intersection $p$, we check whether
$p$ is in the sub-path of $\pi(s,v_i)$ between $s$ and $v$ (see below
for more details about this). If yes, then we report $p$ and stop the algorithm.
Otherwise, since $\tau$ can only cross $\pi(s,v_i)$ once, there
cannot be any intersection between $\tau$ and $\pi(s,v)$; thus, in
this case we simply return none.
If there is no intersection between $\tau$ and $P(u)$, then
we proceed on the next node in the path. If we do not
find any intersection after we reach $v_i$, then we report none.

It remains to discuss how to determine whether $p$ is between $s$ and
$v$. The point $p$ is on an edge $e$ of $\pi(s,v_i)$, which is also in
$T_V$. Let $v'$ be the
endpoint of $e$ that is farther to $s$ in $T_V$. Observe that $p$ is between $s$ and
$v$ if and only if $v'$ is between $s$ and $v$. To determine the
latter, observe that $v'$ is between $s$ and $v$ if and only if $v'$ is
after $v$ in the canonical list $\calL(T_V,v_1)$, which can be
determined in $O(\log n)$ time (e.g., by binary search)
after $\calL(T_V,v_1)$ is computed in the preprocessing.

Hence, the total time for answering the query is $O(\log h\log n)$.

In the following, by making use of the above result,
we consider the most general case where $t$ can be any point in
$\calP$. We first present the result for the simple polygon case.

\begin{lemma}\label{lem:simpleSP}
For any simple polygon $P$ of $m$ vertices and a source point $s$ in
$P$, after $O(m)$ time
preprocessing, we can answer each SP-segment-intersection query in
$O(\log m)$ time.
\end{lemma}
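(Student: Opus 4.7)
The plan is to reduce an SP-segment-intersection query on a general point $t$ in a simple polygon $P$ to the case where the query point is a vertex of the shortest path tree, and then to handle that case by specializing the construction of Lemma~\ref{lem:T1} to $\spt(s)$ itself.

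As preprocessing, I would compute $\spm(s)$ and $\spt(s)$ in $O(m)$ time by the classical linear-time simple-polygon shortest-path algorithm (triangulation plus funnel), and equip $\spm(s)$ with a linear-size point-location data structure supporting $O(\log m)$-time queries. Given a query $(t,\tau)$, I would first locate $t$ in $\spm(s)$ in $O(\log m)$ time to obtain the cell $\sigma$ containing $t$ and its root $r=r(\sigma)$. Since $\pi(s,t) = \pi(s,r) \cup \overline{rt}$, a constant-time segment-segment test handles the intersection of $\tau$ with $\overline{rt}$; it then remains to test the intersection of $\tau$ with $\pi(s,r)$, where $r$ is a vertex of $P$ (or $s$).

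For this reduced subproblem I would mimic the $T_1$-construction of Lemma~\ref{lem:T1}, but with $\spt(s)$ itself playing the role of $T_V$. The leaves of $\spt(s)$ admit a canonical list $\calL_l(\spt(s))$ in which the leaves whose shortest paths from $s$ contain any fixed edge of $\spt(s)$ are contiguous. Over these leaves I would build a balanced binary search tree; at each internal node $u$ I would store the common sub-path $P(u)$ shared by the shortest paths from $s$ to the leaves in $u$'s subtree but not already stored at any ancestor, and equip $P(u)$ with the linear-size ray-shooting structure of Lemma~\ref{lem:200}. A query on $(r,\tau)$ would walk from the root of this binary tree to an arbitrary leaf in the subtree of $\spt(s)$ rooted at $r$, perform a ray-shooting query at each visited node, and use canonical-list positions to test whether any intersection actually lies on the prefix $\pi(s,r)$ of the path from $s$ to the chosen leaf.

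The main obstacle is to attain $O(m)$ preprocessing and $O(\log m)$ query simultaneously; the naive adaptation only gives $O(m\log m)$ preprocessing and $O(\log^2 m)$ query, because $\spt(s)$ has $\Theta(m)$ leaves, so the balanced binary tree has depth $\Theta(\log m)$ and each edge of $\spt(s)$ is stored in $\Theta(\log m)$ nodes. To remove the two extra logarithmic factors I would replace the segment-tree-style binary tree by a heavy-path decomposition of $\spt(s)$, equipping each heavy polyline with its own linear-size ray-shooting structure so that the total preprocessing is $O(m)$; any root-to-vertex path in $\spt(s)$ crosses $O(\log m)$ heavy polylines, and fractional cascading across the ray-shooting structures along a heavy-path chain collapses the per-query cost to $O(\log m)$. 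Exploiting in this way the planar, non-crossing structure of $\spt(s)$ in a simple polygon is the crux of the argument.
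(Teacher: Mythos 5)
Your reduction from a general query point $t$ to a vertex $r$ of $\spt(s)$ via the cell root in $\spm(s)$ is correct and matches the paper. The genuine gap is in the final step. You correctly observe that transplanting the $T_1$-style segment-tree over the leaves of $\spt(s)$ gives only $O(m\log m)$ preprocessing and $O(\log^2 m)$ query (since $\spt(s)$ has $\Theta(m)$ leaves), but the proposed repair---heavy-path decomposition plus ``fractional cascading across the ray-shooting structures''---is not an algorithm. Fractional cascading accelerates repeated \emph{one-dimensional} searches of the same key in a chain of catalogs that are linked by bridge pointers. A ray-shooting query on a subchain of a geodesic in a simple polygon is not such a search: these subchains are not convex (a geodesic to a vertex of $\spt(s)$ alternates between left- and right-turning reflex vertices), so each query already requires the full Chazelle--Guibas machinery of Lemma~\ref{lem:200}, and there is no shared catalog or bridge structure between the ray-shooting data structures on distinct heavy paths to cascade through. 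As written, the query cost stays $O(\log^2 m)$.

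The paper sidesteps this entirely and does not build any tree of ray-shooting structures. Instead it exploits the following structural fact: for a fixed segment $\tau=\overline{ab}$, let $r_a,r_b$ be the cell roots of $a,b$ in $\spm(s)$ and let $r_c$ be their lowest common ancestor in $\spt(s)$. The funnel $F$ bounded by $\pi(r_c,a)$, $\pi(r_c,b)$, and $\tau$ has convex sides, and the set of tree vertices $r$ for which $\pi(s,r)$ crosses the interior of $\tau$ is exactly the set of $r$ lying between $r_a$ and $r_b$ in the canonical cycle $\calC(\spt(s))$. This yields an $O(\log m)$ existence test by a single binary search on the canonical cycle, and the actual crossing edge is then recovered with two LCA queries plus one two-point geodesic query, all $O(1)$ or $O(\log m)$ after $O(m)$ preprocessing. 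The key insight you are missing is that the relevant structure lives in the \emph{cyclic order} of the shortest-path tree leaves, not in a hierarchical decomposition of its edges; once that is in hand, the whole tree of ray-shooting structures becomes unnecessary.
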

\begin{proof}
Given any query segment $\tau$ and a point $t$ in $P$, the query asks
for the intersection between $\tau$ and the shortest path $\pi(s,t)$
from $s$ to $t$ in $P$ (or report none if there is no intersection).

In the preprocessing, we compute the shortest path tree $\spt(s)$ and shortest path
map $\spm(s)$ from $s$ in $P$, which can be done in $O(m)$
time~\cite{ref:GuibasLi87}. We then build a point location data
structure on $\spm(s)$ in $O(n)$
time~\cite{ref:EdelsbrunnerOp86,ref:KirkpatrickOp83}. Further, we
compute
the canonical cycle $\calC(\spt(s))$ in $O(m)$ time.

Let $r_t$ be the root of the cell of $\spm(s)$ containing $t$ such
that $\pi(s,t)$ contains $\overline{r_tt}$.
We first check whether $\overline{r_tt}$ intersects $\tau$. If yes, we
return the intersection. Otherwise, we proceed to compute the
intersection between $\tau$ and
the shortest path $\pi(s,r_t)$ from $s$ to $r_t$.

Let $a$ and $b$ be the two endpoints of $\tau$, respectively.
We first check whether $a$ is on $\pi(s,r_t)$, as follows.
If $a\in \pi(s,r_t)$, then $a$ must be on an edge $e$ of
$\pi(s,r_t)\subseteq \spt(s)$, and further, $r_t$ must be a descendent
of $v_e$, where
$v_e$ is the endpoint of $e$ farther to $s$ in $\pi(s,r_t)$. Therefore, to check whether
$a$ is on $\pi(s,r_t)$, we can use the following approach. First, we
determine whether $a$ is on an edge of $\spt(s)$, which can be done in
$O(\log m)$ time by  a point location query on the decomposition of
$\spm(s)$ by the edges of $\spt(s)$. If $a$ is not on an edge of $\spt(s)$, then we
know that $a$ cannot be in $\pi(s,r_t)$. Otherwise, we proceed on
determining whether $r_t$ is a descendent of $v_e$.
To this end, observe that  $r_t$ is a descendent of $v_e$ if
and only if the lowest common ancestor of $v_e$ and $r_t$ in $\spt(s)$ is
$v_e$, which can be computed in $O(1)$ time after $O(m)$ time
preprocessing on $\spt(s)$ \cite{ref:BenderTh00,ref:HarelFa84}.

Hence, we can check whether $a$ is in $\pi(s,r_t)$ in $O(\log m)$
time. Similarly we can check whether $b$ is in $\pi(s,r_t)$ in $O(\log
m)$ time. If either $a$ or $b$ is on $\pi(s,r_t)$, then we stop the
algorithm and return it  as an intersection of $\tau$ and
$\pi(s,t)$. Below, we assume neither $a$ nor $b$ is in $\pi(s,r_t)$.
Thus, our goal is to compute the intersection between $\pi(s,r_t)$
and the interior of $\tau$.

\begin{figure}[t]
\begin{minipage}[t]{\linewidth}
\begin{center}
\includegraphics[totalheight=1.8in]{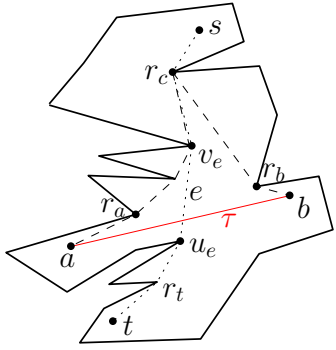}
\caption{\footnotesize
Illustrating an example where $\pi(s,r_t)$ intersects the interior of
$\tau$.}
\label{fig:simpoly}
\end{center}
\end{minipage}
\vspace*{-0.15in}
\end{figure}

Let $r_a$ be the root of the cell of $\spm(s)$ containing $a$. Define $r_b$
similarly. Let $r_c$ be the lowest common ancestor or $r_a$ and $r_b$
in $\spm(s)$  (e.g., see
Fig.~\ref{fig:simpoly}), which can be found in constant time by a lowest common
ancestor query.
Let $F$ denote the funnel that is the region of $P$
bounded by $\pi(r_c,a)$, $\pi(r_c,b)$, and $\overline{ab}$. Note that
both $\pi(r_c,a)$ and
$\pi(r_c,b)$ are convex with the convexity towards the interior of $F$.
We assume that if we traverse from $r_c$ counterclockwise around $\partial
F$ we will be on $\pi(r_c,a)$ before arriving at $\tau$ (otherwise we
exchange the notation $a$ and $b$). Observe that $\pi(s,r_t)$
intersects the interior of $\tau$ if and only if there is an edge $e$ of
$\pi(s,r_t)$ such that $e$ intersects the interior of $\tau$ and one
endpoint of $e$ is in $F$ and the other one is outside $F$ (e.g., see
Fig.~\ref{fig:simpoly}). Let $v_e$ be the endpoint of $e$ in $F$ and
$u_e$ be the endpoint of $e$ outside $F$. Observe that such an edge $e$
exists if and only if $r_t$ is
between $r_a$ and $r_b$ counterclockwise in the circular list
$\calC(\spt(s))$, which can be determined in
$O(\log m)$ time by binary search on the list.

Further, if such an edge $e=\overline{u_ev_e}$ exists, then we further
compute the intersection $e\cap \tau$. To determine the edge $e$, we first find the
vertex $v_e$ as follows. 
We find the lowest common ancestor
of $r_t$ and $r_a$, denoted by $v_1$. If $v_1$ is not $r_c$, then
$v_1$ must be on $\pi(r_c,r_a)$ and $v_e$ is $v_1$. Otherwise, the
lowest common ancestor of $r_t$ and $r_b$ is $v_e$. After $v_e$ is
found, $e$ is the first edge in the shortest path $\pi(v_e,r_t)$ from
$v_e$ to $v_t$, which can be found in $O(\log m)$ time using a
two-point shortest path query on the vertex pair $(v_e,r_t)$ with
$O(m)$ time preprocessing~\cite{ref:GuibasOp89,ref:HershbergerA91}.
\qed
\end{proof}

Combining all our results above, the following lemma gives our final result.

\begin{lemma}\label{lem:SP}
Given $\spm(s)$, we can build a data structure of $O(n\log
h)$ size in $O(n\log h)$ time that can answer each
SP-segment-intersection query in $O(\log h\log n)$ time.
\end{lemma}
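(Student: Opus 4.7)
The plan is to reduce the general SP-segment-intersection query to the two special cases already handled in the excerpt: queries when $t$ is a vertex of $T_V$ (covered by the tree $T_1$ of Lemma~\ref{lem:T1}, with $O(\log h\log n)$ query time) and shortest-path queries inside a simple polygon (covered by Lemma~\ref{lem:simpleSP}, with $O(\log m)$ query time after $O(m)$ preprocessing). The key structural fact that makes the reduction clean is Lemma~\ref{lem:60}: if $t$ lies in a cell $\Delta$ of $\calD$, then $\pi(s,t)=\pi(s,r)\cup\pi_\Delta(r,t)$ for some super-root $r\in\calV\cup\{s\}$ of $\Delta$, where $\pi(s,r)$ is a sub-path of a shortest path in $\Pi_V$ (hence an $s$-to-$r$ root-path of $T_V$) and $\pi_\Delta(r,t)$ is the shortest path from $r$ to $t$ inside the simple polygon $\Delta$. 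So a single intersection of $\tau$ with $\pi(s,t)$ can be obtained by separately searching each piece.

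For preprocessing I will (i) build the tree $T_1$ as in Lemma~\ref{lem:T1}, in $O(n\log h)$ time and space; (ii) compute $\calD$ and a point-location structure on it in $O(n)$ time via Lemma~\ref{lem:10}(2)(7); and (iii) for each cell $\Delta$ of $\calD$ and each of its at most two super-roots $r$, install the data structure of Lemma~\ref{lem:simpleSP} on the simple polygon $\Delta$ with source $r$. Step (iii) costs $O(|\Delta|)$ per cell and, by Lemma~\ref{lem:10}(2), $O(n)$ in total. I also store, with every cell of $\spm(s)$, a pointer to the super-root of the enclosing $\calD$-cell that determines $\pi(s,t)$ for points in it; this lets the correct super-root be retrieved in $O(1)$ after a point location. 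All told, the preprocessing costs $O(n\log h)$ time and space.

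For a query on $(t,\tau)$, I point-locate $t$ in $\calD$ and in $\spm(s)$ to obtain the cell $\Delta$ and the relevant super-root $r$ in $O(\log n)$ time. I then invoke Lemma~\ref{lem:simpleSP} on $(\Delta,r,t,\tau)$ to detect an intersection of $\tau$ with $\pi_\Delta(r,t)$ in $O(\log n)$ time; if one is found I return it. Otherwise, by Lemma~\ref{lem:60}(3), any intersection must lie on $\pi(s,r)$, and since $r$ is a vertex of $T_V$, I run the ``vertex-of-$T_V$'' procedure described just before Lemma~\ref{lem:simpleSP}: pick a preselected leaf $v_i$ in the $T_V$-subtree rooted at $r$, walk down the $O(\log h)$ nodes of $T_1$ on the path to $v_i$, perform a ray-shooting query on $P(u)$ at each node via Lemma~\ref{lem:200} ($O(\log n)$ each), and, whenever a candidate intersection $p$ appears, decide in $O(\log n)$ time via the position of the farther endpoint of $p$'s edge in $\calL(T_V,v_1)$ whether $p$ lies on the $s$-to-$r$ prefix of $\pi(s,v_i)$. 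This costs $O(\log h\log n)$, so the overall query time is $O(\log h\log n)$.

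The main obstacle is purely book-keeping: I must make sure the super-root dispatched for $t$ matches the (assumed unique) path $\pi(s,t)$ on which the query is defined, which is exactly why I pre-link each cell of $\spm(s)$ to the super-root of its $\calD$-cell, and why each super-root of a merged cell carries its own simple-polygon data structure. Once this dispatch is in place, the correctness follows directly from Lemma~\ref{lem:60}(3), and the stated bounds follow from Lemmas~\ref{lem:T1}, \ref{lem:200}, \ref{lem:10}, and~\ref{lem:simpleSP}.
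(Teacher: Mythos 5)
Your plan matches the paper's proof in its essentials: split $\pi(s,t)$ at the super-root $r$ of the $\calD$-cell $\Delta$ containing $t$, search the prefix $\pi(s,r)$ with the tree $T_1$, and search the suffix $\pi(r,t)$ with a per-cell data structure of Lemma~\ref{lem:simpleSP}. The preprocessing inventory and the $T_1$-based search on the root path are exactly right.

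There is, however, one step that would fail as written. You invoke the data structure of Lemma~\ref{lem:simpleSP} directly on $(\Delta,r,t,\tau)$, but that lemma is stated for a query segment lying \emph{inside} the simple polygon $P=\Delta$: its algorithm begins by locating the two endpoints of the segment in the shortest path map of $r$ inside $\Delta$ in order to build a funnel. For an arbitrary query segment $\tau$ in $\calP$ this is not possible, since $\tau$ can, and typically will, leave $\Delta$ and even wrap around far-away holes. The paper closes this gap by first clipping: each cell $\Delta$ is preprocessed for ray shooting both inside and outside $\partial\Delta$, so that the at most two maximal sub-segments of $\tau\cap\Delta$ guaranteed by Lemma~\ref{lem:10}(4) can be computed in $O(\log n)$ time, and Lemma~\ref{lem:simpleSP} is then applied to each sub-segment. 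This costs only $O(n)$ additional preprocessing and $O(\log n)$ extra query time, so the claimed bounds are preserved, but without it the intra-cell search is not well-defined. Once you add this clipping step, your argument coincides with the paper's.
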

\begin{proof}
In the preprocessing, we build the tree $T_1$, which takes $O(n\log
h)$ time and space. For each cell $\Delta$ of the decomposition $\calD$, since it is a simple polygon, we build the data structure in Lemma~\ref{lem:simpleSP} with respect to each super-root of $\Delta$; this takes $O(n)$ time and space in total.

Given $\tau$ and $t$, our query algorithm works as follows. We first
determine the cell $\Delta$ of $\calD$ that contains $t$. We also
determine the super-root $r$ of $\Delta$ such that
$\pi(s,t)=\pi(s,r)\cup \pi(r,t)$. All this can be done in $O(\log n)$
time.
Note that $r$ is a vertex in $T_V$. Hence, we can compute
an intersection between $\tau$ and $\pi(s,r)$ in
$O(\log h\log n)$ time using the tree $T_1$. If there is an intersection,
we return it and stop the algorithm. Otherwise, we compute
an intersection between $\tau$ and $\pi(r,t)$ in the cell $\Delta$.
To this end, we first compute the at most two sub-segments of $\tau\cap
\Delta$ by using the ray-shooting queries inside and outside $\Delta$.
For this, in the preprocessing,
for each cell $\Delta$ of $\calD$, we compute ray-shooting data
structures on both the inside and outside of $\Delta$ (e.g., by the similar techniques as in Lemma~\ref{lem:200}). Computing these
ray-shooing data structure on all cells of $\calD$ takes $O(n)$ time.
Then, for each sub-segment $\tau'$ of $\tau\cap \Delta$, we compute
the intersection (if any) between $\tau'$ and $\pi(r,t)$
in $O(\log n)$ time by Lemma~\ref{lem:simpleSP}.
Hence, the overall query algorithm runs in $O(\log h\log n)$ time.

The lemma thus follows.
\qed
\end{proof}


\subsubsection{The $\calR$-Region Range Queries}

In the following, we give our data structure for answering the
$\calR$-region queries. Specifically, given a range $[i,j]_R$ of indices of the
regions of $\calR$ and an extended-window $\tau \in W$, the query asks
for the ccw-largest index $r\in [i,j]_R$ such that $\tau$ crosses
the region boundary $\partial R_r$ (or report none if such an index
does not exist). We
actually consider a more general query where $\tau$ can be any segment
in $\calP$ (not necessarily in $W$). Our goal is to show the following result.

\begin{lemma}\label{lem:range}
Given $\spm(s)$, we can build a data structure in $O(n\log h)$ time and
space such that each $\calR$-region range query can be answered in $O(\log h\log n)$ time.
\end{lemma}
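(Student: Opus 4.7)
The plan is to attack the query with two parallel segment-tree data structures on the cyclic index set $\{1,\dots,h^*\}$: one for the radial boundary pieces $\pi(s,v_r)$ and one for the outer arcs $\alpha_r$. Since $\partial R_r = \pi(s,v_r)\cup\pi(s,v_{r+1})\cup\alpha_r$, finding the ccw-largest $r\in[i,j]_R$ with $\tau$ crossing $\partial R_r$ reduces to taking the ccw-larger of (i) the ccw-largest $k\in[i,j{+}1]_R$ with $\tau$ crossing the radial $\pi(s,v_k)$ (which contributes a candidate $r\in\{k{-}1,k\}\cap[i,j]_R$) and (ii) the ccw-largest $r\in[i,j]_R$ with $\tau$ crossing the outer arc $\alpha_r$.

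\paragraph{Preprocessing.}
For subproblem (i) I would reuse the tree $T_1$ from Lemma~\ref{lem:SP}, which already occupies $O(n\log h)$ space and decomposes each $\pi(s,v_k)$ into $O(\log h)$ sub-paths $P(u)$ stored at the ancestors of $v_k$ in $T_1$. The essential structural property is that whenever $\tau$ crosses $P(u)$, that same crossing is a crossing of $\pi(s,v_k)$ for \emph{every} $k\in I(u)$, so one $P(u)$-ray-shooting certifies crossings for a whole range of $k$ at once. For subproblem (ii) I would build a new segment tree $T_\alpha$ on $\alpha_1,\dots,\alpha_{h^*}$: at each internal node $u$, concatenate the arcs in $I(u)$ (each being either a chain of obstacle edges, which $\tau$ cannot cross, or a bisector super-curve built from convex hyperbolic arcs) and close the concatenation into a splinegon with an auxiliary bounding frame; equip it with the $O(\log n)$-time ray-shooting data structure of~\cite{ref:MelissaratosSh92}. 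Since each arc appears at $O(\log h)$ nodes of $T_\alpha$ and the total size of all arcs is $O(n)$, the tree $T_\alpha$ takes $O(n\log h)$ time and space, matching the budget.

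\paragraph{Query and main obstacle.}
Given $\tau$ and the cyclic range $[i,j]_R$, I would split it into at most two non-cyclic sub-ranges, decompose each into $O(\log h)$ canonical subtrees of the two trees, and process each canonical subtree in $O(\log n)$ time: on $T_\alpha$ a single splinegon ray-shooting from the far endpoint of $\tau$ returns the ccw-last crossing with an $\alpha_r$ inside the subtree's range, while on $T_1$ a guided right-first descent ray-shoots $P(u)$ at each visited node and aborts as soon as a $P(u)$-crossing is found, since by the property above that crossing immediately pins down the ccw-largest $k$ inside $I(u)$. The main technical obstacle is ensuring that the $T_1$ descent costs only $O(\log n)$ per canonical subtree even when $\tau$ crosses several different $P(u')$'s inside that subtree: a naive right-first walk can visit $\Omega(h)$ nodes before certifying that a right branch contains no crossing. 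The remedy I envision is to augment each internal node of $T_1$ with a ``subtree-crossing summary'' answering the query ``does any descendant $P(u')$ cross $\tau$, and if so which is the rightmost such $u'$?'' in $O(\log n)$ time, implemented as a second-level ray-shooting/segment-tree structure whose edge contributions share across the first-level $P$-decomposition. Keeping this amortized sharing within the $O(n\log h)$ space bound, and then combining the $k$-witness with the cyclic-range bookkeeping and the $k\mapsto r\in\{k{-}1,k\}$ conversion, is where the bulk of the proof effort will go.
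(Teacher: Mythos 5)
You have correctly identified the right high-level approach (segment tree on the cyclic index set $\{1,\dots,h^*\}$, reusing the $P(u)$ paths, guided right-first descent) and, importantly, you have put your finger on exactly the crux: a right-first walk that ray-shoots only $P(u)$ at each node can waste unbounded time confirming that a subtree is crossing-free, so a ``subtree-crossing summary'' is required at each node. But you stop there --- you state that the summary \emph{must exist} and hand-wave at ``a second-level ray-shooting/segment-tree structure whose edge contributions share across the first-level $P$-decomposition,'' without constructing it or verifying its size. That missing construction is precisely the intellectual content of the lemma, and the paper spends its entire effort on it.

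Concretely, the paper's tree $T_2$ augments each node $u$ with a second auxiliary object $U(u)$: the union of the $T_V$-paths from $p_u$ (the far endpoint of $P(u)$) down to all leaves of $T_2(u)$, \emph{together with} the bisector super-curves $\alpha_c$ for which both $v_c$ and $v_{c+1}$ lie in $T_2(u)$. This $U(u)$ is connected, its bounded cells coincide with cells of $\calD'$ (hence are simply connected splinegons admitting $O(\log n)$ ray-shooting via~\cite{ref:MelissaratosSh92}), and --- crucially --- ``$\tau$ crosses $U(u)$'' is exactly the question you need for the descent: it tells you whether some boundary $\partial R_c$ with $c$ strictly inside the subtree is crossed, without separate tree-path versus arc cases. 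Its size bound ($O(n\log h)$ in total) needs its own argument: each $T_V$-edge lands in $U(u)$ for at most two nodes per level of $T_2$, and each bisector super-curve $\alpha_c$ lands in $U(u)$ exactly along the root-to-LCA$(v_c,v_{c+1})$ path, i.e.\ $O(\log h)$ nodes. None of this is in your proposal.

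Two further concrete problems with what you did write. First, your subproblem~(ii) claims that a single splinegon ray-shot ``from the far endpoint of $\tau$'' returns the ccw-\emph{last} arc crossed within a canonical range; a single ray-shot returns the first hit along the ray, and the order in which $\tau$ meets the arcs $\alpha_r$ along its length need not coincide with the ccw order of the indices $r$, so this subroutine does not produce the right answer without a descent of its own. Second, by splitting into two parallel trees $T_1$ and $T_\alpha$, you force a reconciliation of two candidate indices and a $k\mapsto r\in\{k{-}1,k\}$ translation across a cyclic range, whereas folding the arcs into $U(u)$ lets a single descent handle both sources of crossing and return $r^*$ directly. In short: your plan is aimed in the right direction and correctly diagnoses the obstacle, but it does not deliver the data structure that overcomes it, and the one subroutine you do specify in detail is incorrect.
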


Recall that for each region $R_r\in \calR$, its boundary $\partial R_r$ consists of
three portions: $\pi(s,v_r)$, $\pi(s,v_{r+1})$, and $\alpha_r$.

Recall that $\calL_l(T_v,v_1)=\{v_1,v_2,\ldots,v_{h^*}\}$.
We build a complete binary search tree $T_2$ as follows. Like $T_1$ in Section~\ref{sec:SPsegment}, the leaves of
$T_2$ from left to right correspond to $v_1,v_2,\ldots,v_{h^*}$. For each
node $u$ of $T_2$, we construct the same auxiliary data structure $P(u)$ as in $T_1$.
In addition, we build another auxiliary data structure $U(u)$ for each
internal node $u$ of $T_2$ as follows.

We use $T_2(u)$ to denote the
subtree of $T_2$ rooted at $u$ and use $S(u)$ to denote the set of the
leaves of $T_2(u)$. As in $T_1$ in Section~\ref{sec:SPsegment}, each
point of $V$ corresponds to a leaf
of $S(u)$ and is also a leaf of $T_V$. Let $p_u$ be the
point of the path $P(u)$ in $T_V$ that is farthest from $s$. In the
case where $P(u)$ is empty, let $p_u$ be $p_{u'}$ for the parent
$u'$ of $u$ if $u\neq s$ and $p_u=s$ otherwise. Note that $p_u$ is a
node of $T_V$. Let $U$ be the union of the paths of $T_V$ from $p_u$ to
all leaves of $S(u)$ in $T_V$, excluding the sub-path from $s$ to $p_u$ in $T_V$.
It is not difficult to see that $U$
is actually a subtree of $T_V$.
Recall that the points of $S(u)$ are consecutive in the list $\calL_l(T_v,v_1)=\{v_1,v_2,\ldots,v_{h^*}\}$.
Let $S(u)$ be $v_a,v_{a+1},\ldots,v_b$ with $1\leq a\leq b\leq h^*$.
If $a<b$ (i.e., $u$ is not a leaf), for each $c\in [a,b-1]$, recall that $\alpha_c$ belongs to $\partial R_c$ and $\alpha_c$ is either a bisector super-curve or a chain of obstacle edges, and we add $\alpha_c$ to $U$ if $\alpha_c$ is a bisector super-curve.
The resulting $U$ is $U(u)$.
Note that $U(u)$ is connected since every point of $U(u)$ has a path
on $U(u)$ connecting to the point $p_u$. We consider $U(u)$ as a
subdivision of the plane by all edges of $U(u)$, without considering the obstacles of $\calP$.

We claim that each cell (excluding the outer unbounded one) of $U(u)$ is simply connected. Indeed, if $U(u)$ does not contain any bisector super-curve $\alpha_c$, then $U(u)$ is a connected subtree of $T_V$ and thus there is only one cell, which is the outer unbounded one. If $U(u)$ contains a bisector super-curve $\alpha_c$ for some $c\in [a,b-1]$, then $\alpha_c$ along with $\pi(p_u,v_c)$ (which is also the path from $p_u$ to $v_c$ in $T_V$ and is in $U(u)$) and $\pi(p_u,v_{c+1})$ forms a closed cell $C$ of $U(u)$. Note that $C$ is also a cell in the decomposition $\calD'$.
Also, for any closed cell $C'$ of
$U(u)$ (i.e., $C'$ is not the outer unbounded one), $C'$ must be formed by a bisector super-curve in $U(u)$ as discussed above. Therefore, each closed cell of $U(u)$ is simply connected.

For each closed cell $C$ of $U(u)$, we build a ray-shooting data
structure. Although $C$ has a bisector super-curve, which
consists of hyperbolic curves instead of line segments, Melissarators
and Souvaine~\cite{ref:MelissaratosSh92} showed that we
can still build a ray-shooting data structure for $C$ in linear time
and space such that each query can be answered in logarithmic
time\footnote{In fact, since each bisector edge of $\spm(s)$ is a convex curve, $C$ is
naturally a splinegon~\cite{ref:MelissaratosSh92}.}.

For the outer cell $C$ of $U(u)$, we can use the similar approach as
Lemma~\ref{lem:200} to preprocess it in linear time such that each ray-shooting query on $C$ can be answered in logarithmic time.


In addition, recall that $\alpha_{h^*}$ connects $v_{h^*}$ and $v_1$. If $\alpha_{h^*}$ is a bisector super-curve, then we build a ray-shooting data structure for $\alpha_{h^*}$~\cite{ref:MelissaratosSh92}.

This finishes the description of our data structure $T_2$.

\begin{lemma}
The space of $T_2$ is $O(n\log h)$ and $T_2$ can be built in $O(n\log
h)$ time.
\end{lemma}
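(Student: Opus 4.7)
The plan is to decompose the total size and construction cost of $T_2$ into three parts and argue each is $O(n\log h)$: (a) the $P(u)$ auxiliary structures, (b) the $U(u)$ subdivisions together with the ray-shooting data structures on their cells, and (c) the separate ray-shooting structure on $\alpha_{h^*}$ when that curve is a bisector super-curve. Part (a) is immediate from Lemma~\ref{lem:T1}, since each $P(u)$ at a node of $T_2$ is defined exactly as its counterpart in $T_1$; part (c) contributes only $O(n)$, since $|\alpha_{h^*}|=O(n)$ and the splinegon ray-shooting construction of \cite{ref:MelissaratosSh92} is linear-time.

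The main work is part (b), which I would handle by a segment-tree-style charging argument on two object types: edges $e$ of $T_V$ and bisector super-curves $\alpha_c$. Unwinding definitions, $p_u$ coincides with the lowest common ancestor of $S(u)$ in $T_V$, and one checks that for an edge $e\in T_V$, $e\in U(u)$ iff $S(u)$ \emph{straddles} $V_e$, the consecutive set of leaves of $T_V$ whose path to $s$ uses $e$: if $S(u)\subseteq V_e$ then $p_u$ sits at or below the lower endpoint of $e$, so $e$ lies on the excluded sub-path $\pi(s,p_u)$; if $S(u)\cap V_e=\emptyset$ then $e$ is not on any path from $p_u$ to a leaf of $S(u)$; and in the remaining case $p_u$ is a strict ancestor of $e$ and $e$ lies on the path from $p_u$ to any leaf of $S(u)\cap V_e$. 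Since $V_e$ is a contiguous range in $\calL_l(T_V,v_1)$ and $T_2$ is a complete binary tree of height $O(\log h^*)=O(\log h)$, the straddling nodes $u$ are precisely the $O(\log h)$ ancestors in $T_2$ of the boundary leaves $v_{l_e-1}$ and $v_{r_e+1}$, summing to $O(n\log h)$ edge-slots over all $U(u)$. Similarly, $\alpha_c$ is added to $U(u)$ iff $\{v_c,v_{c+1}\}\subseteq S(u)$, which holds for only the $O(\log h)$ ancestors of this pair's LCA in $T_2$; since the $\alpha_c$'s are edge-disjoint chains drawn from the bisector edges of $\spm(s)$ with total combinatorial size $O(n)$, they contribute $O(n\log h)$ as well.

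Given these bounds, constructing $T_2$ in $O(n\log h)$ time follows by treating each edge of $T_V$ and each bisector super-curve as a leaf-range to be inserted into $T_2$ in canonical segment-tree fashion (exactly as in Lemma~\ref{lem:T1}), building each $U(u)$ in time linear in its stored size, and then constructing the ray-shooting data structure on each cell of each $U(u)$ in time linear in the cell's size via \cite{ref:MelissaratosSh92,ref:ChazelleRa94}; this is justified because every bisector edge of $\spm(s)$ is a convex hyperbolic arc, so the closed cells of $U(u)$ that contain a bisector super-curve are splinegons, while the outer cell of $U(u)$ is handled by the rectangle-enclosure trick of Lemma~\ref{lem:200}. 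The main obstacle I anticipate is the straddling equivalence used above: it requires the careful case analysis of where the LCA $p_u$ lies relative to $e$, distinguishing between the ``$p_u$ weakly below $e$'' and ``$p_u$ strictly above $e$'' regimes; once this is pinned down, the rest reduces to the standard segment-tree charging that already underlies Lemma~\ref{lem:T1}.
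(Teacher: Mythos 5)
Your proof is correct and obtains the same $O(n\log h)$ bound through the same decomposition as the paper: charge the $P(u)$'s to Lemma~\ref{lem:T1}, bound the multiplicity of $T_V$-edges and bisector super-curves across the $U(u)$'s, and use linear-time splinegon ray-shooting preprocessing for the construction. The genuine difference is in how you bound the number of nodes $u$ with a given $T_V$-edge $e$ in $U(u)$. You derive the exact membership criterion $e\in U(u)$ iff $\emptyset\neq S(u)\cap V_e\subsetneq S(u)$ (a proper straddle) from the observation that $p_u$ is the lowest common ancestor of $S(u)$ in $T_V$, and then count straddlers directly: at most two leaf-intervals per level of $T_2$ can properly straddle the fixed interval $V_e$. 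The paper instead re-imports the contradiction scheme of Lemma~\ref{lem:T1} (assume three per level, locate two siblings with $e$ in both $U$'s, conclude $e\in P$ at an ancestor). For the straddle relation this transfer does not go through as written: two siblings \emph{can} both properly straddle $V_e$ when $V_e$ spans their common boundary, and in that case $e$ lies in $U$, not $P$, at their parent. The contradiction that actually works is the one implicit in your characterization --- the middle of any three same-level straddlers would have $S(u)$ fully contained in $V_e$ --- so your direct route is the cleaner argument. Two small inaccuracies, neither affecting the bound: (i) a node straddles $V_e$ iff it is a $T_2$-ancestor of the lowest common ancestor of $v_{l_e-1},v_{l_e}$ or of that of $v_{r_e},v_{r_e+1}$; being an ancestor of one boundary leaf alone is not sufficient, and the degenerate cases $l_e=1$ or $r_e=h^*$ must be skipped. (ii) The canonical segment-tree insertion of Lemma~\ref{lem:T1} assigns an edge to its \emph{coverage} nodes (for $P(u)$), whereas $U(u)$ is populated at the complementary set of straddling nodes along the two boundary root-to-leaf walks, so the insertion rule is analogous to but not literally ``exactly as in'' Lemma~\ref{lem:T1}.
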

\begin{proof}
First of all, the auxiliary data structures $P(u)$ on all nodes $u$
of $T_2$ can be built in $O(n\log h)$ time and space as in Lemma~\ref{lem:T1}. In the
following, we focus on the second auxiliary data structure $U(u)$.
To analyze the total space, we first show that each edge $e$ of $T_V$
can be in $U(u)$ for at most two nodes $u$ in each level of $T_2$.

Indeed,
assume to the contrary that there are three such nodes. Since the points of $V$ whose paths from $s$ in $T_V$ that contain $e$ must be consecutive in the list
$\{v_1,v_2,\ldots,v_{h^*}\}$ (and thus in the consecutive leaves of $T_2$), by the similar analysis as in Lemma~\ref{lem:T1}, we can find two nodes $u_1$ and $u_2$ sharing the same parent such that $e$ is contained in both $U(u_1)$ and $U(u_2)$. But this implies that $e$ must be stored in $P(u)$ for a proper ancestor $u$ of $u_1$ (or $u_2$). This further implies that $e$ cannot be stored in either $U(u_1)$ or $U(u_2)$.

Hence, each edge $e$ of $T_V$ can be in $U(u)$ for at most two nodes $u$ in the same level of $T_2$. Consequently, each edge of $T_V$ is contained in $U(u)$ for at most $O(\log h)$ nodes  $u$ of $T_2$, as the height of $T_2$ is $O(\log h)$.

Next we show that for each bisector super-curve $\alpha_c$, it is
stored in $U(u)$ for at most $O(\log h)$ nodes $u$ of $T_2$. Recall that
the two endpoints of $\alpha_c$ are two leaves $v_c$ and $v_{c+1}$ of
$T_2$. Notice that $\alpha_c$ is in $U(u)$ if and only if $[c,c+1]\subseteq
[l_u,r_u]$, where $l_u$ (resp., $r_u$) is the index of the leftmost
(resp., rightmost) leaf in the subtree $T_2(u)$.
Clearly, $[c,c+1]\subseteq [l_u,r_u]$ if and only if $u$ is in the
path from the root to the lowest common ancestor of $v_c$ and
$v_{c+1}$, and there are $O(\log h)$ such nodes $u$.

Since the total size of all bisector super-curves is $O(n)$, the space
of $U(u)$ in $T_2$ used to store the bisector super-curves is $O(n\log h)$.

Combining the above discussions, the size of $T_2$ is $O(n\log h)$.

For each node $u$ of $T_2$, constructing $U(u)$ can be done in
linear time in the size of $U(u)$ as follows. Let $v_a,v_{a+1},\ldots,v_b$ be the leaves in $T_2(u)$. We consider the paths from
$p_u$ to these leaves in $T_V$ one by one in a bottom-up manner. Initially we let $U(u)$ contain the only path $\pi(p_u,v_a)$. In general, suppose $\pi(p_u,v_{c-1})$ has been considered (initially, $c-1=a$). Then we process $\pi(p_u,v_{c})$ as follows.
We traverse on $\pi(p_u,v_c)$ from $v_c$ to $p_u$ in $T_V$ until we meet an obstacle vertex that is on the current $U(u)$, and then add all traversed edges of $\pi(p_u,v_c)$ to $U(u)$. We continue the algorithm as above until $\pi(p_u,v_b)$ is processed. Finally, for each $c\in [a,b-1]$ (if $a<b$), if $\alpha_c$ is a bisector super-curve, then we add $\alpha_c$ to $U(u)$.
The above algorithm constructs $U(u)$ in linear time.

Then, we construct the ray-shooting data
structures for the cells of $U(u)$, which can also be done in linear time in
the size of $U(u)$.

Since the total size of $U(u)$ of all nodes $u$ of $T_2$ is $O(n\log h)$, the
total time for constructing the second auxiliary data structures is
$O(n\log h)$.
Therefore, $T_2$ can be computed in $O(n\log h)$ time. \qed
\end{proof}

By using the tree $T_2$, the following lemma gives our query algorithm, which proves Lemma~\ref{lem:range}.

\begin{lemma}
Each $\calR$-region range query can be answered in $O(\log h\log n)$ time.
\end{lemma}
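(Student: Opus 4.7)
The plan is to process the query by a segment-tree style search over $T_2$. I first handle the cyclic range $[i,j]_R$ by splitting it at $h^*$ if it wraps around, reducing the problem to at most two contiguous sub-ranges; for each sub-range I identify the $O(\log h)$ canonical nodes of $T_2$ whose leaf subtrees partition the range, and sweep them from ccw-largest to ccw-smallest.

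For a canonical node $u$ with leaf range $[a,b]$, the boundaries $\partial R_r = \pi(s,v_r)\cup\pi(s,v_{r+1})\cup\alpha_r$ for $r\in [a,b-1]$ share the common initial prefix $\pi(s, p_u)$, which is stored piecewise across the $P(u')$ structures on the root-to-$u$ path of $T_2$; everything else---the shortest-path edges from $p_u$ down to the leaves of $S(u)$, together with the bisector super-curves $\alpha_c$---lies in $U(u)$. To decide whether $\tau$ crosses any such $\partial R_r$, I ray-shoot $\tau$ once against each relevant $P(u')$ and once into the cell of $U(u)$ containing $\tau$'s starting point. Since the canonical nodes together with their ancestors form a subtree of $T_2$ of total size $O(\log h)$, and each ray-shooting query takes $O(\log n)$ time by Lemma~\ref{lem:200} and the splinegon ray-shooting of~\cite{ref:MelissaratosSh92}, the detection phase costs $O(\log h\log n)$.

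Once the ccw-largest canonical node $u$ with a detected intersection is identified, I descend inside $T_2(u)$: at each level I first test the ccw-later child by issuing one new ray-shooting against its $P$-piece and one against the cells of its $U$-structure, descending into it when these certify an intersection and into the other child otherwise. After $O(\log h)$ levels the procedure locates an explicit intersection point $p$. If $p$ lies on a bisector super-curve $\alpha_c$ then we output $r=c$; if $p$ lies on an edge $e$ of $T_V$ the precomputed range $[l_e,r_e]$ yields the ccw-largest $r\in [i,j]_R$ whose $\partial R_r$ contains $e$ in constant time. The overall query time is $O(\log h\log n)$.

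The main obstacle is twofold. First, $\tau$ may cross the outer cell of $U(u)$ more than once, so a single ray-shooting query is not guaranteed to reveal all intersections; however, one query is enough for \emph{existence} testing in the detection phase, and correctness of the descent is preserved because at every level both children are inspected. Second, the certified intersection can lie purely on a shared ancestor path $P(u')$, in which case by Observation~\ref{obser:10}(2) the segment $\tau$ crosses that common prefix exactly once and hence that point lies on $\partial R_r$ for every $r$ in the canonical range; in this case the algorithm simply returns the ccw-largest index of the range, without descending further. These two observations make the entire search local and keep the cost within the $O(\log h\log n)$ budget.
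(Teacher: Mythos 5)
Your proposal follows essentially the same strategy as the paper's proof: a segment-tree style search on $T_2$, using the $P(u)$ structures on root-to-node paths for the shared prefixes and the $U(u)$ structures for the remaining edges and bisector super-curves, with $O(\log h)$ ray-shooting queries of $O(\log n)$ each, a sweep from the ccw-largest side, and a final descent within one subtree to localize the answer. The paper organizes this as four explicit procedures (root-to-LCA, the two LCA-to-leaf walks, and a descent), while you phrase it via canonical segment-tree nodes and extract the final index from the precomputed edge ranges $[l_e,r_e]$ rather than from the traversal structure; this is a cosmetic difference, and your handling of the wrap-around at $h^*$ and of intersections that lie on a shared ancestor path matches the paper.
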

\begin{proof}
Given a range $[i,j]_R$ of indices of the
regions of $\calR$ and a segment $\tau \in \calP$, we want to compute
the ccw-largest index $r\in [i,j]_R$ such that $\tau$ crosses
the boundary $\partial R_r$ (if no such index $r$ exists, then we return none).
Let $r^*$ be the sought index.


Recall that both $i\leq j$ and $i>j$ are possible. We first consider
the case where $i\leq j$. In this case, $[i,j]_R$ consists of
$\{i,{i+1},\ldots,j\}$. 
We begin with finding the lowest common ancestor of the two leaves $v_i$ and
$v_j$ in $T_2$, denoted by $w$.
Our algorithm consists of four procedures.

\paragraph{The first procedure.}
The first procedure considers the nodes in the path of $T_2$ from the
root to $w$. For each node $u$ in the path, we check whether $\tau$
crosses $P(u)$ by a ray-shooting query. If yes, then $\tau$ crosses the shortest
path $\pi(s,v_j)$ and thus crosses $\partial R_j$.
Hence, we can simply return $r^*=j$ and stop the
algorithm. Otherwise, we proceed on the next node until $w$ is
considered.

After $w$ is considered, if $r^*$ is not found, then we go to the second procedure.

\paragraph{The second procedure.}
The second procedure considers the nodes in the path of $T_2$ from
$u_j$ up to $w$ in a bottom-up fashion. For each node $u$, there are three cases.

\begin{enumerate}
\item
If $u=w$, we stop the second procedure and go to the third procedure.

\item
If $u=u_j$, then we check whether
whether $\tau$ intersects $P(u)$ by calling a ray-shooting query. If
there is an intersection, we return $r^*=j$. Otherwise, we proceed on the parent of $u$.

\item
Suppose $u$ is neither $u_j$ nor $w$.

If $u_j$ is in the left sub-tree of $u$, then we check whether $\tau$ intersects $P(u)$
by a ray-shooting query. If there is an intersection, then we return
$r^*=j$. Otherwise, we proceed on the parent of $u$.

If $u_j$ is in the right sub-tree of $u$, then we first check whether $\tau$ intersects $P(u)$. If yes, then we return $r^*=j$. Otherwise, let $u'$ be the left child of $u$ (if $u$ does not have a left child, then we proceed on the parent of $u$). We proceed as follows.

We check whether $\tau$ intersects $P(u')$. If yes, we return
$r^*$ as the rightmost index of the leaves in the subtree $T_2(u')$.
Otherwise, we check whether $\tau$ intersects $U(u')$ by first locating
the cell $C$ of $U(u')$ containing an endpoint of $\tau$ and then calling a
ray-shooting query on $C$. If not, we
proceed on the parent of $u$ (not $u'$). Otherwise, we set $u=u'$ and
go to the fourth procedure.

\end{enumerate}

\paragraph{The third procedure.}
In this procedure, we consider the vertices on the path of $T_2$ from
the left
child of $w$ down to $u_i$, which is symmetric to the second procedure.
For each node $u$, there are two cases.
\begin{enumerate}
\item
If $u\neq u_i$, we first check whether $\tau$ intersects
$P(u)$ by a ray-shooting query. If yes, we return the index of
the rightmost leaf of $T_2(u)$ as $r^*$.
Otherwise, if $u_i$ is at the right subtree of $u$,
then we proceed on the right child of $u$.

If $u_i$ is at the left subtree of $u$, let $u'$ be the
right child of $u$ (if $u$ does not have a right child, then we proceed on the left child of $u$). We first check whether $\tau$ intersects $P(u')$.
If yes, we return the index of the rightmost leaf of $T_2(u')$ as $r^*$.
Otherwise, we check whether $\tau$ intersects $U(u')$. If not, we
proceed on the left child of $u$. Otherwise, we set $u=u'$ and go to the fourth procedure.

\item
If $u=u_i$, then we check whether $\tau$ intersects $P(u)$. If yes, we
return $r^*=i$. Otherwise, we return none, i.e., $\tau$ does not
intersect $\partial R_r$ for any $r\in [i,j]_R$.
\end{enumerate}

\paragraph{The fourth procedure.}
In the fourth procedure, we have a vertex $u$ of $T_2$ such that
$\tau$ does not intersect $P(u)$ but intersects $U(u)$.
Starting from $u$, the procedure works as follows.
If $u$ is a leaf, then we simply return the index of the leaf as
$r^*$. Otherwise, let $u'$ be
the right child of $u$. If $\tau$ intersects $P(u')$, then we return
$r^*$ as the index of the rightmost leaf of $T_2(u')$. Otherwise, we check
whether $\tau$ intersects $U(u')$. If yes, we set $u$ to $u'$ and
proceed as above. Otherwise, we set $u$ to the left
child of $u$ and proceed as above.
\paragraph{}

For the running time of the algorithm, observe that the
algorithm only visits $O(\log h)$ vertices of $T_2$ and makes $O(\log
h)$ ray-shooting queries as the height of $T_2$ is $O(\log h)$.
Each ray-shooting query is either on $P(u)$ or $U(u)$ for some node
$u$ of $T_2$, which runs in $O(\log n)$ time.
Hence, the total time of the algorithm is $O(\log h\log n)$.

The above gives the query algorithm for the case $i\leq j$. If $i>j$,
then the index range $[i,j]_R$ consists of
$\{i,i+1,\ldots,h^*,1,2,\ldots,j\}$. For this case,
we first apply the above query algorithm on the range $[1,j]_R$. If the query
does not return none, then we return $r^*$ as the answer to the
original query on $[i,j]_R$.
Otherwise, if $\alpha_{h^*}$ is a bisector super-curve, then
we check whether $\tau$ intersects $\alpha_{h^*}$ by a ray-shooting
query; if there is an intersection, then we return $r^*=h^*$.
Otherwise, we apply the above query algorithm on the range
$[i,h^*]$, and the result of the query is the answer to the original
query on  $[i,j]_R$. The total time of the query algorithm is
still $O(\log h\log n)$.

The lemma thus follows.
\qed
\end{proof}

\subsection{Wrapping Things Up}
\label{sec:wrapup}

We summarize our overall result in the following theorem.

\begin{theorem}\label{theo:20}
Given $\spm(s)$, we can build a data structure of $O(n\log h + h^2)$ size in
$O(n\log h + h^2\log h)$ time, such that each quickest visibility query can be answered in  $O((K+h)\log h\log n)$ time, where $K$ is the size of the visibility polygon of the query point $q$.
\end{theorem}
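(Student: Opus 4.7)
The plan is to assemble the bounds already established throughout Section~\ref{sec:first}. For preprocessing, starting from $\spm(s)$ I would build in $O(n)$ time the decomposition $\calD$ together with its point-location and per-cell ray-shooting structures (Lemma~\ref{lem:10}(7) and Theorem~\ref{theo:segment}), the canonical lists of $\spt(s)$ and $T_V$, and the lowest-common-ancestor structure on $\spt(s)$. On top of this I would add the tree $T_1$ for SP-segment-intersection queries, built in $O(n\log h)$ time and space (Lemma~\ref{lem:SP}); the tree $T_2$ for $\calR$-region range queries, also $O(n\log h)$ time and space (Lemma~\ref{lem:range}); and the Chen--Wang visibility-polygon structure~\cite{ref:ChenVi15}, which costs $O(n+h^2\log h)$ time and $O(n+h^2)$ space. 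Summing these contributions yields exactly $O(n\log h+h^2\log h)$ preprocessing time and $O(n\log h+h^2)$ space.

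For the query on a point $q$, I would execute the stages of Sections~\ref{sec:obser}--\ref{sec:computeq} in order and bill each cost. First, test in $O(\log n)$ time whether $s$ sees $q$ via $\spm(s)$; if not, compute $\vis(q)$ in $O(K\log n)$ time and extract its $k=O(K)$ windows and extended windows. Handle the special window $\overline{u_0 q_0}$ by one segment query costing $O(h\log\frac{n}{h})$, and check every point of $Q$ against $\spm(s)$ in $O(k\log n)$ time, after which the only remaining case is $q^\ast$ lying strictly inside some $w_i^l$ or $w_i^r$. Compute the permutation $f(\cdot)$ in $O(k\log n)$ time using Lemma~\ref{lem:map}, perform the pairwise pruning of Lemma~\ref{lem:100} in $O(k\log n)$ time, and run the bundle-pruning procedure of Section~\ref{sec:prune}; by the charging argument there, each SP-segment-intersection query, wrap-index-removal, and bundle-creation is paid for by an index that leaves $\bbB$ forever, so this stage costs $O(k\log h\log n)$, where every bundle-tree operation runs in $O(\log n)$ time (Section~\ref{sec:bundle}) and every SP-segment-intersection query runs in $O(\log h\log n)$ time (Lemma~\ref{lem:SP}). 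Finally, apply the region-processing algorithm of Lemma~\ref{lem:regionpro} driven by $\calR$-region range queries to identify $q^\ast_l$, handle $q^\ast_r$ symmetrically, and return the shortest of $q_0^\ast,\ q^\ast_l,\ q^\ast_r$ as $q^\ast$.

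Adding the pieces, the total query time is $O(\log n) + O(K\log n) + O(h\log\frac{n}{h}) + O(k\log n) + O(k\log h\log n) + O((k+h)\log h\log n) = O((K+h)\log h\log n)$, since $k=O(K)$. The one step that requires a subtle accounting argument is the last one: a naive estimate would give $O(kh\log h\log n)$, because each of the $k$ surviving windows is associated with a range of up to $h$ regions of $\calR$. The plan is to use the charging indicated after Lemma~\ref{lem:170}---each intermediate region $R_{k'}$ in a range $[z(i),z(j)]_R$ is charged to the global index $k'$ (yielding $O(h)$ charges overall), while only the two endpoint regions are charged to the owning bundle index (yielding $O(k)$ further charges)---and to combine it with the $O(k)$ range queries, one per composite bundle, each costing $O(\log h\log n)$. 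Verifying that this charging remains valid in the recursive composite-bundle case, so that no region is revisited more than a constant number of times per global index, and that the surviving minimum-length candidate really is $q^\ast$ (via Observation~\ref{obser:100} and Lemmas~\ref{lem:170} and~\ref{lem:190}), is the only nontrivial obstacle; the rest is bookkeeping.
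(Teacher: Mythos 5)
Your proposal is correct and follows essentially the same route as the paper's own proof of Theorem~\ref{theo:20}: preprocess $\calD$, the segment-query structure, the preprocessing of Lemmas~\ref{lem:map}, \ref{lem:100}, \ref{lem:regionpro}, \ref{lem:SP}, \ref{lem:range}, and the visibility-polygon structure of~\cite{ref:ChenVi15}; then at query time compute $\vis(q)$, the candidate set $\{q,q_1,\ldots,q_k,q_0^*\}$, and $q_l^*$, $q_r^*$ via the pruning and region-processing machinery of Sections~\ref{sec:prune}--\ref{sec:computeq}. The one step you flag as "the only nontrivial obstacle'' --- showing that intermediate regions in $[z(i),z(j)]_R$ are not revisited in the recursive composite-bundle case --- is exactly what Lemma~\ref{lem:190} and the discussion around $z_{ij}$ establish, so your charging scheme ($O(h^*)$ charges to global region indices plus $O(k)$ charges to bundle indices plus one $\calR$-region range query per composite bundle) matches the paper's accounting.
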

\begin{proof}
In the preprocessing, we compute the visibility polygon query data structure in \cite{ref:ChenVi15} for computing $\vis(q)$, which is of $O(n+h^2)$ size and can be built in $O(n+h^2\log h)$ time. The rest of the preprocessing work includes building the decomposition $\calD$ and the segment query data structure as in Section~\ref{sec:segment}, performing the preprocessing in Lemmas~\ref{lem:map}, \ref{lem:100}, \ref{lem:regionpro}, \ref{lem:SP}, and \ref{lem:range}; these work takes $O(n\log h)$ time and space in total.

Given any query point $q$, we first compute $\vis(q)$ in $O(K\log n)$ time by the query algorithm in \cite{ref:ChenVi15}. Then, we obtain the extended window set $W$. Let $k=|W|$, which is $O(K)$. Next, we compute a closest point $q^*$ on a segment of $W$ in $O(k\log h\log n)$ time. To this end, we compute a set $S$ of $O(k)$ candidate points as follows. We first add $q,q_1,\ldots,q_{k}$ to $S$. Then, we compute the closest point $q_0^*$ of $\overline{u_0q_0}$ and add $q_0^*$ to $S$. Next we compute the point $q_l^*$ in $O((k+h)\log h\log n)$ time by using our pruning algorithm in Sections~\ref{sec:prune} and \ref{sec:computeq}. By a symmetric algorithm, we can also compute $q_r^*$. We add both $q_l^*$ and $q_r^*$ to $S$. By our analysis, $q^*$ must be one of the points of $S$. Since $|S|=O(k)$, we can find $q^*$ in $S$ in additional $O(k\log n)$ time by using the shortest path map $\spm(s)$.
\qed
\end{proof}

In fact, we have the following more general result, which might have independent interest.

\begin{corollary}\label{coro:10}
Given $\spm(s)$, we can build a data structure of $O(n\log h)$ size in
$O(n\log h)$ time, such that given $k=O(n)$ segments in $\calP$ intersecting at the same point, we can compute a shortest path from $s$ to all these segments in $O((k+h)\log h\log n)$ time.
\end{corollary}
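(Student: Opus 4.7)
The plan is to carry over, essentially unchanged, the pruning framework of Sections~\ref{sec:prune} and~\ref{sec:computeq}, observing that nothing in that development genuinely uses the fact that the segments $w_i$ arise from a visibility polygon: the only structural hypothesis used is that all the segments share a common endpoint, so that we can talk about left/right sides, region sequences around $s$, and the cyclic order of the corresponding shortest paths relative to a designated reference path $\pi_0$.

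For the preprocessing, given $\spm(s)$, we build (i) the decomposition $\calD$ and the segment-query data structure of Theorem~\ref{theo:segment}; (ii) the preprocessing of Lemma~\ref{lem:map} for computing cyclic orders around $s$; (iii) the preprocessing of Lemma~\ref{lem:100} for the initial pruning step; (iv) the SP-segment-intersection data structure of Lemma~\ref{lem:SP}; and (v) the $\calR$-region range query data structure of Lemma~\ref{lem:range}. These together require $O(n\log h)$ time and space. In contrast to Theorem~\ref{theo:20}, we do not need the visibility polygon data structure of~\cite{ref:ChenVi15}, so the $O(h^2)$ term is gone from both bounds.

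For the query, let $p$ be the common intersection point of the $k$ given segments. Split each segment at $p$ to obtain at most $2k$ half-segments $\overline{pq_1}, \ldots, \overline{pq_{k'}}$ with $k' \le 2k$, each having $p$ as an endpoint. In $O(k\log n)$ time we compute $\pi_0 = \pi(s,p)$ and each $\pi_i = \pi(s, q_i)$ by point-location in $\spm(s)$, and using Lemma~\ref{lem:map} we sort the $q_i$ in the cyclic order of the paths $\pi_i$ around $s$ relative to $\pi_0$. We then run the pruning algorithm of Section~\ref{sec:prune} and the region-processing algorithm of Section~\ref{sec:computeq}, once each for the left-side and right-side variants, treating the $k'$ half-segments exactly like the extended windows of Section~\ref{sec:first}. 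The minimum over all the resulting candidate paths, together with $\pi_0$ itself (which reaches $p$ and hence every input segment), yields the shortest path from $s$ to the union of the input segments. The analysis of Sections~\ref{sec:prune}--\ref{sec:computeq} then gives the $O((k'+h)\log h\log n) = O((k+h)\log h\log n)$ query bound verbatim.

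The main obstacle is verifying that the structural properties of Section~\ref{sec:obser} --- in particular Observation~\ref{obser:50}, Observation~\ref{obser:60}, and the non-collinearity hypothesis that drives Lemmas~\ref{lem:90} and~\ref{lem:key} --- remain valid for arbitrary half-segments through $p$ rather than for extended visibility windows. The key point is that those proofs rely only on the purely geometric Observation~\ref{obser:10}, together with the assumption that no two $w_i$'s are collinear and that $\pi_0$ does not accidentally contain some endpoint $q_i$. Both conditions can be enforced at query time: collinear input segments through $p$ can be merged into a single longer segment before we split at $p$ (so each resulting half-segment lies in its own direction from $p$), and any remaining incidental collinearity with $\pi_0$ is handled by the initial application of the pruning preprocessing of Lemma~\ref{lem:100}, which is direction-symmetric and independent of the visibility interpretation. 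Once these small degenerate cases are disposed of, every observation and lemma of Section~\ref{sec:first} transfers without change, completing the argument.
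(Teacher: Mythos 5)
Your proposal is correct and matches the paper's proof: both drop the visibility-polygon data structure from the preprocessing of Theorem~\ref{theo:20} to get $O(n\log h)$ bounds, split each input segment at the common point $p$ to obtain at most $2k$ half-segments sharing the endpoint $p$, and then run the pruning and region-processing machinery of Sections~\ref{sec:prune}--\ref{sec:computeq} on those half-segments exactly as on the extended windows, noting that the common-endpoint property is the only thing the argument really uses and that the remaining degeneracies are minor.
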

\begin{proof}
The preprocessing step is the same as in Theorem~\ref{theo:20} except that
the visibility polygon query data structure \cite{ref:ChenVi15} is not necessary any more. Hence, the total preprocessing time and space is $O(n\log h)$.

Given a set $S$ of $k$ segments intersecting at the same point, denoted by $p$, we break each segment at $p$ to obtain two segments and we still use $S$ to denote the new set of at most $2k$ segments. Next we compute a closest point $p^*$ on the segments of $S$. To do so, we can apply the same algorithm as in Theorem~\ref{theo:20} for computing $q^*$ on the extended-windows of $W$. Indeed, the only key property of the segments of $W$ we need is that all segments of $W$ have a common endpoint at $q$. Now that all segments of $S$ have a common endpoint $p$, the same algorithm still works (some degenerate cases may happen, but can be handled easily).
\qed
\end{proof}

\section{The Quickest Visibility Queries: The Improved Result}
\label{sec:second}

In this section, we reduce the query time of Theorem~\ref{theo:20}
to $O(h\log h\log n)$, independent of $K$.
The key idea is the following. First, we show that for any query point
$q$, there exists a
subset $\calS(q)$ of $O(h)$ windows such that a closest point $q^*$ is on a segment of
$\calS(q)$. Second, we give an algorithm that can compute $\calS(q)$ in $O(h\log n)$
time, without computing $\vis(q)$. Our idea relies on the extended corridor
structure~\cite{ref:ChenL113STACS,ref:ChenVi15,ref:ChenCo17} and
modifying the query algorithm for computing $\vis(q)$ in~\cite{ref:ChenVi15}.

Below we first review the extended corridor structure in
Section~\ref{sec:corridor}. We then introduce the
set $\calS(q)$ in Section~\ref{sec:defS}. Finally we present our
algorithm for computing $\calS(q)$ in Section~\ref{sec:comS}.

\subsection{The Extended Corridor Structure}
\label{sec:corridor}

The corridor structure has been used for solving shortest path
problems, e.g.,~\cite{ref:ChenA11,ref:KapoorAn97}.
Later some new concepts such as ``bays,'' ``canals,'' and the ``ocean'' were
introduced, e.g.,~\cite{ref:ChenL113STACS,ref:ChenCo17},
referred to as the ``extended corridor structure''. We review it here
for the completeness of this paper and also for introducing the
notation that will be needed later.

\begin{figure}[t]
\begin{minipage}[t]{0.49\linewidth}
\begin{center}
\includegraphics[totalheight=1.3in]{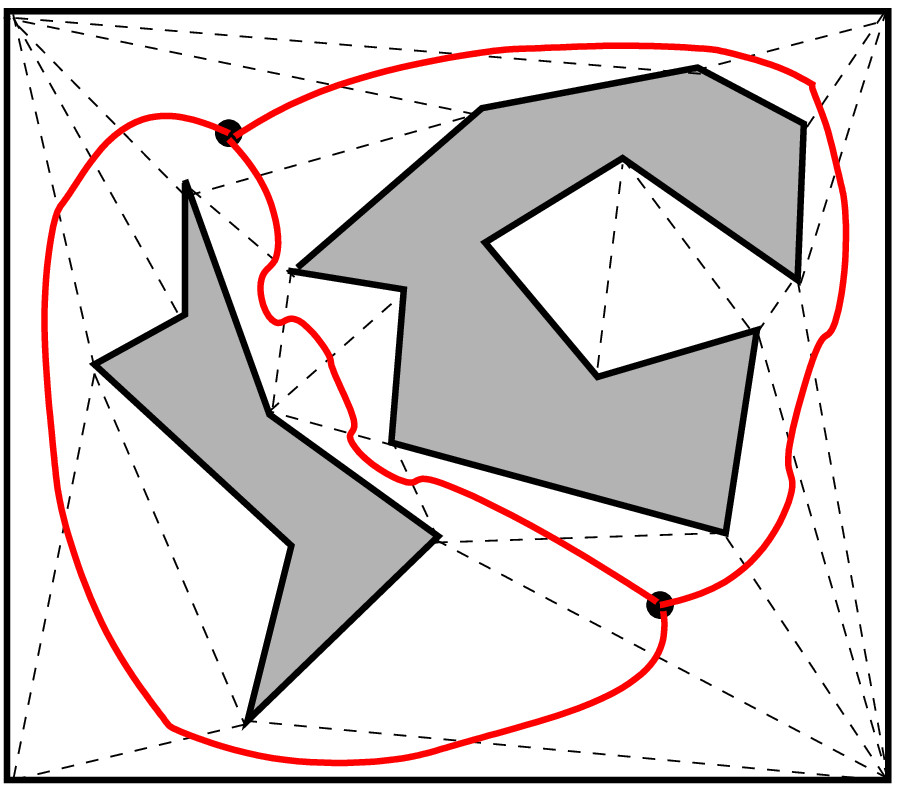}
\caption{\footnotesize
Illustrating a triangulation of the free
space among two obstacles and the corridors (with red solid curves).
There are two junction triangles indicated by the large dots inside
them, connected by three solid (red) curves. Removing the two
junction triangles results in three corridors.}
\label{fig:triangulation}
\end{center}
\end{minipage}
\hspace{0.02in}
\begin{minipage}[t]{0.49\linewidth}
\begin{center}
\includegraphics[totalheight=1.4in]{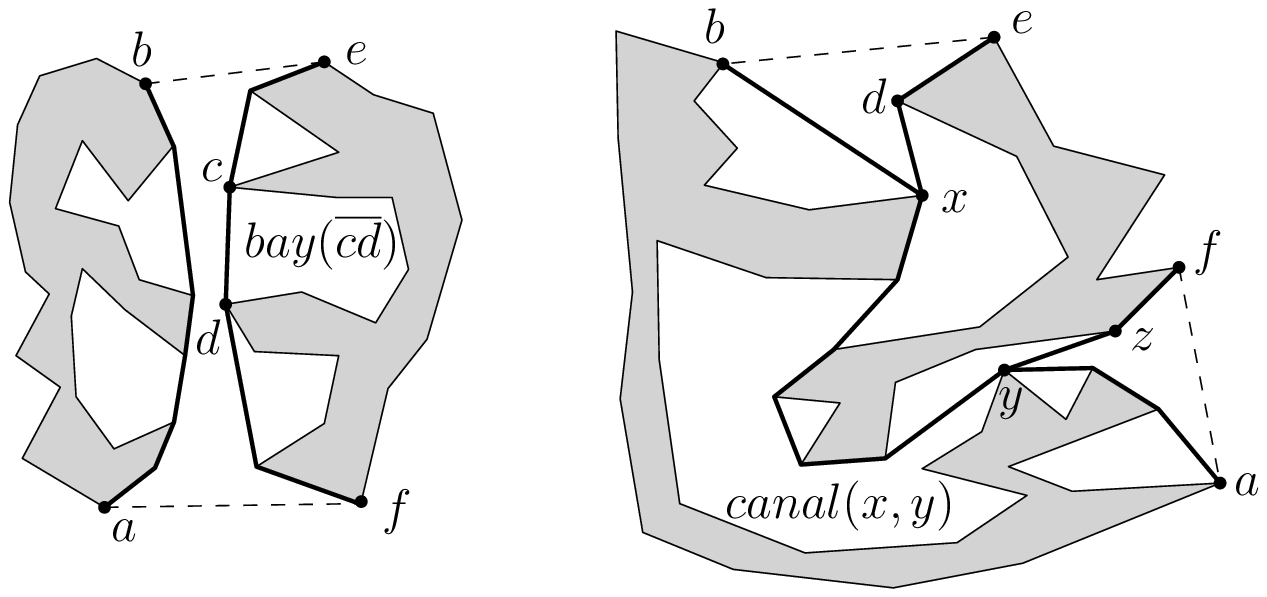}
\caption{\footnotesize
Illustrating an open hourglass (left) and a
closed hourglass (right) with a corridor path connecting the apices
$x$ and $y$ of the two funnels. The dashed segments are diagonals.
The paths $\pi(a,b)$ and $\pi(e,f)$ are marked by thick solid
curves. A bay with gate $\overline{cd}$ (left)
and a canal with gates $\overline{xd}$ and $\overline{yz}$
(right) are also shown.} \label{fig:corridor}
\end{center}
\end{minipage}
\vspace*{-0.15in}
\end{figure}

Let $\Tri(\calP)$ denote an arbitrary triangulation of $\calP$.  Each
edge of $\Tri(\calP)$  that is not an obstacle edge of $\calP$ is called a
{\em (triangulation) diagonal}.
Let $G(\calP)$ be the (planar) dual graph of $\Tri(\calP)$ (i.e., each
triangle defines a node and two triangles that share a diagonal
define an edge).  The degree of
each node in $G(\calP)$ is at most three.  Using $G(\calP)$, we
compute a planar 3-regular graph, denoted by $G^3$ (the degree of
each node in $G^3$ is three), possibly with loops and multi-edges,
as follows. First, remove every degree-one node from $G(\calP)$
together with its incident edge; repeat this process until no
degree-one node remains. Second, remove every degree-two node from
$G(\calP)$ and replace its two incident edges by a single edge;
repeat this process until no degree-two node remains. The
resulting graph is $G^3$ (see Fig.~\ref{fig:triangulation}), which has
$O(h)$ faces, nodes, and edges \cite{ref:KapoorAn97}. Each node of
$G^3$ corresponds to a triangle of $\Tri(\calP)$, which is called a
{\em junction triangle}. Removing all junction triangles results in $O(h)$
{\em corridors} (defined below), each of which corresponds to an edge
of $G^3$.

The boundary of a corridor $C$ consists of four parts (see
Fig.~\ref{fig:corridor}): (1) A boundary portion
of $\calP$ from a point $a$ to a point $b$; (2) a diagonal of a junction triangle from $b$ to $e$; 
(3) a boundary portion of $\calP$ from $e$ to a point $f$; (4) a diagonal of a
junction triangle from $f$ to $a$.
The above (1) and (3) are called the two {\em sides} of $C$.
The corridor $C$ is a simple polygon.

Let $\pi(a,b)$ (resp., $\pi(e,f)$) be the shortest path from $a$ to $b$
(resp., $e$ to $f$) in $C$. The region $H_C$ bounded by
$\pi(a,b), \pi(e,f)$, and the two diagonals $\overline{be}$ and
$\overline{fa}$ is called an {\em hourglass}, which is {\em open} if
$\pi(a,b)\cap \pi(e,f)=\emptyset$ and {\em closed} otherwise (see
Fig.~\ref{fig:corridor}). If $H_C$ is open, then both $\pi(a,b)$ and
$\pi(e,f)$ are convex chains and are called the {\em sides} of
$H_C$; otherwise, $H_C$ consists of two ``funnels'' and a path
$\pi_C=\pi(a,b)\cap \pi(e,f)$ joining the two apices of the two
funnels, called the {\em corridor path} of $C$.
Each side of every funnel is also a convex chain.

The triangulation $\Tri(\calP)$ can be computed in either $O(n\log n)$ time or
$O(n+h\log^{1+\epsilon}h)$ time for any
constant $\epsilon>0$ \cite{ref:Bar-YehudaTr94}.  After $\Tri(\calP)$ is produced,
computing all corridors and hourglasses takes $O(n)$ time.

Let $\calM$ be the union of all $O(h)$ junction triangles, open hourglasses, and funnels.
We call $\calM$ the {\em ocean}, which is a subset of $\calP$. Since the sides
of open hourglasses and funnels are all convex, the boundary
$\partial\calM$ of $\calM$ consists of $O(h)$ convex chains with
a total of $O(n)$ vertices.

The space of $\calP$ not in $\calM$, i.e., $\calP\setminus\calM$, consists
of two types of regions: {\em bays} and {\em canals}, defined as
follows. Consider the hourglass $H_C$ of a corridor $C$.

We first discuss the case where $H_C$ is open (see Fig.~\ref{fig:corridor}). The
boundary of $H_C$ has two sides. Let $c$ and $d$ be any two consecutive
vertices on one side of $H_C$ such that $\overline{cd}$ is
not an obstacle edge (see the left figure in
Fig.~\ref{fig:corridor}). Both $c$ and $d$ must be on the same
side of the corridor $C$.
The region enclosed by $\overline{cd}$ and the side of $C$ between $c$ and
$d$ is called a {\em bay}.
We call $\overline{cd}$ the {\em gate} of the bay, which is
a common edge of the bay and $\calM$.

If the hourglass $H_C$ is closed, let $x$ and $y$ be the two apices
of its two funnels. Consider two consecutive vertices $c$ and $d$ on
a side of a funnel such that $\overline{cd}$ is not an obstacle
edge. If $c$ and $d$ are on the same side of the corridor $C$, then
$\overline{cd}$ also defines a bay. Otherwise, one of $c$ and $d$ must
be a funnel apex, say, $c=x$, and we call $\overline{xd}$ a {\em
canal gate}
(see Fig.~\ref{fig:corridor}). Similarly, there is also a canal gate
at the other funnel apex $y$, say $\overline{yz}$. The
region of $C$ bounded by the two canal gates
$\overline{xd}$ and $\overline{yz}$ that contains the corridor path
is the {\it canal} of $H_C$.

Each bay or canal is a simple polygon.  While the
total number of all bays is $O(n)$, the total number of all canals is
$O(h)$ since
the number of corridors is $O(h)$. The two obstacle vertices of each bay/canal gate
are called {\em gate vertices}.

%

\subsection{Defining the Window Set $\calS(q)$}
\label{sec:defS}

We consider the source point $s$ as an obstacle and build the extended corridor
structure. This means that $s$ is on the boundary of the ocean $\calM$
and thus is not in any bay or canal.

Consider any query point $q$.
For any bay, if $q$ is not in the bay, since the bay has only one gate,
$q$ cannot see any point outside the bay ``through" its gate.
Although a canal has two gates,
the next lemma, proved in
\cite{ref:ChenCo17}, gives an important property that if $q$ is
outside a canal, then $q$ cannot see any point outside the canal
through the canal (and its two gates).

\begin{lemma}\label{lem:opaque} {\em \cite{ref:ChenCo17}}
\emph{(The Opaque Property)}
For any canal, for any line segment $\overline{pq}$ in
$\calP$ (i.e., $p$ is visible to $q$) such that neither $p$ nor $q$ is in the canal.
Then $\overline{pq}$ cannot contain any point of the canal that is not
on its two gates.
\end{lemma}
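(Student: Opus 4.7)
The plan is to argue by contradiction. Suppose some canal contains a point $r$ of $\overline{pq}$ that is not on either of its two gates. Since $\overline{pq}\subseteq\calP$ meets no obstacle edge, each maximal sub-segment of $\overline{pq}$ inside the canal has its two endpoints on the canal's two gates $\overline{xd}$ and $\overline{yz}$ (the remainder of the canal boundary being obstacle edges). Let $\overline{p'q'}$ be such a sub-segment containing $r$. If $p'$ and $q'$ lay on the same gate, $\overline{p'q'}$ would lie on the supporting line of that gate and could not enter the canal's interior, so I may assume $p'\in\overline{xd}$ and $q'\in\overline{yz}$. Then $\overline{p'q'}$ is a straight segment in $\calP$ and is therefore the unique shortest path in $\calP$ from $p'$ to $q'$.

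The key step is then to derive a contradiction using the fact that the hourglass $H_C$ producing the canal is closed, which by definition means $\pi(a,b)\cap\pi(e,f)$ contains a nontrivial corridor path $\pi_C$ joining the two funnel apices $x$ and $y$. I would argue that every path in $\calP$ from a point of $\overline{xd}$ to a point of $\overline{yz}$ must either contain $\pi_C$ as a sub-path or bend around the interior vertices of $\pi_C$; in particular, no such path can be a straight segment. Applied to $\overline{p'q'}$, this contradicts $\overline{pq}\subseteq\calP$ and completes the argument.

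The hard part is formalizing that $\pi_C$ genuinely blocks every straight crossing of the canal. My plan is to use that the interior vertices of $\pi_C$ are reflex obstacle vertices alternating between the two sides of the corridor: each such vertex $v$ carries an incident obstacle edge that, together with the rest of $\partial\calP$ inside the canal, locally separates $v$'s side of the corridor from the other side. A hypothetical straight $\overline{p'q'}\subseteq\calP$ crossing from $\overline{xd}$ to $\overline{yz}$ must pass on one specific side of every interior vertex of $\pi_C$; a convexity argument using that the funnel sides at $x$ and at $y$ are convex chains, combined with the alternation of the reflex vertices on $\pi_C$, then shows that $\overline{p'q'}$ must cross one of these obstacle edges. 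That is the desired contradiction. Depending on how painful this case analysis becomes, an alternative route is to observe directly that the shortest path in the (simply connected) canal between any point of $\overline{xd}$ and any point of $\overline{yz}$ is obtained by the standard funnel/shortest-path construction and inherits the non-degenerate bends of $\pi_C$, so it is strictly longer than any straight segment would be; since $\overline{p'q'}$ is both straight and contained in the canal, no such straight segment can exist.
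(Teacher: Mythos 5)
This lemma is stated with a citation to \cite{ref:ChenCo17}; the paper does not prove it, so there is no internal proof to compare against. Nonetheless, your proposal has genuine gaps as a reproof.

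Your reduction step (a maximal sub-segment of $\overline{pq}$ inside the canal must have its endpoints on the two gates, one on each) is essentially fine. The problem is what comes after. Both of your contradiction arguments hinge on $\pi_C$ having interior reflex vertices that ``physically block'' a straight crossing of the canal: the primary argument talks about interior vertices of $\pi_C$ ``alternating between the two sides of the corridor,'' and the backup argument says the shortest path between gates ``inherits the non-degenerate bends of $\pi_C$.'' But $\pi_C=\pi(a,b)\cap\pi(e,f)$ can perfectly well be a single edge $\overline{xy}$ with no interior vertices at all --- this happens whenever the two shortest paths of the hourglass first meet at $x$ and last separate at $y$ with $\overline{xy}$ an edge of both. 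In that case there are no interior vertices to alternate and no bends to inherit, and both of your arguments produce nothing. The opaque property still holds there, but for a different reason: what blocks the crossing is not a bend of $\pi_C$ but the local geometry at the funnel apex --- the gate $\overline{xd}$, the obstacle edges incident to $x$ and $d$, and the tangency of $\pi_C$ at the apex force the extension of $\overline{xy}$ past $x$ to run along the gate (or straight into the canal or the obstacle), never into free space on the far side of the gate. A correct proof must argue at the level of the gates and funnel apices, not merely at the level of $\pi_C$'s bends.

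Two further, smaller issues. First, the asserted alternation of $\pi_C$'s interior reflex vertices between the two corridor sides is not a general fact about shortest paths in a corridor and you give no reason it must hold here; your case analysis leans on it without justification. Second, the backup argument is circular as written: saying ``the shortest path from $p'$ to $q'$ in the canal has bends, hence is strictly longer than a straight segment, hence no straight segment exists'' presupposes precisely that no straight segment from $p'$ to $q'$ lies in the canal --- if one did, it \emph{would be} the shortest path and would have no bends. You need an independent geometric reason why the canal's shortest paths between the two gates cannot be straight, and that reason is again the funnel-apex geometry, not the shape of $\pi_C$.
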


Consider any window $w_u=\overline{uq(u)}$ of $q$ defined by $u$, i.e.,
$q(u)$ is the first point on $\partial \calP$ hit by the ray from $u$
along the direction from $q$
to $u$. Clearly, the extended-window $\overline{qq(u)}$ is locally tangent at $u$, i.e., the two
incident obstacle edges to $u$ must be on the same side of the
supporting line of $\overline{qq(u)}$.
In the following, we partition all windows of $q$ into different
types.

Recall that $\partial \calM$ is comprised of $O(h)$ convex chains.
We call $w_u$ an {\em ocean window} if $u$ is a vertex of a
convex chain of $\partial\calM$ such that
$\overline{qq(u)}$ is outer tangent to that convex chain at $u$. Since
$q$ has at most two extended-windows outer tangent to each convex chain, $q$ has
$O(h)$ ocean windows.


Suppose $w_u$ is not an ocean window.
If $\overline{qu}\setminus\{u\}$ does not contain any point in
$\calM$, then $\overline{qu}$ is in a bay/canal $A$.
In this case, we say $w_u$ is an {\em outer-bay/outer-canal}
window defined by $A$ (we use ``outer'' because it is possible that
$w_u=\overline{uq(u)}$ contains points outside $A$); e.g., see Fig.~\ref{fig:outerbay}.

\begin{figure}[t]
\begin{minipage}[t]{\linewidth}
\begin{center}
\includegraphics[totalheight=1.0in]{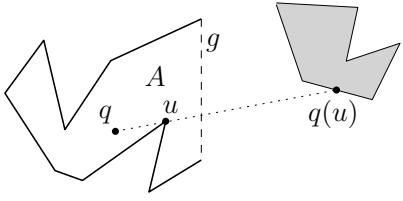}
\caption{\footnotesize
Illustrating an outer-bay window $w_u=\overline{uq(u)}$, where $q$ is in a bay $A$ ($g$ is the gate).} \label{fig:outerbay}
\end{center}
\end{minipage}
\vspace*{-0.15in}
\end{figure}

If $\overline{qu}\setminus\{u\}$ contains a point $q'$ in $\calM$,
then $q'\neq u$. Depending on whether $u$ is on $\partial M$, there
are two cases.

If $u$ is on $\partial\calM$, then $\overline{q'u}$
is in $\calM$ this is because $\overline{q'u}$ cannot traverse through
the interior of a canal due to the opaque property of
Lemma~\ref{lem:opaque}. If we move from $q'$ to $q(u)$ on
$\overline{qq(u)}$, since $w_u$ is not an ocean
window, after we pass $u$, we must move into the
inside of a bay/canal $A$, and further, regardless of whether $A$ is a
bay or a canal, we will never get out of $A$ due to the opaque
property, which implies that $w_u=\overline{uq(u)}$ must be in $A$. In this case, we say
that $w_u$ is an {\em inner-bay/inner-canal window} defined by $A$
(we use ``inner'' because $w_u$ is in $A$).

If $u$ is not on $\partial\calM$, then $u$ is a non-gate vertex of a
bay/canal $A$. This implies that if we move from $q'$  to $u$ on
$\overline{qq(u)}$, we must
cross a gate of $A$. Again, regardless of whether $A$ is a bay or a
canal, $w_u=\overline{uq(u)}$ must be in $A$. In this case, we also
call $w_u$ an {\em inner-bay/inner-canal window} (e.g., see Fig.~\ref{fig:innerbay} and Fig.~\ref{fig:innercanal}).

As a summary, a window $w_u$ may be an ocean window,
an outer-bay/canal window, or an inner-bay/canal window.

A window of $q$ is called a {\em closest window} if it contains a closest point
$q^*$ of $\vis(q)$.

The set $\calS(q)$ is defined as follows. We first add all $O(h)$ ocean windows to $\calS(q)$.
We will show several observations. First, no inner-bay
window can be a closest window. Second, among all
inner-canal windows defined by the same canal,
there are at most two that can be closest windows and we add them to
$\calS(q)$. Since there are $O(h)$ canals, $\calS(q)$ has $O(h)$ inner-canal
windows. Third, among all outer-bay windows, there are at most two
that can be closest windows; we add them to $\calS(q)$. Fourth, among all
outer-canal windows, there are at most four
that can be closest windows; we add them to $\calS(q)$.
This finishes the definition of $\calS(q)$.
In summary, $\calS(q)$ has $O(h)$ ocean windows, $O(h)$ inner-canal windows, at most two outer-bay windows, and at most four outer-canal windows. Thus, the size of $\calS(q)$ is $O(h)$.

For a window $w_u=\overline{uq(u)}$, we assume it is directed from $u$
to $q(u)$ and also assume $\overline{qq(u)}$ is directed from $q$ to $q(u)$.

\begin{observation}\label{obser:closestwin}
Suppose $w_u$ is a closest window, i.e., $q^*\in w_u$. If the two obstacle edges incident
to $u$ are on the left (resp., right) side of $\overline{qq(u)}$, then the
shortest path from $s$ to $q^*$ must be from the left (resp., right)
side of $w_u$.
\end{observation}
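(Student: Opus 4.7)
The plan is to argue by contradiction. Without loss of generality, suppose the two obstacle edges incident to $u$ lie on the left side of $\overline{qq(u)}$, and assume for contradiction that $\pi(s,q^*)$ reaches $q^*$ from the right side of $w_u$; the other case is symmetric.

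The first step is to establish a local picture of $\vis(q)$ near $w_u$: I would show that for any interior point of $w_u$, a sufficiently small neighborhood on the right side of $w_u$ is entirely contained in $\vis(q)$. This follows directly from how the window $\overline{uq(u)}$ arises. Because both obstacle edges at $u$ lie on the left, the obstacle at $u$ occludes visibility from $q$ on the left side of $w_u$ just past $u$, while points immediately to the right of $w_u$ remain visible from $q$ through the ``gap'' at $u$. In particular, $w_u$ is the boundary of $\vis(q)$ along which the visible interior of $\vis(q)$ lies on the right.

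The second step closes the argument using this local picture together with Observation~\ref{obser:basic}-type reasoning. Under the contradiction hypothesis, a positive-length initial portion of $\pi(s,q^*)$ adjacent to $q^*$ lies strictly on the right side of $w_u$. I would choose a point $p$ on this portion inside the right-side neighborhood produced in the first step, so that $p\in\vis(q)$, yet $d(s,p)<d(s,q^*)$ because $p$ lies strictly between $s$ and $q^*$ along $\pi(s,q^*)$. This contradicts the assumption that $q^*$ is a closest point of $\vis(q)$. Therefore $\pi(s,q^*)$ must reach $q^*$ from the left, i.e., from the side where the two obstacle edges at $u$ lie.

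The main (and really only) obstacle is the local characterization of $\vis(q)$ immediately past $u$, which is standard from the definition of windows in visibility polygons. Boundary cases where $q^*$ coincides with an endpoint of $w_u$ are ruled out or become trivial under the general-position assumption on $q$ (not collinear with any two obstacle vertices), so no further work is needed there.
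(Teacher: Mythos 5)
Your proof is correct and follows essentially the same approach as the paper's: both argue by contradiction, pick a point $p$ on $\pi(s,q^*)$ infinitesimally close to $q^*$ on the wrong (right) side, observe that $p\in\vis(q)$ because the obstacle at $u$ lies entirely on the left, and conclude $d(s,p)<d(s,q^*)$ contradicts closeness of $q^*$. The only difference is that you spell out the local characterization of $\vis(q)$ near the window in a bit more detail, which the paper states more tersely.
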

\begin{proof}
As discussed before, $\pi(s,q^*)$ is either from the left or from the
right side of $w_u$. Without loss of generality, we assume that
the two obstacle edges incident to $u$ are
on the left side of $\overline{qq(u)}$.

Assume to the contrary that $\pi(s,q^*)$ is from the right side of
$w_u$. Let $p$ be a point on $\pi(s,q^*)$ infinitely
close to $q^*$ but $p\neq q^*$. Since the two obstacle edges incident
to $u$ are on the left side of $\overline{qq(u)}$, $p$ is visible to
$q$, i.e., $p\in \vis(q)$. Since $d(s,p)<d(s,q^*)$, $q^*$ cannot be a closest
point of $\vis(q)$, a contradiction.
\qed
\end{proof}

\begin{lemma}\label{lem:innerbay}
None of the inner-bay windows is a closest window.
\end{lemma}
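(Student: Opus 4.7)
The plan is to argue by contradiction: suppose $w_u \subset A$ is an inner-bay window (in a bay $A$) and $q^* \in w_u$ is a closest point of $\vis(q)$. Since the extended window $\overline{qq(u)} \subset \calP$, in particular $\overline{qu} \subset \calP$, so $u \in \vis(q)$, and the closest-point property gives $d(s,q^*) \leq d(s,u)$. The goal is then to derive the strict reverse inequality $d(s,u) < d(s,q^*)$, which contradicts the above.

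My strategy is to show that the shortest path $\pi(s,q^*)$ must contain $u$ as a vertex. Once this is established, $\pi(s,q^*) = \pi(s,u) \cup \overline{uq^*}$ forces $d(s,q^*) = d(s,u) + |uq^*| > d(s,u)$, since $q^*$ lies in the open interior of $w_u$ and therefore $|uq^*| > 0$, yielding the desired contradiction.

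To see that $u$ must lie on $\pi(s,q^*)$, observe that since $s$ is placed on $\partial\calM$ and $q^* \in A$, the path $\pi(s,q^*)$ must enter $A$ through the gate $g$ at some point $p_0$, and its portion inside $A$ is a geodesic $\pi_A(p_0,q^*)$ within the simple polygon $A$. By Observation~\ref{obser:closestwin}, this geodesic approaches $q^*$ from the obstacle side of $w_u$, namely the sub-region $A^+ \subseteq A$ bounded by $w_u$ together with the portion of $\partial A$ containing $u$'s two incident obstacle edges; crucially, $A^+$ is ``pinched'' to a point at the reflex vertex $u$. I would then split into two sub-cases according to which side of the extended-window line through $g$ the point $p_0$ lies on. If $p_0$ is on the $A^-$ side of $g$, then $\pi_A(p_0,q^*)$ must cross $w_u$ itself at some interior point $t$ in order to reach the $A^+$ side near $q^*$; but then $t \in w_u \subset \vis(q)$ and $d(s,t) < d(s,q^*)$, directly contradicting that $q^*$ is a closest visible point. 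Otherwise $p_0$ lies on the $A^+$ side, and the geodesic $\pi_A(p_0,q^*)$ is contained in the pinched sub-polygon $A^+$. The hard part will be this second sub-case: one must rigorously justify that the geodesic inside $A^+$ is forced to bend at $u$, which I expect to follow from the fact that $u$ is the unique reflex bottleneck of $A^+$ separating the gate-portion of $\partial A^+$ from the $w_u$-portion containing $q^*$, combined with the tangency of the extended window line at $u$, so that any geodesic routed within $A^+$ from the gate to $q^*$ must turn at $u$.
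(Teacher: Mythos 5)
Your outer contradiction is the right shape, and what you call sub-case~1 \emph{is} the paper's argument; but you have introduced a second sub-case that is in fact vacuous, and it is precisely there that you leave an acknowledged gap. The chord $w_u=\overline{uq(u)}$ has both endpoints on $\partial A$ and its relative interior lies in the interior of $A$, so it partitions $A$ into two sub-polygons and the entire gate $g\subset\partial A$ lies in exactly one of them --- the one you call $A^-$. Consequently $p_0\in g$ is \emph{always} on the $A^-$ side, your premise ``$p_0$ lies on the $A^+$ side, and the geodesic $\pi_A(p_0,q^*)$ is contained in $A^+$'' can never hold, and there is nothing to prove in sub-case~2. (And your top-level goal is overreaching: $\pi(s,q^*)$ need not pass through the vertex $u$ itself --- the geodesic generically crosses $w_u$ at an interior point, not at $u$; the contradiction comes from hitting \emph{any} point of $w_u$, which is visible to $q$ and strictly closer to $s$.) Once you discard sub-case~2, what remains is exactly the paper's proof: $A^+$ (the paper's $A'$) is on the obstacle side of $w_u$ and is gate-free; Observation~\ref{obser:closestwin} forces $\pi(s,q^*)$ to approach $q^*$ from inside $A^+$; but $s\notin A^+$ and $\partial A^+$ consists of $w_u$ plus obstacle edges, so $\pi(s,q^*)$ must cross $w_u$ (or touch $u$ or $q(u)$), yielding a visible point strictly closer to $s$ than $q^*$.
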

\begin{proof}
Suppose $w_u=\overline{uq(u)}$ is an inner-bay window
defined by a bay $A$. By definition, $w_u$ is in $A$.
Assume to the contrary that $w_u$ is a closest window.

Without loss of generality, assume the two obstacle edges of $\calP$
incident to $u$ is on the left side of $\overline{qq(u)}$ (e.g., see
Fig.~\ref{fig:innerbay}). Since both $u$ and $q(u)$ are on the boundary of $A$, $w_u$
partitions $A$ into two sub-polygons and one of them contains the only gate $g$
of $A$. Let $A'$ be the sub-polygon that does not contain $g$. Observe
that $A'$ must be locally on the left side of $w_u$.
By Observation~\ref{obser:closestwin}, since $q^*\in w_u$,
$\pi(s,q^*)$ must be from the left side of $w_u$, implying that
$p$ must be in the interior of $A'$, where $p$ is a point on
$\pi(s,q^*)$ infinitely close to $q^*$. Clearly, $s$ is not in $A'$.
Thus, $\pi(s,p)$ must cross $w_u$, but this is not possible since $q^*$ is on $w_u$.
Thus, $w_u$ cannot be an closest window. 
\qed
\end{proof}

\begin{figure}[t]
\begin{minipage}[t]{0.49\linewidth}
\begin{center}
\includegraphics[totalheight=1.2in]{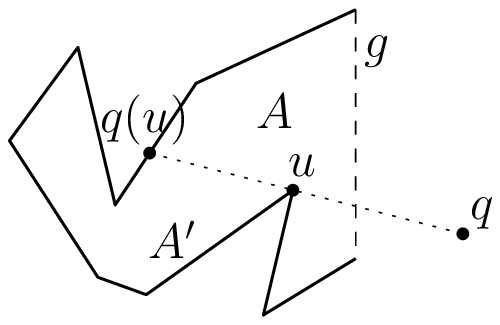}
\caption{\footnotesize
Illustrating an inner-bay window $w_u=\overline{uq(u)}$ in a bay $A$.} \label{fig:innerbay}
\end{center}
\end{minipage}
\hspace{0.02in}
\begin{minipage}[t]{0.49\linewidth}
\begin{center}
\includegraphics[totalheight=1.6in]{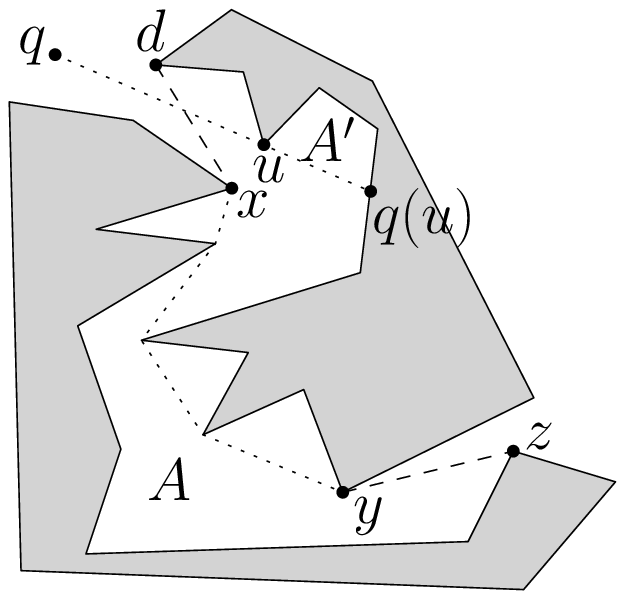}
\caption{\footnotesize
Illustrating an inner-canal window $w_u=\overline{uq(u)}$ defined by a
canal $A$ with two gates $\overline{xd}$ and $\overline{yz}$.}
\label{fig:innercanal}
\end{center}
\end{minipage}
\vspace*{-0.15in}
\end{figure}


%

\begin{lemma}
For any canal $A$ that defines an inner-canal window $w_u$,  if $u$ is
not an endpoint of the corridor path of $A$, then $w_u$ cannot be a
closest window.
\end{lemma}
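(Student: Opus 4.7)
The plan is to argue by contradiction. Suppose $w_u$ is a closest window, so some closest point $q^* \in w_u$. Since $w_u$ lies entirely in $A$ with both endpoints $u$ and $q(u)$ on the obstacle portion of $\partial A$, $w_u$ is a chord of $A$ that partitions $A$ into two sub-polygons. Because $w_u$ crosses neither gate, each sub-polygon contains exactly one of the two canal gates. By Observation~\ref{obser:closestwin}, $\pi(s,q^*)$ approaches $q^*$ from the shadow side of $w_u$—the side containing the two obstacle edges incident to $u$; call this sub-polygon $A'$, and let $g$ be the gate lying inside $A'$.

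Mimicking the proof of Lemma~\ref{lem:innerbay}, I would take a point $p$ on $\pi(s,q^*)$ infinitely close to $q^*$, so that $p$ lies in the interior of $A'$. The shortest path $\pi(s,p)$ must enter $A'$ either by crossing $w_u$—in which case the crossing point lies in $w_u \subseteq \vis(q)$ and is strictly closer to $s$ than $q^*$, contradicting closest-ness—or by entering through $g$. In the latter case, as $p \to q^*$, the last bend $v$ of $\pi(s,q^*)$ before $q^*$ must be a vertex of $\partial A'$. I would split on whether $v = u$: if $v = u$, then $\pi(s,q^*) = \pi(s,u) \cup \overline{u q^*}$, yielding $d(s,u) < d(s,q^*)$ whenever $q^* \neq u$; since $u \in \vis(q)$ by the very definition of $w_u$, this contradicts the choice of $q^*$.

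The remaining subcases are that $v$ is a reflex vertex of $\partial A$ on the shadow side different from $u$, or that $q^* = u$. This is where the hypothesis $u \neq x, y$ is essential. Using the corridor structure, I would argue that a shortest sub-path inside $A$ from the entry point on $g$ to $q^*$ (or to $u$ in the degenerate case) must pass through the funnel apex of $g$---which is either $x$ or $y$---because $u$ itself is not a corridor-path endpoint and so cannot serve as the first ``turning vertex'' for such a path. Denote this apex by $v$; then $v$ lies on $\pi(s,q^*)$ strictly before $q^*$, giving $d(s,v) < d(s,q^*)$. I would then invoke the opaque property (Lemma~\ref{lem:opaque}) together with the fact that $q$'s line of sight already penetrates into $A$ (it reaches $u$ by definition of $w_u$) to show that $v$ is visible from $q$, producing a visible point closer to $s$ than $q^*$---a contradiction.

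The main obstacle is the visibility argument for the funnel apex $v$ from $q$. This relies on a delicate geometric interplay between the canal's gates, the extent of $w_u$ inside $A$, and $q$'s viewing cone through the gate it uses to see $u$. In particular, one must rule out that $v$ is blocked from $q$ by some reflex feature on $\partial A$, using the constraint that $w_u$ is a full inner-canal window (its segment $\overline{u q(u)}$ remains inside $A$) to control the relative positions of $q$, $u$, and $v$ with respect to the gates of $A$.
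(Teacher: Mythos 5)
Your proof contains a critical geometric error at the outset. You assert that because $w_u$ crosses neither gate, ``each sub-polygon contains exactly one of the two canal gates.'' This is false, and in fact the opposite holds, which is exactly what makes the lemma true. The paper invokes a structural result from \cite{ref:ChenCo17} (proof of Lemma~3 there): when $u$ is not a corridor-path endpoint, \emph{both} $u$ and $q(u)$ lie on the \emph{same} side of the corridor $C$ (the side containing $d$, not the side with the funnel apex $x$). Consequently $w_u$ is a chord both of whose endpoints are on a single boundary portion of $A$ between the two gates; it cuts off a ``pocket'' $A'$ that contains \emph{neither} gate, while the other sub-polygon contains \emph{both}. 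With this in hand, the argument collapses to the inner-bay case (Lemma~\ref{lem:innerbay}): any path from $s$ to a point $q^*\in w_u^l$ would have to enter the gate-free pocket $A'$ by crossing $w_u$, a contradiction.

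Because your premise is reversed, the rest of your argument is trying to solve a much harder problem than actually exists: if $w_u$ genuinely separated the two gates, then the shortest path could plausibly enter $A'$ through the gate in $A'$ without ever touching $w_u$, and you would indeed need a delicate argument. The one you sketch --- that the path must pass through a funnel apex and that this apex is then visible from $q$ --- is not justified, and you acknowledge as much in your final paragraph (``This is where the hypothesis $u\neq x,y$ is essential\ldots The main obstacle is the visibility argument\ldots''). The hypothesis $u\neq x,y$ is used in the paper not to analyze routes through a gate, but precisely to guarantee (via \cite{ref:ChenCo17}) that no such route exists because both gates are sealed off from $A'$. The missing ingredient in your attempt is that structural lemma about the corridor sides; without it, you cannot get the reduction to the inner-bay situation, and the argument you substituted in its place does not close.
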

\begin{proof}
Since $w_u$ is an inner-canal window defined by $A$, $w_u$ must be in $A$
and both $u$ and $q(u)$ are on the boundary of $A$. Further,
$\overline{qu(q)}$ has a point $q'\in \calM$ and $\overline{q'u}$
crosses a gate $g$ of $A$. Let $g=\overline{xd}$ such that $x$ is
the endpoint of the corridor path of $A$ on $g$ (e.g., see
Fig.~\ref{fig:innercanal}). Let $C$ be the corridor that defines the canal $A$.

Assume without loss of generality that the two obstacle edges of
$\calP$ incident to $u$ are on the left side of $\overline{qq(u)}$.
Since $u$ is not $x$,
according to the results in \cite{ref:ChenCo17} (see the proof of
Lemma 3) that $u$ and $q(u)$ must be on the same side of $C$ that
contains $d$ (e.g., see Fig.~\ref{fig:innercanal}).
This implies that $w_u$ partitions $A$ into two sub-polygons one of
which contains both gates of $A$, and let $A'$ be the sub-polygon that
does not contain the gates. Then, as in the proof of
Lemma~\ref{lem:innerbay}, $A'$ must be locally on the left side of
$w_u$, and by the similar analysis we can show that $w_u$ cannot be a closest window.
\qed
\end{proof}

Since each canal has one corridor path,
the preceding lemma implies that every canal can define at most two
inner-canal windows that are possibly closest windows.

Consider a bay $A$ with gate $g$ that defines an outer-bay window $w_u$.
By definition, $\overline{qu}$ is in $A$. Let $u_1$ be the vertex of
$A$ such that $\overline{qu_1}$ is in the shortest path  in $A$ from $q$
to an endpoint of $g$; similarly, define $u_2$
with respect to the other endpoint of $g$.

%

\begin{lemma}
If $w_u$ is an outer-bay window defined by $A$ and $u$ is neither $u_1$
nor $u_2$, then $w_u$ cannot be a closest window.
\end{lemma}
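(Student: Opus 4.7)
The plan is to mimic the contradiction argument used in Lemma~\ref{lem:innerbay}. I would begin by assuming $w_u=\overline{uq(u)}$ is a closest window with closest point $q^*\in w_u$, and without loss of generality that the two obstacle edges of $\calP$ incident to $u$ lie on the left side of $\overline{qq(u)}$; by Observation~\ref{obser:closestwin}, the shortest path $\pi(s,q^*)$ must then approach $q^*$ from the left side of $w_u$. My goal is to construct a closed region $D\subseteq A$ that lies locally on the left of $w_u$ near $q^*$ but whose boundary avoids the gate $g$ of $A$. Since $s\notin A$, any path from $s$ reaching $D$ must enter $A$ through $g$, forcing $\pi(s,q^*)$ to cross $\partial D$ along a forbidden part.

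The region $D$ is built using the hypothesis $u\neq u_1,u_2$: after possibly swapping $(u_1,g_1)$ and $(u_2,g_2)$, I may assume that the first segment $\overline{qu_1}$ of the shortest path $\pi(q,g_1)$ inside $A$ lies on the left side of $\overline{qq(u)}$. Since $u\neq u_1$, the segments $\overline{qu}$ and $\overline{qu_1}$ are distinct, and by the tree structure of shortest paths from $q$ together with the convex funnel structure of $\pi(q,g_1)$ inside $A$, the path $\pi(q,g_1)$ stays strictly on the left of the line through $q$ and $u$ (beyond $q$) and, in particular, does not cross $w_u$. I would take $\partial D$ to consist of (i)~$\overline{qu}$, (ii)~the portion of $\overline{qq(u)}$ from $u$ up to the first point where it meets $\partial A$ (either $q(u)$ itself if $w_u$ stays in $A$, or the crossing point with $g$ otherwise), (iii)~a sub-arc of $\partial A\cap\partial\calP$ from that endpoint along the left around to $g_1$, and (iv)~the reverse of $\pi(q,g_1)$ from $g_1$ back to $q$. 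By construction $g\not\subseteq\partial D$, and $D$ is locally on the left of $w_u$ near $q^*$. A crossing of $\pi(s,q^*)$ through (i) or (ii) exhibits a point of $\vis(q)$ closer to $s$ than $q^*$, a contradiction; a crossing through (iii) is impossible since that sub-arc lies on $\partial\calP$.

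The main obstacle will be ruling out case (iv): the possibility that $\pi(s,q^*)$ crosses $\pi(q,g_1)$ at some interior point $p'$. The difficulty is that vertices of $\pi(q,g_1)$ past $u_1$ are not in general visible from $q$, so one cannot immediately exhibit a point of $\vis(q)$ closer to $s$ directly at the crossing. I expect to resolve this by exploiting the convex funnel structure of $\pi(q,g_1)$ together with the tangency at $u_1$: a crossing at $p'$ would let me shortcut the tail of $\pi(s,q^*)$ through the visible apex $u_1$, yielding a strictly shorter route from $s$ to a point of $\vis(q)$, again a contradiction. A small additional subcase arises when $q^*$ lies on the portion of $w_u$ outside $A$ (in $\calM$); this is handled by a parallel argument in which the gate $g$ still separates the relevant left-side region from $s$.
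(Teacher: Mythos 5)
Your proof misses the one structural fact that drives the paper's argument: when $u\notin\{u_1,u_2\}$, the window $w_u=\overline{uq(u)}$ lies \emph{entirely inside} $A$, i.e.\ $q(u)\in\partial A\setminus\{g\}$. That is exactly what the hypothesis $u\neq u_1,u_2$ buys you, because the ray from $q$ exits $A$ through the gate only if it is an extreme ray of the funnel from $q$ to $g$, and the only funnel vertices visible from $q$ are $u_1$ and $u_2$. Once this is in hand, you take $A'$ to be the sub-polygon of $A\setminus w_u$ not containing $g$: its boundary consists only of $w_u$ and obstacle edges of $\partial A$ (no gate, no shortest path), Observation~\ref{obser:closestwin} places $\pi(s,q^*)$ on the $A'$ side, and since $s\notin A\supseteq A'$ the path must cross $\partial A'$, which is impossible, exactly as in Lemma~\ref{lem:innerbay}. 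Your problematic boundary piece (iv) simply never appears, and the subcase ``$q^*$ lies outside $A$ in $\calM$'' is vacuous because $w_u\subset A$.

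Without that observation your region $D$ does not close. Two concrete gaps. First, the opening WLOG --- that after swapping, $\overline{qu_1}$ lies on the left of $\overline{qq(u)}$ --- can fail: $u_1$ and $u_2$ only bracket the cone through which $q$ sees $g$, and nothing forces this cone to straddle the supporting line of $w_u$. For instance, with the obstacle edges at $u$ on the left, the gate cone can lie entirely on the right of $\overline{qq(u)}$, so both $\overline{qu_1}$ and $\overline{qu_2}$ are on the right and $D$ cannot be built as described, even though the lemma (and the $A'$ argument) still hold in that configuration. Second, even when the WLOG holds, you flag the crossing of $\pi(q,g_1)$ by $\pi(s,q^*)$ as the main obstacle and offer only a hope that a shortcut through $u_1$ yields a shorter route to a visible point; that is not an argument. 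The crossing point is generally not in $\vis(q)$, and the proposed shortcut does not evidently produce a point of $\vis(q)$ with smaller distance from $s$ than $q^*$. The paper's choice of $A'$ is precisely what removes the shortest path from the boundary and avoids having to confront this case at all.
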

\begin{proof}
By the definitions of $u_1$ and $u_2$, since $A$ is a simple polygon and $u$ is neither $u_1$
nor $u_2$, $q(u)$ must be in $\partial A\setminus\{g\}$. Hence, the
window $w_u$ partitions $A$ into two sub-polygons and one of them
contains $g$. Let $A'$ be the sub-polygon that does not contain $g$.
Then, by using the same analysis as in Lemma~\ref{lem:innerbay}, $w_u$ cannot be a closest
window.
\qed
\end{proof}


Consider a canal $A$ that defines an outer-canal window $w_u$. This case
is similar to the above bay case except that we need to
consider both gates of $A$. Again, $\overline{qu}$ is in
$A$. Define $u_1$, $u_2$, $u_3$, and $u_4$ similarly as in the bay
case but with respect to the four gate vertices of $A$,
respectively.

\begin{lemma}
If $w_u$ is an outer-bay window defined by $A$ and $u$ is not in
$\{u_1,u_2,u_3,u_4\}$, then $w_u$ cannot be a closest window.
\end{lemma}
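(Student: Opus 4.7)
The proof should mirror the outer-bay case from the preceding lemma, with the two gates $g_1,g_2$ of the canal $A$ handled symmetrically. The overall strategy is (i) show the window $w_u$ lies entirely inside $A$, (ii) show $w_u$ partitions $A$ into two sub-polygons one of which contains no gate, and then (iii) finish with the same contradiction used in Lemma~\ref{lem:innerbay}.

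For (i), I would establish that $q(u)\in \partial A\setminus(g_1\cup g_2)$. The vertices $u_1,u_2$ are precisely the two reflex window vertices of $q$ in $A$ whose ray extensions from $q$ exit through $g_1$: they are the ``tangent'' vertices bounding the visibility cone of $q$ through $g_1$, since $\overline{qu_i}$ is the first edge of the shortest path in $A$ to a vertex of $g_1$. Any reflex window vertex $u$ whose line $\overline{qu}$ extended crosses $g_1$ must lie on the boundary of this cone and therefore coincide with $u_1$ or $u_2$ (because the window property forces both obstacle edges incident to $u$ to lie on the same side of $\overline{qu}$, which pins the tangent direction). The analogous statement for $g_2$ would give $u\in\{u_3,u_4\}$. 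Since $u$ is in neither set, the ray from $q$ through $u$ must hit the non-gate portion of $\partial A$, so $w_u\subseteq A$.

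For (ii), because $u,q(u)\in\partial A$ and $w_u\subseteq A$, the window $w_u$ partitions $A$ into two sub-polygons; let $A'$ be the one lying locally on the side of the two obstacle edges incident to $u$. If $A'$ contained some gate $g_i$, then perturbing slightly past $u$ along the supporting line of $\overline{qu}$ into $A'$ would let $q$ see through $g_i$, implying the ray from $q$ through $u$ itself crosses $g_i$, contradicting (i). Thus $A'$ contains neither gate. For (iii), by Observation~\ref{obser:closestwin}, if $q^*\in w_u$ then $\pi(s,q^*)$ must approach $q^*$ from the $A'$-side, so a point $p$ on $\pi(s,q^*)$ infinitesimally close to $q^*$ lies in the interior of $A'$. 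Since $s\in\calM$ and $A'$ is separated from $\calM$ by $w_u$ together with the gate-free portion of $\partial A$, we have $s\notin A'$, so $\pi(s,p)$ would have to cross $w_u$, which is impossible because $q^*$ is already a closest point of $\vis(q)$ on $w_u$. The main obstacle is step (i): carefully arguing that only the four tangent vertices can produce windows crossing a gate, using the window property of $u$ together with the convexity of the funnels from $q$ to the gate vertices of $A$; once this is in hand, steps (ii) and (iii) are essentially identical to the inner-bay and outer-bay arguments already in the paper.
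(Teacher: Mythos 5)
Your proposal follows essentially the same approach as the paper's proof: establish that $q(u)\in\partial A$ away from the gates (so $w_u\subseteq A$), argue that $w_u$ splits $A$ so that both gates fall in the sub-polygon containing $q$, and then invoke the Lemma~\ref{lem:innerbay} contradiction via Observation~\ref{obser:closestwin}. You fill in more detail on the tangent-cone argument for step (i) and offer a perturbation argument for step (ii) where the paper simply writes ``it can be verified that,'' but the structure and logic are the same.
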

\begin{proof}
By the definitions of $u_i$ for $1\leq i\leq 4$, since $A$ is a simple
polygon and $u\not\in\{u_1,u_2,u_3,u_4\}$, $q(u)$ must be in $\partial
A$ and $q(u)$ is not on a gate of $A$. Further, it can be verified
that the window $w_u$ partitions $A$ into two
sub-polygons and one of them contains both gates of $A$. Let $A'$ be
the sub-polygon that does not contain the gates of $A$. Then, by using
the same analysis as in Lemma~\ref{lem:innerbay}, $w_u$ cannot be a closest window.
\qed
\end{proof}

The above discussions lead to the following lemma.

\begin{lemma}
Given any query point $q$, there is a set $\calS(q)$ of windows of $q$
such that $|\calS(q)|=O(h)$ and $\calS(q)$ contains a closest window.
\end{lemma}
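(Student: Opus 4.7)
The plan is essentially a packaging step: the four preceding lemmas have already done all the geometric work, so I only need to assemble them and count.

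First, I would recall from the window classification in Section~\ref{sec:defS} that every window $w_u$ of $q$ falls into exactly one of four disjoint types: ocean, inner-bay, inner-canal, outer-bay, or outer-canal. Since every closest point $q^*$ lies on some window, at least one closest window exists, and it must belong to one of these types.

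Next I would bound the contribution of each type to $\calS(q)$. The ocean windows correspond to outer tangents from $q$ to the convex chains comprising $\partial\calM$; since there are $O(h)$ such convex chains and at most two outer tangents from $q$ per chain, the ocean contribution is $O(h)$. By Lemma~\ref{lem:innerbay}, no inner-bay window can be closest, so this type contributes zero. For inner-canal windows, the lemma immediately following Lemma~\ref{lem:innerbay} shows that only those defined by a vertex $u$ which is an endpoint of the canal's corridor path can be closest, giving at most two per canal; since the extended corridor structure has $O(h)$ canals, the inner-canal contribution is $O(h)$.

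For outer-bay and outer-canal windows, the key additional observation is that their definition forces $\overline{qu}\setminus\{u\}\subseteq A$ for the bay/canal $A$ in question, so $q$ itself lies in $A$. Because bays and canals are pairwise disjoint regions of $\calP\setminus\calM$, $q$ lies in at most one bay or one canal. Hence the outer-bay windows are relevant only for the (unique) bay containing $q$ — contributing at most two candidates via $u_1,u_2$ — and the outer-canal windows only for the (unique) canal containing $q$ — contributing at most four candidates via $u_1,\ldots,u_4$. Summing gives $|\calS(q)| = O(h) + 0 + O(h) + O(1) + O(1) = O(h)$, and by construction $\calS(q)$ retains every potentially closest window of every type, so it contains a closest window.

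The main (minor) obstacle is simply being careful about the fact that outer-bay and outer-canal windows contribute only $O(1)$ rather than $O(h)$; this relies on the disjointness of the bay/canal regions and on $q$ being a single point, which keeps the total size at $O(h)$ rather than growing with the number of bays. All other pieces follow directly from the preceding lemmas.
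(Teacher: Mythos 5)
Your proposal is correct and takes essentially the same route as the paper, which derives this lemma directly from the preceding classification (ocean, inner-bay, inner-canal, outer-bay, outer-canal) and the per-type bounds without any further argument. You add a useful explicit justification that the paper leaves implicit — that outer-bay and outer-canal windows can only arise from the single bay or canal containing $q$, which is why those types contribute $O(1)$ rather than $O(h)$ — and the only slip is cosmetic: you say ``four disjoint types'' but correctly list and handle five.
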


\subsection{Computing the Window Set $\calS(q)$}
\label{sec:comS}

In this section we present our algorithm for computing $\calS(q)$,
by modifying the query algorithm in \cite{ref:ChenVi15} for computing
$\vis(q)$. Our result is summarized in the following lemma.

\begin{lemma}\label{lem:comset}
With $O(n+h^2\log h)$ time and $O(n+h^2)$ space preprocessing, given
any query point $q$ in $\calP$, we can compute the set $\calS(q)$ in $O(h\log n)$ time.
\end{lemma}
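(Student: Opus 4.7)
The plan is to modify the visibility polygon query algorithm of Chen and Wang~\cite{ref:ChenVi15} so that, instead of reporting every visible feature (which costs $O(K\log n)$), it reports only the $O(h)$ elements of $\calS(q)$. Recall that $\calS(q)$ consists of four groups: $O(h)$ ocean windows (outer tangents from $q$ to the convex chains of $\partial\calM$), $O(h)$ inner-canal windows (two per canal, at the corridor-path endpoints), at most two outer-bay windows (if $q$ lies in a bay), and at most four outer-canal windows (if $q$ lies in a canal). I would handle each group separately.

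First, with an $O(\log n)$-time point-location query on the extended corridor decomposition I determine whether $q$ lies in $\calM$, in a bay $A$, or in a canal $A$. If $q$ lies in a bay/canal $A$, then since $A$ is a simple polygon that has been preprocessed in linear time for $O(\log n)$-time shortest-path queries, I can locate in $O(\log n)$ time the obstacle vertex $u_i$ on the shortest path in $A$ from $q$ to each gate endpoint, and then ray-shoot in $\calP$ along the direction from $q$ through $u_i$ to find $q(u_i)$; this gives the at most four outer-bay/outer-canal windows in $O(\log n)$ total time. Inner-canal windows are handled analogously: for each of the $O(h)$ corridor-path endpoints $x$, I would check visibility of $x$ from $q$ together with the local tangency condition in $O(\log n)$ time, and if both hold, ray-shoot to obtain the window; the total cost is $O(h\log n)$.

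The main obstacle is computing the ocean windows. Each ocean window is an outer tangent from $q$ to one of the $O(h)$ convex chains of $\partial\calM$, and each chain admits at most two outer tangents from $q$; after $O(n)$-total preprocessing of the chains (they have $O(n)$ vertices altogether), each tangent is located in $O(\log n)$ time by standard binary search, producing $O(h)$ candidate tangent vertices in $O(h\log n)$ time. The delicate part is that some candidates may be occluded by other convex chains or by bay/canal gates, so we must decide visibility. For this I would reuse the angular sweep around $q$ that underlies the algorithm of \cite{ref:ChenVi15}, but restricted to the $O(h)$ candidate tangents and the $O(h)$ convex chains and gates, rather than to the $K$ features on $\partial\vis(q)$. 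The sweep processes one chain (or gate) at a time, using $O(\log n)$-time search and ray-shooting primitives supplied by the preprocessed structure of \cite{ref:ChenVi15} to advance the ``visibility frontier'' and to test each candidate; since only $O(h)$ features are visited, the sweep finishes in $O(h\log n)$ time. Essentially, the modification replaces the per-edge work of \cite{ref:ChenVi15} by per-chain work and retains correctness because an occluded candidate is discarded exactly when the frontier advancement overtakes it, as already argued in that paper.

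For preprocessing I would combine (i) the $O(n+h^2\log h)$-time, $O(n+h^2)$-space structure of \cite{ref:ChenVi15}, (ii) the extended corridor decomposition built in $O(n)$ time and equipped with point location, (iii) $O(n)$-total linear preprocessing for $O(\log n)$-time tangent queries on each convex chain of $\partial\calM$, (iv) $O(n)$-total linear preprocessing of every bay and canal for $O(\log n)$-time shortest-path queries, and (v) an in-polygon ray-shooting structure for $\calP$. All of this fits within the stated $O(n+h^2\log h)$ preprocessing time and $O(n+h^2)$ space, yielding the claimed $O(h\log n)$ query bound. The hard part of the proof will be formalising the modified sweep of step three and verifying that skipping over occluded or non-tangent vertices does not miss any member of $\calS(q)$.
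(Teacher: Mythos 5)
Your high-level decomposition (ocean / bay / canal cases; finding the gate-path vertices $u_i$ by shortest-path queries inside a simple polygon; computing ocean windows as outer tangents to the convex chains of $\partial\calM$) matches the paper's plan. However, two implementation assumptions that your proposal leans on are not actually available under the stated preprocessing budget, and this is where the real work of the lemma lies.

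First, you invoke ``ray-shoot in $\calP$ along the direction from $q$ through $u_i$'' as an $O(\log n)$ primitive, and similarly ``check visibility of $x$ from $q$ \dots in $O(\log n)$ time'' for the corridor-path endpoints. In a polygonal domain with $h$ holes, neither a global ray-shooting query nor a point-to-point visibility test runs in $O(\log n)$ time after only $O(n+h^2)$-space preprocessing. The paper never performs these operations globally: it always decomposes them into (a) a walk through $\calM$ using the crossing-faces method on the visibility complex (each crossing costs $O(\log n)$, and there are $O(h)$ of them), followed by (b) an $O(\log n)$-time ray-shooting query inside the destination bay or canal, which is a preprocessed simple polygon. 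In particular, the inner-canal windows are \emph{not} found by testing each corridor-path endpoint in isolation; the paper discovers them as a byproduct of the sweep: for each maximal visible sub-chain $\xi$ of $\partial\calM$ returned by the crossing-faces walk, it queries a balanced BST (built per convex chain in preprocessing, storing the canal-gate portions on that chain) to report in $O(k+\log n)$ time the canal-gate portions inside $\xi$, then checks whether the corridor-path endpoint $x$ lies on such a portion, and only then ray-shoots inside that canal. Your per-$x$ visibility check would need an oracle that does not exist here.

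Second, your ``angular sweep around $q$ restricted to the $O(h)$ candidate tangents'' is essentially a re-derivation of what the crossing-faces algorithm on the visibility complex already provides, but you first compute the tangents to each chain independently and then filter by occlusion. This two-phase approach is not obviously $O(h\log n)$: a single convex chain of $\partial\calM$ can have $\Theta(n)$ vertices, and deciding which of the $O(h)$ candidate tangents survives occlusion requires exactly the machinery (the $O(h^2)$-size arrangement of bitangents encoded in the visibility complex) that [ChenVi15] builds. The paper sidesteps these issues by directly traversing the curve $\gamma$ in the visibility complex, which produces each tangent event in $O(\log n)$ time already pruned for visibility. Your sketch acknowledges reusing that machinery, but as written it assumes the candidates are already vetted and leaves the crucial filtering step unspecified. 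To make your argument rigorous you would, in effect, need to replace the ``first enumerate all tangents, then sweep'' plan with the single-pass crossing-faces traversal, at which point you converge to the paper's proof.
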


We first do the same preprocessing as in \cite{ref:ChenVi15},  which takes
$O(n+h^2\log h)$ time and $O(n+h^2)$ space.
In the following, we give our query algorithm for computing $\calS(q)$.
Depending on whether $q$ is in the ocean $\calM$, a bay, or a canal, there are three cases. In each case, we will first briefly review the algorithm in
\cite{ref:ChenVi15} for computing $\vis(q)$ and then modify it to compute
$\calS(q)$.

\subsubsection{The Ocean Case}
Suppose $q$ is in $\calM$. The algorithm in \cite{ref:ChenVi15} first computes the
region of $\calM$ that is visible to $q$, denoted by $\vis(q,\calM)$,
which is also the visibility polygon of $q$ in $\calM$ due to the
opaque property of canals.
Then, the algorithm computes the region in all bays and canals visible to
$q$. To this end, it traverses on the boundary of $\vis(q,\calM)$. If
a gate $g$ of a bay/canal $A$ is encountered, then the region
of $A$ visible to $q$ through $e$ is computed, where
$e$ is a maximal portion of $g$ on the boundary of $\vis(q,\calM)$.
The visible regions computed above for all such $e$'s are pairwise
disjoint. Hence, $\vis(q)$ is a trivial union of $\vis(q,\calM)$ and the visible
regions in all bays and canals.

We modify the above algorithm to compute $\calS(q)$, as follows.

The algorithm in  \cite{ref:ChenVi15} computes $\vis(q,\calM)$ by
using the visibility complex~\cite{ref:PocchiolaTo96,ref:PocchiolaTh96}. More
specifically, it uses the approach of crossing faces~\cite{ref:PocchiolaTh96}
such that all rays originating from
$q$ in the plane define a curve $\gamma$ in the visibility complex and each
intersection of $\gamma$ and the boundary of a cell of the visibility complex
corresponds to an outer tangent in $\calM$ from $q$ to a convex chain of
$\partial\calM$. Note that such tangents correspond exactly to our ocean windows.
If we traverse the curve $\gamma$ in the visibility complex, each such intersection can be
computed in $O(\log n)$ time. Hence, if there are $h'$ convex chains
of $\partial \calM$ that are visible to $q$, then the endpoints of the
maximal sub-chains $\xi$ of these convex chains that are visible to $q$ can be computed
in $O(h'\log n)$ time by using the approach of crossing faces.
Note that $h'=O(h)$~\cite{ref:ChenVi15}. After this, all ocean windows are computed.

\paragraph{Remark.} Traversing each such sub-chain $\xi$ can explicitly
construct $\vis(q,\calM)$. But for our problem of computing $\calS(q)$, we
can avoid this step; indeed, this is part of the reason our algorithm
avoids the $\Omega(K)$ time.
\paragraph{}

Next, we compute other windows of $\calS(q)$.
Since $q$ is in $\calM$, $\calS(q)$ does not have outer-bay/outer-canal
windows, and we only need to compute the inner-canal windows, as
follows.

The above has computed  the endpoints of  each such sub-chain $\xi$
that is visible to $q$. If $\xi$ does not contain any portion of
any canal gate, then we simply ignore $\xi$. Otherwise, we need to
compute the inner-canal windows through $g$ for each canal gate $g$
that has a portion in $\xi$. To this end, we need to first
find these canal gates. For this, in the preprocessing step, for each
convex chain $C$ of $\calM$, we maintain a list of canal gates
on $C$ by a balanced binary search tree
such that given the two endpoints $a$ and $b$ of $\xi$, we can
determine whether $\xi$ contains any portion of any canal gate in
$O(\log n)$ time, and if yes, report all these portions in $O(k+\log
n)$ time, where $k$ is the number of these portions. The number of such $k$ in the
entire algorithm is $O(h)$ since the total number of canal gates is
$O(h)$. For each such canal gate portion $e$, we compute the
corresponding inner-canal window (if any)  as follows.

Let $g$ be the canal gate containing $e$ and let $A$ be the canal.
Let $x$ be the endpoint of the corridor path of $A$ at $g$.
If $x$ is not on $e$, then we ignore $e$. Otherwise, $x$ is visible to
$q$ and $x$ defines an inner-canal window $w_u$ with $u=x$. Our goal
is to compute $q(u)$. This can be easily done by using a ray-shooting query in
$A$ as follows.
Consider the ray originating from $x$ with direction from $q$ to $x$.
Using a ray-shooting query on $A$, we find the first point
$p$ on the boundary of $A$ that is hit by the ray. Again, due to the
opaque property of canals, $p$ must be on an obstacle edge of $\calP$, and thus
$q(u)=p$. For answering each ray-shooting query in $A$ in $O(\log n)$
time, we need to
preprocess each canal for ray-shooting queries in linear time since a canal is a
simple polygon, and this requires $O(n)$ time in total for all canals.

Since the number of all visible sub-chains is $O(h)$,
we can compute all inner-canal windows in $O(h\log n)$ time.

In summary, we can compute the set $\calS(q)$ in $O(h\log n)$ time for the ocean case.

\subsubsection{The Bay Case}
If $q$ is in a bay $A$, then the algorithm in \cite{ref:ChenVi15}
for computing $\vis(q)$ first computes the
region of $A$ that is visible to $q$, denoted by $\vis(s,A)$. If
the gate $g$ of $A$ does not have any point on the boundary of
$\vis(s,A)$, then $g$ is not visible to $q$, which further implies
that no point outside the bay is visible to $q$ and thus
$\vis(s)=\vis(s,A)$.
If $g$ has a sub-segment $g'$ on the boundary of $\vis(s,A)$,
then the points of $\calP\setminus A$ visible to $q$ are all visible
to $q$ through $g'$. Next, the region $\vis(q,\calM)$ of $\calM$ that
are visible to $q$ through $g'$ is computed. After $\vis(q,\calM)$ is
computed, the rest of the algorithm
is the same as the ocean case. Namely, by traversing the boundary of
$\vis(q,\calM)$, other regions of $\calP$ in bays and canals visible to
$q$ can be computed.

Next we modify the above algorithm to compute $\calS(q)$.

Since $q$ is in $A$, we first compute the (at most two) outer-bay
windows. Let $a$ and $b$ be the two endpoints of $g$, respectively. In
the preprocessing, we compute the shortest path maps of $a$ and $b$ in
$A$, respectively. We also compute a ray-shooting data structure in
$A$. The total such preprocessing takes $O(n)$ time for all
bays. Then, using the shortest path maps of $a$ and $b$,
the two vertices $u_1$ and $u_2$ as defined before can be computed in $O(\log n)$ time.

If $u_1=u_2$, then consider the ray $\rho$ originating from $u_1$
along the direction from $q$ to $u_1$. Let $p$ be the first point on
the boundary of $A$ hit by $\rho$. Since $u_1=u_2$, $p$ must be on an
obstacle edge of $A$ (i.e., $p$ is not on the gate $g$ of $A$),
and thus $\overline{u_1p}$ is an outer-bay window.
In fact, in this case
$\overline{u_1p}$ is the only window in $\calS(q)$, and thus we can stop
our algorithm.

If $u_1\neq u_2$, then for each $u_i$ with $i=1,2$, the intersection
of $g$ with the supporting line of $\overline{qu_i}$ is an endpoint of
$g'$~\cite{ref:GuibasLi87}. Hence, $g'$ can be determined immediately
once $u_1$ and $u_2$ are available.
Similarly as in the above ocean case, the algorithm in~\cite{ref:ChenVi15} uses the approach
of crossing faces to compute $\vis(q,\calM)$ through $g'$, which is
actually a ``cone'' visibility query since the visibility of $q$ in
$\calM$ is delimited by the cone bounded by the ray from $q$ to $u_1$ and the ray from $q$
to $u_2$. All rays from $q$ in the cone define a segment
$\gamma'$ of the curve $\gamma$ (discussed in the ocean case)
in the visibility complex. To use the
approach of crossing faces, the algorithm in~\cite{ref:ChenVi15} first
finds the cell $\sigma$ of the visibility complex that contains an endpoint of $\gamma'$, which
is done in $O(\log n)$ time by a point location data structure on the
visibility complex. After this, the rest of the algorithm is the same
as the ocean bases. This is also the case for our problem for
computing $\calS(q)$. After locating the cell $\sigma$, we can use the
crossing face approach to compute the $O(h)$ maximal
sub-chains $\xi$ of the convex chains of $\partial\calM$ that are
visible to $q$ through $g'$.
As in the ocean case, this will also compute all ocean windows of $\calS(q)$. After that, we
use the same approach as in the ocean case to compute all inner-canal
windows. The total time is $O(h\log n)$.

Finally, we compute the two outer-bay windows defined by $u_1$ and
$u_2$. Namely, we need to compute $q(u_1)$ and $q(u_2)$. For
each $i=1,2$, let $\rho_i$ be the ray originating from $q$ and along
the direction from $q$ to $u_i$. The above algorithm for computing the
sub-chains will also determine the point $p_i$ on $\partial\calM$
first hit by $\rho_i$.
If $p_i$ is on an obstacle edge of $\calP$, then $p_i$ is
$q(u_i)$.
Otherwise, $p_i$ is on a bay/canal gate $g_i$ of a bay/canal $A$.
Then, we use a ray-shooting
query on $A$ to find the first point $p_i'$ on the boundary of $A$ hit by
$\rho_i$. Regardless of whether $A$ is a bay or a canal, $p_i'$ is
always on an obstacle edge, and thus $p_i'$ is $q(u_i)$. Since the
ray-shooting query on $A$ takes $O(\log n)$ time, the two outer-bay
windows can be computed in $O(\log n)$ time.


In summary,  the window set $\calS(q)$ can be computed in
$O(h\log n)$ time for the bay case.

\subsubsection{The Canal Case}

If $q$ is in a canal $A$, then the algorithm is similar to the bay case with the
difference that we apply the same algorithm on the two gates of the canal separately.
Specifically, let $g=\overline{ab}$ be a gate of $A$. We first compute
the vertices $u_1$ and $u_2$ with respect to $a$ and $b$,
respectively. Then, we apply exactly the same algorithm as in the bay
case. After that, we consider the other gate of $A$ and apply the same
algorithm. Then $\calS(q)$ is computed and the total time is $O(h\log n)$
time.

This proves Lemma~\ref{lem:comset}.
After $\calS(q)$ is computed, we can apply the query algorithm of Theorem~\ref{theo:20} (or Corollary~\ref{coro:10}) on the windows of $\calS(q)$ to compute $q^*$.
Thus we can obtain the following result.

\begin{theorem}\label{theo:30}
Given $\spm(s)$, we can build a data structure of $O(n\log h + h^2)$ size in
$O(n\log h + h^2\log h)$ time, such that each quickest visibility query can be answered
in $O(h\log h\log n)$ time.
\end{theorem}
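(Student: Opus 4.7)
The plan is to combine the preprocessing of Theorem~\ref{theo:20} with that of Lemma~\ref{lem:comset}, and then at query time use the set $\calS(q)$ to feed a ``small'' input into the machinery of the preliminary result. Concretely, in the preprocessing phase I would build all structures required by Theorem~\ref{theo:20} (the decomposition $\calD$, the data structures of Lemmas~\ref{lem:map}, \ref{lem:100}, \ref{lem:regionpro}, \ref{lem:SP}, \ref{lem:range}, and the visibility-polygon structure of \cite{ref:ChenVi15}), and in addition the auxiliary structures required by Lemma~\ref{lem:comset} (the shortest-path maps of gate endpoints inside each bay/canal, the ray-shooting structures in each bay/canal, the balanced search trees storing canal gates along each convex chain of $\partial\calM$, and the visibility-complex point-location data). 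All of these take $O(n\log h + h^2\log h)$ time and $O(n\log h + h^2)$ space, since the new ingredients from Lemma~\ref{lem:comset} add only $O(n + h^2\log h)$ time and $O(n + h^2)$ space on top of the preprocessing of Theorem~\ref{theo:20}.

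For the query, given $q$ I would first invoke Lemma~\ref{lem:comset} to compute the set $\calS(q)$ of $O(h)$ candidate windows in $O(h\log n)$ time; crucially, this step avoids ever constructing $\vis(q)$ explicitly, so we do not pay the $\Omega(K)$ cost. By the analysis in Section~\ref{sec:defS}, a closest point $q^*$ of $\vis(q)$ lies on some window in $\calS(q)$, so it suffices to compute a closest point of the union of the windows in $\calS(q)$.

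To find such a point, I would feed $\calS(q)$ into the pruning-and-region-processing machinery of Section~\ref{sec:first} in place of the full window set $W$. More precisely, since every window of $\calS(q)$ shares the endpoint $q$, Corollary~\ref{coro:10} applies directly: with $k = |\calS(q)| = O(h)$ segments through a common point, computing the shortest path from $s$ to their union takes $O((k + h)\log h \log n) = O(h\log h\log n)$ time. I would also separately handle the distinguished window $\overline{u_0 q_0}$ via Theorem~\ref{theo:segment} in $O(h\log(n/h))$ time, and evaluate $d(s,p)$ at the $O(h)$ endpoints of windows of $\calS(q)$ in $O(h\log n)$ time via $\spm(s)$, then return the shortest among all these candidates as $\pi(s,q^*)$.

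The main obstacle, in my view, is justifying that the replacement of $W$ by $\calS(q)$ is sound for the pruning machinery in Sections~\ref{sec:prune}--\ref{sec:computeq}, which was designed assuming $W$ contains all windows of $q$. Two aspects need care: first, the correctness proofs of Lemmas~\ref{lem:90}, \ref{lem:key}, \ref{lem:120}, \ref{lem:130}, \ref{lem:140}, and Observation~\ref{obser:100} only use local geometric properties of the shortest paths $\pi_i$ and regions $D_i$ together with Observation~\ref{obser:basic} (which asserts that $\pi(s,t^+)$ cannot hit \emph{any} segment/endpoint of the input collection), so restricting to $\calS(q)$ only strengthens the hypothesis and therefore preserves correctness of pruning; second, the guarantee that a closest window survives pruning now rests on Section~\ref{sec:defS} rather than on ``all windows are present'', and this is exactly what Lemma~\ref{lem:comset} and the preceding structural lemmas deliver. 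Once these observations are in place, the running-time bound $O(h\log h\log n)$ follows from $k = O(h)$ in the complexity expression $O((k+h)\log h\log n)$ of Corollary~\ref{coro:10}, concluding the proof.
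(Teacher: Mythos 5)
Your proposal is correct and follows the same route as the paper: perform the combined preprocessing of Theorem~\ref{theo:20} and Lemma~\ref{lem:comset}, compute $\calS(q)$ in $O(h\log n)$ time at query time (thereby avoiding the $\Omega(K)$ cost of constructing $\vis(q)$), and then invoke the pruning/region-processing machinery of Corollary~\ref{coro:10} on the $O(h)$ extended windows of $\calS(q)$. Your side remark about the soundness of replacing $W$ by $\calS(q)$ is well taken and correct; as you note, the pruning arguments (Observation~\ref{obser:basic} onward) are intrinsic to any collection of segments through a common point, while Section~\ref{sec:defS} guarantees that $\calS(q)$ retains a window containing $q^*$, which is exactly what is needed.
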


%

\section{Conclusions}
\label{sec:con}

In this paper, we present a new data structure for answering quickest
visibility queries. Our result is particularly interesting when $h$,
the number of holes of $\calP$, is relatively small. For example, when
$h=O(1)$, our result matches the best result for the simple polygon
case (i.e., $h=1$) and is optimal. To achieve the result, we also
solve many other problems that may be interesting in their own right. We
highlight some of them below. We assume that the shortest path map $\spm(s)$
of the source point $s$ has been given.
\begin{enumerate}

\item
We present an algorithm that can compute a shortest path from $s$ to
$\tau$ in $O(h\log\frac{n}{h})$ time for any query segment $\tau\in
\calP$, after $O(n)$ time and space preprocessing.

\item
We present an algorithm that can compute in $O(\log h\log n)$ time an intersection between
$\tau$ and the shortest path $\pi(s,t)$ for any segment $\tau$ and
any point $t$ in $\calP$, after $O(n\log h)$ time and space
preprocessing.

\item
We present an algorithm that can answer each $\calR$-region range
query in $O(\log h\log n)$ time, after $O(n\log h)$ time and space
preprocessing.

\item
We present an algorithm that can compute in $O((k+h)\log h\log n)$ time a
shortest path from $s$ to any set of $k=O(n)$ segments in
$\calP$ that intersect at a same point, after $O(n\log h)$ time and space preprocessing.
\end{enumerate}

These results are particularly interesting when $h$ is relatively
small, and at least the first three results are optimal when $h=O(1)$.

In addition, the decomposition $\calD$ of $\calP$, the regions of $\calR$, and
some other techniques proposed in the paper (e.g., bundles) may find other applications as well.



\bibliographystyle{plain}
\bibliography{reference}

\begin{thebibliography}{10}

\bibitem{ref:ArkinSh16}
E.M. Arkin, A.~Efrat, C.~Knauer, J.S.B. Mitchell, V.~Polishchuk, G.~Rote,
  L.~Schlipf, and T.~Talvitie.
\newblock Shortest path to a segment and quickest visibility queries.
\newblock {\em Journal of Computational Geometry}, 7:77--100, 2016.

\bibitem{ref:Bar-YehudaTr94}
R.~Bar-Yehuda and B.~Chazelle.
\newblock Triangulating disjoint {Jordan} chains.
\newblock {\em International Journal of Computational Geometry and
  Applications}, 4(4):475--481, 1994.

\bibitem{ref:BenderTh00}
M.~Bender and M.~Farach-Colton.
\newblock The {LCA} problem revisited.
\newblock In {\em Proc. of the 4th Latin American Symposium on Theoretical
  Informatics}, pages 88--94, 2000.


\bibitem{ref:deBergCo08}
M.~de~Berg, O.~Cheong, M.~van Kreveld, and M.~Overmars.
\newblock {\em Computational Geometry --- Algorithms and Applications}.
\newblock Springer-Verlag, Berlin, 3rd edition, 2008.

\bibitem{ref:BoseEf02}
P.~Bose, A.~Lubiw, and J.I. Munro.
\newblock Efficient visibility queries in simple polygons.
\newblock {\em Computational Geometry: Theory and Applications},
  23(3):313--335, 2002.

\bibitem{ref:ChazelleRa94}
B.~Chazelle, H.~Edelsbrunner, M.~Grigni, L.~Guibas, J.~Hershberger, M.~Sharir,
  and J.~Snoeyink.
\newblock Ray shooting in polygons using geodesic triangulations.
\newblock {\em Algorithmica}, 12(1):54--68, 1994.

\bibitem{ref:ChenA11}
D.Z. Chen and H.~Wang.
\newblock A nearly optimal algorithm for finding {$L_1$} shortest paths among
  polygonal obstacles in the plane.
\newblock In {\em Proc. of the 19th European Symposium on Algorithms (ESA)},
  pages 481--492, 2011.

\bibitem{ref:ChenL113STACS}
D.Z. Chen and H.~Wang.
\newblock {$L_1$} shortest path queries among polygonal obstacles in the plane.
\newblock In {\em Proc. of 30th Symposium on Theoretical Aspects of Computer
  Science (STACS)}, pages 293--304, 2013.

\bibitem{ref:ChenVi15}
D.Z. Chen and H.~Wang.
\newblock Visibility and ray shooting queries in polygonal domains.
\newblock {\em Computational Geometry: Theory and Applications}, 48:31--41,
  2015.

\bibitem{ref:ChenWe15}
D.Z. Chen and H.~Wang.
\newblock Weak visibility queries of line segments in simple polygons.
\newblock {\em Computational Geometry: Theory and Applications}, 48:443--452,
  2015.

\bibitem{ref:ChenCo17}
D.Z. Chen and H.~Wang.
\newblock Computing the visibility polygon of an island in a polygonal domain.
\newblock {\em Algorithmica}, 77:40--64, 2017.

\bibitem{ref:CheungAp10}
Y.K. Cheung and O.~Daescu.
\newblock Approximate point-to-face shortest paths in {$\mathcal{R}^3$}.
\newblock arXiv:1004.1588, 2010.

\bibitem{ref:ChiangOp97}
Y.-J. Chiang and R.~Tamassia.
\newblock Optimal shortest path and minimum-link path queries between two
  convex polygons in the presence of obstacles.
\newblock {\em International Journal of Computational Geometry and
  Applications}, 7:85--121, 1997.


\bibitem{ref:EdelsbrunnerOp86}
H.~Edelsbrunner, L.~Guibas, and J.~Stolfi.
\newblock Optimal point location in a monotone subdivision.
\newblock {\em SIAM Journal on Computing}, 15(2):317--340, 1986.

\bibitem{ref:Eriksson-BiqueGe15}
S.D. Eriksson-Bique, J.~Hershberger, V.~Polishchuk, B.~Speckmann, S.~Suri,
  T.~Talvitie, K.~Verbeek, and H.~{Y{\i}ld{\i}z}.
\newblock Geometric {$k$} shortest paths.
\newblock In {\em Proc. of the Twenty-Sixth Annual ACM-SIAM Symposium on
  Discrete Algorithms (SODA)}, pages 1616--1625, 2015.

\bibitem{ref:GuibasOp89}
L.J. Guibas and J.~Hershberger.
\newblock Optimal shortest path queries in a simple polygon.
\newblock {\em Journal of Computer and System Sciences}, 39(2):126--152, 1989.

\bibitem{ref:GuibasLi87}
L.J. Guibas, J.~Hershberger, D.~Leven, M.~Sharir, and R.E. Tarjan.
\newblock Linear-time algorithms for visibility and shortest path problems
  inside triangulated simple polygons.
\newblock {\em Algorithmica}, 2(1-4):209--233, 1987.

\bibitem{ref:HarelFa84}
D.~Harel and R.E. Tarjan.
\newblock Fast algorithms for finding nearest common ancestors.
\newblock {\em SIAM Journal on Computing}, 13:338--355, 1984.

\bibitem{ref:HershbergerA91}
J.~Hershberger.
\newblock A new data structure for shortest path queries in a simple polygon.
\newblock {\em Information Processing Letters}, 38(5):231--235, 1991.

\bibitem{ref:HershbergerGe14}
J.~Hershberger, V.~Polishchuk, B.~Speckmann, and T.~Talvitie.
\newblock Geometric {$k$}th shortest paths: the applet.
\newblock In {\em Video/multimedia of the 30th Annual Symposium on
  Computational Geometry}, 2014.
\newblock
  http://www.computational-geometry.org/SoCG-videos/socg14video/ksp/index.html.

\bibitem{ref:HershbergerA95}
J.~Hershberger and S.~Suri.
\newblock A pedestrian approach to ray shooting: {Shoot} a ray, take a walk.
\newblock {\em Journal of Algorithms}, 18(3):403--431, 1995.

\bibitem{ref:HershbergerAn99}
J.~Hershberger and S.~Suri.
\newblock An optimal algorithm for {Euclidean} shortest paths in the plane.
\newblock {\em SIAM Journal on Computing}, 28(6):2215--2256, 1999.

\bibitem{ref:KapoorAn97}
S.~Kapoor, S.N. Maheshwari, and J.S.B. Mitchell.
\newblock An efficient algorithm for {Euclidean} shortest paths among polygonal
  obstacles in the plane.
\newblock {\em Discrete and Computational Geometry}, 18(4):377--383, 1997.

\bibitem{ref:KhosraviTh05}
R.~Khosravi and M.~Ghodsi.
\newblock The fastest way to view a query point in simple polygons.
\newblock In {\em Proc. of the 24th European Workshop on Computational
  Geometry}, pages 187--190, 2005.

\bibitem{ref:KirkpatrickOp83}
D.~Kirkpatrick.
\newblock Optimal search in planar subdivisions.
\newblock {\em SIAM Journal on Computing}, 12(1):28--35, 1983.

\bibitem{ref:MelissaratosSh92}
E.~Melissaratos and D.~Souvaine.
\newblock Shortest paths help solve geometric optimization problems in planar
  regions.
\newblock {\em SIAM Journal on Computing}, 21(4):601--638, 1992.

\bibitem{ref:MitchellA91}
J.S.B. Mitchell.
\newblock A new algorithm for shortest paths among obstacles in the plane.
\newblock {\em Annals of Mathematics and Artificial Intelligence},
  3(1):83--105, 1991.

\bibitem{ref:MitchellSh96}
J.S.B. Mitchell.
\newblock Shortest paths among obstacles in the plane.
\newblock {\em International Journal of Computational Geometry and
  Applications}, 6(3):309--332, 1996.

\bibitem{ref:PocchiolaTo96}
M.~Pocchiola and G.~Vegter.
\newblock Topologically sweeping visibility complexes via pseudotriangulations.
\newblock {\em Discrete and Computational Geometry}, 16(4):419--453, 1996.

\bibitem{ref:PocchiolaTh96}
M.~Pocchiola and G.~Vegter.
\newblock The visibility complex.
\newblock {\em International Journal of Computational Geometry and
  Applications}, 6(3):279--308, 1996.

\end{thebibliography}

%


\end{document}